\title{Aggregations over Generalized Hypertree Decompositions}
\author[1]{Manas Joglekar}
\author[1]{Rohan Puttagunta}
\author[1]{Chris R\'e}
\affil[1]{Department of Computer Science\\
  Stanford University\\
  \texttt{\{manasrj, rohanp, chrismre\} @ cs.stanford.edu}}
\newtheorem{proposition}{Proposition}
\newtheorem{theorem*}{Theorem}
\newtheorem{lemma*}{Lemma}
\newtheorem{corollary*}{Corollary}
\newtheorem{definition*}{Definition}
\newtheorem{example*}{Example}
\newcommand{\mD}{\mathcal{D}}
\newcommand{\mE}{\mathcal{E}}
\newcommand{\mH}{\mathcal{H}}
\newcommand{\mT}{\mathcal{T}}
\newcommand{\mV}{\mathcal{V}}
\newcommand{\Ot}{\widetilde{O}}
\newcommand{\IN}{\textsc{IN}}
\newcommand{\OUT}{\textsc{OUT}}
\newcommand{\pre}{\textsf{PREC}}
\newcommand{\dnc}{\textsf{DNC}}
\newcommand{\pa}{\textsf{PA}}
\newcommand{\ajar}{\textsc{Ajar}\xspace}
\newcommand{\main}[1]{}
\begin{document}

\maketitle

\abstract
We study a class of aggregate-join queries with multiple aggregation operators evaluated over annotated relations. We show that straightforward extensions of standard multiway join algorithms and generalized hypertree decompositions (GHDs) provide best-known runtime guarantees. In contrast, prior work uses bespoke algorithms and data structures and does not match these guarantees. Our extensions to the standard techniques are a pair of simple tests that (1) determine if two orderings of aggregation operators are equivalent and (2) determine if a GHD is compatible with a given ordering. These tests provide a means to find an optimal GHD that, when provided to standard join algorithms, will correctly answer a given aggregate-join query. The second class of our contributions is a pair of complete characterizations of (1) the set of orderings equivalent to a given ordering and (2) the set of GHDs compatible with some equivalent ordering. We show by example that previous approaches are incomplete. The key technical consequence of our characterizations is a decomposition of a compatible GHD into a set of (smaller) {\em unconstrained} GHDs, i.e. into a set of GHDs of sub-queries without aggregations. Since this decomposition is comprised of unconstrained GHDs, we are able to connect to the wide literature on GHDs for join query processing, thereby obtaining improved runtime bounds, MapReduce variants, and an efficient method to find approximately optimal GHDs.

\section{Introduction}
Generalized hypertree decompositions (GHDs), introduced by Gottlob et al.~\cite{GHDIntro, gottlob:ght} and further developed by Grohe and Marx~\cite{FHTW}, provide a means for performing early projection in join processing, which can result in dramatically faster runtimes. In this work, we extend GHDs to handle queries that include aggregations, which allows us to capture both SQL-aggregate processing and message passing problems. Motivated by our own database engine based on GHDs~\cite{Duncecap1, Duncecap2, EH}, we seek to more deeply understand the space of optimization for aggregate-join queries.

We build upon work by Green, Karvounarakis, and Tannen~\cite{Semiring} on annotated relations to define our notion of aggregation. These annotations provide a general definition of aggregation, allowing us to represent a wide-ranging set of problems as aggregate-join queries. Our queries, which we call $\ajar$ (Aggregations and Joins over Annotated Relations) queries, contain semiring quantifiers that ``sum over'' or ``marginalize out'' values. We formally define $\ajar$ queries in Section~\ref{sec:prelim}, but they are easy to illustrate by example:

\begin{figure}
\main{\begin{minipage}{.2\textwidth}}
\begin{tabular}[t]{|c|c||c}
\multicolumn{3}{c}{$R$} \\
A & B & $\mathbb{Z}$ \\ \hline
1 & 3 & 3 \\
1 & 2 & 1 \\
1 & 1 & 2 
\end{tabular}
\main{\\}
\begin{tabular}[t]{|c|c||c}
\multicolumn{3}{c}{$S$} \\
B & C & $\mathbb{Z}$ \\ \hline
1 & 1 & 4 \\
3 & 3 & 6
\end{tabular}
\main{\end{minipage}}
\hfill
\main{\begin{minipage}{.2\textwidth}}
\begin{tabular}[t]{|c|c|c||c}
\multicolumn{4}{c}{$R \Join S$} \\
A & B & C & $\mathbb{Z}$ \\ \hline
1 & 3 & 3 & 18\\
1 & 1 & 1 & 8
\end{tabular}
\main{\\}
\begin{tabular}[t]{|c||c}
\multicolumn{2}{c}{$\sum_C \sum_B R \Join S$} \\
A & $\mathbb{Z}$ \\ \hline
1 & 26
\end{tabular}
\main{\end{minipage}}
\caption{Illustrating the computation of Example~\ref{ex:firstex}}
\label{fig:firstex}
\end{figure}

\begin{example}
\label{ex:firstex}
Consider two relations with attributes $\{A,B\}$ and $\{B,C\}$ such that each tuple is annotated with some integer; we call these relations $\mathbb{Z}$-relations. Consider the query:
$$\sum_C \sum_B R(A,B) \Join S(B,C)$$
Our output will then be a $\mathbb{Z}$-relation with attribute set $\{A\}$. Each value $a$ of attribute $A$ in $R$ is associated with a set $X_a$ of pairs $(b,z_R)$ composed of a value $b$ of attribute $B$ and an annotation $z_R$. Furthermore for each $b$ value in $X_a$, there is a set $X_b$ from relation $S$ of pairs $(c ,z_S)$ composed of a value $c$ of attribute $C$ and an annotation $z_S$. Given $X_a$ and each $X_b$ associated with a given value $a$, the annotation associated with $a$ in our output will simply be $$\sum_{b,z_1 \in X_a} \sum_{c, z_2 \in X_b} z_1 * z_2.$$
\end{example}

$\ajar$ queries capture both classical SQL-style queries and newer data processing problems like probabilistic inference via message passing on graphical models~\cite{Kask:2005:UTD:1090725.1090730}. In fact, Aji and McEliece proposed the ``Marginalize a Product Function'' (MPF) problem~\cite{Aji}, which is a special case of an $\ajar$ query, and showed how the problem and its solution capture a number of classic problems and algorithms, including fast Hadamard transforms, Viterbi's algorithm, forward-backward algorithm, FFT, and probabilistic inference in Bayesian networks. These algorithm are fundamental to various fields; for example the forward-backward algorithm over conditional random fields forms the basis for state of the art solutions to named entity recognition, part of speech tagging, noun phrase segmentation, and other problems in NLP~\cite{CRF}. We are motivated by the wide applicability of queries over annotated relations; annotated relations may provide a framework for combining classical query processing, linear algebra, and statistical inference in a single data processing system.

We consider a generalization of MPF with multiple aggregation operators.  We represent an aggregate-join query as a join $Q$ and an aggregation ordering, which specifies both the ordering and the aggregation of each attribute. Our language directly follows from the work of Abo Khamis, Ngo, and Rudra~\cite{FAQ}, who investigated the ``Functional Aggregate Query'' (FAQ) problem. In addition to MPF, FAQ is a generalization of Chen and Dalmau's QCQ problem~\cite{QCQChenDalmau}, in which the only aggregates are logical quantifiers (AND and OR).

The key technical challenge in both problems is characterizing the permissible aggregations orders to answer the query. Chen and Dalmau give a complete characterization of which variable orders are permissible for QCQ via a procedure. We first give a simple (complete) procedure for our more general class of queries with multiple aggregations, and then we provide a complete characterization of permissible orders.

\begin{itemize}
  \item {\em A Simple Test for Equivalence: }
A query can be thought of as a body $Q$ and a string of attribute-operator pairs $\alpha$. Given a query $Q$ and two orders $\alpha$ and $\beta$, we provide a simple test to determine whether $\alpha$ and $\beta$ are equivalent (i.e., return the same output for any input database). The technical challenge is that different aggregation operators (e.g., $\sum$ and $\max$) cannot freely commute. We show that attribute-operator pairs can commute for only two reasons: $(1)$ their operators commute or $(2)$ their attributes are ``independent'' in the query, e.g., in the query $\min_{B} \max_{A} \sum_{C} R(A,B),S(B,C)$ the aggregations involving $A$ and $C$ can commute -- even though $\max$ and $\sum$ do not commute as operators, the query body renders them {\it independent} given $B$. We show that these two conditions are {\em complete}, which leads to a simple test for equivalence (Algorithm~\ref{algo:equivalance-test}).

\item {\em A Simple Test for GHD and Order Compatibility: } We say a GHD is compatible with an ordering if we can run standard join algorithms on the GHD while performing aggregations in the order given by the ordering. We show that testing for compatibility amounts to verifying that for any two attributes $A$, $B$, if the topmost GHD node containing $A$ occurs above the topmost node containing $B$, then $A$ occurs before $B$ in the ordering.  
\end{itemize}

This pair of results gives us a simple algorithm that achieves the best known runtime results. Given a query $(Q,\alpha)$, enumerate each order $\beta$ and each GHD $G$, checking if $\alpha$ is equivalent to $\beta$ and $G$ is compatible with $\beta$. If so, record the cost of solving the query using $G$, according to (say) fractional hypertreewidth. Solve the query using the lowest cost $(G,\beta)$ with a standard join algorithm~\cite{FHTW}.\footnote{Two technical notes: (1) methods like submodular width~\cite{Marx:2010:THP:1806689.1806790} or Joglekar and R\'e~\cite{2015arXiv150801239J} require that we first partition the instances and then run the above algorithm; (2) FAQ~\cite{FAQ} is not output sensitive (it does not use GHDs), and so it handles output attributes less efficiently than the above algorithm, as seen in Example~\ref{ex:faqoutput}.}

The preceding simple algorithm runs in time exponential in the query size. But finding the optimal GHD even without aggregation is an $\mathsf{NP}$-hard problem, so the brute force optimizer has essentially optimal runtime. It is easy to implement, and a variant is in our prototype database~\cite{EH,Duncecap1}.

The more interesting problem is to characterize the notions of equivalence, mirroring Chen and Dalmau. To that end, we give two new, complete characterizations:

\begin{itemize}
\item {\em A Complete Characterization of Equivalent Orders:} Given an order $\alpha$ and two attribute-operator pairs $x,y \in \alpha$, we describe a set of constraints of the form {\it ``in any order, $x$ must appear after $y$.''} Our constraints are sound and complete, i.e., a string $\beta$ satisfies these constraints if and only if it is equivalent to $\alpha$. In contrast, previous approaches have an incomplete characterization, as shown in Example~\ref{example:faq-incompleteness} in the Appendix.

\item {\em A Complete Characterization of GHDs compatible with any Equivalent Order.} Given an order $\alpha$ and a query hypergraph $Q$, we call a GHD `valid' if it is compatible with any ordering equivalent to $\alpha$. We give a succinct characterization for all valid GHDs. We then describe a decomposition of the query $(Q, \alpha)$ into a series of \emph{characteristic hypergraphs} (without attached aggregation orderings). GHDs for these hypergraphs can be combined into a valid GHD for the original query. We show that for any ``node-monotone''
\footnote{Informally, a map is {\em node monotone} if adding more {\em nodes} to a graph does not reduce the measure, but additional edges may reduce the measure, see Definition~\ref{def:node-monotone}.}
width function, there is a GHD with optimal width $w$ that can be constructed with this decomposition.
\footnote{In contrast, FAQ's decomposition strategy may miss the optimal GHD. Appendix Example~\ref{example:faq-2x} shows a case in which using the FAQ decomposition gives a width $2n$ while AJAR obtains width $n$ for $n\geq 1$. We also exhibit a family of queries and instances on which FAQ runs in time $\Omega(N^{3n/2})$ while AJAR runs in time $O(N^{n})$ for $n\geq 1$, see Appendix Example~\ref{example:faq-decomposition}.}
Treewidth, Fractional hypertreewidth, and Submodular width are all node-monotone.
\end{itemize}

Conceptually, we think the latter result is especially important for tying our work to existing GHD literature; the result reduces our problem to operating on standard GHDs. Pragmatically, we can apply existing GHD results to our characteristic hypergraphs and obtain the following results for free:

\begin{itemize}
  \item Based on Grohe and Marx~\cite{AGM}, we are able to describe our runtime in terms of classical metrics like fractional hypertreewidth. In turn, we can use standard notions to upper bound the runtime like fractional hypertree width, Marx's submodular width~\cite{Marx:2010:THP:1806689.1806790}, or Joglekar's efficiently computable variant~\cite{2015arXiv150801239J}. 
    
\item Based on Afrati et al.~\cite{GYM}, who bound the communication costs of join processing in terms of a ``width'' parameter for GHDs, we can develop efficient MapReduce algorithms for solving \ajar queries.

\item Based on Marx's approximation~\cite{Marx:2010:AFH:1721837.1721845} for GHDs, we can find approximately optimal GHDs for the popular fractional hypertreewidth measure in polynomial time.
\end{itemize}

We get the above results essentially for free from forging this connection to GHDs. We view this simple link as a strength of our approach.

Finally, we discuss an extension to handle ``product aggregations'' that allows us to aggregate away an attribute \emph{before} we join the relations containing the attribute when the aggregation operator is the multiplication operator of the semiring. FAQ was the first to observe that this special case can improve certain types of logical queries. This opens up a new space of equivalent orderings and valid GHDs; mirroring the above results, we give a simple test and a complete characterization of the valid GHDs for queries that include this aggregation. As a result, we obtain similar improvements in runtime relative to previous work.

\textbf{Outline.} We discuss related work in Section~\ref{sec:rel}. In Section~\ref{sec:prelim}, we introduce notation and algorithms that are relevant to our work before defining the $\ajar$ problem and discussing its solution, which involves running existing algorithms on a restricted class of GHDs. Section~\ref{sec:equivalent-orderings} provides a succinct characterization of all orderings that are equivalent to a given ordering. Section~\ref{sec:decomposing} discusses how to connect our work with recent research on GHDs, explaining how to construct valid optimal query plans and how to further improve and parallelize our results. In Section~\ref{sec:univ-aggregation}, we discuss how to incorporate product aggregations. 

\section{Related Work}
\label{sec:rel}
\textbf{Join Algorithms.} The Yannakakis algorithm, introduced in 1981, guarantees a runtime of $O(\IN + \OUT)$ for $\alpha$-acyclic join queries~\cite{Yannakakis81}. Modern multiway algorithms can process any join query and have worst-case optimal runtime. In particular, Atserias, Grohe, and Marx~\cite{AGM} derived a tight bound on the worst-case size of a join query given the input size and structure. Ngo et al.~\cite{NPRR} presented the first algorithm to achieve this runtime bound, i.e. the first worst-case optimal algorithm. Soon after, Veldhuizen presented Leapfrog Triejoin, a very simple worst-case optimal algorithm that had been implemented in LogicBlox's commercial database system~\cite{LFTJ}. Ngo et al.~\cite{NRR} later presented the simplified and unified algorithm GenericJoin (GJ) that captured both of the previous worst-case optimal algorithms.

\textbf{GHDs.} First introduced by Gottlob, Leone, and Scarcello~\cite{GHDIntro}, hypertree decompositions and the associated hypertree width generalize the concept of tree decompositions~\cite{treedecomp}. Conceptually, the decompositions capture a hypergraph's cyclicity, allowing them to facilitate the selective use of GJ and Yannakakis in the standard hybrid join algorithm GHDJoin. There are deep connections between variable orderings and GHDs~\cite{FAQ}, which we leverage extensively.  Grohe and Marx~\cite{FHTW} introduced the idea of fractional hypertree width over GHDs, which bounds the runtime of GHDJoin by $\Ot(IN^w + OUT)$ ($\Ot$ hides poly-logarithmic factors) for $w$ defined to be the minimum fractional hypertree width among all GHDs.

\textbf{Semirings and Aggregations.}  Green, Karvounarakis, and Tannen developed the idea of annotations over a semiring~\cite{Semiring}. Our notation for the annotations is superficially different from theirs, solely for notational convenience. We delve into more detail in Section~\ref{sec:prelim}. This also has been used as a mechanism to capture aggregation in probabilistic databases~\cite{DBLP:journals/vldb/ReS09}.

\textbf{MPF.} Aji and McEliece~\cite{Aji} defined the ``Marginalize a Product Function'' (MPF) problem, which is equivalent to the the space of $\ajar$ queries with only one aggregation operator. They showed that MPF generalizes a wide variety of important algorithms and problems, which also implies that $\ajar$ queries are remarkably general. They also provided a message passing algorithm to solve MPF, which has since been refined~\cite{Kask:2005:UTD:1090725.1090730}. We provide runtime guarantees that improve the current state of the art.

\textbf{Aggregate-Join Queries.} There is a standard modification to Yannakakis to handle aggregations~\cite{Yannakakis81}, but the classic analysis provides only a $O(\IN \cdot \OUT)$ bound. Bakibayev, Kocisky, Olteanu, and Zavodny study aggregation-join queries in factorized databases~\cite{OZVLDB13}, and later Olteanu and Zavodny connected factorized databases and GHDs/GHDJoin~\cite{OZTODS15}. They develop the intuition that if output attributes are above non-output attributes, the $+\OUT$ runtime is preserved; we use the same intuition to develop and analyze AggroGHDJoin, a variant to GHDJoin for aggregate-join queries.

Abo Khamis, Ngo, and Rudra present the ``Functional Aggregate Query'' (FAQ) problem~\cite{FAQ}, which is equivalent to $\ajar$. The FAQ/$\ajar$ problems arose out of discussions between Ngo, Rudra, and R\'e at PODS12 about how to extend the worst-case result to queries using aggregation and message passing via Green et al.'s semiring formulation. We originally worked jointly on the problem, but we developed substantially different approaches. As a result, we split our work. We argue the the $\ajar$ approach is simpler, as it yields the best known runtime results in only a few simple statements in Section~\ref{sec:prelim}. We also describe new complete characterizations as described above. Pragmatically, these completeness results allow us to connect to more easily to existing literature. We have already implemented the algorithm described here in the related database engine EmptyHeaded~\cite{EH}.
\footnote{We have been told that LogicBlox has implemented a similar algorithm recently, but their approach is not public. We shared our implementation with them several months ago.}
This engine has run motif finding, pagerank, and single-source shortest path queries dramatically faster than previous high-level approaches that take datalog-like queries as input.

A primary application of multiple aggregation operators is quantified conjunctive queries (QCQ) and the counting variant, which can be expressed as $\ajar$ queries over the semiring $(\vee, \wedge)$ with aggregations involving both operators. Here, we follow FAQ's idea to formulate this as a query with product aggregation. Chen and Dalmau~\cite{QCQChenDalmau} completely characterized the space of tractable QCQ by defining a notion of width that relies on variable orderings. Chen and Dalmau's width definition includes a complete characterization of the permissible variable orderings for a QCQ instance. Their characterization is similar in spirit to the partial ordering we define in Section~\ref{sec:equivalent-orderings} that characterizes the space of valid GHDs for an $\ajar$ query. However, their results are focused on tractability rather than the optimal runtime exponents; our characterization extends theirs and has improved runtime bounds. 

\section{AJAR and A Simple Solution}
\label{sec:prelim}
We start by describing some background material needed to define the \ajar problem. After that, we formally define the \ajar problem and our solution to it. 

\subsection{Background}\label{subsec:background-1}
We use the classic hypergraph representation for database schema and queries~\cite{ALICE}. A hypergraph $\mH$ is a pair $(\mV, \mE)$, where $\mV$ is a non-empty set of {\em vertices} and $\mE \subseteq 2^\mV$ is a set of {\em hyperedges}. Each $A \in \mV$ is called an {\em attribute}. Each attribute has a corresponding {\em domain} $\mD^A$.
\begin{itemize}
\item \textbf{Data} For each hyperedge $F \in \mE$, there is a
  corresponding relation $R_F \subseteq \prod_{A \in F} \mD^A$; we use
  the notation $\mD^F$ to denote the domain of the tuples $\prod_{A
    \in F} \mD^A$. 

\item \textbf{Join Query} Given a set $\mE$ and a relation $R_F$ for each $F \in \mE$, let $\mV = \cup_{F \in \mE} F$. The join query is written $\Join_{F \in \mE} R_F$ and is defined as
  \[ \left\lbrace t \in \mD^{\mV} \mid \forall F \in \mE : \pi_{F}(t) \in R_F \right\rbrace \]
  
We use $n$ to denote the number of attributes $|\mV|$ and
$m$ to denote the number of relations $|\mE|$. $\IN$ denotes the sum of sizes of input relations in a query, and $\OUT$ denotes the output size.
\end{itemize}

A {\em path} from $A \in \mV_\mH$ to $B \in \mV_\mH$ in a hypergraph $\mH$ is a sequence of attributes, starting with $A$ and ending with $B$, such that each consecutive pair of attributes in the sequence occur together in a hyperedge. The number of attributes in the sequence is the {\em length} of the path.

We now define a GHD of a hypergraph.

\begin{definition}
\footnote{Traditionally GHDs are defined as a triple $(\mT, \chi, \lambda)$ where the function $\lambda: \mV_\mT \to 2^{\mE_\mH}$ assigns relations to each bag. Here we omit this function and implicitly assign \emph{every} relation to each bag (so $\lambda(t) = \mE_\mH$ for all $t \in \mV_\mT$). Though this makes a difference for certain notions of width, it leaves the fractional hypertree width unchanged, as adding more relations to the linear program will never make the objective value worse.}
Given a hypergraph $\mH = (\mV_\mH, \mE_\mH)$, a \emph{generalized hypertree decomposition} is a pair $(\mT, \chi)$ of a tree $\mT = (\mV_\mT, \mE_\mT)$ and function $\chi: \mV_\mT \to 2^{\mV_\mH}$ such that
\begin{itemize}
\item For each relation $F \in \mE_\mH$, there exists a tree node $t \in \mV_\mT$ that covers the edge, i.e. $F \subseteq \chi(t)$.
\item For each attribute $A \in \mV_\mH$, the tree nodes containing $A$, i.e. $\{t \in \mV_\mT | A \in \chi(t)\}$, form a connected subtree.
\end{itemize}
\end{definition}

The latter condition is called the ``running intersection property''. The $\chi(t)$ sets are referred to as `bags' of the GHD. GHDs are assumed to be `rooted' trees, which imposes a top-down partial order on their nodes. Leveraging this order, for any GHD $(\mT, \chi)$ and attribute $A \in \mV_\mH$, we define $TOP_\mT(A)$ to be the top-most node $v \in \mV_\mT$ such that $A \in \chi(v)$.

When each bag of a GHD consists of the attributes of a single relation, the GHD is also called a {\em join tree}. Joins over a join tree can be processed using Yannakakis' algorithm~\cite{Yannakakis81} (pseudo-code in Algorithm~\ref{NoAggroY}). The runtime of Yannakakis' algorithm is $O(\IN + \OUT)$.

GHDs can be interpreted as query plans for joins. Given a GHD, we first join the attributes in each bag using worst case optimal algorithms~\cite{NPRR,LFTJ} to get  one intermediate relation per bag. The intermediate relations can then be joined using Yannakakis' algorithm. This combined algorithm is called GHDJoin; Algorithm~\ref{NoAggroGHD} in Appendix~\ref{sec:background} gives the pseudo-code for GHDJoin. 

\begin{algorithm}[t]
\caption{Yannakakis($\mT = (\mV, \mE$), $\{R_F | F \in \mV\}$)}
\label{NoAggroY}
\textbf{Input:} Join tree $\mT = (\mV, \mE)$, Relations $R_F$ for each $F \in \mV$
\begin{algorithmic}[1]
\ForAll{$F \in \mV$ in some bottom-up order}
\State $P \gets$ parent of $F$
\State $R_P \gets R_P \ltimes R_F$
\EndFor
\ForAll{$F \in \mV$ in some top-down order}
\State $P \gets$ parent of $F$
\State $R_F \gets R_F \ltimes R_P$
\EndFor
\While{$F \in \mV$ in some bottom-up order}
\State $P \gets$ parent of $F$
\State $R_P \gets R_P \Join R_F$
\EndWhile
\State \Return $R_R$ for the root $R$
\end{algorithmic}
\end{algorithm}

The runtime of GHDJoin can be expressed in terms of the {\em fractional hypertree width} of the GHD:

\begin{definition}
Given a hypergraph $\mH = (\mV_\mH, \mE_\mH)$ and a GHD $(\mT, \chi)$, the \emph{fractional hypertree width}, denoted $fhw(\mT, \mH)$, is defined to be $\max_{t \in \mT} \rho^*_t$ in which  $\rho^*_t$ is the optimal value of the following linear program defined for each $t \in \mV_\mT$:
\begin{align*}
\textrm{Minimize } \sum_{F \in \mE_\mH} x_F \log_{\IN}(|R_F|) \text{ such that } \\
\forall A \in \chi(t) : \sum_{F : A \in F} x_F \ge 1, \forall F \in \mE_\mH: x_F \ge 0
\end{align*}
\end{definition}

The fractional hypertree width is just the AGM bound~\cite{AGM} placed on the bags. Thus $IN^{fhw(\mT, \mH)}$ is an upper bound on the sizes of the intermediate relations of GHDJoin. GHDJoin runs in time $\Ot(\IN^{fhw(\mT, \mH)} + \OUT)$ for Join queries. 

\paragraph*{Annotated Relations}
To define a general notion of aggregations, we look to relations annotated with semirings~\cite{Semiring}.

\begin{definition}
A \emph{commutative semiring} is a triple $(S, \oplus, \otimes)$ of a set $S$ and operators $\oplus: S \times S \to S$, $\otimes: S \times S \to S$ where there exist $0, 1 \in S$ such that for all $a,b,c \in S$ the following properties hold:
\begin{itemize}
\item Identity and Annihilation: $a \oplus 0 = a$, $a \otimes 1  = a$, $0 \otimes a = 0$
\item Associativity: $(a \oplus b) \oplus c = a \oplus (b \oplus c)$, $(a \otimes b) \otimes c = a \otimes (b \otimes c)$
\item Commutativity: $a \oplus b = b \oplus a$, $a \otimes b = b \otimes a$
\item Distributivity: $a \otimes (b \oplus c) = (a \otimes b) \oplus (a \otimes c)$
\end{itemize}
\end{definition}

Suppose we have some domain $\mathbb{K}$ and an operator set $O = \{\oplus^1, \oplus^2,\ldots \oplus^k, \otimes\}$ such that $0$ is the identity for each $\oplus^i \in O$ and $(\mathbb{K}, \oplus^i, \otimes)$ forms a commutative semiring for each $i$. We then define a relation with an annotation from $\mathbb{K}$ for each tuple.

\begin{definition}
\label{annotated}
An \emph{annotated relation} with annotations from $\mathbb{K}$, or a $\mathbb{K}$-relation, over attribute set $F$ is a set $\{(t_1, \lambda_1)$, $(t_2, \lambda_2)$, $\dots$, $(t_N, \lambda_N)\}$ such that for all $1 \le i \le N$, $t_i \in \mD^F, \lambda_i \in \mathbb{K}$ and for all $1 \le j \le N : i \neq j \rightarrow t_i \neq t_j$.
\end{definition}

Green et al. define a $\mathbb{K}$-relation to be a function $R_F: \mD^F \to \mathbb{K}$~\cite{Semiring}. Our notion can be viewed as an explicit listing of this function's support. Note that unlike an explicit listing of the function's support, our table does allow tuples with $0$ annotations. However, under our definitions of the operators below, an annotation of $0$ is semantically equivalent to a tuple being absent (we discuss this further in Section~\ref{sec:univ-aggregation}). Note that we can have an annotated relation of the form $R_\emptyset$ of size $1$ containing the empty tuple with some annotation. We now define joins and aggregations over annotated relations. 

\paragraph*{Joins over Annotated Relations} Informally, a join over annotated relations is obtained as follows: (i) We perform a regular join on the non-annotated part of the relations. (ii) For each output tuple $t$ of the join, we set its annotation to the product of the annotations of the input tuples used to produce $t$. We define a join $\Join_{F \in \mE} R_F$ as:

\[ \Join_{F \in \mE} R_{F} = \{ (t,\lambda) : \lambda = \prod_{F \in \mE} \lambda_{F} \text{ in which } (\pi_{F}(t),\lambda_{F}) \in R_{F} \} \] 

\paragraph*{Aggregations over Annotated Relations} An aggregation over an
annotated relation $R_F$ is specified by a pair $(A,\oplus)$ where $A
\in F$, and $\oplus \in O$. The aggregation takes groups of tuples in
$R_F$ that share values of all attributes other than $A$, and produces
a single tuple corresponding to each group, whose annotation is the
$\oplus$-aggregate of the annotations of the tuples in the
group. Suppose that $R$ has schema $R(A,B)$ in which $A$ is a single
attribute and $B$ is a set of attributes. Then, the result of
aggregation $(A,\oplus)$ has only the attributes $B$ and
\[ \sum_{(A,\oplus)} R_{A,B} = \{ (t_{B},\lambda) : t_{B} \in \pi_{B} R \text{ and } \lambda = \sum^{\oplus}_{ (t,\lambda_t) \in R : \pi_B t = t_B  } \lambda_t  \} \]

One can define the meaning of aggregate queries in a straightforward
way: first compute the join and then perform aggregations. Figure~\ref{fig:operations-ex} shows some examples of
operators on relations.  For the remainder of our work, we assume that all relations are $\mathbb{K}$-relations.


\begin{figure}
\begin{tabular}[t]{|c|c||c}
\multicolumn{3}{c}{$R$}\\
A & B & $\mathbb{K}$ \\\hline
1 & 1 & 1 \\
2 & 1 & 2
\end{tabular}
\hspace{5mm}
\begin{tabular}[t]{|c|c||c}
\multicolumn{3}{c}{$S$}\\
B & C & $\mathbb{K}$ \\\hline
1 & 1 & 3 \\
1 & 2 & 4
\end{tabular}
\hspace{5mm}
\begin{tabular}[t]{|c|c|c||c}
\multicolumn{4}{c}{$R \Join S$}\\
A & B & C & $\mathbb{K}$ \\\hline
1 & 1 & 1 & 3 \\
1 & 1 & 2 & 4 \\
2 & 1 & 1 & 6 \\
2 & 1 & 2 & 8
\end{tabular}
\hspace{5mm}
\begin{tabular}[t]{|c||c}
\multicolumn{2}{c}{$\Sigma_{C} S$}\\
B & $\mathbb{K}$ \\\hline
1 & 7
\end{tabular}
\caption{Selected examples illustrating the operators over the semiring $(\textbf{R}_+, +, \cdot)$}
\vspace{-5pt}
\label{fig:operations-ex}
\end{figure}

\subsection{The AJAR problem}
\begin{definition}
Given some global attribute set $\mV$ and operator set $O$, we define an \emph{aggregation ordering} to be a sequence $\alpha = \alpha_1, \alpha_2, \dots, \alpha_s$ such that for each $1 \le i \le s$, $\alpha_i = (a_i, \oplus_i)$ for some $a_i \in \mV, \oplus_i \in O$
\footnote{Note that, by this definition, the operators in aggregation ordering can be the product aggregation $\otimes$. However, product aggregations require different definitions, see Section~\ref{sec:univ-aggregation}.}.
In addition, attributes occur at most once, i.e., $a_j \neq a_k$ for each $1\le j < k \le s$.
\end{definition}

Informally, the aggregation ordering is just a sequence of
attribute-operator pairs such that each attribute in the sequence
occurs at most once. Note that the aggregation ordering specifies the
order and manner in which attributes are aggregated. The ordering does
not need to contain every attribute; we use the term
\emph{output attributes} to denote the attributes not in the ordering.

$V(\alpha)$ represents the set of attributes that appear in $\alpha$, and $V(-\alpha)$ represents $V \backslash V(\alpha)$ (i.e. the output attributes). When $F \subseteq V(\alpha)$, we use $\alpha_F$ to represent a sequence $\beta$ that is equivalent to $\alpha$ restricted to the attributes in $F$, i.e. $V(\beta) = F$, and any $(A, \oplus), (B, \oplus') \in \alpha$ such that $A,B \in F$ must also appear in $\beta$ with their order preserved.

\begin{definition}[\ajar]
Given some hypergraph $\mH = (\mV, \mE)$ and an aggregation ordering $\alpha$, an $\ajar$ query $Q_{\mH, \alpha}$ is a function over instances of $\mH$ such that
\[Q_{\mH, \alpha} (\{R_F | F \in \mE\}) = \Sigma_{\alpha_1} \cdots \Sigma_{\alpha_{|\alpha|}} \Join_{F \in \mE} R_F .\]
\end{definition}

For an \ajar query, we define $\OUT$ to be the final output size, rather than the output size of the join. There are two technical challenges when it comes to solving an \ajar query:
\begin{itemize}
\item Multiple aggregation orders can give the same output over any
  database instance, and using some aggregation orders may give faster
  runtimes than others, e.g. some orders may allow early
  aggregation. Thus we need to identify which orders are equivalent to
  the given order and which order leads to the smallest runtime.
\item $\OUT$ for an \ajar query with $|\alpha| > 0$ is smaller than the output size of the join part of the query. Thus the standard GHDJoin runtime of $\IN^{fhw} + \OUT$ is harder to achieve for \ajar queries. Naively applying a variant of GHDJoin that performs aggregations (Appendix Algorithm~\ref{Aggro}) to \ajar leads to a higher runtime of $\IN^{fhw}\cdot \OUT$ (see Appendix~\ref{sec:background}). Thus we need to identify which GHDs can be used for efficient processing of \ajar queries.
\end{itemize}
We handle these technical challenges in turn.

\subsection{Equivalent Orderings}\label{subsec:simple-solution}
Distinct aggregation orders can be equivalent in that they produce the same output on every instance. For example, suppose $\alpha = ((A, +), (B, +))$ and $\beta = ((B,+), (A,+))$, where $A, B$ are two attributes in some $\mH$. Then two $\ajar$ queries with orderings $\alpha$ and $\beta$ clearly produce the same output for any instance $I$ over $\mH$. This is because we can obtain $\beta$ from $\alpha$ by switching the positions of two adjacent aggregations {\em with the same aggregation operator}. Similarly, if $\mH$ consists only of relations $\{A,B\}, \{B,C\}$, then the orderings $\alpha = ((A, +),(C, \max))$ and $\beta = ((C, \max),(A, +))$ are equivalent, since you can independently aggregate the two attributes away before joining the two relations on $B$. We now formally define equivalent orderings.

\begin{definition}[Equivalent Orderings]
Given a hypergraph $\mH$, define the equivalence relation between orderings $\equiv_\mH$ such that $\alpha \equiv_\mH \beta$ if and only if $Q_{\mH, \alpha} (I) = Q_{\mH, \beta}(I)$ for all database instances $I$ over the schema $\mH$. 
\end{definition}

We say that two operators $\oplus$, $\oplus'$ are distinct over a domain $\mathbb{K}$ (denoted by $\oplus \neq \oplus'$) if $\exists x, y \in \mathbb{K} : x \oplus y \neq x \oplus' y$. And $\oplus = \oplus'$ means that $\forall x,y \in \mathbb{K}$, $x \oplus y = x \oplus' y$. Of course, distinct operators do not, in general, commute.


We now state a theorem specifying two conditions under which aggregations can commute. We will later show these conditions to be complete.

\begin{theorem}
\label{commute}
Suppose we are given a relation $R_F$ such that $A,B \in F$ and two operators $\oplus', \oplus \in O$. Then 
\[ \Sigma_{(A, \oplus)} \Sigma_{(B, \oplus')} R_F = \Sigma_{(B, \oplus')} \Sigma_{(A, \oplus)} R_F\]
if one of the following conditions hold:
\begin{itemize}
\item $\oplus = \oplus'$
\item There exist relations $R_{F_1}$ and $R_{F_2}$ such that $A \notin F_1$, $B \notin F_2$, and $R_{F_1} \Join R_{F_2} = R_F$. 
\end{itemize}
\end{theorem}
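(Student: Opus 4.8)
The plan is to handle the two cases separately, since they are structurally quite different, and in both cases to reduce the claimed identity to a statement about a single "group" of tuples that agree on all attributes of $F$ other than $A$ and $B$. Fix such a group $G$, indexed by pairs $(a,b) \in \mathcal{D}^A \times \mathcal{D}^B$ of the values the $A$- and $B$-attributes take; write $\lambda_{a,b}$ for the annotation of the tuple in $R_F$ with those values (taking $\lambda_{a,b} = 0$ when that tuple is absent, which is legitimate by the remark after Definition~\ref{annotated}). Both sides of the claimed equation, restricted to $G$, produce a single annotation, and I want to show these annotations agree.

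For the first case ($\oplus = \oplus'$), the left-hand side evaluates to $\bigoplus_{b} \bigl( \bigoplus_{a} \lambda_{a,b} \bigr)$ and the right-hand side to $\bigoplus_{a} \bigl( \bigoplus_{b} \lambda_{a,b} \bigr)$, both over the same (finite) index set $\mathcal{D}^A \times \mathcal{D}^B$; these are equal by associativity and commutativity of $\oplus$, which hold since $(\mathbb{K}, \oplus, \otimes)$ is a commutative semiring. This is essentially the routine double-sum-reordering argument and should be dispatched quickly.

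For the second case, I would use the factorization $R_{F_1} \Join R_{F_2} = R_F$ to write $\lambda_{a,b} = \mu_b^{(1)}(a) \otimes \nu_a^{(2)}(b)$ — more precisely, since $A \notin F_1$, the annotation contributed by $R_{F_1}$ to a tuple in our group depends only on $b$ (and the fixed other attributes), call it $\mu_b$; and since $B \notin F_2$, the annotation from $R_{F_2}$ depends only on $a$, call it $\nu_a$; so $\lambda_{a,b} = \mu_b \otimes \nu_a$ by the definition of the join over annotated relations. Then the left-hand side of the identity is $\bigoplus_b^{\oplus'} \bigl( \bigoplus_a^{\oplus} (\mu_b \otimes \nu_a) \bigr) = \bigoplus_b^{\oplus'} \bigl( \mu_b \otimes \bigoplus_a^{\oplus} \nu_a \bigr)$, pulling the constant $\mu_b$ out of the inner $\oplus$-sum via distributivity of $\otimes$ over $\oplus$, and then factoring the constant $\bigl(\bigoplus_a^{\oplus}\nu_a\bigr)$ out of the outer $\oplus'$-sum similarly, giving $\bigl(\bigoplus_b^{\oplus'} \mu_b\bigr) \otimes \bigl(\bigoplus_a^{\oplus}\nu_a\bigr)$. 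By the symmetric computation the right-hand side equals the same product, so the two sides agree on $G$. One care point: I must make sure the set of $b$-values (resp. $a$-values) appearing, and the zero-padding, is handled consistently so that the outer aggregation ranges over the right $\pi$-projected domain on both sides — this is where the convention that a $0$ annotation is equivalent to an absent tuple does the bookkeeping.

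The main obstacle is the second case, and within it the subtlety is precisely that $R_{F_1}$ and $R_{F_2}$ may have attributes other than $A$, $B$, and the shared "group" attributes — so I have to be careful that restricting to a fixed group of $R_F$ pulls back to a well-defined, consistent slice of $R_{F_1}$ and $R_{F_2}$, and that $\mu$ genuinely depends only on $b$ and $\nu$ only on $a$. Getting the projection/join semantics exactly right (rather than the clean "$\lambda_{a,b} = \mu_b \otimes \nu_a$" heuristic) is the only real work; the algebra afterward is just two applications of distributivity on each side.
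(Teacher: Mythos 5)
Your proposal is correct and follows essentially the same route as the paper: the first case is the commutativity/associativity double-sum reordering, and your second case is exactly the paper's ``push the $B$-aggregation into $R_{F_1}$ and the $A$-aggregation into $R_{F_2}$'' identity, just unfolded at the level of individual annotations via $\lambda_{a,b} = \mu_b \otimes \nu_a$ and two applications of distributivity. The only difference is that the paper states the push-down step as a one-line relational-algebra identity while you prove it explicitly (and correctly flag the zero-padding bookkeeping), which is a fair expansion rather than a different argument.
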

\begin{proof}
The first condition follows trivially from the commutativity of our operators. The second condition follows from the fact that we can ``push down'' aggregations. \begin{align*}
 \Sigma_{(A, \oplus)} \Sigma_{(B, \oplus')} R_{F_1} \Join R_{F_2} & = \left(\Sigma_{(B, \oplus')} R_{F_1} \right) \Join \left(\Sigma_{(A, \oplus)} R_{F_2}  \right) \\
 & = \Sigma_{(B, \oplus')} \Sigma_{(A, \oplus)} R_{F_1} \Join R_{F_2}
\end{align*}\end{proof}

\begin{algorithm}[t]
\caption{TestEquivalence($\mH = (\mV_\mH, \mE_\mH)$, $\alpha$, $\beta$)}
\label{algo:equivalance-test}
\textbf{Input:} Query hypergraph $\mH$, orderings $\alpha$, $\beta$.\\
\textbf{Output:} True if $\alpha \equiv_{\mH} \beta$, False otherwise.
\begin{algorithmic}
\If{$|\alpha| = |\beta| = 0$}
\State \Return True
\EndIf
\State Remove $V(-\alpha)$ from $\mH$, then divide $\mH$ into connected components $C_1,\ldots C_m$.
\If{$m > 1$}
\State \Return $\land_{i} \text{TestEquivalence}(\mH, \alpha_{C_i}, \beta_{C_i})$
\EndIf
\State Choose $j$ such that $\beta_j = \alpha_1$. Let $\beta_j = (b_j, \oplus'_j)$.
\If{$\exists i < j : \beta_i = (b_i, \oplus'_i), \oplus'_i \neq \oplus'_j$ and there is a path from $b_i$ to $b_j$ in $\{b_i,b_{i+1},\ldots,b_{|\alpha|}\}$}
\State \Return False
\EndIf
\State Let $\beta'$ be $\beta$ with $\beta_j$ removed.
\State Let $\alpha'$ be $\alpha$ with $\alpha_1$ removed.
\State \Return $\text{TestEquivalence}(\mH, \alpha', \beta')$
\end{algorithmic}
\vspace{-5pt}
\end{algorithm}

These two conditions give us a simple procedure for testing when an ordering $\beta$ is equivalent to the given $\alpha$. Algorithm~\ref{algo:equivalance-test} gives the procedure's pseudo-code. To avoid triviality, we assume $\alpha$ and $\beta$ have the same set of attributes and assign the same operator to the same attributes. First we return true if both $\alpha$ and $\beta$ are empty. Then we check if $\alpha$ can be shown to be equivalent to $\beta$ using the conditions from Theorem~\ref{commute}. This procedure is both sound and complete:

\begin{lemma}\label{lemma:equivalence-test-sound-complete}
Algorithm~\ref{algo:equivalance-test} returns True iff $\alpha \equiv_\mH \beta$.
\end{lemma}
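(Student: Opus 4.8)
The plan is to induct on $s:=|\alpha|=|\beta|$ (recall we assume $\alpha,\beta$ have the same attribute set and assign each attribute the same operator), following the three cases of the algorithm: the base case $s=0$, the ``disconnected'' case $m>1$, and the ``connected'' case $m=1$. For soundness the inductive hypothesis is simply ``each recursive call correctly decides equivalence''. For completeness I would prove the stronger statement that \emph{whenever Algorithm~\ref{algo:equivalance-test} returns \textsc{False} on $(\mH,\gamma,\delta)$ there is an instance $I$ with $Q_{\mH,\gamma}(I)\neq Q_{\mH,\delta}(I)$ in which every attribute of $V(-\gamma)$ has a singleton domain}. The extra invariant is what makes the induction go through: when an attribute is peeled off the ordering (turning it into an ``output'' attribute one level up), a singleton domain makes its later aggregation a no-op, so the discrepancy cannot be erased.

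\textbf{Soundness.} The base case is immediate. For $m>1$, group the relations of $\mE_\mH$ by which component $C_i$ their attributes in $V(\alpha)$ fall into (relations contained in $V(-\alpha)$ form an extra block); this is well defined since the $V(\alpha)$-attributes of any one relation are pairwise adjacent in $\mH$ with $V(-\alpha)$ removed. An attribute of $C_i$ occurs only in block $i$, so repeatedly applying the ``push-down'' identity from the proof of Theorem~\ref{commute} lets every aggregation of $\alpha$ descend into the unique block containing its attribute; this gives a factorization of $Q_{\mH,\alpha}$ as a fixed outer join of the per-block queries, with the \emph{same} outer join for $\beta$ (it depends only on $\mH$ and on $V(\alpha)=V(\beta)$). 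A routine argument (extend a per-block instance to $\mH$ by all-$1$ relations, which join injectively) shows that each recursive call tests exactly whether its block's restricted orderings are equivalent, so if all return true then $Q_{\mH,\alpha}=Q_{\mH,\beta}$. For $m=1$ with $\beta_j=\alpha_1$: if the algorithm does not report \textsc{False}, I would move $\beta_j$ to the front of $\beta$ by swapping it past $\beta_{j-1},\dots,\beta_1$ in turn. When $\beta_j$ sits just right of $\beta_i$, the relation on which the two aggregations act is $R=\Sigma_{U}\Join_{F\in\mE_\mH}R_F$ with $U=\{b_{i+1},\dots,b_{|\beta|}\}\setminus\{b_j\}$. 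If $\oplus'_i=\oplus'_j$ this is the first case of Theorem~\ref{commute}. Otherwise the algorithm guarantees $b_i$ and $b_j$ lie in different components of $\mH$ restricted to $\{b_i,\dots,b_{|\beta|}\}$; grouping $\mE_\mH$ by the component of that restricted hypergraph met by each relation and pushing the $U$-aggregations into their blocks exactly as above writes $R$ as $R_{F_2}\Join R_{F_1}$ with $b_j\notin F_2$ and $b_i\notin F_1$, which is the second case of Theorem~\ref{commute}. After the swaps $Q_{\mH,\beta}=Q_{\mH,\gamma}$ with $\gamma=\beta_j,\beta_1,\dots,\beta_{j-1},\beta_{j+1},\dots$, so $Q_{\mH,\alpha}=\Sigma_{\alpha_1}Q_{\mH,\alpha'}$ and $Q_{\mH,\gamma}=\Sigma_{\alpha_1}Q_{\mH,\beta'}$ for the $\alpha',\beta'$ of the algorithm, and the final recursive call closes the case.

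\textbf{Completeness.} By the same case split it suffices to build the promised counterexample at the first \textsc{False}. The only ``new'' \textsc{False} is in the $m=1$ branch: some $i<j$ has $\oplus'_i\neq\oplus'_j$ and a path, which I take to be a shortest path $b_i=v_0,v_1,\dots,v_k=b_j$ with all $v_\ell\in\{b_i,\dots,b_{|\beta|}\}$ and distinct witnessing hyperedges $F_0,\dots,F_{k-1}$ (distinctness and ``inducedness'' follow from shortestness; note $v_1,\dots,v_k$ all lie strictly after position $i$ in $\beta$, while $v_k=\alpha_1$ is outermost in $\alpha$). Fix $x,y\in\mathbb{K}$ with $x\oplus'_i y\neq x\oplus'_j y$, and set $\xi(0)=x,\xi(1)=y$. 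Build $I$: give every $v_\ell$ the domain $\{0,1\}$ and every other attribute a singleton domain; set every relation other than the $R_{F_\ell}$ to the constant $1$; set $R_{F_0}$ to $[\,v_0=v_1\,]\cdot\xi(v_0)$ with $[\,\cdot\,]\in\{0,1\}\subseteq\mathbb{K}$, and $R_{F_\ell}$ to $[\,v_\ell=v_{\ell+1}\,]$ for $1\le\ell\le k-1$. On $I$ every aggregation of a singleton-domain attribute and every join with an all-$1$ relation is a no-op, so both queries reduce to aggregating $f(v_0,\dots,v_k)=\xi(v_0)\prod_{\ell=0}^{k-1}[\,v_\ell=v_{\ell+1}\,]$ over $v_0,\dots,v_k$ in the respective relative orders. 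Because $0$ is the identity of every operator in $O$, aggregating any intermediate $v_\ell$ simply splices it out of the chain, independent of the operator and of the order; carrying this out gives $Q_{\mH,\alpha}(I)=x\oplus'_j y$ (since $v_k$ is aggregated last) and $Q_{\mH,\beta}(I)=x\oplus'_i y$ (since $v_0$ is aggregated last), which differ, and every output attribute is singleton, so the invariant holds. Finally, recursive \textsc{False}s lift: given a counterexample for a sub-call meeting the invariant, set every relation of $\mH$ outside that sub-structure to all-$1$ and every attribute outside it (which includes every newly ``output'' attribute) to a singleton domain; the soundness-side decompositions show the full queries then agree with the sub-queries up to trivial columns, and the singleton invariant ensures the peeled-off aggregation $\Sigma_{\alpha_1}$ (case $m=1$) or the other components' aggregations (case $m>1$) cannot merge the differing annotations.

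\textbf{Main obstacle.} The soundness swaps are routine once the intermediate relation is factored, so the real work is completeness: designing the gadget so that \emph{all} the other aggregations --- of which there may be many, with arbitrary operators and in an order we do not control --- are genuinely inert. The two crucial facts are that $0$ is the identity of every $\oplus\in O$ (so singleton domains plus the indicator chain make intermediate aggregations order-independent no-ops) and that joining with all-$1$ relations neither filters nor rescales. Propagating this gadget up the recursion while keeping ``all output attributes are singletons'' true is the other delicate point, and is the reason that invariant is built into the inductive statement rather than established afterwards.
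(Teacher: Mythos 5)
Your proof is correct and matches the paper's own treatment: the paper omits a direct proof of this lemma (deferring to Section~\ref{sec:equivalent-orderings}), but the proof it does give for the product-aggregation analogue, Lemma~\ref{lemma:AlgoP-sound-complete} in Appendix~\ref{sec:univ-aggregation-proofs}, is structurally the same induction on $|\alpha|$ — soundness via the push-down factorization across components and adjacent swaps justified by Theorem~\ref{commute}, completeness via a two-valued path gadget with singleton domains elsewhere and annotations $x,y$ chosen so that $x\oplus'_i y\neq x\oplus'_j y$ (cf.\ the construction of $\widehat{R_\mV}$ in Appendix~\ref{sec:app-equiv}). Your explicit singleton-domain invariant for lifting counterexamples through the recursion is a reasonable way to make precise what the paper leaves implicit.
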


We omit this lemma's proof because it is very similar to and implied by the proofs required in Section~\ref{sec:equivalent-orderings}.

To answer \ajar queries, we need one more component in addition to Algorithm~\ref{algo:equivalance-test}; namely AggroGHDJoin, a straightforward variant of GHDJoin that performs aggregations (Algorithm~\ref{Aggro} in Appendix~\ref{sec:background}). The first step of AggroGHDJoin is similar to that of GHDJoin, namely performing joins within each bag of the GHD to get intermediate relations. We need to do some extra work to ensure that each annotation is multiplied only once, since a relation may be joined in multiple bags. After that, instead of calling Yannakakis' algorithm on the intermediate relations, AggroGHDJoin calls AggroYannakakis (Algorithm~\ref{AggroY} in Appendix~\ref{sec:background}), a well-known variant of Yannakakis that performs aggregations. AggroYannakakis initially performs semijoins like Yannakakis (lines $1$-$8$ in Algorithm~\ref{NoAggroY}). But in the bottom-up join phase (line $11$), AggroYannakakis aggregates out all attributes that have $F$ as their $TOP$ node, before joining $R_F$ with $R_P$. 

Armed with Algorithm~\ref{algo:equivalance-test} and AggroGHDJoin, we have a simple way to answer an $\ajar$ query $Q_{\mH, \alpha}$. We search through all orders, running Procedure $1$ to check for equivalence with $\alpha$. For each order $\beta$ such that $\beta \equiv_\mH \alpha$, we search all through GHDs and check if they are compatible with $\beta$. A GHD $\mT$ is defined to be {\em compatible} with an ordering $\beta$ if, for all attribute pairs $A,B$, $TOP_\mT(A)$ being an ancestor of $TOP_\mT(B)$ implies that either $A$ is an output variable or $A$ occurs before $B$ in $\beta$ (note this precludes $B$ from being an output variable). We can run AggroGHDJoin on any compatible GHD to answer the \ajar query. The runtime of AggroGHDJoin on a compatible GHD $(\mT, \chi)$ is given by $\Ot(\IN^{fhw(\mT,\mH)} + \OUT)$. We choose the compatible GHD that has the smallest $fhw$, and use it to answer the query. The theorem below states our runtime:
\begin{theorem}\label{thm:ajar-runtime}
Given a \ajar query $Q_{\mH,\alpha}$, let $w^*$ denote the smallest fhw for a GHD compatible with an ordering $\equiv_\mH \alpha$; the runtime of our approach is $\Ot(\IN^{w^*} + \OUT)$.
\end{theorem}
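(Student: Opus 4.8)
The plan is to reduce the theorem to two facts about AggroGHDJoin run on a GHD compatible with an ordering equivalent to $\alpha$: (a) it returns $Q_{\mH,\alpha}$, and (b) on a compatible GHD $(\mT,\chi)$ it runs in time $\Ot(\IN^{fhw(\mT,\mH)}+\OUT)$. Granting (a) and (b), the theorem follows quickly. Our algorithm enumerates every ordering $\beta$ over $\mV$ with $V(\beta)=V(\alpha)$ and matching operators, keeps those for which $\text{TestEquivalence}(\mH,\alpha,\beta)$ returns True, and by Lemma~\ref{lemma:equivalence-test-sound-complete} this is exactly the set of orderings $\equiv_\mH\alpha$; for each it enumerates all compatible GHDs, so in particular it examines the compatible GHD whose $fhw$ realizes $w^*$. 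Since it runs AggroGHDJoin on a compatible GHD of minimum $fhw$ -- whose $fhw$ is therefore exactly $w^*$, as any compatible GHD it considers has $fhw\ge w^*$ by definition of $w^*$ -- fact (b) gives the claimed running time. (The enumeration itself ranges over a space whose size depends only on the query, not on the data, so it contributes nothing to the $\Ot$.)

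\textbf{Correctness (a).} I would first note $Q_{\mH,\beta}=Q_{\mH,\alpha}$ since $\beta\equiv_\mH\alpha$, and then argue AggroGHDJoin computes $Q_{\mH,\beta}$ on a GHD $(\mT,\chi)$ compatible with $\beta$. The bag-join phase produces, for each bag $t$, the annotated join of the input relations restricted to $\chi(t)$, with the extra bookkeeping ensuring that every input relation's annotation is multiplied exactly once across the tree; joining these over $\mT$ therefore reproduces $\Join_{F\in\mE}R_F$. In the AggroYannakakis phase the semijoin passes drop no tuple that survives the global join and do not alter annotations. When the bottom-up phase aggregates an attribute $A$ at $F=TOP_\mT(A)$, the running intersection property guarantees $A$ occurs only in the subtree rooted at $F$, which has already been folded into $R_F$ (and $A\notin\chi(P)$ for $F$'s parent $P$), so eliminating $A$ there is a legal ``push-down'' of $\Sigma_{(A,\oplus)}$, justified by distributivity exactly as in the second case of Theorem~\ref{commute}. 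Hence AggroGHDJoin computes $Q_{\mH,\gamma}$ for the order $\gamma$ in which attributes are aggregated bottom-up by $TOP$ node. It then remains to show $\gamma\equiv_\mH\beta$: if $TOP_\mT(A)$ is an ancestor of $TOP_\mT(B)$ then $B$ is aggregated before $A$ in $\gamma$, and compatibility says $A$ precedes $B$ in $\beta$, i.e. $A$ is aggregated after $B$ in $\beta$ as well, so $\gamma$ and $\beta$ agree on such pairs; for a pair with incomparable $TOP$ nodes, one of the two attributes is absent from the relation present when the other is aggregated, so the two aggregations commute via the second case of Theorem~\ref{commute}, and together with commuting equal operators this lets us rewrite $\gamma$ into $\beta$. (This is the soundness direction of the characterization in Section~\ref{sec:equivalent-orderings}, whose machinery I would reuse.) Therefore the output equals $Q_{\mH,\gamma}=Q_{\mH,\beta}=Q_{\mH,\alpha}$.

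\textbf{Runtime (b).} I would split AggroGHDJoin into three phases. The bag-join phase computes, for each of the $m$ bags $t$, a join over $\chi(t)$ with a worst-case optimal algorithm; by the definition of $fhw$ the AGM bound on $\chi(t)$ is $\IN^{\rho^*_t}\le\IN^{fhw(\mT,\mH)}$, so this phase costs $\Ot(\IN^{fhw(\mT,\mH)})$ overall, and the semijoin passes only shrink relations and run in time linear in current sizes. The bottom-up join-and-aggregate phase is the crux, and here I would follow the Yannakakis analysis as adapted to aggregations by Olteanu and Zavodny~\cite{OZVLDB13,OZTODS15}: compatibility forces output attributes to sit at or above non-output attributes along every root-to-leaf path of $\mT$, so non-output attributes get aggregated away before control reaches the output-carrying nodes above them. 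Consequently each intermediate relation in this phase either is still essentially a projection of a bag join, hence of size at most $\IN^{fhw(\mT,\mH)}$, or -- all of its non-output attributes having been eliminated, while semijoin reduction has left every surviving tuple participating in the global join -- injects into a projection of the final answer onto a subset of the output attributes, hence of size at most $\OUT$. Each join-and-aggregate step costs $\Ot$ of the sizes of the at most two relations it touches, and summing over the $O(m)$ steps yields $\Ot(\IN^{fhw(\mT,\mH)}+\OUT)$.

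\textbf{Main obstacle.} The step I expect to be the main obstacle is the size bound in the bottom-up aggregate phase: proving that the intermediate relations never blow past $O(\IN^{fhw(\mT,\mH)}+\OUT)$ is precisely where compatibility is used, and it generalizes the delicate part of Yannakakis' $O(\IN+\OUT)$ argument -- a careless variant instead yields the $\IN^{fhw}\cdot\OUT$ bound noted for Algorithm~\ref{Aggro}. Secondary obstacles are the bookkeeping in the bag-join phase that keeps each annotation from being multiplied twice (and the interaction of $0$-annotations with the definition of $\OUT$), and the formal verification that the realized order $\gamma$ is equivalent to $\beta$, which reuses the characterization of Section~\ref{sec:equivalent-orderings}.
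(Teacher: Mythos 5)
Your proposal is correct and follows essentially the same route as the paper: the paper also reduces the theorem to the correctness of AggroGHDJoin on a compatible GHD (Theorem~\ref{thm:aggro-compatible}, proved by a bottom-up induction establishing $R'(t)=\sum_{\alpha_{s(\mT_t)}}\Join_{t'\in\mT_t}R(t')$, which is your ``realized order $\gamma$ is equivalent to $\beta$'' argument in inductive form) and to its runtime (Theorem~\ref{thm:aggro-complexity}, whose two-case analysis of the bottom-up joins matches your dichotomy between bag-sized and output-sized intermediate relations). You also correctly locate the crux in the size bound for the bottom-up phase, which is exactly where the paper invokes the condition that output attributes sit above non-output attributes.
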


\paragraph*{Comparison to Prior Work}
Work by Olteanu and Zavodny~\cite{OZVLDB13,OZTODS15} focuses on a special case of \ajar queries, having a single aggregation operator. For these queries, they have a similar algorithm that iterates over GHDs to find the best compatible one. Their algorithm achieves the same runtime as ours, but cannot handle queries with more than one type of aggregation operator. The FAQ paper uses an algorithm called InsideOut to answer general \ajar queries. The running time of InsideOut equals $\Ot(\IN^{faqw})$ where faqw (FAQ-width) is a new notion of width defined by the FAQ authors~\cite[Section 9.1]{FAQ}. Our algorithm has runtime that is no worse than InsideOut ($w^* \leq faqw, \OUT \leq \IN^{faqw}$), and can be much better when output attributes are present. 
\begin{theorem}\label{thm:ajar-vs-faq-runtime}
For any \ajar query, $w^* \leq faqw$ and $\OUT \le O(\IN^{faqw})$.
\end{theorem}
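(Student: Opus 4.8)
The statement splits into (i) $w^{*}\le faqw$ and (ii) $\OUT\le O(\IN^{faqw})$, which I would handle separately; neither needs the characterization machinery of Section~\ref{sec:equivalent-orderings}, only a careful reading of the definition of $faqw$ from~\cite{FAQ} together with our notion of a compatible GHD.

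\textbf{Part (i): $w^{*}\le faqw$.} Recall $faqw=\min_{\sigma}\mathrm{width}(\sigma)$, the minimum over FAQ-valid variable orderings $\sigma$ of the query $(\mH,\alpha)$ of the FAQ-width of a single ordering. Fix an optimal $\sigma^{*}$. I would argue three things. First, reading $\sigma^{*}$ in the elimination direction yields an aggregation ordering $\beta$ with $\beta\equiv_{\mH}\alpha$: FAQ-validity is a sufficient condition for correctness, and the swaps it permits are exactly adjacent equal-operator aggregations and aggregations whose relations decompose as in the second case of Theorem~\ref{commute}, so $\beta$ lies in the equivalence class of $\alpha$ certified by Algorithm~\ref{algo:equivalance-test}. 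Second, $\sigma^{*}$ induces a GHD $G_{\sigma^{*}}$ by the standard elimination-tree construction (the bag formed when a variable is eliminated is a tree node; its parent is the node of the next-eliminated neighbour), the GHD axioms hold by the usual argument, and $fhw(G_{\sigma^{*}},\mH)\le\mathrm{width}(\sigma^{*})$ because both are the $\rho^{*}$ of the same bags measured against the original hyperedges $\mE$. Third, $G_{\sigma^{*}}$ is compatible with $\beta$: $TOP_{G_{\sigma^{*}}}(A)$ is the bag in which $A$ is eliminated, output attributes are never eliminated and so their $TOP$ nodes lie in the top subtree, and $TOP(A)$ being an ancestor of $TOP(B)$ forces $B$ to be eliminated (hence aggregated) before $A$, i.e. $A$ precedes $B$ in $\beta$, which is precisely compatibility. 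Chaining, $w^{*}\le fhw(G_{\sigma^{*}},\mH)\le\mathrm{width}(\sigma^{*})=faqw$.

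\textbf{Part (ii): $\OUT\le O(\IN^{faqw})$.} Each $\Sigma_{(A,\oplus)}$ retains one tuple per group of input tuples that agree off $A$, so after applying all aggregations in $\alpha$ the tuple set of the result is contained in $\pi_{V(-\alpha)}\!\left(\Join_{F\in\mE}R_{F}\right)$, giving $\OUT\le\left|\pi_{V(-\alpha)}\!\left(\Join_{F\in\mE}R_{F}\right)\right|$. Every tuple of that projection is the restriction to $V(-\alpha)$ of some join tuple, so taking an optimal fractional edge cover $\{x_{F}\}$ of the vertex set $V(-\alpha)$ and applying the projection form of the AGM bound~\cite{AGM} gives $\left|\pi_{V(-\alpha)}\!\left(\Join_{F\in\mE}R_{F}\right)\right|\le\prod_{F\in\mE}|R_{F}|^{x_{F}}\le\IN^{\sum_{F}x_{F}}=\IN^{\rho^{*}_{\mH}(V(-\alpha))}$. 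Finally, by the definition of $faqw$: any FAQ-valid ordering for a query with free variables $V(-\alpha)$ ends by computing a join over all of $V(-\alpha)$, so its width, and hence $faqw$, is at least $\rho^{*}_{\mH}(V(-\alpha))$; combining, $\OUT\le\IN^{\rho^{*}_{\mH}(V(-\alpha))}\le\IN^{faqw}$.

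\textbf{Main obstacle.} The AGM/projection estimate and the elimination-tree construction are routine; the delicate part is the interface with~\cite{FAQ}. In particular I would need to nail down (a) the direction convention relating a variable \emph{elimination} order, the \emph{aggregation} ordering $\beta$, and the ancestor relation among $TOP$ nodes; (b) that the FAQ-width of one ordering is computed against the \emph{original} hyperedges, so it really dominates $fhw(G_{\sigma^{*}},\mH)$; and (c) that $faqw\ge\rho^{*}_{\mH}(V(-\alpha))$ is genuinely forced by the FAQ-width definition in the presence of free variables. Each of these is ``by inspection of~\cite[Section 9.1]{FAQ}'', but a convention mismatch at any of them would break the chain, so getting them right is the crux.
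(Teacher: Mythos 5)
Your proposal is correct and follows essentially the same route as the paper's Appendix~\ref{subsec:faq-comparison-proof}: the paper likewise takes an optimal FAQ ordering $\sigma$, builds the elimination tree by explicitly simulating the rounds of InsideOut (your ``standard elimination-tree construction''), observes that the resulting GHD is compatible with $\sigma$ by construction and that its $fhw$ equals $\max(p^*_\mH(V(-\sigma)), \max_k p^*_\mH(U_k)) = faqw(\sigma)$ because every other bag is either an input edge or a subset of its child, and gets the $\OUT$ bound from the root bag $V(-\sigma)$ whose AGM bound is folded into $faqw$ by definition. Your three flagged ``interface'' points (elimination direction vs.\ ancestor relation, width measured against the original hyperedges, and $faqw \ge \rho^*_\mH(V(-\alpha))$) all resolve the way you expect, so the chain $w^* \le fhw(G_{\sigma}) \le faqw$ goes through as written.
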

This theorem is proved in Appendix~\ref{subsec:faq-comparison-proof}. Notice that the InsideOut runtime is not {\em output-sensitive}, i.e. it does not have a $+\text{ }\OUT$ term. As a result the runtime can be very high when the output is small relative to the number of output attributes; this is demonstrated by Example~\ref{ex:faqoutput} in the appendix. FAQ does have a high-level discussion of approaches to make InsideOut output-sensitive~\cite[Section 10.2]{FAQ}; indeed, simply using GHDJoin instead of their bespoke algorithm can achieve output-sensitive bounds, which we discuss in Appendix~\ref{sec:app-related}.


\paragraph*{Discussion} 
We presented a remarkably simple procedure for solving \ajar
queries. The procedure involves a brute force search over different
orderings and GHDs, but this is usually unavoidable as finding the
best ordering and GHD is NP-Hard. Deciding if an ordering is
equivalent to the given ordering is enabled by
Algorithm~\ref{algo:equivalance-test}, which takes time polynomial in
the number of attributes. Determining if a GHD is compatible with an
ordering is straightforward as well. Once the best GHD is found, we
use well known, standard algorithms like AggroGHDJoin to answer the
query efficiently. The resulting runtime exponents are smaller than
those of previous work. The simplicity of the algorithm makes it easy
to implement; we have already implemented a special case of a single
additive operator $\oplus$ in our engine~\cite{EH}.

The equivalence/compatibility tests raise the technically interesting question of finding succinct characterizations of: 
\begin{itemize}
\item All orderings equivalent to any given $\alpha$.
\item All GHDs that are compatible with at least one of the equivalent orderings.
\end{itemize}
We answer the first question in Section~\ref{sec:equivalent-orderings} by providing a simple characterization of all equivalent orderings, and the second question in Section~\ref{sec:decomposing} by defining `valid' GHDs and characterizing their structure in relation to unrestricted GHDs.

\section{Characterizing Equivalent Orderings}\label{sec:equivalent-orderings}
We described a procedure for determining when two orderings are equivalent. The equivalence relation $\equiv_\mH$ defines equivalence classes among the orderings, but these classes may be exponential in size; we find a more succinct characterization that lets us enumerate all equivalent orderings. Chen and Dalmau~\cite{QCQChenDalmau} obtained a similar order-equivalence characterization for a special case of the \ajar problem, namely for aggregations ``and'' and ``or''. The characterization was based on a procedure that generated all equivalent orderings. We improve on this result by providing a simple and succinct characterization of the equivalence class of an aggregation ordering with any number operators.

To that end, we develop an enumeration of the constraints that are sufficient and necessary for an ordering to be in the equivalence class of $\alpha$. The constraints are of the form ``$A$ must always occur before $B$'': 
\begin{definition}[$\pre$]
Given an $\ajar$ query $Q_{\mH, \alpha}$, define a constraint $\pre \subseteq \mV \times \mV$ such that $(A,B) \in \pre$ if and only if $A$ precedes $B$ in all orderings that are equivalent to $\alpha$.
\end{definition}
We say $\pre(A,B)$ is true if and only if $(A,B) \in \pre$.

Trivially, the number of pairs in $\pre$ is less than $n^2$. We note that we can use $\pre$ to define a (strict) partial ordering on the attributes; the constraints are clearly antireflexive, antisymmetric, and transitive. We use $<_{\mH, \alpha}$ to denote this partial order. Given an $\ajar$ query $Q_{\mH, \alpha}$, $<_{\mH, \alpha}$ is a partial order of attribute-operator pairs such that for any $(A, \oplus), (B, \oplus') \in \alpha$, $(A, \oplus) <_{\mH, \alpha} (B, \oplus')$ if $\pre(A,B)$ (see Definition~\ref{def:partial-order} for the exact definition). The partial order $<_{\mH, \alpha}$ is easier to use for proofs; we use the partial order to show the soundness and completeness of these constraints.

\begin{theorem}[Soundness and Completeness of $<_{\mH,\alpha}$]
\label{thm:dag-soundness-completeness}
Suppose we are given a hypergraph $\mH = (\mV, \mE)$ and aggregation orderings $\alpha, \beta$. Then $\alpha \equiv_\mH \beta$ if and only if $\beta$ is a linear extension of $<_{\mH,\alpha}$.
\end{theorem}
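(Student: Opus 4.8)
The plan is to prove both directions of the biconditional, with the soundness direction (if $\beta$ is a linear extension of $<_{\mH,\alpha}$, then $\alpha \equiv_\mH \beta$) being the more routine one and the completeness direction (if $\alpha \equiv_\mH \beta$, then $\beta$ extends $<_{\mH,\alpha}$) being where the real work lies. Since $<_{\mH,\alpha}$ is defined via $\pre$, which by definition collects exactly the pairs $(A,B)$ forced in \emph{all} equivalent orderings, the completeness direction is almost definitional once we know $\pre$ is \emph{consistent}, i.e. that it really is a partial order and that at least one linear extension exists and is equivalent to $\alpha$ (namely, $\alpha$ itself must be such an extension). The substance, then, is showing soundness: that \emph{every} linear extension of $<_{\mH,\alpha}$ is equivalent to $\alpha$. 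Equivalently, if $\beta$ is \emph{not} equivalent to $\alpha$, some required precedence $\pre(A,B)$ must be violated, which means we must pin down a concrete, checkable description of $\pre$ — this is the bridge to Algorithm~\ref{algo:equivalance-test}.

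First I would establish the explicit characterization of $\pre$: I claim $(A,B)\in\pre$ precisely when $A$ appears before $B$ in $\alpha$, their operators $\oplus_A,\oplus_B$ are distinct, and there is a path from $A$ to $B$ in the hypergraph obtained from $\mH$ by deleting all attributes that occur strictly before $B$ in $\alpha$ except $A$ itself (mirroring the path condition in the \textbf{if} test of Algorithm~\ref{algo:equivalance-test}) — together with the transitive closure of such constraints and the constraint that output attributes in $V(-\alpha)$ are unconstrained among themselves but every aggregated attribute connected to an output-attribute component is handled by the connected-component split. The soundness of these individual constraints follows from Theorem~\ref{commute}: two adjacent aggregations can be swapped iff their operators agree or the relation factors so that the two attributes live in disjoint sub-relations; unrolling this, a swap of $(A,\oplus_A)$ past $(B,\oplus_B)$ with $\oplus_A\neq\oplus_B$ is legal only if, after projecting/aggregating away everything between them, the residual hypergraph disconnects $A$ from $B$. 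So if $\beta$ violates $\pre(A,B)$, no sequence of legal adjacent transpositions can turn $\alpha$ into $\beta$, and (this is the key lemma) an explicit instance witnesses $Q_{\mH,\alpha}\neq Q_{\mH,\beta}$.

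The main obstacle is exactly that last point: constructing a counterexample database instance showing non-equivalence whenever a $\pre$-constraint is violated. The natural approach is an induction paralleling the recursion of Algorithm~\ref{algo:equivalance-test} — peel off $\alpha_1=(a_1,\oplus_1)$, handle the connected-component split (where an instance on one component can be built and then ``padded'' trivially on the others using size-$1$ relations with annotation $1$), and in the core case where $\beta_j=\alpha_1$ sits below some $\beta_i$ with a distinct operator connected by a path, build a small instance on the relations along that path on which the two orders of the first aggregation genuinely differ, e.g. by choosing annotations so that $\oplus_i$ and $\oplus_j$ fail to commute on the relevant marginals ($\exists x,y$ with $x\oplus_i y \neq x\oplus_j y$ gives the seed). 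I expect this to require care in threading the witness through the path and in arguing that the constructed discrepancy survives the remaining aggregations; the connectedness hypothesis is what guarantees the discrepancy cannot be ``killed'' by an intervening independent aggregation. Finally, I would assemble: soundness of each constraint $\Rightarrow$ $\pre$ well-defined and $\alpha$ a linear extension of $<_{\mH,\alpha}$; the counterexample construction $\Rightarrow$ violating any constraint breaks equivalence, hence every equivalent $\beta$ is a linear extension; Theorem~\ref{commute} applied repeatedly along a topological-sort ``bubble sort'' $\Rightarrow$ every linear extension is reachable from $\alpha$ by legal swaps, hence equivalent. This also yields Lemma~\ref{lemma:equivalence-test-sound-complete} as a corollary, since Algorithm~\ref{algo:equivalance-test} is just checking membership in $<_{\mH,\alpha}$.
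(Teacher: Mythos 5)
Your high-level ingredients do match the paper's: Theorem~\ref{commute} plus a ``bubble sort'' over adjacent transpositions for one direction, and a two-tuple instance threaded along a path, seeded by some $x,y$ with $x\oplus y\neq x\oplus' y$, for the other. But there is a genuine gap at the step you yourself flag as the bridge: the explicit characterization of $\pre$ you propose is incorrect. You take $\pre(A,B)$ to hold (before transitive closure) when the operators differ, $A$ precedes $B$ in $\alpha$, and $A$ is connected to $B$ in $\mH$ after deleting the attributes occurring strictly before $B$ in $\alpha$. The paper's own running example $\sum_A\max_B\max_C\sum_D R(A,B)\Join S(B,D)\Join T(C,D)$ with $\alpha=(A,B,C,D)$ refutes this: $\pre(A,C)$ is a genuine constraint, yet the only path from $A$ to $C$ passes through $B$, which precedes $C$ in $\alpha$ and is therefore deleted by your rule, and no transitive chain $A<X<C$ rescues it, since your direct rule rejects $(B,C)$ (both are $\max$) and $(D,C)$ ($D$ follows $C$ in $\alpha$). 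The constraint is only derivable by first obtaining $\pre(A,D)$ via transitivity through $B$ and then extending along the single edge $\{C,D\}$; that is, the path condition must be stated relative to the partial order being defined (every interior vertex of the path must be $>_{\mH,\alpha} A$, as in Lemma~\ref{Pathing2}), not relative to positions in $\alpha$. This is exactly why the paper defines $\pre$ and $\dnc$ as a mutually recursive fixed point (Lemmas~\ref{lemma:PREC&DNC}--\ref{lemma:DNCtransitive}) and why the two structural lemmas~\ref{Pathing} and~\ref{Pathing2}, each proved by a nested induction, carry the real weight. With your static condition, the ``soundness'' direction would certify orderings that put $C$ before $A$ in this example as equivalent to $\alpha$, which is false.

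Two smaller points. Your observation that completeness is ``almost definitional'' is an artifact of reading $\pre$ semantically; the theorem's content is precisely that the computable partial order coincides with the semantic one, so under your framing all the difficulty (including the counterexample construction) migrates into soundness --- you acknowledge this, but it means neither direction is actually cheap. And your counterexample sketch, while in the right spirit, omits the choice that makes it go through: the paper does not attack an arbitrary violated pair but passes to a linear extension $\alpha'$ agreeing with $\beta$ on a maximal prefix and placing the offending attribute as early as possible, which is what guarantees that the adjacent inverted pair has distinct operators and that Lemma~\ref{Pathing2} supplies a path whose interior attributes all survive until the point of the swap. That selection is exactly the ``care in threading the witness'' you anticipate needing, and it is the part your proposal leaves unresolved.
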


We first describe a procedure to compute the precedence relation $\pre$. After that, we reason about its completeness. 

\paragraph*{Computing $\pre$}
To assist in building $\pre$, we define a constraint of the form `$A$ and $B$ cannot commute'':

\begin{definition}[$\dnc$]
Given an $\ajar$ query $Q_{\mH, \alpha}$, define a constraint $\dnc \subseteq \mV \times \mV$ such that $(A,B) \in \dnc$ if and only if $A$ and $B$ are in the same order in any $\beta$ such that $\beta \equiv_{{\cal H}} \alpha$.
\end{definition}

Once again, we say $\dnc(A,B)$ is true if and only if $(A,B) \in \dnc$. We prefer to work with $\dnc$ because we have already discussed when aggregations can commute in Theorem~\ref{commute}; the conditions of that theorem specify when $\dnc$ is $FALSE$. However, we can immediately derive a simple relationship between $\pre$ and $\dnc$:

\begin{lemma}\label{lemma:PREC&DNC}
Given an \ajar query $Q_{\mH, \alpha}$, for any $A, B \in \mV$, $\pre(A,B)$ iff $\dnc(A,B)$ and $A$ precedes $B$ in $\alpha$.
\end{lemma}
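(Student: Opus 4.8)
The plan is to treat Lemma~\ref{lemma:PREC&DNC} as a purely definitional bridge, unpacking $\pre$ and $\dnc$ and using the fact that $\alpha$ is itself a representative of its own equivalence class under $\equiv_\mH$. First I would fix the reading of $\dnc(A,B)$ that makes it symmetric and usable: $(A,B)\in\dnc$ means that the relative order of $A$ and $B$ is \emph{constant} over the class $\{\beta : \beta\equiv_\mH\alpha\}$, i.e. for all $\beta_1,\beta_2\equiv_\mH\alpha$ in which both $A$ and $B$ occur, $A$ precedes $B$ in $\beta_1$ exactly when $A$ precedes $B$ in $\beta_2$. I would also record the standing convention of this section (also used in Algorithm~\ref{algo:equivalance-test}) that equivalent orderings share the same attribute set, so $V(\beta)=V(\alpha)$ whenever $\beta\equiv_\mH\alpha$; in particular, if one of $A,B$ is an output attribute then it occurs in no ordering equivalent to $\alpha$.

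For the forward direction, assume $\pre(A,B)$. Since $\equiv_\mH$ is reflexive, $\alpha\equiv_\mH\alpha$, so by the definition of $\pre$ the attribute $A$ precedes $B$ in $\alpha$ (which in particular forces $A,B\in V(\alpha)$). Moreover every $\beta\equiv_\mH\alpha$ has $A$ preceding $B$, so the relative order of $A$ and $B$ is the same --- namely ``$A$ before $B$'' --- throughout the class, which is exactly $\dnc(A,B)$.

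For the converse, assume $\dnc(A,B)$ and that $A$ precedes $B$ in $\alpha$; the latter gives $A,B\in V(\alpha)$, hence $A,B\in V(\beta)$ for every $\beta\equiv_\mH\alpha$. Fix such a $\beta$. Since both $\alpha$ and $\beta$ lie in the equivalence class and $\dnc(A,B)$ pins the relative order of $A$ and $B$ to be constant over the class, $A$ precedes $B$ in $\beta$ iff $A$ precedes $B$ in $\alpha$; the right side holds, so $A$ precedes $B$ in $\beta$. As $\beta$ was arbitrary, $\pre(A,B)$. It remains to dispatch the degenerate case where $A$ or $B$ is an output attribute: then $A$ does not precede $B$ in $\alpha$ (the missing attribute does not appear), so the right-hand side of the claimed equivalence is false, and $\pre(A,B)$ is likewise false since no ordering equivalent to $\alpha$ contains the missing attribute; hence both sides agree.

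I do not anticipate a real obstacle here: the statement relates two definitions rather than carrying combinatorial content, and the only thing requiring care is the quantifier structure of $\dnc$ (a statement about \emph{pairs} of equivalent orderings, made tractable precisely because $\alpha$ is one of them), together with the convention that equivalent orderings have equal attribute sets. The genuinely substantive facts --- that $\dnc$, and hence $\pre$, is actually computable, and that these precedence constraints are complete --- are supplied by Theorem~\ref{commute} and Theorem~\ref{thm:dag-soundness-completeness}, and this lemma deliberately does not need them.
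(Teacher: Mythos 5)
Your proof is correct: the paper gives no explicit argument for this lemma (it states that the relationship can be ``immediately derived''), and your definitional unpacking --- using that $\alpha\equiv_\mH\alpha$ to anchor the constant relative order asserted by $\dnc$, plus the observation that output attributes falsify both sides --- is exactly the intended immediate derivation. No gaps; the care you take with the quantifier structure of $\dnc$ and the output-attribute edge case is appropriate but does not change the substance.
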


We now develop conditions when $\dnc$ is true. Recall that Theorem~\ref{commute} states that two aggregations can commute if $(1)$ they have the same operator or $(2)$ if they can be separated in the join query; the simplest structure that violates both of these conditions is an edge that contains two attributes with differing aggregating operators.

\begin{lemma}
\label{lemma:DNCbasecase}
Given an \ajar query $Q_{\mH, \alpha}$, suppose $(A, \oplus)$, $(B, \oplus')$ $\in \alpha$. If $\oplus \neq \oplus'$ and there exists an edge $E \in \mE$ such that $A, B \in E$, then $\dnc(A,B)$.
\end{lemma}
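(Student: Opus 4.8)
The plan is to show the contrapositive: if $(A,B) \notin \dnc$, then either $\oplus = \oplus'$ or no edge of $\mE$ contains both $A$ and $B$. By definition of $\dnc$, $(A,B) \notin \dnc$ means there is an ordering $\beta \equiv_\mH \alpha$ in which $A$ and $B$ appear in the opposite order from $\alpha$. Since $A$ and $B$ occur adjacently (in the two-element sub-order), the cleanest route is actually to argue directly: I would exhibit a single database instance $I$ over $\mH$ on which swapping the positions of $(A,\oplus)$ and $(B,\oplus')$ changes the output, which proves $\dnc(A,B)$, i.e., that every equivalent ordering keeps $A$ before $B$ whenever $\alpha$ has $A$ before $B$ (and symmetrically). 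In fact $\dnc$ is symmetric, so I really just need: if $\oplus \neq \oplus'$ and some edge $E$ contains both $A,B$, then there is an instance witnessing non-commutativity of the two aggregations.

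First I would reduce to a local computation. Fix the edge $E$ with $A,B \in E$. Choose witnesses $x,y \in \mathbb{K}$ with $x \oplus y \neq x \oplus' y$ (these exist since $\oplus \neq \oplus'$). I would build the instance so that, after the join and after aggregating away every attribute other than $A$ and $B$ (in any order — those aggregations do commute past each other or can be arranged harmlessly), the query reduces to a $\mathbb{K}$-relation on $\{A,B\}$ whose four-cell content is essentially $\{(a_1,b_1)\mapsto x,\ (a_1,b_2)\mapsto y,\ (a_2,b_1)\mapsto 0,\ (a_2,b_2)\mapsto 0\}$ or some similarly chosen pattern forcing $\Sigma_{(A,\oplus)}\Sigma_{(B,\oplus')}$ and $\Sigma_{(B,\oplus')}\Sigma_{(A,\oplus)}$ to differ. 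Concretely: for the edge $E$ I put in a relation $R_E$ with two distinct $A$-values and two distinct $B$-values (picking arbitrary fixed domain values for all other attributes in $E$) carrying annotations $x,y,0,0$; for every other edge $F \in \mE$ I install the ``full'' relation on the relevant attribute-combinations restricted to the domain values already used, all annotated with $1$, so that $F$ acts as the identity under join and contributes nothing under aggregation. The running-intersection structure of $\mH$ is irrelevant here since I only need one instance.

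Then I would carry out the two computations. Aggregating the ``junk'' attributes away first (all attributes outside $\{A,B\}$): since the non-$E$ relations are all-$1$ and the $E$ relation has the other attributes of $E$ pinned to single values, these aggregations either collapse singleton groups or multiply by $1$, leaving a $\mathbb{K}$-relation on $\{A,B\}$ with the four annotations above (possibly scaled by a harmless nonzero constant, which I can normalize away by choosing annotations $1$ for the junk). Now $\Sigma_{(B,\oplus')}$ then $\Sigma_{(A,\oplus)}$ yields annotation $(x \oplus' y) \oplus (0 \oplus' 0) = x\oplus' y$ on the surviving empty tuple (using $0\oplus' 0 = 0$ and $z\oplus 0 = z$), whereas $\Sigma_{(A,\oplus)}$ then $\Sigma_{(B,\oplus')}$ yields $(x \oplus 0) \oplus' (y \oplus 0) = x \oplus' y$ as well — so I need to be slightly more careful with the annotation pattern. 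The right choice is $\{(a_1,b_1)\mapsto x,\ (a_2,b_2)\mapsto y,\ (a_1,b_2)\mapsto 0,\ (a_2,b_1)\mapsto 0\}$: aggregating $A$ first gives column-$b_1$ value $x$, column-$b_2$ value $y$, then $\oplus'$-aggregating gives $x \oplus' y$; aggregating $B$ first gives row-$a_1$ value $x$, row-$a_2$ value $y$, then $\oplus$-aggregating gives $x \oplus y$. Since $x\oplus y \neq x \oplus' y$, the two orders disagree, so $\dnc(A,B)$.

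The main obstacle is the bookkeeping in the reduction step: verifying that the auxiliary all-$1$ relations on the other edges really do leave the $\{A,B\}$-projection of the join untouched (this uses that $1$ is the multiplicative identity and that the join with a ``complete'' relation over the already-used domain values is a no-op), and that aggregating out the remaining attributes collapses the $E$-relation's extra coordinates without altering the four annotations of interest (this uses $z \oplus 0 = z$ and $0 \oplus 0 = 0$ for each operator). None of these steps is deep, but they must be stated carefully so that the two final computations land on exactly $x\oplus y$ and $x \oplus' y$ with no stray summands. Once the instance is pinned down, the conclusion is immediate from the choice of $x,y$.
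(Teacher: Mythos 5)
Your proposal is correct and matches the paper's own argument: the paper justifies this base case (condition $(0.1)$ in its recursive definition of $\dnc$) inside the completeness proof of Theorem~\ref{thm:dag-soundness-completeness}, where it builds exactly your instance --- a two-tuple ``diagonal'' relation with annotations $x,y$ satisfying $x \oplus y \neq x \oplus' y$ placed on the edge $E$, and $1$-annotated projections on all other edges --- so that the two tuples merge only when the outermost of $A,B$ is aggregated, yielding $x\oplus y$ versus $x\oplus' y$. Your self-correction to the diagonal pattern is the right call, and your instance does establish the full quantification in $\dnc$ (not just the adjacent swap), since every other aggregation acts on singleton groups and is a no-op regardless of where it appears in $\beta$.
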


Lemma~\ref{lemma:DNCbasecase} serves as a base case, but we want to extend the violation of Theorem~\ref{commute}'s conditions beyond single edges to paths. To do so, consider the following examples of how our commuting conditions interact with paths of length two.

\begin{example}
  Consider the query
  \[ \sum_A \max_B \max_C R(A,B) \Join S(B,C) \text{ hence } \alpha = (A, B, C).\]
  No two attributes can be separated, which implies $\dnc(A,B)$ and $\dnc(A,C)$. Lemma~\ref{lemma:DNCbasecase} gives us the former constraint, but not the latter one. This example indicates that it may be possible to extend a constraint $\dnc(A,B)$ along an edge $\{B,C\}$. On the other hand, consider the query
  \[ \max_B \sum_A \max_C R(A,B) \Join S(B,C) \text{ so } \alpha = (B, A, C).\]
 Note that $A$ and $C$ can be separated, which implies that only $\dnc(A,B)$ holds. Note that, as before, Lemma~\ref{lemma:DNCbasecase} gives us this constraint. This example suggests that we cannot extend every $\dnc(A,B)$ constraint along an additional edge. 
\end{example}
The key difference between the two examples is the {\em relative order} of $A$ and $B$ in $\alpha$, which suggests that we can only extend $\dnc(A,B)$ along an edge if $A$ precedes $B$ in $\alpha$, i.e. if $\pre(A,B)$.

\begin{lemma}\label{lemma:DNCextension}
Given an \ajar query $Q_{\mH, \alpha}$, suppose $(A, \oplus)$, $(B, \oplus')$ $\in \alpha$. If $\oplus \neq \oplus'$ and $\exists C \in \mV, E \in \mE: \pre(A,C)$ and $B,C \in E$, then $\dnc(A,B)$.
\end{lemma}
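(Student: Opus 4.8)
The plan is to prove the statement in contrapositive form: assume there is an ordering $\beta \equiv_\mH \alpha$ in which $A$ and $B$ occur in the opposite relative order to $\alpha$, and derive a contradiction by building an instance $I$ with $Q_{\mH,\alpha}(I)\neq Q_{\mH,\beta}(I)$. First I would collect reductions. From $\pre(A,C)$, Lemma~\ref{lemma:PREC&DNC} gives $\dnc(A,C)$ and that $A$ precedes $C$ in $\alpha$, hence also in $\beta$; using Theorem~\ref{commute} (two attributes lying in different connected components of $\mH$ after deleting the output attributes $V(-\alpha)$ commute freely) it follows that $A$ and $C$ lie in one component of $\mH$ restricted to $V(\alpha)$, so I fix a simple path $A=p_0,p_1,\dots,p_\ell=C$ there, and let $E$ be an edge with $B,C\in E$. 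Since $\pre$, the hypothesis $\oplus\neq\oplus'$, the existence of $E$, and the target $\dnc(A,B)$ are all symmetric in $\alpha$ and $\beta$ (they are equivalent and $\pre$ depends only on the equivalence class), I may assume $A$ precedes $B$ in $\alpha$, so $B$ precedes $A$ in $\beta$; then in $\alpha$ the attribute $A$ is above both $B$ and $C$, while in $\beta$ it is above $C$ but below $B$. I would also split on whether $\oplus'=\oplus_C$: if not, Lemma~\ref{lemma:DNCbasecase} already gives $\dnc(B,C)$, which fixes where $C$ can sit relative to $B$ and simplifies the bookkeeping.

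Next I would build $I$ by collapsing everything irrelevant onto a gadget. Give every attribute outside $\{p_0,\dots,p_\ell\}\cup\{B\}$ a one-element domain, so that aggregating it is a no-op and any relation all of whose attributes are collapsed contributes only a unit annotation; this turns $Q_{\mH,\alpha}(I)$ and $Q_{\mH,\beta}(I)$ into \ajar queries on the gadget formed by the path-edges together with $E$, with aggregation orders the restrictions of $\alpha$ and $\beta$ --- which still disagree on $A$ versus $B$ and still keep $A$ above $C$. On the gadget I give the surviving attributes a common two-element domain, fix witnesses $x,y\in\mathbb{K}$ with $x\oplus y\neq x\oplus' y$, make each path relation an equality constraint threading a single value from $p_0$ down to $p_\ell=C$, and put on $R_E$ annotations realising $x$ on the track $C=1$ and $y$ on the track $C=2$ while tying $B$ to that same value. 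Evaluating both queries on this instance --- the same calculation as in the two-edge example stated just above, but along a longer path --- the answer turns only on whether $B$ has already been summed out when $A$ is aggregated: for $\alpha$ it has, so after the equality-threaded $B$- and $C$-aggregations the $A$-aggregation sees the two tracks decoupled and returns a quantity built from $x\oplus y$; for $\beta$ it has not, the $C$-track is still coupled to $A$, and the same slot is instead filled by $x\oplus' y$. Since $x\oplus y\neq x\oplus' y$, the two queries disagree, contradicting $\beta\equiv_\mH\alpha$.

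The step I expect to be the main obstacle is making this gadget computation robust. A naive ``all-diagonal'' instance only detects the outermost gadget attribute, whereas the interior path attributes $p_1,\dots,p_{\ell-1}$ and $B$ may sit anywhere in the restricted orders; the argument must therefore use $\pre(A,C)$ essentially --- it alone pins $A$ above $C$, so that the $C$-track is the last thing still coupled to $A$ at the moment $A$ is aggregated --- and may need a gadget richer than plain equality constraints to force the $A$-aggregation, rather than an interior path aggregation, to be the operation that distinguishes the two tracks. A clean way to organise this is to first isolate the single-edge case (essentially the content of the proof of Lemma~\ref{lemma:DNCbasecase}) and then show that prepending equality constraints along the path and appending $E$ can only propagate, never erase, the inequality $x\oplus y\neq x\oplus' y$. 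Alternatively one can avoid instances entirely and argue inductively in the style of Algorithm~\ref{algo:equivalance-test}: exhibit a sequence of valid single transpositions and component-splits carrying $\alpha$ to $\beta$ and show the first one that moves $B$ past $A$ must simultaneously violate $\oplus\neq\oplus'$, the presence of $E$, or $\pre(A,C)$ --- the delicate point again being the correct accounting of which attributes have already been aggregated at that moment.
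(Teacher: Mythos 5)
Your high-level strategy --- contrapositive, then a two-tuple ``diagonal'' instance supported on a path from $A$ through $C$ to $B$, carrying witnesses $x,y$ with $x\oplus y\neq x\oplus' y$ --- is exactly the strategy the paper uses (it proves this lemma implicitly, inside the completeness direction of Theorem~\ref{thm:dag-soundness-completeness}). But the proposal has a genuine gap, and it is the one you flag yourself in your last paragraph. On such an instance the join is a two-tuple relation whose tuples differ precisely on the not-yet-aggregated path attributes, so the two annotation tracks are merged by whichever path attribute is \emph{outermost} (earliest) in the ordering, and the final annotation is $x\odot y$ for that attribute's operator $\odot$. If some interior path vertex $p_i$ precedes both $A$ and $B$ in both orderings, both queries return $x\odot_{p_i}y$ and the instance distinguishes nothing. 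An arbitrary path from $A$ to $C$ --- which is all that ``$A$ and $C$ lie in one connected component'' buys you --- gives no control over this. Neither of your proposed repairs closes the hole: the ``richer gadget'' is never exhibited, and the alternative of carrying $\alpha$ to $\beta$ by a sequence of valid transpositions presupposes that equivalent orderings are always connected by such moves, which is precisely the completeness claim this lemma exists to support; that route is circular.

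The missing ingredient is Lemma~\ref{Pathing2}: from $A<_{\mH,\alpha}C$ one can extract a path $A=p_0,\dots,p_\ell=C$ such that \emph{every} interior vertex satisfies $A<_{\mH,\alpha}p_i$, hence is preceded by $A$ in every equivalent ordering. With that path, extended by the edge $E$ to reach $B$, your gadget does work with no further modification: in whichever of $\alpha,\beta$ places $A$ before $B$, $A$ is outermost among all path attributes and the answer is $x\oplus y$; in the other, $B$ is outermost and the answer is $x\oplus'y$. Establishing Lemma~\ref{Pathing2}, however, requires the round-by-round syntactic construction of $<_{\mH,\alpha}$ from Appendix~\ref{sec:app-equiv} and an induction over it; the needed structural information cannot be recovered from the purely semantic fact $\pre(A,C)$ that you treat as a black box. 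So you have correctly located the obstacle but not overcome it, and the argument as written is incomplete.
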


$\pre$ is transitive, which implies:
\begin{lemma}\label{lemma:DNCtransitive}
Given an \ajar query $Q_{\mH, \alpha}$, suppose $(A, \oplus), (B, \oplus') \in \alpha$. If $\exists C: \pre(A,C) \text{ and } \pre(C,B)$, then $\dnc(A,B)$.
\end{lemma}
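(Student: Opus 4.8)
The plan is to obtain this lemma as an immediate corollary of the transitivity of the precedence relation $\pre$ together with Lemma~\ref{lemma:PREC&DNC}. First I would make two bookkeeping observations. The equivalence class of $\alpha$ under $\equiv_\mH$ is nonempty, since $\alpha \equiv_\mH \alpha$, so the universal quantifier ``in all orderings equivalent to $\alpha$'' appearing in the definition of $\pre$ is not vacuous. Also, the hypothesis $\pre(A,C)$ forces $C \in V(\alpha)$: an output attribute never occurs in any ordering, so no attribute can precede it, and hence $\pre(A,C)$ could not hold if $C$ were an output attribute. Together with the given $(A,\oplus),(B,\oplus')\in\alpha$, this means all three attributes are aggregation attributes, so $\dnc(A,B)$ is meaningful.

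Next I would spell out transitivity of $\pre$ at the level of orderings. Fix an arbitrary $\beta$ with $\beta \equiv_\mH \alpha$. By $\pre(A,C)$, attribute $A$ occurs before $C$ in $\beta$; by $\pre(C,B)$, attribute $C$ occurs before $B$ in $\beta$; hence $A$ occurs before $B$ in $\beta$. Since $\beta$ was an arbitrary member of the (nonempty) equivalence class of $\alpha$, this establishes $\pre(A,B)$. This is exactly the transitivity of $\pre$ already noted when the partial order $<_{\mH,\alpha}$ was introduced. Finally, I would invoke Lemma~\ref{lemma:PREC&DNC}: $\pre(A,B)$ holds iff $\dnc(A,B)$ holds and $A$ precedes $B$ in $\alpha$, so in particular $\pre(A,B)$ implies $\dnc(A,B)$, which is the claim.

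The main obstacle is essentially nonexistent: the statement is a one-line consequence of ``$\pre$ is a strict partial order, hence transitive'' combined with the $\pre \Rightarrow \dnc$ direction of Lemma~\ref{lemma:PREC&DNC}. The only points worth a moment's care are (i) the nonemptiness of the equivalence class, so that the ``for all equivalent orderings'' quantifier does not trivialize $\pre$, and (ii) the remark that the intermediate attribute $C$ must itself be an aggregation attribute for the hypotheses to be satisfiable — neither of which presents any real difficulty.
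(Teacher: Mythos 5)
Your proof is correct and follows exactly the route the paper intends: it observes that $\pre$ is transitive (since in every $\beta \equiv_\mH \alpha$ the attribute $A$ precedes $C$ and $C$ precedes $B$, hence $A$ precedes $B$), concluding $\pre(A,B)$ and then $\dnc(A,B)$ via Lemma~\ref{lemma:PREC&DNC}. The paper offers no more than the remark ``$\pre$ is transitive, which implies'' the lemma, so your write-up simply fills in the same one-line argument with a bit more care about the quantifiers.
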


The above transitivity condition interacts with the condition from Lemma~\ref{lemma:DNCextension} in interesting ways. 
\begin{example}
  Consider the query with $\alpha = (A, B, C, D)$,
  \[ \sum_A \max_B \max_C \sum_D R(A,B) \Join S(B,D) \Join T(C,D).\] No attributes can be separated, which implies $\dnc(A,B)$, $\dnc(A,C)$, $\dnc(B,D)$, and $\dnc(C,D)$. Transitivity gives $\dnc(A,D)$ as well. Now let us derive these constraints using Lemmas~\ref{lemma:DNCbasecase},~\ref{lemma:DNCextension}, and~\ref{lemma:DNCtransitive}. Lemma~\ref{lemma:DNCbasecase} gives us $\dnc(A,C)$, $\dnc(B,D)$, and $\dnc(C,D)$. Note that at this point, Lemma~\ref{lemma:DNCextension} gives us no more constraints. Only after the transitivity of Lemma~\ref{lemma:DNCtransitive} adds the constraint $\dnc(A,D)$ can Lemma~\ref{lemma:DNCextension} add the constraint $\dnc(A,B)$, completing the set of constraints.
\end{example}

It turns out that these three relatively simple lemmas are the sufficient and necessary constraints on the equivalence classes of orderings; no other conditions are necessary to complete the proofs the soundness and completeness of $<_{\mH, \alpha}$.

We note that our current specifications of $\pre$ and $\dnc$ are mutually recursive. The $\pre$ and $\dnc$ sets build up in rounds; Lemma~\ref{lemma:DNCbasecase} provides their initial values, and Lemmas~\ref{lemma:PREC&DNC},~\ref{lemma:DNCextension}, and~\ref{lemma:DNCtransitive} iteratively build up the sets further. We keep applying these lemmas until the sets reach a fixed point. This takes at most $2|\alpha|^2$ iterations, as we must add at least one additional attribute pair per iteration, and there can be only $|\alpha|^2$ pairs of attributes in each set. Thus the overall runtime of computing these constraints is polynomial in the number of attributes. We detail this process in Appendix~\ref{sec:app-equiv}.

For convenience of notation, we make one modification to the definition of the partial order $<_{\mH, \alpha}$. When $A$ is an output attribute and $B$ is not, we define $A <_{\mH, \alpha} B$ to be true. So we can formally state the definition as:

\begin{definition}[$<_{\mH,\alpha}$]\label{def:partial-order}
Given a $\ajar$ query $Q_{\mH,\alpha}$, we define $A <_{\mH,\alpha} B$ to be true if either (i) $A$ is an output attribute and $B$ is not, or (ii) $\pre(A,B)$ is true.
\end{definition}

\paragraph*{Soundness and Completeness of $<_{\mH,\alpha}$}
To give an intuition on how we prove the soundness and completeness of $<_{\mH, \alpha}$, we now state two key lemmas (with proofs in Appendix~\ref{sec:app-equiv}) illustrating properties of $<_{\mH, \alpha}$. 

\begin{lemma}
\label{Pathing}
Suppose we are given a hypergraph $\mH = (\mV, \mE)$ and an aggregation ordering $\alpha$. Suppose $(A, \oplus), (B, \oplus') \in \alpha$ for differing operators $\oplus \neq \oplus'$. Then, for any path $P$ in $\mH$ between $A$ and $B$, there must exist some attribute in the path $C \in P$ such that $C <_{\mH, \alpha} A$ or $C <_{\mH, \alpha} B$. 
\end{lemma}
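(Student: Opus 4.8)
The plan is to argue by contradiction using a shortest ``bad'' path. Suppose the statement fails, and among all triples $(A,B,P)$ with $(A,\oplus),(B,\oplus')\in\alpha$, $\oplus\neq\oplus'$, $P$ a path in $\mH$ from $A$ to $B$, and no attribute of $P$ strictly below $A$ or below $B$ in $<_{\mH,\alpha}$, choose one with $P$ of minimum length; write $P: A=x_1,x_2,\dots,x_k=B$. First I would record the easy consequences of ``$P$ is bad'': (i) every $x_j$ lies in $V(\alpha)$, since an output attribute on $P$ would itself satisfy $x_j<_{\mH,\alpha}A$ by Definition~\ref{def:partial-order}(i); (ii) hence $\neg\pre(x_j,A)$ and $\neg\pre(x_j,B)$ for every $j$, and in particular $\neg\pre(A,B)$ and $\neg\pre(B,A)$; and (iii) by the symmetry of the statement in $A$ and $B$, we may assume $A$ precedes $B$ in $\alpha$. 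The goal then reduces to deriving $\dnc(A,B)$: combined with (iii) and Lemma~\ref{lemma:PREC&DNC} this gives $\pre(A,B)$, contradicting (ii).

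The base case $k=2$ is immediate: the shared edge of $x_1,x_2$ plus $\oplus\neq\oplus'$ gives $\dnc(A,B)$ by Lemma~\ref{lemma:DNCbasecase}. For $k\ge 3$ the heart of the argument is to transport a $\dnc$-relation along $P$ toward $B$. I would split $P$ at operator transitions: let $i$ be the least index with $\oplus_{x_i}\neq\oplus$, so $x_1,\dots,x_{i-1}$ all carry operator $\oplus$ while $x_i$ does not. Lemma~\ref{lemma:DNCbasecase} applied to the edge joining $x_{i-1},x_i$ gives $\dnc(x_{i-1},x_i)$, which Lemma~\ref{lemma:PREC&DNC} turns into a precedence pointing one way or the other. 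One then repeatedly applies Lemma~\ref{lemma:DNCextension} to grow the second slot of such a precedence one path-step at a time, and uses transitivity of $\pre$ (as in Lemma~\ref{lemma:DNCtransitive}) to bridge over segments that share an operator. Two facts keep this running. First, the operator-mismatch hypothesis of Lemma~\ref{lemma:DNCextension} is free whenever one endpoint of the current step carries operator $\oplus$ and the attribute being traversed does not, which is exactly the situation created by choosing $i$ at the first transition (and, symmetrically, by looking at the last transition before $B$). Second, each newly produced $\dnc(u,v)$ is forced by Lemma~\ref{lemma:PREC&DNC} into either $\pre(u,v)$ or $\pre(v,u)$: the ``good'' direction lets the chain continue, while the ``bad'' direction produces some $\pre(x_j,A)$, some $\pre(x_j,B)$, or a sub-path strictly shorter than $P$ that is itself bad -- each of which contradicts (ii) or the minimality of $k$. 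Carried out, the chain terminates at $\pre(C,x_{k-1})$ for some attribute $C$ (on $P$, or $C=A$), and one last application of Lemma~\ref{lemma:DNCextension} across the edge $\{x_{k-1},B\}$ yields $\dnc(A,B)$.

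The main obstacle, and where almost all the care goes, is precisely this bookkeeping: Lemma~\ref{lemma:DNCextension} only grows the second argument of a $\dnc$-pair and only under an operator mismatch, and the direction of each precedence obtained from Lemma~\ref{lemma:PREC&DNC} is dictated by $\alpha$, not chosen by us. Splitting $P$ at operator transitions (so that one side of each traversal is always the operator $\oplus$ or $\oplus'$) resolves the operator constraint, and the minimality of $P$ together with the bad-path hypothesis -- which outright forbids $\pre(x_j,A)$ and $\pre(x_j,B)$ -- is what forces every precedence to either advance the chain or terminate the argument. Concretely, it is probably cleanest to phrase the inductive step by applying the lemma to the sub-path $x_2,\dots,x_k$ (or $x_1,\dots,x_{k-1}$): either its endpoint operators differ, in which case minimality hands us a witness that we transport back across the single terminal edge, or its endpoint operators agree, in which case we transport the $\dnc$ obtained from the first transition; in both cases one edge-extension step via Lemma~\ref{lemma:DNCextension}, possibly composed with transitivity, closes the gap.
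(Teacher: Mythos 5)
Your setup is sound: reducing to a minimal bad path, observing that output attributes on $P$ would immediately witness the conclusion, reducing the goal to deriving $\dnc(A,B)$, and the base case $k=2$ all match the paper. The gap is in the inductive step, which is where all the content lives. Your transport mechanism --- anchor at the first (or last) operator transition, then grow the second slot of a precedence one edge at a time via Lemma~\ref{lemma:DNCextension} --- does not close for two reasons. First, Lemma~\ref{lemma:DNCextension} requires the \emph{anchor's} operator to differ from the \emph{target's}; if some interior $x_{j+1}$ carries the same operator as your anchor, the step fails, and the proposed fix (``bridge by transitivity'') needs $\pre(x_j,x_{j+1})$, whose direction is dictated by $\alpha$, not by you. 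Second, when a derived $\dnc(u,v)$ resolves to the precedence pointing the ``wrong'' way between two \emph{interior} attributes, your claimed trichotomy (some $\pre(x_j,A)$, some $\pre(x_j,B)$, or a strictly shorter bad sub-path) does not apply: such a precedence contradicts nothing, advances no chain, and does not manufacture a shorter bad path --- a sub-path of a bad path between two interior attributes with differing operators need not itself be bad, since badness requires that \emph{no} attribute lie below either of its (new) endpoints, and that is not inherited. Already on a length-$4$ path with operators $+,\max,+,\max$ the forward chain from $A$ stalls at $x_3$ and the backward chain from $B$ stalls at $x_2$, and whether either can be rescued depends on the $\alpha$-order of $x_2$ and $x_3$; your sketch gives no argument that one of the branches always succeeds.

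The device that makes the paper's proof work, and which is absent from yours, is to anchor not at an operator transition but at the interior attribute $C$ that is \emph{earliest in $\alpha$} among the interior of $P$. This buys two things at once: (a) applying the (outer) inductive hypothesis to the sub-path from an endpoint to $C$ yields a witness $D$ with $D<_{\mH,\alpha}C$ or $D$ below that endpoint, and since no interior attribute precedes $C$, the witness is forced to be the endpoint itself, giving (WLOG) $A<_{\mH,\alpha}C$ outright; and (b) because $A$ and $C$ have differing operators, \emph{every} attribute on the remaining walk from $C$ to $B$ has an operator differing from at least one of them, so at each step either Lemma~\ref{lemma:DNCextension} applies with anchor $A$, or the outer inductive hypothesis applied to the sub-path from $C$ plus transitivity (Lemma~\ref{lemma:DNCtransitive}) applies --- there is no stuck case. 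A second, inner induction then carries $A<_{\mH,\alpha}X_i$ all the way to $X_{k-1}$, and one final application of Lemma~\ref{lemma:DNCextension} across the terminal edge yields $\dnc(A,B)$. Without this minimal-in-$\alpha$ anchor (or an equivalent device), the bookkeeping you correctly identify as the crux does not resolve.
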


Lemma~\ref{Pathing} intuitively states that incomparable attributes with different operators must be separated in $\mH$ by their common predecessors in $<_{\mH, \alpha}$.

\begin{lemma}
\label{Pathing2}
Given a hypergraph $\mH = (\mV, \mE)$ and an aggregation ordering $\alpha$, suppose we have two attributes $A, B \in V(\alpha)$ such that $A <_{\mH, \alpha} B$. Then there must exist a path $P$ from $A$ to $B$ such that for every $C \in P, C \neq A$ we have $A <_{\mH, \alpha} C$.
\end{lemma}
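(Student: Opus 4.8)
}
The plan is to induct on a derivation of the constraint $\pre(A,B)$ — concretely, on the round in which the pair $(A,B)$ is produced by the fixed-point procedure of Section~\ref{sec:equivalent-orderings}, i.e., on the length of a derivation of $\pre(A,B)$ using the rules of Lemmas~\ref{lemma:DNCbasecase},~\ref{lemma:PREC&DNC},~\ref{lemma:DNCextension}, and~\ref{lemma:DNCtransitive}. First I would unpack the hypothesis: since $A,B\in V(\alpha)$, Definition~\ref{def:partial-order} gives that $A <_{\mH,\alpha} B$ is exactly $\pre(A,B)$, and then Lemma~\ref{lemma:PREC&DNC} gives both $\dnc(A,B)$ and that $A$ precedes $B$ in $\alpha$. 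Two facts will be used repeatedly: (i) whenever $\pre(A',B')$ holds, both $A',B'\in V(\alpha)$ (again by Lemma~\ref{lemma:PREC&DNC}, since ``$A'$ precedes $B'$ in $\alpha$'' only makes sense then); and (ii) a path, as defined in Section~\ref{subsec:background-1}, is merely a sequence of attributes with consecutive ones sharing a hyperedge, so one may freely append a single vertex to a path or concatenate two paths at a shared endpoint.

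The case analysis mirrors the three rules that can introduce $\dnc(A,B)$. In the base case $\dnc(A,B)$ comes from Lemma~\ref{lemma:DNCbasecase}, so some edge $E\in\mE$ contains both $A$ and $B$, and $P=(A,B)$ is the desired path: its only vertex other than $A$ is $B$, and $A <_{\mH,\alpha} B$ by hypothesis. Otherwise $\dnc(A,B)$ is introduced by Lemma~\ref{lemma:DNCextension} or Lemma~\ref{lemma:DNCtransitive} using $\pre$-facts derived strictly earlier. If via Lemma~\ref{lemma:DNCextension}, there are $C$ and $E$ with $\pre(A,C)$ and $B,C\in E$; applying the induction hypothesis to $(A,C)$ (legitimate since $\pre(A,C)$ gives $C\in V(\alpha)$ and $A<_{\mH,\alpha}C$, with a shorter derivation) yields a path $P'$ from $A$ to $C$ whose every non-$A$ vertex $D$ satisfies $A<_{\mH,\alpha}D$; appending $B$ to $P'$ gives the required path, the vertices of $P'$ handled by the hypothesis and the new terminal $B$ by assumption. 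If via Lemma~\ref{lemma:DNCtransitive}, there is a $C$ with $\pre(A,C)$ and $\pre(C,B)$; applying the induction hypothesis to $(A,C)$ and to $(C,B)$ gives paths $P_1$ (from $A$ to $C$) and $P_2$ (from $C$ to $B$) with the analogous property, and $P$, the concatenation of $P_1$ and $P_2$ at their shared endpoint $C$, is a path from $A$ to $B$. For a vertex $D\in P$ with $D\ne A$: if $D\in P_1$ the hypothesis gives $A<_{\mH,\alpha}D$; if $D=C$ then $A<_{\mH,\alpha}C$; and if $D$ lies on $P_2$ with $D\ne C$, the hypothesis gives $C<_{\mH,\alpha}D$, whence $A<_{\mH,\alpha}D$ follows by transitivity of $<_{\mH,\alpha}$ on $V(\alpha)$ (equivalently, transitivity of $\pre$).

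I expect the main obstacle to be setting up the induction cleanly rather than any computation: one must be precise that when $\dnc(A,B)$ is \emph{freshly} derived via Lemma~\ref{lemma:DNCextension} or Lemma~\ref{lemma:DNCtransitive}, the $\pre$-premises it invokes were already available in an earlier round of the fixed-point computation, so the induction hypothesis genuinely applies to $(A,C)$ and $(C,B)$. A secondary point needing care is that every vertex other than $A$ on the constructed paths lies in $V(\alpha)$, which is what makes the transitivity appeal in the Lemma~\ref{lemma:DNCtransitive} case valid; but this is automatic, since $A$ is a non-output attribute and $<_{\mH,\alpha}$ never places a non-output attribute below an output one, so the conclusion ``$A<_{\mH,\alpha}D$'' already forces each such $D$ to be non-output.
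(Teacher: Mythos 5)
Your overall strategy is the paper's: induct on the round $i$ at which the fixed-point procedure first derives the pair (the paper phrases this as induction on the index $i$ for which $A <^{i}_{\mH,\alpha} B$), and split on which rule produced it. Your base case, your Lemma~\ref{lemma:DNCextension} case, and your Lemma~\ref{lemma:DNCtransitive} case coincide with the paper's treatment, including the appeal to transitivity of $<_{\mH,\alpha}$ to upgrade $C <_{\mH,\alpha} D$ to $A <_{\mH,\alpha} D$ when concatenating the two paths at $C$.

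However, your case analysis omits one case, and it is not vacuous. Because $\dnc$ is symmetric, the extension rule of Lemma~\ref{lemma:DNCextension} can also be instantiated with the roles of $A$ and $B$ swapped: $\dnc(A,B)$ may be derived from $\pre(B,C)$ together with an edge containing $A$ and $C$, rather than from $\pre(A,C)$ and an edge containing $B$ and $C$. Your proposal only treats the latter orientation. The former genuinely occurs: take $\alpha = ((A,+),(B,\max),(C,+))$ with edges $\{A,C\}$ and $\{B,C\}$. The only base-case constraint is $\pre(B,C)$ (the pair $A,C$ shares an edge but has equal operators, and $A,B$ share no edge), and $\dnc(A,B)$ is then obtained exactly by the reversed instantiation: $A$ and $C$ share an edge and $\pre(B,C)$ holds. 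In that situation the induction hypothesis must be applied to the pair $(B,C)$, yielding a path $P'$ from $B$ to $C$ all of whose non-$B$ vertices $D$ satisfy $B <_{\mH,\alpha} D$; one then reverses $P'$, prepends $A$ (legitimate since $A$ and $C$ share an edge), and uses the lemma's hypothesis $A <_{\mH,\alpha} B$ together with transitivity to conclude $A <_{\mH,\alpha} D$ for every non-$A$ vertex of the resulting path. This is precisely the second bullet of the paper's proof. The repair uses only tools you already have, but as written your induction does not cover all derivations of $\pre(A,B)$.
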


Given these two lemmas, the proof of Theorem~\ref{thm:dag-soundness-completeness} is straightforward. Lemma~\ref{Pathing} implies that, given an attribute ordering $\beta$ that is a linear extension of $<_{\mH, \alpha}$, each inversion of attribute-operator pairs must either have equal operators or have attributes that can be separated, allowing us to repeatedly use Theorem~\ref{commute} to transform $\beta$ into $\alpha$. Lemma~\ref{Pathing2} implies that, given an attribute ordering $\beta$ that is not a linear extension of $<_{\mH, \alpha}$, we can construct a counterexample. 

\paragraph*{Discussion}
We obtained a sound and complete characterization of all orderings equivalent to any given ordering. This result extends the work of Chen and Dalmau~\cite{QCQChenDalmau}, who had characterized equivalent orderings for queries with logical ``and'' and ``or'' operators. Our characterization is simple, consisting of a partial order whose linear extensions are precisely the equivalent orderings. FAQ~\cite{FAQ}'s method for identifying equivalent orderings is sound but not complete. That is, there exist equivalent orderings that the FAQ method does not identify as being equivalent (Appendix Example~\ref{example:faq-incompleteness}). In contrast, our characterization is guaranteed to cover all valid orderings. This completeness property lets us create a decomposition that is guaranteed to preserve all {\em node-monotone} widths (see Definition~\ref{def:node-monotone}). This in turn lets us get tighter guarantees on our runtime exponent, using the notion of submodular width (Section~\ref{subsec:dbp-width}).

\section{Decomposing Valid GHDs}
\label{sec:decomposing}

We express our \ajar algorithm directly in terms of GHDs, rather than in terms of aggregation orderings. As such, our goal is the characterization of GHDs that are compatible with at least one equivalent ordering, i.e. the GHDs that can be used to answer an \ajar query. We call a GHD {\em valid} if it is compatible with at least one equivalent ordering. We first give a simple characterization of valid GHDs. Then we demonstrate a way to reduce the problem of finding a minimum-width valid GHD to multiple subproblems on unconstrained GHDs (Section~\ref{subsec:decomposing}). This decomposition of the problem gets us three things:

\begin{itemize}
\item We can speed up our brute force search for an optimal valid GHD. We can also find approximately optimal valid GHDs in polynomial time using Marx's GHD approximation algorithm~\cite{Marx:2010:AFH:1721837.1721845} (Section~\ref{subsec:optimal-valid}).
\item We can apply existing MapReduce join algorithms that utilize GHDs~\cite{GYM}, obtaining efficient parallel algorithms for solving \ajar queries (Section~\ref{subsec:GYM}). 
\item We can apply improved join algorithms~\cite{Marx:2010:THP:1806689.1806790,2015arXiv150801239J} to further reduce our runtime exponent (Section~\ref{subsec:dbp-width}).
\end{itemize}

\subsection{Valid and Decomposable GHDs}\label{subsec:decomposing}
We can easily characterize valid GHDs by combining the definition of compatible GHDs with Theorem~\ref{thm:dag-soundness-completeness}. 

\begin{theorem}
\label{thm:validcorrect}
For a \ajar query $Q_{\mH,\alpha}$, a GHD $(\mT, \chi)$ is \emph{valid} if and only if for every pair of attributes $A,B$ such that $TOP_\mT(A)$ is an ancestor of $TOP_\mT(B)$, $B \not <_{\mH, \alpha} A$.
\end{theorem}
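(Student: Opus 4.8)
The plan is to obtain the statement by combining the definition of a compatible GHD with the soundness and completeness of $<_{\mH,\alpha}$ (Theorem~\ref{thm:dag-soundness-completeness}): validity of $(\mT,\chi)$ means the existence of some $\beta$ that is simultaneously a linear extension of $<_{\mH,\alpha}$ and compatible with $\mT$, and that existence will follow from a purely combinatorial acyclicity argument. I read ``ancestor'' throughout as \emph{proper} ancestor --- the notion under which compatibility is meaningful, since otherwise no two non-output attributes could ever share a bag.

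\textbf{($\Rightarrow$).} Suppose $(\mT,\chi)$ is valid, witnessed by $\beta \equiv_\mH \alpha$ with $\mT$ compatible with $\beta$, and suppose toward a contradiction that $TOP_\mT(A)$ is a proper ancestor of $TOP_\mT(B)$ while $B <_{\mH,\alpha} A$. By Definition~\ref{def:partial-order}, either $B$ is an output attribute and $A$ is not, or $\pre(B,A)$ holds. In the first case, compatibility applied to the pair $(A,B)$ --- with $A$ not an output attribute --- forces $A$ to precede $B$ in $\beta$, which precludes $B$ from being an output attribute, a contradiction. In the second case $A,B\in V(\alpha)$; since $\beta$ is a linear extension of $<_{\mH,\alpha}$ (Theorem~\ref{thm:dag-soundness-completeness}), $B$ precedes $A$ in $\beta$, yet compatibility forces $A$ to precede $B$ in $\beta$, again a contradiction. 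So no such pair exists.

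\textbf{($\Leftarrow$).} Assume the stated condition and define a relation $\prec$ on $V(\alpha)$ by $A \prec B$ iff $A <_{\mH,\alpha} B$ or $TOP_\mT(A)$ is a proper ancestor of $TOP_\mT(B)$. The heart of the proof is that $\prec$ is acyclic. The key lemma-level claim is: if $A <_{\mH,\alpha} B$ with $A,B\in V(\alpha)$, then $TOP_\mT(A)$ is an ancestor (possibly equal) of $TOP_\mT(B)$. To prove it, apply Lemma~\ref{Pathing2} to get a path $A = C_0, C_1, \dots, C_\ell = B$ in $\mH$ with $A <_{\mH,\alpha} C_j$ for all $j\ge 1$; each consecutive pair $C_j,C_{j+1}$ lies in a common hyperedge, hence in a common bag, so $TOP_\mT(C_j)$ and $TOP_\mT(C_{j+1})$ are both ancestors of that bag and thus $\mT$-comparable (running intersection property). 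The hypothesis (contrapositive) gives that $TOP_\mT(C_j)$ is not a proper ancestor of $TOP_\mT(A)$ for $j\ge 1$. An induction on $j$ --- at each step using $\mT$-comparability of $TOP_\mT(C_j),TOP_\mT(C_{j+1})$ and splitting on whether $TOP_\mT(C_{j+1})$ is a descendant or a proper ancestor of $TOP_\mT(C_j)$ --- then shows $TOP_\mT(A)$ is an ancestor of each $TOP_\mT(C_j)$, in particular of $TOP_\mT(B)$. Given this, \emph{every} $\prec$-edge $A\to B$ entails ``$TOP_\mT(A)$ is an ancestor of $TOP_\mT(B)$'' (trivially for the proper-ancestor edges). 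Since $<_{\mH,\alpha}$ alone is a strict partial order, any $\prec$-cycle must contain at least one proper-ancestor edge; but chaining the ancestor relations around a cycle forces all the involved $TOP_\mT(\cdot)$ nodes to be equal, contradicting that proper-ancestor edge. Hence $\prec$ is acyclic.

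\textbf{Assembly and main obstacle.} Let $\beta$ be any linear extension of $\prec$, read as an aggregation ordering by assigning each attribute the operator that $\alpha$ assigns it. Then $\beta$ is consistent with $<_{\mH,\alpha}$ restricted to $V(\alpha)$ (as $\prec$ extends it), and the remaining output/non-output constraints of $<_{\mH,\alpha}$ do not constrain an ordering of $V(\alpha)$, so $\beta$ is a linear extension of $<_{\mH,\alpha}$ and hence $\beta \equiv_\mH \alpha$ by Theorem~\ref{thm:dag-soundness-completeness}. For compatibility, take $A,B$ with $TOP_\mT(A)$ a proper ancestor of $TOP_\mT(B)$: if $A$ is an output attribute we are done; otherwise the hypothesis gives $B \not<_{\mH,\alpha} A$ and also forces $B$ to be non-output (else $B <_{\mH,\alpha} A$ by Definition~\ref{def:partial-order}), so $A,B \in V(\alpha)$, $A \prec B$, and $A$ precedes $B$ in $\beta$. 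Thus $\mT$ is compatible with $\beta$, so it is valid. The routine parts are the $(\Rightarrow)$ case analysis and this final assembly; the real work --- and the step most likely to hide a subtlety --- is the acyclicity of $\prec$, specifically the inductive ``ancestor-tracking'' argument that upgrades a $<_{\mH,\alpha}$-edge into an ancestor relation between $TOP$ nodes by traversing the Lemma~\ref{Pathing2} path and repeatedly invoking the GHD's running intersection property.
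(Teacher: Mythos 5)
Your proof is correct. Note, though, that the paper never actually writes out a proof of this theorem: it only remarks that the characterization follows ``by combining the definition of compatible GHDs with Theorem~\ref{thm:dag-soundness-completeness},'' which covers your (routine) forward direction but silently skips the converse. The real content of your argument is exactly the part the paper omits: constructing, from the stated $TOP$-node condition, an ordering that is simultaneously a linear extension of $<_{\mH,\alpha}$ and compatible with $\mT$, via the acyclicity of the combined relation $\prec$. Your key claim --- that $A <_{\mH,\alpha} B$ forces $TOP_\mT(A)$ to be an ancestor of $TOP_\mT(B)$, proved by walking the Lemma~\ref{Pathing2} path and repeatedly invoking the running intersection property --- is essentially the paper's own Lemma~\ref{lemma:ancestor} from Appendix~\ref{sec:app-decomp}, with the important difference that you derive it from the theorem's hypothesis on $TOP$ nodes rather than from validity (which you could not assume, since validity is what you are trying to establish); your version also drops the TOP-uniqueness assumption that Lemma~\ref{lemma:ancestor} carries, at the mild cost of tracking ``ancestor-or-equal'' through the induction. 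Your reading of ``ancestor'' as \emph{proper} ancestor is the right one and matches the paper's usage in its appendix proofs; under the reflexive reading the compatibility definition would be vacuously unsatisfiable for two non-output attributes sharing a $TOP$ node. I see no gap.
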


Theorem~\ref{thm:validcorrect} gives us a criterion specifying which GHDs can act as query plans. We now consider the problem of finding a minimum width valid GHD for any $\ajar$ query. We call a GHD {\em optimal} if it has the minimum width possible for valid GHDs. We show how to reduce the problem of finding an optimal valid GHD into smaller problems of finding ordinary optimal GHDs. This unlocks a trove of powerful GHD results and makes them applicable to our problem.

Given an \ajar query $Q_{\mH, \alpha}$, suppose we have a subset of the nodes $V \subseteq \mV$. Define $\mE_V$ to be $\{E \in \mE | E \cap V \neq \emptyset\}$, i.e. the set of edges that intersect with $V$. As before, $\alpha_V$ denotes the aggregation ordering restricted to the nodes in $V$. Additionally define $V^O$ to be $\{ v \in V | \forall w \in V, w \not <_{\mH, \alpha} v\}$, i.e. the nodes in $V$ that have no predecessors in $V$ according to the partial ordering $<_{\mH, \alpha}$. Finally, note that $\alpha_{V\backslash V^O}$ is then $\alpha_V$ with all the nodes in $V^O$ removed (note that this makes the nodes in $V^O$ output attributes).

\begin{definition}\label{def:decomp}
Given an \ajar query $Q_{\mH, \alpha}$, we say a GHD $(\mT, \chi)$ is \emph{decomposable} if:
\begin{itemize}
\item There exists a rooted subtree $\mT_0$ of $\mT$ such that $\chi(\mT_0) = \mV(-\alpha)$ (i.e. output attributes).
\item For each connected component $C$ of $\mH \backslash V_{-\alpha}$, there is exactly one subtree $\mT_{C} \in \mT \backslash \mT_0$ such that $\mT_C$ is a decomposable GHD of $Q_{(\cup_{E \in \mE_C} E, \mE_C), \alpha_{C \backslash C^O}}$.
\end{itemize}
\end{definition}

We start by connecting this idea of decomposable GHDs to valid GHDs. We only give proof sketches here; see appendix~\ref{sec:app-decomp} for the full proofs.

\begin{theorem}\label{thm:decompisvalid}
Every decomposable GHD is valid.
\end{theorem}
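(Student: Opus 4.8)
I would prove this by induction on $|\alpha|$, using Theorem~\ref{thm:validcorrect} as the working definition of validity: it suffices to show that if $(\mT,\chi)$ is a decomposable GHD for $Q_{\mH,\alpha}$, then for every pair of attributes $A,B$ with $TOP_\mT(A)$ an ancestor of $TOP_\mT(B)$ we have $B \not<_{\mH,\alpha} A$. The base case $|\alpha| = 0$ is immediate: every attribute is an output attribute, so $<_{\mH,\alpha}$ is empty and the condition holds vacuously. For the inductive step I would fix the subtree $\mT_0$ and the subtrees $\{\mT_C\}$ guaranteed by Definition~\ref{def:decomp}, where $C$ ranges over the connected components of $\mH\backslash V(-\alpha)$ and each $\mT_C$ is a decomposable GHD of the sub-query $Q_{\mH_C,\alpha_{C\backslash C^O}}$ with $\mH_C = (\cup_{E\in\mE_C}E,\mE_C)$; since $C^O \neq \emptyset$ for each nonempty component, $|\alpha_{C\backslash C^O}| < |\alpha|$ and the inductive hypothesis applies to each $\mT_C$.

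\textbf{Structural facts I would establish first.} Taking $\mT_0$ to be a connected subtree containing the root of $\mT$ (the intended reading, and what is needed for $\mT\backslash\mT_0$ to split into the $\mT_C$), I would record: (i) every ancestor of a node of $\mT_0$ lies in $\mT_0$; (ii) each $\mT_C$ is downward closed in $\mT$ and hangs strictly below $\mT_0$. Combining these with $\chi(\mT_0) = V(-\alpha)$ and the running intersection property yields the key localization statements: for an output attribute $X$, $TOP_\mT(X)\in\mT_0$; and for a non-output attribute $X$, all of $X$'s bags lie in a single $\mT_C$, hence $TOP_\mT(X) = TOP_{\mT_C}(X)\in\mT_C$. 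I would also observe that the only non-output attributes appearing in edges of $\mE_C$ are exactly those of $C$ (an edge meeting $C$ has all of its non-output endpoints in the same component), which together with the GHD properties of $\mT_C$ gives $\chi(\mT_C)\cap V(\alpha) = C$.

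\textbf{Case analysis.} Given $A,B$ with $TOP_\mT(A)$ an ancestor of $TOP_\mT(B)$, I would argue as follows. If $A$ is an output attribute, then $B <_{\mH,\alpha} A$ would require either $A$ non-output (false) or $\pre(B,A)$ (false, since $A\notin V(\alpha)$), so $B\not<_{\mH,\alpha}A$. Otherwise $A$ is non-output, so $TOP_\mT(A)\in\mT_C$ for some $C$; then $TOP_\mT(B)$ is a descendant in the downward-closed $\mT_C$, so $B$ cannot be an output attribute (its top would lie in $\mT_0$, disjoint from $\mT_C$), and by the localization facts $A,B\in C$. If $A\in C^O$, then $A$ is $<_{\mH,\alpha}$-minimal among $C$, so $B\in C$ forces $B\not<_{\mH,\alpha}A$ directly. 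If $A\notin C^O$, then $A$ is a non-output attribute of the sub-query $Q_{\mH_C,\alpha_{C\backslash C^O}}$; since $TOP_{\mT_C}(A) = TOP_\mT(A)$ is an ancestor of $TOP_{\mT_C}(B) = TOP_\mT(B)$ and $\mT_C$ is valid for the sub-query by the inductive hypothesis (and Theorem~\ref{thm:validcorrect}), we get $B\not<_{\mH_C,\alpha_{C\backslash C^O}}A$. It then remains to transfer this conclusion back: if $B\in C^O$ then $B$ is an output attribute of the sub-query while $A$ is not, so $B<_{\mH_C,\alpha_{C\backslash C^O}}A$ by Definition~\ref{def:partial-order}(i), contradicting the previous line — hence $B\in C\backslash C^O$, and the claim reduces to showing $\pre_{\mH,\alpha}(B,A)\Rightarrow\pre_{\mH_C,\alpha_{C\backslash C^O}}(B,A)$.

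\textbf{Main obstacle.} The crux is exactly this last implication: precedence between two non-$C^O$ attributes of a component survives the passage to the sub-query. It is plausible because restricting $\alpha$ to $C\backslash C^O$ preserves the relative order of $A$ and $B$, and because, by Lemma~\ref{Pathing2}, any precedence $B<_{\mH,\alpha}A$ is witnessed by a path from $B$ to $A$ all of whose attributes are $>_{\mH,\alpha}B$, hence non-output, hence in $C$, so that every edge on it lies in $\mE_C$; thus every edge, path, and intermediate precedence fact used in the fixed-point derivation of $\dnc_{\mH,\alpha}(B,A)$ via Lemmas~\ref{lemma:DNCbasecase},~\ref{lemma:DNCextension}, and~\ref{lemma:DNCtransitive} is still available in $\mH_C$, and one reproduces the constraint there by induction on the round in which it was added (together with Lemma~\ref{lemma:PREC&DNC}). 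I expect the delicate points to be (a) checking that discarding the edges outside $\mE_C$ never removes a precedence fact actually needed in the derivation, and (b) checking that the $C^O$-attributes, which become output attributes of the sub-query, do not create spurious constraints among $C\backslash C^O$ through Definition~\ref{def:partial-order}(i); I would carry out this transfer lemma in full in Appendix~\ref{sec:app-decomp} and cite it here.
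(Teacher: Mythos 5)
Your proposal is correct and follows essentially the same route as the paper: induction on $|\alpha|$, using $\mT_0$ to handle output attributes, localizing a non-output pair to a single $\mT_C$, invoking the inductive hypothesis there, and transferring precedence facts between $Q_{\mH,\alpha}$ and the sub-query. The ``transfer lemma'' you flag as the main obstacle is exactly the paper's Lemma~\ref{lemma:sub-orders}, proved in Appendix~\ref{sec:app-decomp} by induction on the round of the fixed-point construction, just as you propose.
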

\begin{proof}(Sketch)
Suppose the $\ajar$ query is $Q_{\mH, \alpha}$. We need to show for any $A,B$ such that $TOP_\mT(A)$ is an ancestor of $TOP_\mT(B)$, $A \not <_{\mH, \alpha} B$. We use induction on $|\alpha|$. If $|\alpha|=0$, all GHDs are valid and decomposable. For $|\alpha| > 0$, $\mT_0$ ensures that the output attributes are above non-output attributes. If $A$ and $B$ are non-output attributes and $TOP_\mT(A)$ is an ancestor of $TOP_\mT(B)$, then both are in some $\mT_C$. By the inductive hypothesis, $\mT_C$ is valid with respect to $Q_{(\cup_{E \in \mE_C} E, \mE_C), \alpha_{C \backslash C^O}}$. By inspecting the partial order created by this subgraph, we conclude that $A \not <_{\mH, \alpha} B$ as desired.
\end{proof}

Every valid GHD may not be decomposable. However, every valid GHD can be transformed into a corresponding decomposable GHD using some simple transformations. Each bag of the resulting decomposable GHD is a subset of one of the bags of the original GHD. Thus the fhw of the decompsable GHD is at most the fhw of the original valid GHD. In fact, we can make a more general claim, using a notion of node-monotone functions, defined next.

\begin{definition}\label{def:node-monotone}
Given a hypergraph $\mH = (\mV_\mH, \mE_\mH)$, we define a function to
be {\em node-monotone} if it is a function $\gamma: 2^{\mV_\mH}
\rightarrow \mathbb{R}$ such that $\forall \text{ } A \subseteq B
\subseteq \mV_\mH : \gamma(A) \leq \gamma(B)$.  Given any
node-monotone function $\gamma$, we define the $\gamma$-width of a GHD $(\mT, \chi)$ over
$\mH$ as $\max_{v \in \mV_\mT} \gamma(\chi(v))$.
\end{definition}

Many standard notions of widths can be expressed as $\gamma$-widths for a suitably chosen $\gamma$. Specifically:

\begin{proposition}\label{prop:node-monotone-widths}
Suppose we are given a hypergraph $\mH = (\mV_\mH, \mE_\mH)$ and database instance $I$ on $\mH$. Then for each of following notions of width: (i) Treewidth (ii) Generalized Hypertree Width (iii) Fractional Hypertree Width (iv) Submodular Width, there exists a node-monotone function $\gamma$ such that $\gamma$-width equals the given notion of width. 
\end{proposition}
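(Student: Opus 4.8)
The plan is to exhibit, for each of the four width notions, an explicit set function $\gamma : 2^{\mV_\mH} \to \mathbb{R}$ that is node-monotone (i.e.\ $A \subseteq B \implies \gamma(A) \le \gamma(B)$) and whose associated $\gamma$-width $\max_{v \in \mV_\mT}\gamma(\chi(v))$ coincides, for every GHD $(\mT,\chi)$, with the standard definition of that width applied to $(\mT,\chi)$. Since the standard widths of a hypergraph are all defined as the minimum over GHDs of a ``per-bag cost,'' it suffices to match the per-bag cost functions; the minimization over GHDs is then identical on both sides. So the entire proposition reduces to four routine verifications, one per width.

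For each width I would (a) write down the bag-cost function and (b) check monotonicity. For \emph{treewidth}, set $\gamma(S) = |S| - 1$; this is trivially monotone, and $\max_v(|\chi(v)|-1)$ is exactly the standard treewidth of a tree decomposition. For \emph{generalized hypertree width}, set $\gamma(S)$ to be the minimum number of hyperedges of $\mE_\mH$ whose union contains $S$ (the edge cover number of $S$); monotonicity holds because any edge set covering $B$ also covers $A \subseteq B$, so the minimum cannot increase when we shrink the set. For \emph{fractional hypertree width}, set $\gamma(S) = \rho^*(S)$, the optimal value of the fractional edge cover LP for $S$ (this is exactly the $\rho^*_t$ appearing in the paper's $fhw$ definition, restricted to the attributes of a bag); monotonicity follows because the feasible region of the LP for $B$ is contained in that for $A$ when $A \subseteq B$ — every fractional cover of $B$ is also a fractional cover of $A$ — so the minimum of the (fixed) objective over a smaller feasible set is no larger. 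Note the objective $\sum_F x_F \log_{\IN}|R_F|$ depends on the instance $I$ but not on $S$, which is why the proposition is stated relative to a fixed $I$. For \emph{submodular width}, recall it is defined as $\max_{v}\,\min_{\text{decomposition}}\ldots$ is \emph{not} the right shape; rather submodular width of $\mH$ is $\sup_{h}\min_{(\mT,\chi)}\max_v h(\chi(v))$ where the sup ranges over edge-dominated monotone submodular functions $h$ on $2^{\mV_\mH}$ with $h(\emptyset)=0$. Here I would take $\gamma(S) = \sup_h h(S)$ over that class of $h$; each $h$ is monotone by assumption, and a pointwise supremum of monotone functions is monotone, so $\gamma$ is node-monotone, and $\max_v \gamma(\chi(v))$ unwinds to the submodular-width bag cost. (One subtlety: for submodular width the $\sup_h$ and $\min_{(\mT,\chi)}$ do not commute in general, so ``$\gamma$-width'' as literally defined — a fixed $\gamma$ minimized over GHDs — gives $\min_{(\mT,\chi)}\max_v \sup_h h(\chi(v))$, which is an \emph{upper bound} on submodular width and is the quantity that actually governs the InsideOut/PANDA-style runtime we care about; I would state the proposition for this quantity, which is what the downstream sections use.)

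The main obstacle is the submodular-width case: getting the quantifier order and the exact function class right so that the claimed equality (or, more honestly, the claimed identification with the runtime-relevant width) is literally correct, and being careful that this is the version of submodular width the rest of the paper relies on in Section~\ref{subsec:dbp-width}. The other three cases are essentially immediate once the bag-cost functions are written down, and monotonicity in each is a one-line containment-of-feasible-regions argument. I would therefore spend the bulk of the proof pinning down the submodular-width definitions and a single sentence each on treewidth, GHW, and FHW.
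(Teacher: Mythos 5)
The paper never actually proves this proposition: it records only the treewidth example $\gamma(A)=|A|-1$ immediately after the statement and moves on, so your write-up is already more explicit than anything in the source. For (i)--(iii) your constructions are exactly the per-bag cost functions these widths are built from --- cardinality minus one, the integral edge cover number of the bag, and the fractional edge cover LP optimum $\rho^*(S)$ (the $\rho^*_t$ in the paper's $fhw$ definition) --- and your monotonicity arguments (a cover of $B$ covers $A\subseteq B$; the feasible region of the LP for $B$ is contained in that for $A$, with an instance-independent objective) are correct one-liners. Given the paper's convention of dropping the $\lambda$ function from GHDs, the edge-cover-number choice for generalized hypertree width is the right one here.

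On (iv) you have correctly put your finger on the one genuine difficulty: submodular width has the quantifier shape $\sup_h\min_{(\mT,\chi)}\max_t h(\chi(t))$, so no single fixed $\gamma$ recovers it as a $\gamma$-width in the literal sense. However, your proposed repair $\gamma(S)=\sup_h h(S)$ does not merely give an upper bound --- it collapses exactly to the fractional cover function: Shearer's inequality for polymatroids gives $h(S)\le\rho^*(S)$ for every edge-dominated monotone submodular $h$, and the modular function built from an optimal dual solution of the cover LP attains it, so $\sup_h h(S)=\rho^*(S)$ and your (iv) candidate is literally your (iii) candidate. The resolution the paper implicitly relies on is different and cleaner: Theorem~\ref{thm:decompwidth} is quantified over \emph{all} node-monotone $\gamma$ simultaneously, and each edge-dominated monotone submodular $h$ is itself a node-monotone function, so the width-preserving transformation preserves $\max_t h(\chi(t))$ for every $h$ at once and hence preserves the outer $\sup_h$ as well; that per-$h$ reading is what Theorem~\ref{thm:submodular-width} actually needs, and it is how you should phrase case (iv) rather than via a single aggregated $\gamma$.
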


As a simple example, tree-width can be expressed as $\gamma$-width for $\gamma(A) = |A|-1$. We can now relate valid and decomposable GHDs with respect to their $\gamma$-widths.

\begin{theorem}\label{thm:decompwidth}
For every valid GHD $(\mT, \chi)$, there exists a decomposable GHD $(\mT', \chi')$ such that for all node-monotone functions $\gamma$, the $\gamma$-width of $(\mT', \chi')$ is no larger than the $\gamma$-width of $(\mT, \chi)$.
\end{theorem}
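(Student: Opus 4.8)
The plan is to reduce the statement to a purely structural claim about GHDs. It suffices to construct a decomposable GHD $(\mT', \chi')$ \emph{every bag of which is contained in some bag of $(\mT, \chi)$}: if $\chi'(v') \subseteq \chi(v)$, then node-monotonicity gives $\gamma(\chi'(v')) \le \gamma(\chi(v)) \le \max_{u \in \mV_\mT} \gamma(\chi(u))$, so the $\gamma$-width of $(\mT', \chi')$ is at most that of $(\mT, \chi)$ for every node-monotone $\gamma$ at once. I would then prove the existence of such a $(\mT', \chi')$ by induction on $|\alpha|$. For $|\alpha| = 0$ we have $\mV(-\alpha) = \mV$, and $(\mT, \chi)$ is already decomposable (take $\mT_0 = \mT$; there are no non-output attributes and hence no components to recurse on), so there is nothing to do.

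For the inductive step, the first move is to carve off the output part of $\mT$. Combining Theorem~\ref{thm:validcorrect} with Definition~\ref{def:partial-order}(i) --- every output attribute precedes every non-output attribute in $<_{\mH, \alpha}$ --- I would observe that in a valid GHD no \emph{strict} ancestor of $TOP_\mT(O)$ (for an output attribute $O$) may contain a non-output attribute, since that attribute's TOP node would then be a strict ancestor of $TOP_\mT(O)$, contradicting validity. Let $\mT_0$ be the smallest rooted subtree of $\mT$ that contains $TOP_\mT(O)$ for every output attribute $O$ together with a covering node of every edge $E$ with $E \subseteq \mV(-\alpha)$, and put $\chi'(v) := \chi(v) \cap \mV(-\alpha)$ on $\mT_0$. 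Running intersection for an output attribute $O$ inside $(\mT_0, \chi')$ holds because the set of nodes of $\mT$ containing $O$ is a subtree containing $TOP_\mT(O) \in \mT_0$, and the intersection of two subtrees of a tree is connected; by construction $\chi'(\mT_0) = \mV(-\alpha)$ and $\mT_0$ covers every edge lying entirely within the output attributes.

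Next, for each connected component $C$ of $\mH \setminus V_{-\alpha}$ I would isolate the part of $\mT$ that governs $C$ --- essentially the nodes whose bag meets $C$ --- and show, by chasing a path inside $C$ and using that consecutive path vertices share a covering bag, that this part is a connected subtree which (after the standard width-non-increasing GHD clean-ups: deleting redundant nodes, and merging a node into its parent when its bag is contained in the parent's) attaches below $\mT_0$. I would then argue that, restricted to the vertices and edges of the sub-query $Q_{(\cup_{E \in \mE_C} E,\, \mE_C),\, \alpha_{C \setminus C^O}}$, this subtree is a \emph{valid} GHD of that sub-query. Since $C \neq \emptyset$ forces $C^O \neq \emptyset$, we have $|C \setminus C^O| < |\alpha|$, so the induction hypothesis applies and replaces it by a decomposable GHD $\mT_C$ of the sub-query whose bags are still contained in bags of $\mT$. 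Grafting each $\mT_C$ onto $\mT_0$ gives $(\mT', \chi')$, which is decomposable by construction (its witnessing decomposition is $\mT_0$ together with the $\mT_C$), has every bag a subset of an original bag, and --- after checking the GHD axioms, i.e.\ covering (output-only edges by $\mT_0$, edges meeting a component by the corresponding $\mT_C$) and running intersection --- finishes the induction.

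The hard part will be the interface between $\mT_0$ and the $\mT_C$'s. First, an \emph{original} output attribute can sit in edges touching several distinct components, so in $\mT'$ it must occur in $\mT_0$ and near the top of each relevant $\mT_C$, and all of these occurrences must form one connected subtree; likewise the per-component subtrees of $\mT$ need not be disjoint. Reconciling this while keeping bags only shrinking is exactly what the clean-up transformations are for, and making that precise is the main bookkeeping obstacle. Second, the validity transfer to the sub-queries is delicate because demoting the attributes of $C^O$ to output attributes could a priori introduce order constraints absent from $<_{\mH, \alpha}$; here I would lean on the structural results of Section~\ref{sec:equivalent-orderings}: Lemma~\ref{Pathing}, applied to paths living inside $C$, forces all attributes of $C^O$ to carry a common operator and rules out the only configuration in which this demotion could create a constraint the original valid GHD would have violated, and Lemma~\ref{Pathing2} relates the $TOP$ positions of comparable attributes. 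With those in hand, the remaining checks --- covering, the drop of the induction parameter, and the node-monotone bound from the first paragraph --- are routine.
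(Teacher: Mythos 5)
Your overall strategy is the paper's: reduce to building a decomposable GHD whose bags are subsets of the original bags, induct on $|\alpha|$, peel off an output subtree $\mT_0$, and recurse on the connected components of $\mH \setminus V(-\alpha)$ with the $C^O$ attributes demoted to outputs. The width argument via node-monotonicity is exactly right, as is the observation that $C^O$ carries a single operator (via the contrapositive of Lemma~\ref{Pathing}).

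The genuine gap is in the validity-transfer step. You assert that Lemma~\ref{Pathing} ``rules out the only configuration in which this demotion could create a constraint the original valid GHD would have violated.'' It does not: an unmodified valid GHD can violate the demoted constraints, and the clean-ups you allow yourself (deleting redundant nodes, merging a node into its parent) only shrink the tree and cannot repair this. Concretely, take $\mH$ with edges $\{A,D\}$, $\{D,B\}$ and $\alpha = ((D,+),(A,+),(B,\max))$. The only constraint is $D <_{\mH,\alpha} B$, so the two-bag GHD rooted at $\{D,B\}$ with child $\{A,D\}$ is valid ($A$ and $B$ are incomparable). Here $C = \{A,B,D\}$ is a single component with $C^O = \{A,D\}$, so in the sub-query $A$ becomes an output attribute and must precede $B$; but $TOP_\mT(B)$ (the root) is a strict ancestor of $TOP_\mT(A)$ (the leaf), so the carved-out tree is \emph{not} valid for the sub-query and your induction hypothesis does not apply. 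The missing ingredient is the pair of width-preserving transformations the paper performs first: splitting a node that is $TOP_\mT$ for several attributes by inserting a new node above it (Lemma~\ref{lemma:decomp-prop1}), and \emph{re-parenting} a child subtree to its grandparent when its attributes are not adjacent to the attribute topped at its parent (Lemma~\ref{lemma:decomp-prop2}, the ``subtree-connected'' normalization). In the example this inserts a root $\{D\}$ and re-hangs $\{A,D\}$ off it, after which $TOP_\mT(B)$ is no longer above $TOP_\mT(A)$. Only once subtree-connectedness is in place does the paper's argument (combining Lemma~\ref{Pathing} with Lemmas~\ref{lemma:inbetween} and~\ref{lemma:ancestor}) yield the contradiction that makes each $\mT_C$ valid for its sub-query; your toolkit of Lemmas~\ref{Pathing} and~\ref{Pathing2} plus shrink-only clean-ups is not enough.
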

\begin{proof}(Sketch)
Suppose the $\ajar$ query is $Q_{\mH, \alpha}$. We transform the given GHD $(\mT, \chi)$ into $(\mT', \chi')$ such that for each $v' \in \mV_\mT'$, there exists a $v \in \mV_\mT$ such that $\chi'(v') \subseteq \chi(v)$. The result then follows from the node-monotonicity of $\gamma$ and the definition of $\gamma$-width. Any transformation of a GHD that ensures that all new bags are subsets of old bags, is called width-preserving.

We then transform the GHD $(\mT, \chi)$ to satisfy the following properties (using width-preserving transformations):
\begin{itemize}
\item Every $t \in \mT$ is $TOP_\mT(A)$ for exactly one attribute $A$.
\item For any node $t \in \mT$ and the subtree $\mT_t$ rooted at $t$, the attributes $\{v \in \mV | TOP_\mT(v) \in \mT_t\}$ form a connected subgraph of $\mH$.
\end{itemize}
We can show, by induction, any valid GHD that satisfies these two properties is decomposable. Intuitively, the first transformation ensures the subtree $\mT_0$ exists as desired. The second transformation ensures that each of the $\mT_C$'s exists and satisfies the requisite properties.
\end{proof}

This theorem lets us restrict our search to the smaller space of decomposable GHDs (instead of all valid GHDs) when looking for the optimal valid GHD. Moreover, the space of decomposable GHDs is simpler; it can be factored into smaller spaces of unconstrained GHDs, as we show next. We present the definition of \emph{characteristic hypergraphs}, which are intuitively the set of hypergraphs that specify the factors, i.e. the unconstrained GHDs. 

Our goal is two-fold: $(1)$ to be able to split a decomposable GHD into component GHDs of the characteristic hypergraphs and $(2)$ to be able to take arbitrary GHDs of the characteristic hypergraphs and connect them to create a decomposable GHD of the original \ajar problem. The definition of decomposable GHDs decomposes a GHD into a series of sub-trees $\mT_0, \dots, \mT_k$. The definition specifies that the subtrees $\mT_1, \dots, \mT_k$ must be decomposable GHDs of (smaller) \ajar problems. Additionally, it is simple to show $\mT_0$ is a GHD of the hypergraph $(V(-\alpha), \{E \in \mE | E \subseteq V(-\alpha)\})$. If we apply this decomposition recursively to the subtrees $\mT_1, \dots, \mT_k$, we can divide any decomposable GHD into a series of (unrestricted) GHDs of particular hypergraphs. This provides the basis of our definition of the characteristic hypergraphs; we define a hypergraph $\mH_0$ that specifies the hypergraph corresponding to $\mT_0$ and then recurse on the smaller \ajar queries specified in Appendix Definition~\ref{def:decomp2}.

However, if we are given arbitrary GHDs of the hypergraphs as defined thus far, we may not be able to stitch them together while preserving the running intersection property of GHDs. To ensure this stitching is possible, we need the characteristic hypergraphs to contain additional edges that we can use to guarantee the running intersection property. Intuitively the edges we add will be the intersections of the adjacent subtrees in our decomposition; for example, for any connected component $C$ of $\mH \backslash V(-\alpha)$, $\mT_0$ and $\mT_C$ are adjacent, and we will add the edge $\chi(\mT_0) \cap \chi(\mT_C)$ to the corresponding hypergraphs. We can use these `intersection edges' to connect particular nodes in the adjacent subtrees.

\begin{definition}
Given an $\ajar$ problem $Q_{\mH, \alpha}$, suppose $C_1, \dots, C_k$ are the connected components of $\mH \setminus \mV_{-\alpha}$. Define a function $H(\mH, \alpha)$ that maps $\ajar$ queries to a set of hypergraphs as follows: 
\begin{itemize}
\item $C_i^+ = \bigcup_{E \in \mE_{C_i}} E$ for all $1 \le i \le k$
\item $\mH_0 = (\mV_{-\alpha}, \{F \in \mE | F \subseteq \mV_{-\alpha}\} \cup \{\mV_{-\alpha} \cap C_i^+ | 1 \le i \le k\})$
\item $\mH_i^+ = (C_i^+, \mE_{C_i} \cup \{\mV_{-\alpha} \cap C_i^+\})$
\item $H(\mH, \alpha) = \{\mH_0\} \cup \bigcup_{1 \le i \le k} H(\mH_i^+, \alpha_{C_i \backslash C_i^O})$
\end{itemize}
The hypergraphs in the set $H(\mH, \alpha)$ are defined to be the \emph{characteristic hypergraphs}.
\end{definition}

Note that the definition of characteristic hypergraphs depends only on $(\mH, \alpha)$, and not on a specific GHD or the instance. Now we state a key result that lets us reduce the problem of searching for an optimal {\em valid} GHD over $\mH$ to that of searching for (not necessarily valid) optimal GHDs over characteristic hypergraphs. Each decomposable GHD corresponds to a GHD over each characteristic hypergraph; conversely, a combination of GHDs for characteristic hypergraphs gives us a decomposable GHD for $\mH$. Formally:

\begin{theorem}\label{thm:decomp}
For an $\ajar$ query $Q_{\mH, \alpha}$, suppose $\mH_0, \dots, \mH_k$ are the characteristic hypergraphs $H(\mH, \alpha)$. Then GHDs $G_0, G_1, \dots, G_k$ of $\mH_0, \dots, \mH_k$ can be connected to form a decomposable GHD $G$ for $Q_{\mH, \alpha}$. Conversely, any decomposable GHD $G$ of $Q_{\mH, \alpha}$ can be partitioned into GHDs $G_0, G_1, \dots, G_k$ of the characteristic hypergraphs $\mH_0, \dots, \mH_k$. Moreover, in both of these cases, $\gamma\text{-width}(G) = \max_{i} \gamma\text{-width}(G_i)$.
\end{theorem}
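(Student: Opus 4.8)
The plan is to prove both directions by induction on $|\alpha|$, with the base case $|\alpha| = 0$ being immediate: then $H(\mH, \alpha) = \{\mH_0\}$ with $\mH_0 = \mH$ (no components to split off), every GHD is decomposable, and the claim is trivial. For the inductive step, I would handle the two directions as follows.

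\emph{(Forward: stitching together.)} Suppose we are given GHDs $G_0$ of $\mH_0$ and, for each connected component $C_i$ of $\mH \setminus \mV_{-\alpha}$, a decomposable GHD $G_i$ of $\mH_i^+ = (C_i^+, \mE_{C_i} \cup \{\mV_{-\alpha} \cap C_i^+\})$ — here we invoke the inductive hypothesis to know that such decomposable GHDs for the smaller $\ajar$ query $Q_{\mH_i^+, \alpha_{C_i \setminus C_i^O}}$ are exactly what one gets from GHDs of its characteristic hypergraphs. Because $\mH_0$ contains the edge $\mV_{-\alpha} \cap C_i^+$, there is a bag $b_i$ of $G_0$ with $\mV_{-\alpha}\cap C_i^+ \subseteq \chi(b_i)$; similarly $\mH_i^+$ contains the same edge, so $G_i$ has a bag $b_i'$ covering it. I would attach $G_i$ to $G_0$ by making $b_i'$ (or its subtree root, after re-rooting $G_i$ at a node covering that edge, which is a standard width-preserving move) a child of $b_i$. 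The running intersection property is preserved precisely because every attribute shared between $\chi(\mT_0)$-side and the $\mT_{C_i}$-side lies in $\mV_{-\alpha}\cap C_i^+$, hence in both $\chi(b_i)$ and $\chi(b_i')$, so its occurrences remain connected across the new edge; attributes inside $C_i$ that are not in $\mH_0$ stay within $G_i$'s subtree. One must also check the edge-cover condition: every edge of $\mH$ either lies inside $\mV_{-\alpha}$ (covered in $G_0$) or intersects some $C_i$, hence lies in $\mE_{C_i} \subseteq$ edges of $\mH_i^+$, hence is covered in $G_i$. The resulting $G$ is decomposable with $\mT_0 = G_0$ and $\mT_{C_i} = G_i$, matching Definition~\ref{def:decomp}; the last subtlety is that Definition~\ref{def:decomp} asks $\chi(\mT_0) = \mV(-\alpha)$ exactly, so if $G_0$'s bags don't union to all of $\mV_{-\alpha}$ we pad a bag, and if some output attribute of the subproblem (a node of $C_i^O$) appears in $G_0$ we note it is handled correctly since those are treated as output attributes in the recursion.

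\emph{(Backward: partitioning.)} Given a decomposable GHD $G$, Definition~\ref{def:decomp} hands us $\mT_0$ with $\chi(\mT_0) = \mV_{-\alpha}$ and, for each component $C$, a decomposable GHD $\mT_C$ of $Q_{(C^+, \mE_C), \alpha_{C\setminus C^O}}$. I would set $G_0 := \mT_0$ and $G_i := \mT_{C_i}$, then verify: $G_0$ is a GHD of $\mH_0$ — it covers all edges of $\mH$ inside $\mV_{-\alpha}$ by the edge-cover condition restricted to $\mT_0$, and it covers each intersection edge $\mV_{-\alpha}\cap C_i^+$ because running intersection in $G$ forces the bag of $\mT_0$ adjacent to $\mT_{C_i}$ to contain exactly those attributes shared with $\mT_{C_i}$, which is a superset of (and, after the width-preserving normalization from Theorem~\ref{thm:decompwidth}, equal to) $\mV_{-\alpha}\cap C_i^+$. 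Symmetrically $G_i$ covers the same intersection edge. Then apply the inductive hypothesis to decompose each $G_i$ further over the characteristic hypergraphs of the subproblem; the union of all these, together with $\mH_0$, is exactly $H(\mH, \alpha)$ by the recursive definition of $H$.

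\emph{(Width equality.)} In both directions the vertex set $\mV_\mT$ of $G$ is the disjoint union of the $\mV_{\mT}$ of the $G_i$ (we create no new bags beyond width-preserving padding, whose bags are subsets of existing ones and so cannot increase $\gamma$-width), and $\chi$ on each piece agrees with the corresponding $\chi_i$. Hence $\gamma\text{-width}(G) = \max_{v\in\mV_\mT}\gamma(\chi(v)) = \max_i \max_{v\in\mV_{\mT_i}}\gamma(\chi_i(v)) = \max_i \gamma\text{-width}(G_i)$, using only Definition~\ref{def:node-monotone}; node-monotonicity is what lets the padding step not matter.

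The main obstacle I expect is the running-intersection bookkeeping at the seams in the forward direction: one must argue carefully that the only attributes an inner subtree $G_i$ shares with the rest of the glued tree are exactly those in $\mV_{-\alpha}\cap C_i^+$ (nothing from another component $C_j$ can leak in, since distinct components share no attributes, and nothing from $\mV_{-\alpha}$ outside the intersection edge can appear in $G_i$ because $G_i$'s vertex set is $C_i^+$). Getting the re-rooting / normalization right so that a single bag covering the intersection edge exists on each side — and does so without changing any $\gamma$-width — is the step that needs the most care, and it leans on the normalized form of GHDs established in the proof sketch of Theorem~\ref{thm:decompwidth}.
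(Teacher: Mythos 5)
Your proposal is correct and follows essentially the same route as the paper's proof: induction on $|\alpha|$, using the recursive structure of Definition~\ref{def:decomp} together with the intersection edges $\mV_{-\alpha} \cap C_i^+$ built into the characteristic hypergraphs to guarantee matching bags on both sides of each seam, re-rooting each subtree at such a bag (a width-preserving move) before attaching it, and in the reverse direction observing that the running intersection property forces the adjacent bags to contain those intersection attributes. The extra bookkeeping you flag (padding so that $\chi(\mT_0)=\mV(-\alpha)$ exactly, and the width argument via disjoint union of bags) is consistent with, and slightly more explicit than, what the paper writes.
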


The proof is provided in the appendix, but it is a straightforward application of definitions.

\begin{corollary}\label{cor:optimal-construct}
Given an optimal GHD for each characteristic hypergraph of an $\ajar$ query $Q_{\mH, \alpha}$, we can construct an optimal valid GHD. The width of the optimal valid GHD equals the maximum optimal-GHD-width over its characteristic hypergraphs.
\end{corollary}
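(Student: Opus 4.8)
The plan is to derive the corollary by chaining together the three results just established: Theorem~\ref{thm:decompisvalid} (decomposable $\Rightarrow$ valid), Theorem~\ref{thm:decomp} (decomposable GHDs correspond exactly to tuples of GHDs over the characteristic hypergraphs, with matching $\gamma$-width), and Theorem~\ref{thm:decompwidth} (every valid GHD is dominated in $\gamma$-width by a decomposable one). Fix a node-monotone width function $\gamma$; by Proposition~\ref{prop:node-monotone-widths} this instantiates to treewidth, generalized hypertree width, fractional hypertree width, or submodular width. Let $\mH_0,\dots,\mH_k = H(\mH,\alpha)$ be the characteristic hypergraphs of $Q_{\mH,\alpha}$, and let $G_i$ be a $\gamma$-optimal GHD of $\mH_i$ for each $i$.

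For the \emph{upper bound} (the constructive part), I would apply the forward direction of Theorem~\ref{thm:decomp}: the $G_i$ can be stitched together into a decomposable GHD $G$ of $Q_{\mH,\alpha}$ satisfying $\gamma\text{-width}(G) = \max_i \gamma\text{-width}(G_i)$. Theorem~\ref{thm:decompisvalid} then gives that $G$ is valid. So we have exhibited a valid GHD whose $\gamma$-width equals the maximum optimal-GHD-width over the characteristic hypergraphs, which is exactly the width claimed in the statement.

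For \emph{optimality}, I would show no valid GHD beats $G$. Let $G'$ be an arbitrary valid GHD. By Theorem~\ref{thm:decompwidth} there is a decomposable GHD $G''$ with $\gamma\text{-width}(G'') \le \gamma\text{-width}(G')$. Applying the converse direction of Theorem~\ref{thm:decomp}, $G''$ partitions into GHDs $G''_0,\dots,G''_k$ of $\mH_0,\dots,\mH_k$ with $\gamma\text{-width}(G'') = \max_i \gamma\text{-width}(G''_i)$. Since each $G''_i$ is a GHD of $\mH_i$ and $G_i$ is $\gamma$-optimal for $\mH_i$, we have $\gamma\text{-width}(G''_i) \ge \gamma\text{-width}(G_i)$ for every $i$, hence $\gamma\text{-width}(G') \ge \gamma\text{-width}(G'') \ge \max_i \gamma\text{-width}(G_i) = \gamma\text{-width}(G)$. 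Thus $G$ is $\gamma$-optimal among all valid GHDs, and the width equality follows from the previous paragraph.

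I do not expect a genuine obstacle here: the corollary is essentially a bookkeeping consequence of the three theorems. The one subtlety to handle carefully is that ``optimal GHD'' (for each $\mH_i$) and ``optimal valid GHD'' (for $Q_{\mH,\alpha}$) must be read with respect to the same node-monotone $\gamma$; Theorem~\ref{thm:decompwidth} is stated uniformly over all node-monotone $\gamma$ (so a single decomposable $G''$ dominates $G'$ for whichever $\gamma$ we pick), while the width identities in Theorem~\ref{thm:decomp} are applied for that fixed $\gamma$. Making the choice of $\gamma$ explicit in the corollary statement would remove any ambiguity.
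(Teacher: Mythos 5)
Your proposal is correct and follows exactly the route the paper intends: the corollary is stated as an immediate consequence of chaining Theorem~\ref{thm:decompisvalid}, Theorem~\ref{thm:decompwidth}, and both directions of Theorem~\ref{thm:decomp}, which is precisely your upper-bound/optimality argument. The paper gives no separate proof for the corollary, and your write-up (including the remark about fixing a single node-monotone $\gamma$ throughout) supplies the bookkeeping it leaves implicit.
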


This reduces the problem of finding the optimal valid GHD to smaller problems of finding optimal GHDs. We first present the decomposition in the FAQ~\cite{FAQ} paper. Then we present several applications of our decomposition, and compare them to their FAQ analogues. 

\paragraph*{FAQ's Decomposition} The FAQ paper uses a
  decomposition of the problem that is not width-preserving. They
  remove the set of output attributes $V(-\alpha)$ and decompose the
  rest of the hypergraph into smaller hypergraphs. They construct a
  regular Variable-Ordering/GHD for each hypergraph. Then they
  add all output attributes $V(-\alpha)$ into each bag of each of the
  GHDs, and then stitch the GHDs together. This {\em output addition}
  to the bags of the GHDs leads to a potentially $2\times$ increase in
  width compared to our method which stitches the GHDs together
  without changing their width. As a result, FAQ's decomposition
  incurs higher runtime costs in each application of
  the decomposition, as we see in the next three subsections.

\begin{example}\label{example:faq-2x}
  Consider a query with output attribute $A$
  \[ \sum_{B,+}\sum_{C,+} (R(A,B) \Join S(B,C)). \] The optimal valid GHD for this query has bags $\{A,B\}$ and $\{B,C\}$, and thus has fhw $1$. The faqw is also $1$. If we apply our decomposition, we get a GHD with bags $\{A\}$, $\{A,B\}$, $\{B,C\}$ which still has fhw $1$. FAQ's decomposition on the reduced hypergraph (with output attribute $A$ removed) has one bag $\{B,C\}$. Adding $A$ to it gives a single bag $\{A,B,C\}$ resulting in a fhw of $2$. More generally, consider query $Q_n$ with $\alpha =$ $((B_1,+), (B_2,+),\ldots (B_n,+))$ and relations $T(A_1,B_1)$ and also $R_{i,j}(A_i,A_j)$, $S_{i,j}(B_i,B_j)$ for $i,j \in \{1,2,\ldots,n\}$. Our decomposition gives a GHD with bags $\{A_1,A_2,\ldots,A_n\}$, $\{A_1,B_1\}$, $\{B_1,B_2,\ldots,B_n\}$, which has fhw $n/2$. FAQ's decomposition has a single bag and fhw equal to $n$.
\end{example}  

\subsection{Finding optimal valid GHDs}\label{subsec:optimal-valid}
Armed with Corollary~\ref{cor:optimal-construct}, we simplify the brute force search algorithm for finding optimal valid GHDs.

\begin{theorem}
Let $Q_{\mH, \alpha}$ be an $\ajar$ query. The optimal width valid GHD for this query can be found in time $\Ot(|\mH|2^{\Ot(\max_{\mH' \in H(\mH, \alpha)}(|\mH'|))})$.
\end{theorem}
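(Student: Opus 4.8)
The plan is to read this off from Corollary~\ref{cor:optimal-construct} together with the standard brute-force algorithm for computing optimal GHDs of a single hypergraph. By the corollary it suffices to (1) build the set $H(\mH,\alpha)$ of characteristic hypergraphs, (2) compute an optimal (unconstrained) GHD for each of them, and (3) stitch the results into a valid GHD for $Q_{\mH,\alpha}$; Theorem~\ref{thm:decomp} and Theorem~\ref{thm:decompwidth} guarantee that the GHD so obtained has the minimum width over all valid GHDs, with width $\max_i \gamma\text{-width}(G_i)$. Steps (1) and (3) are polynomial in $|\mH|$, so the running time is dominated by invoking the per-hypergraph optimizer once for each element of $H(\mH,\alpha)$.

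First I would bound $|H(\mH,\alpha)|$ and the cost of constructing it. Building the recursion requires the partial order $<_{\mH,\alpha}$, which (by the fixed-point computation of $\pre$ and $\dnc$ described after Lemma~\ref{lemma:DNCtransitive}) is computable in $\mathrm{poly}(|\mH|)$ time. The recursion for $H$ deletes $V(-\alpha)$, splits into the connected components $C_1,\dots,C_k$ of $\mH\setminus V(-\alpha)$, and recurses on $(\mH_i^+,\alpha_{C_i\setminus C_i^O})$. Since each $C_i$ is nonempty the restriction of the poset to $C_i$ has a minimal element, so $C_i^O\neq\emptyset$ and $|\alpha_{C_i\setminus C_i^O}|\le |C_i\cap V(\alpha)|-1$; summing, the total size of the sub-orderings handed to the recursive calls is at most $|\alpha|-k$. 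An easy induction ($N(\ell)\le\ell+1$, using $1+\sum_i (n_i+1)=1+k+\sum_i n_i\le |\alpha|+1$) then gives $|H(\mH,\alpha)|\le |\alpha|+1 = O(|\mH|)$. Each recursion level touches every vertex and every edge only $O(1)$ times and adds only the ``intersection edges'' $V(-\alpha)\cap C_i^+$, so the whole family $H(\mH,\alpha)$ is assembled in $\mathrm{poly}(|\mH|)$ time.

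Next, for a fixed characteristic hypergraph $\mH'$ I would run the textbook brute-force GHD optimizer: enumerate all $2^{|\mV_{\mH'}|}$ candidate bags, evaluate the $fhw$ cost of each bag by solving its AGM linear program (polynomial in $|\mH'|$), and run the usual dynamic program over rooted tree decompositions whose bags are taken from this collection, minimizing the maximum bag cost. This returns an optimal GHD of $\mH'$ in time $2^{\Ot(|\mV_{\mH'}|)}=2^{\Ot(|\mH'|)}$; the extra intersection edges inflate the edge count but not the vertex count, so this bound is unaffected. Summing over the $O(|\mH|)$ characteristic hypergraphs, adding the $\mathrm{poly}(|\mH|)$ cost of computing $<_{\mH,\alpha}$, building $H(\mH,\alpha)$, and stitching the optimal pieces together via Theorem~\ref{thm:decomp}, and using $\max_{\mH'\in H(\mH,\alpha)}|\mH'|$ to dominate each $2^{\Ot(|\mH'|)}$, yields the claimed bound $\Ot\!\bigl(|\mH|\,2^{\Ot(\max_{\mH'\in H(\mH,\alpha)}|\mH'|)}\bigr)$.

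I expect the main obstacle to be the combinatorial accounting of the decomposition: establishing $|H(\mH,\alpha)|=O(|\mH|)$ (so that the recursion cannot spawn too many subproblems), confirming the per-hypergraph brute force is genuinely $2^{\Ot(|\mH'|)}$ rather than $2^{\Ot(|\mH|)}$, and checking that neither the recursion nor the added intersection edges inflates the relevant per-instance size. Everything past that point is bookkeeping layered on top of the already-established Corollary~\ref{cor:optimal-construct}.
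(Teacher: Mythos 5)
Your proposal is correct and takes essentially the same approach the paper intends: the paper states this theorem without an explicit proof, treating it as an immediate consequence of Corollary~\ref{cor:optimal-construct} combined with a standard $2^{\Ot(|\mH'|)}$ brute-force optimizer run once per characteristic hypergraph. The extra bookkeeping you supply --- in particular the induction showing $|H(\mH,\alpha)|\le|\alpha|+1$ because each nonempty component $C_i$ has $C_i^O\neq\emptyset$ --- fills in exactly the details the paper leaves implicit, and is consistent with its definitions.
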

This runtime for finding the optimal valid GHD can be exponentially better than the naive runtime:
\begin{example}\label{example:star-ajar}
Consider the star query $\mH = (\{A, B_1,\ldots B_n\}$, $\{ \{A, B_i\} \mid 1 \leq i \leq n \} )$, $\alpha = (B_1, +), (B_2, +), \dots, (B_n, +)$. $A$ is the only output attribute. Removing $A$ breaks the hypergraph into $n$ components, so there are $n + 1$ characteristic hypergraphs, each of size $\leq 2$. Finding the optimal valid GHD takes time $\Ot(n)$, whereas the standard algorithm takes time exponential in $n$.
\end{example}

We can also approximate the GHD~\cite{Marx:2010:AFH:1721837.1721845}:

\begin{theorem}[Marx's GHD approximation]
Let $Q$ be a join query with hypergraph $\mH$ and fractional hypertree width $w$. Then we can find a GHD for $Q$ in time polynomial in $|\mH|$, that has width $w' \leq w^3$.
\end{theorem}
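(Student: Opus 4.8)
The plan is to recall the approximation algorithm of Marx~\cite{Marx:2010:AFH:1721837.1721845}, since this statement is precisely his theorem; I sketch its structure. The first step is to reduce the construction of a low-width GHD to a \emph{balanced-separator} subroutine, exactly as in the classical Robertson--Seymour-style approximation for treewidth, but with ``size of a separator'' replaced everywhere by ``fractional edge cover number $\rho^*$ of a separator''. Call $S \subseteq \mV_\mH$ a balanced $W$-separator, for $W \subseteq \mV_\mH$, if every connected component of $\mH$ after deleting $S$ contains at most $\frac{1}{2}|W|$ vertices of $W$. I would then invoke the standard recursive construction: if one has a polynomial-time subroutine that, for every $W$ arising during the recursion, returns a balanced $W$-separator of fractional cover number at most $c$, then one can assemble in polynomial time a tree decomposition --- hence a GHD in the sense of this paper, the $\lambda$ function being immaterial for $fhw$ as noted in the footnote to the GHD definition --- in which every bag has fractional cover number $O(c)$, i.e.\ of fractional hypertree width $O(c)$.

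Next I would show that good separators \emph{exist}. Fix a tree decomposition $(\mT^\star, \chi^\star)$ witnessing $fhw(\mH) = w$. For any $W$, orient each edge of $\mT^\star$ toward the side containing a majority of $W$; some node $t$ then has no outgoing edge, and its bag $\chi^\star(t)$ is a balanced $W$-separator with $\rho^*(\chi^\star(t)) \le w$. So balanced $W$-separators of fractional cover at most $w$ are guaranteed to exist --- but this does not immediately let us find one, because the number of vertex sets with $\rho^* \le w$ can be exponential, so there is no brute-force enumeration and, unlike the integral case, no direct min-cut/flow argument.

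The heart of the argument --- and the step I expect to be the main obstacle --- is therefore to find such a separator \emph{approximately} in polynomial time. The approach is to relax to a \emph{fractional} separator: a weighting $\gamma \colon \mE_\mH \to [0,1]$ of total weight $O(w)$ such that the set of vertices it covers to level $1$ forms, in an LP/flow sense, a balanced $W$-separator; such a $\gamma$ can be found by linear programming, the relevant LP being feasible by the existence result above. One then \emph{rounds} $\gamma$ to an integral vertex set $S$ --- essentially by thresholding the fractional coverage and iterating over the polynomially many relevant thresholds --- and checks that $S$ is still a balanced $W$-separator while $\rho^*(S)$ has grown only by a factor polynomial in $w$. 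Plugging this $O(\mathrm{poly}(w))$-separator subroutine into the recursive construction yields a GHD of fractional hypertree width $w' = O(w^3)$, the extra factors of $w$ coming from the rounding loss and the accumulation of separators through the recursion; the constants and the bookkeeping fixing the exponent at $3$ are the content of~\cite{Marx:2010:AFH:1721837.1721845}. Finally, there is nothing further to do to turn the resulting tree decomposition into a ``GHD'' as defined here, since bags implicitly carry all relations.
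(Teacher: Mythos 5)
The paper does not prove this theorem at all: it is stated as an imported result, cited directly from Marx~\cite{Marx:2010:AFH:1721837.1721845}, and the authors' only contribution in this vicinity is the \emph{following} theorem (applying the approximation to each characteristic hypergraph to get a valid GHD of width $\le w^3$). So there is no in-paper proof to compare yours against.

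As a summary of Marx's argument, your sketch has the right skeleton --- the reduction to balanced $W$-separators, the existence of a width-$w$ balanced separator obtained by orienting the edges of an optimal decomposition toward the majority of $W$, and the observation that the real difficulty is \emph{finding} such a separator because sets of bounded fractional cover number cannot be enumerated. But as a proof it is incomplete exactly where the theorem's content lies: the step ``relax to a fractional separator, solve an LP, round by thresholding, lose only $\mathrm{poly}(w)$'' is asserted rather than argued, and you explicitly defer the derivation of the exponent $3$ to the citation. Nothing in your write-up explains why the rounding loses a factor of $w^2$ rather than something worse (or why the LP you describe is the right relaxation --- Marx's separator LP is over paths/flows between parts of $W$, not merely a covering condition), so the bound $w' \le w^3$ is not established by what you wrote. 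Given that the paper itself treats this as a black box, citing Marx with a faithful sketch is an acceptable stance, but you should be clear that you are reporting his proof rather than giving one.
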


We can replicate Marx's result for valid GHDs. 
\begin{theorem}
Let $Q_{\mH, \alpha}$ be an $\ajar$ query, such that its minimum width valid GHD has width $w$. Then we can find a valid GHD in time polynomial in $|\mH|$ that has width $w' \leq w^3$.
\end{theorem}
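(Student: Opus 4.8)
The plan is to reduce the approximation problem for valid GHDs to the unconstrained approximation problem of Marx applied separately to each characteristic hypergraph, using the decomposition machinery already developed. Concretely: given the \ajar query $Q_{\mH,\alpha}$, first compute the partial order $<_{\mH,\alpha}$ (polynomial time, by the discussion following Lemma~\ref{lemma:DNCtransitive}) and then compute the set of characteristic hypergraphs $H(\mH,\alpha) = \{\mH_0,\dots,\mH_k\}$. This set can be built in polynomial time because it is defined by a simple recursive procedure (removing output attributes, taking connected components, recursing), and each recursion strictly decreases the number of attributes, so there are at most $|\mV|$ characteristic hypergraphs, each of size at most $|\mH|$.

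Next I would invoke Marx's GHD approximation (the theorem quoted just above) on each $\mH_i$ individually: for each $i$, let $w_i$ be the fractional hypertree width of $\mH_i$, and obtain in time polynomial in $|\mH_i| \le |\mH|$ a GHD $G_i$ of $\mH_i$ with $fhw(G_i) \le w_i^3$. Running this for all $i$ takes total time polynomial in $|\mH|$. Now apply the ``connecting'' direction of Theorem~\ref{thm:decomp}: the $G_i$ can be stitched together into a decomposable (hence valid, by Theorem~\ref{thm:decompisvalid}) GHD $G$ for $Q_{\mH,\alpha}$ with $fhw(G) = \max_i fhw(G_i) \le \max_i w_i^3 = (\max_i w_i)^3$. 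It remains to relate $\max_i w_i$ to $w$, the minimum width over \emph{valid} GHDs. By Theorem~\ref{thm:decompwidth}, the optimal valid GHD can be taken decomposable without increasing any node-monotone width, and fractional hypertree width is node-monotone (Proposition~\ref{prop:node-monotone-widths}); so there is a decomposable optimal valid GHD $G^*$ with $fhw(G^*) = w$. By the ``partitioning'' direction of Theorem~\ref{thm:decomp}, $G^*$ partitions into GHDs $G^*_0,\dots,G^*_k$ of $\mH_0,\dots,\mH_k$ with $w = fhw(G^*) = \max_i fhw(G^*_i) \ge \max_i w_i$, where the last inequality holds because $w_i$ is the \emph{optimal} fhw of $\mH_i$. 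Combining, $fhw(G) \le (\max_i w_i)^3 \le w^3$, and $G$ is valid, which is exactly the claim.

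The only genuinely delicate point is making sure every piece of the pipeline really is polynomial-time and that the widths compose exactly as Theorem~\ref{thm:decomp} asserts; but both the polynomial bound on $|H(\mH,\alpha)|$ (at most $|\mV|$ hypergraphs, each no larger than $\mH$) and the exact width identity $\gamma\text{-width}(G) = \max_i \gamma\text{-width}(G_i)$ are already in hand from the preceding results, so the argument is essentially bookkeeping. I would also note in passing that exactly the same reduction, substituting any polynomial-time approximation or exact algorithm for the relevant width measure in place of Marx's, yields analogous statements for other node-monotone widths — but for this theorem it suffices to cite Marx once per characteristic hypergraph. I expect the main obstacle, if any, to be purely expository: stating cleanly that ``optimal fhw of $\mH_i$'' $\le w$ requires one to observe that the partition of the optimal decomposable valid GHD yields a \emph{feasible} GHD of each $\mH_i$, so its optimal width can only be smaller, and this is the one spot where one must be careful that the characteristic hypergraphs' extra ``intersection edges'' do not invalidate the partitioned pieces — which is precisely what Theorem~\ref{thm:decomp} was designed to guarantee.
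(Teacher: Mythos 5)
Your argument is correct and is essentially the proof the paper intends (the paper states this theorem without an explicit proof, but the surrounding text makes clear it follows from applying Marx's approximation to each characteristic hypergraph and invoking Corollary~\ref{cor:optimal-construct} together with Theorems~\ref{thm:decompisvalid}, \ref{thm:decompwidth}, and~\ref{thm:decomp}, exactly as you do). Your additional bookkeeping on the polynomial size of $H(\mH,\alpha)$ and the observation that the partition of an optimal decomposable valid GHD witnesses $w_i \le w$ for each characteristic hypergraph are the right details to spell out.
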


FAQ~\cite{FAQ}'s decomposition lets them apply Marx's approximation as well. However, their decomposition is not width-preserving i.e. the width of their final GHD is higher than the width of the GHDs they construct for the hypergraphs in the decomposition. Thus their decomposition gives a weaker width guarantee of $faqw^3 + faqw$~\cite[Theorem~9.49]{FAQ}. The extra $+ faqw$ factor is due to output addition. Our guarantee, $w^3$, is strictly smaller ($w$ is the width of the optimal valid GHD) as $w \leq faqw$ by Theorem~\ref{thm:ajar-vs-faq-runtime}. 

\subsection{Tighter Runtime Exponents}\label{subsec:dbp-width}
Marx~\cite{Marx:2010:THP:1806689.1806790} introduced the notion of submodular width ($sw$) that is tighter than $fhw$, and showed that a join query can be answered in time $\IN^{O(sw)}$. The $O$ in the exponent is because Marx's algorithm requires expensive preprocessing that takes $\IN^{2 \times sw}$ time. After the pre-processing, the join can be performed in time $\IN^{sw}$. Despite the $O$ in the exponent, this algorithm can be very valuable because there are families of hypergraphs that have unbounded $fhw$ but bounded $sw$. We can apply Marx's algorithm to the characteristic hypergraphs, potentially improving our runtime. Marx also showed that joins on a family of hypergraphs are fixed parameter tractable if any only if the submodular width of the hypergraph family is bounded~\cite{Marx:2010:THP:1806689.1806790}. Moreover, adaptive width~\cite{AdaptiveWidth} (applicable only when relations are expressed as truth tables) is unbounded for a hypergraph family if and only if submodular width is unbounded. Corollary~\ref{cor:optimal-construct} gets us an analogous tractability result for $\ajar$ queries. 

\begin{theorem}\label{thm:submodular-width}
We can answer an $\ajar$ query $Q_{\mH, \alpha}$ in time $O(\IN^{O(\max_{\mH' \in H(\mH, \alpha)}(sw(\mH')))} + \OUT)$. 
\end{theorem}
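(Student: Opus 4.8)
The plan is to combine the structural decomposition of valid GHDs into characteristic hypergraphs (Theorem~\ref{thm:decomp} and Corollary~\ref{cor:optimal-construct}) with Marx's partition-then-decompose scheme for submodular width~\cite{Marx:2010:THP:1806689.1806790}, applied independently to each characteristic hypergraph, and then stitched back together part-by-part.

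First I would record the purely structural content of Theorem~\ref{thm:decomp}: given \emph{any} GHDs $G_0,\dots,G_k$ of the characteristic hypergraphs $\mH_0,\dots,\mH_k$ of $H(\mH,\alpha)$, they can be stitched into a decomposable (hence valid) GHD $G$ of $Q_{\mH,\alpha}$ with $\gamma\text{-width}(G) = \max_i \gamma\text{-width}(G_i)$ for every node-monotone $\gamma$ (Definition~\ref{def:node-monotone}). In particular, running AggroGHDJoin on such a $G$ costs $\Ot(\IN^{\max_i fhw(G_i)} + \OUT)$ by the analysis behind Theorem~\ref{thm:ajar-runtime}. This already yields the $fhw$ version of the bound; all that remains is to drive each $fhw(G_i)$ down to $sw(\mH_i)$. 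For that, recall Marx's algorithm: for a join over a hypergraph $\mH'$, after $\IN^{O(sw(\mH'))}$ preprocessing it partitions the instance into $\mathrm{polylog}(\IN)$ parts so that for each part $p$ there is a GHD $D_p$ of $\mH'$ whose bags, restricted to part $p$, have fractional cover number at most $sw(\mH')$; GHDJoin on $D_p$ with the restricted relations then runs in $\IN^{sw(\mH')}$, and the union of the per-part answers equals the join.

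The main argument: apply Marx's algorithm separately to each characteristic hypergraph $\mH_i$, producing per-part GHDs $D_{i,p}$ of width $\le sw(\mH_i)$. Since Theorem~\ref{thm:decomp} constrains nothing about the GHDs of the characteristic hypergraphs, for each tuple of parts $(p_0,\dots,p_k)$ we stitch $D_{0,p_0},\dots,D_{k,p_k}$ into a valid GHD of $Q_{\mH,\alpha}$ of width $w^\star := \max_i sw(\mH_i)$, run AggroGHDJoin on it over the correspondingly restricted instance, and finally take the disjoint union of the partial results and finish the semiring aggregations. Correctness of the union step is the easy observation that the instance partitioning induces a \emph{partition} (not merely a cover) of the join output of $\mH$ — each input relation lives in exactly one $\mH_i$, so each join tuple lands in exactly one part-tuple — and each $\Sigma^{\oplus}$ distributes over a disjoint union of tuple-sets, so aggregating within parts and combining partial annotations with $\oplus$ reproduces $Q_{\mH,\alpha}$. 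There are $\mathrm{polylog}(\IN)$ part-tuples, so the total cost is $\Ot(\IN^{w^\star} + \OUT)$ after $\IN^{O(w^\star)}$ preprocessing, i.e. $O(\IN^{O(\max_{\mH'\in H(\mH,\alpha)} sw(\mH'))} + \OUT)$ as claimed.

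The step I expect to be the main obstacle is verifying that the virtual ``intersection'' edges $\mV_{-\alpha}\cap C_i^+$ inserted into the characteristic hypergraphs do not inflate the per-part running time. These edges exist only to make the width program feasible; in the execution of AggroGHDJoin the shared attributes are resolved by the semijoin passes of AggroYannakakis rather than by a materialized join, so intuitively they contribute nothing to the cover number of any bag of $D_{i,p}$. Making this precise requires tracking the sizes of the intermediate relations passed between adjacent subtrees of the stitched GHD and arguing that no optimal submodular-width cover needs to place positive weight on a virtual edge (or that, if it does, the relevant relation has already been computed within the $\IN^{w^\star}$ budget of a lower recursion level). Everything else is bookkeeping on top of Theorem~\ref{thm:decomp}, Corollary~\ref{cor:optimal-construct}, and the AggroGHDJoin runtime analysis.
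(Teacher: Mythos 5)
Your high-level route is the one the paper intends. The paper gives no explicit proof of this theorem at all: it only remarks, in the paragraph preceding the statement, that Marx's submodular-width algorithm should be applied to the characteristic hypergraphs and that Corollary~\ref{cor:optimal-construct} / Theorem~\ref{thm:decomp} supply the stitching. That is exactly your plan, and the obstacle you flag --- that the virtual intersection edges carry materialized relations of size $\IN^{O(sw)}$ rather than $\IN$, so they must not inflate the per-level cover --- is real and is also left unaddressed by the paper (it is absorbed only because the theorem hides constants inside the exponent).

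There is, however, a genuine gap in your combination step. You run the entire pipeline once per tuple of parts $(p_0,\dots,p_k)$ and merge the fully aggregated partial results at the end, justified by ``each $\Sigma^{\oplus}$ distributes over a disjoint union of tuple-sets.'' That justification covers only the innermost aggregation: once you $\oplus$-merge per-part results, the contributions to a fixed key are no longer disjoint tuples but $\oplus$-combined annotations, and the next aggregation $\Sigma^{\oplus'}$ with $\oplus' \neq \oplus$ does not distribute over that merge. Concretely, for $\alpha = ((A,+),(B,\max))$ over a single relation $R(A,B,C)$ with $C$ an output attribute, the true answer is $\sum_{a}\max_{b} x_{a,b}$ per $C$-value, while end-merging the per-part values $\sum_{a}\max_{b\in p} x_{a,b}$ cannot recover this with any single binary merge (already on a $2\times 2$ instance split into two parts). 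The repair is to merge level by level rather than at the end: within one characteristic hypergraph, all attributes aggregated away at that level of the recursion share a single operator (this is established inside the proof of Theorem~\ref{thm:decompwidth} via Lemma~\ref{Pathing} and the definition of $C^O$), so you can partition that level's instance, compute per part, $\oplus$-merge the resulting intermediate relation with that level's operator, and only then hand the merged relation up as the virtual-edge input of the parent hypergraph, repartitioning there. With that interleaving --- plus the bookkeeping you already anticipate, showing each merged intermediate relation has size $\IN^{O(\max_{\mH'} sw(\mH'))}$ so it is an admissible input edge at the next level --- the argument goes through.
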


Recent work~\cite{2015arXiv150801239J} uses degree information to more tightly bound the output size of a query. The bound in the reference, called the {\em DBP bound}, has a tighter exponent than the AGM bound, while requiring only linear preprocessing to obtain. The authors also provide algorithms whose runtime matches the DBP bound. We can define DBP-width $dbpw(\mT, \mH)$ such that $\IN^{dbpw(\mT, \mH)}$ is the maximum value of the DBP bound over all bags of GHD $T$. We then use the improved algorithm in place of GJ in AggroGHDJoin. This lets us get tighter results ``for free'', reducing our runtime to $\IN^{dbpw}$ instead of $\IN^{fhw}$. Formally:

\begin{theorem}\label{thm:dbp-width}
Given an $\ajar$ query $Q_{\mH, \alpha}$ and a valid GHD for $\mH$, we can answer the query in time $O(\IN^{dbpw(\mT, \mH)} + \OUT)$. Equivalently, we can answer the query in time $O(\IN^{\max_{\mH' \in H(\mH, \alpha)} dbpw(\mT, \mH)} + \OUT)$.
\end{theorem}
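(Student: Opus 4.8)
The plan is to prove the two stated bounds in turn: the first is obtained by a single substitution inside the join algorithm, and the second follows by handing that modified algorithm the best decomposable GHD. Throughout I lean on the analysis already behind Theorem~\ref{thm:ajar-runtime}, on the decomposition results of Section~\ref{sec:decomposing}, and on the DBP-bound-matching algorithms of~\cite{2015arXiv150801239J}.

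For the first bound I would run AggroGHDJoin on the given valid GHD $(\mT,\chi)$ exactly as in the analysis underlying Theorem~\ref{thm:ajar-runtime}, changing only the subroutine that materializes the one intermediate relation $R_v$ per bag by joining the restricted input relations: in place of GJ~\cite{NRR} I would use the DBP-bound-matching join algorithm of~\cite{2015arXiv150801239J}. Two facts then have to be checked. (i) Each bag relation $R_v$ is contained in the projection onto $\chi(v)$ of the full join, so $|R_v|$ is at most the DBP bound on the attribute set $\chi(v)$, which is at most $\IN^{dbpw(\mT,\mH)}$ by definition of DBP-width; and the algorithm of~\cite{2015arXiv150801239J} materializes $R_v$ within that same time bound after only linear preprocessing, so the per-bag preprocessing costs sum to $O(\IN)$ and are absorbed. (ii) Everything downstream --- the semijoin passes, the early aggregations performed at each $TOP$ node in AggroYannakakis, and the final bottom-up join phase --- runs in time $O(\sum_{v}|R_v| + \OUT)$; this is exactly the part of the proof of Theorem~\ref{thm:ajar-runtime} that is independent of which width is used, since it relies only on the bag relations having size bounded by $\IN$ raised to the relevant width. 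Combining (i) and (ii) yields total runtime $O(\IN^{dbpw(\mT,\mH)} + \OUT)$.

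For the second (``equivalently'') bound --- which I read as $\IN$ raised to the maximum, over characteristic hypergraphs $\mH' \in H(\mH,\alpha)$, of the optimal DBP-width of a GHD for $\mH'$ --- I would first note that, just as for the AGM/fractional-hypertree-width bound in Proposition~\ref{prop:node-monotone-widths}, the DBP bound on a set of attributes is monotone in that set, so $dbpw(\cdot,\mH)$ is the $\gamma$-width of a node-monotone $\gamma$. The decomposition results then apply verbatim: by Theorem~\ref{thm:decompwidth} the given valid GHD can be replaced by a decomposable GHD of no larger $dbpw$, by Theorem~\ref{thm:decomp} a decomposable GHD splits into component GHDs $G_0,\dots,G_k$ over the characteristic hypergraphs $\mH_0,\dots,\mH_k$ with $dbpw$ of the whole equal to $\max_i dbpw(G_i,\mH_i)$, and conversely (mirroring Corollary~\ref{cor:optimal-construct}) stitching DBP-optimal GHDs of the characteristic hypergraphs produces a valid GHD realizing $\max_{\mH' \in H(\mH,\alpha)}$ of the optimal DBP-width. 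Running the modified AggroGHDJoin of the previous paragraph on that GHD gives the claimed runtime, and no valid GHD can do better since each characteristic hypergraph must itself be solved, so its optimal DBP-width lower-bounds what is achievable.

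I expect the main obstacle to be the two bookkeeping points inside fact (i): verifying that the DBP bound genuinely upper-bounds each bag relation (the bound in~\cite{2015arXiv150801239J} is phrased for a complete join, so one must apply it to the sub-query induced by $\chi(v)$ and check that the degree statistics it uses are those of the original instance), and verifying that calling the DBP algorithm once per bag keeps the cumulative preprocessing genuinely linear rather than scaling with the number of relations or bags. Beyond these, the first bound is inherited wholesale from the analysis behind Theorem~\ref{thm:ajar-runtime}, and the second is a direct reuse of Theorems~\ref{thm:decompwidth} and~\ref{thm:decomp} once node-monotonicity of $dbpw$ is recorded.
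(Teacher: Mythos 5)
Your proposal is correct and follows essentially the same route as the paper, which proves this theorem only by remarking that the DBP-bound-matching algorithm of~\cite{2015arXiv150801239J} replaces GJ inside AggroGHDJoin (inheriting the $O(\sum_v |R_v| + \OUT)$ analysis behind Theorem~\ref{thm:aggro-complexity}) and that Theorems~\ref{thm:decompwidth} and~\ref{thm:decomp} transfer to any node-monotone width. The two bookkeeping points you flag --- that the DBP bound is node-monotone so the decomposition machinery applies, and that the per-bag preprocessing stays linear --- are precisely what the paper leaves implicit, and your handling of them is sound.
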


As discussed before, FAQ has a non-width-preserving decomposition. We can combine FAQ's decomposition with the DBP bound as we did above. Suppose we perform FAQ's decomposition, and $\IN^{faqw+}$ denotes the highest value of the DBP bound on each of their characteristic hypergraphs, and on the set of output attributes. Thus the DBP-width of each of their characteristic hypergraphs, and the outputs, is $faqw+$. However, when they perform output addition, the DBP-width of the resulting GHDs can go up to $2faqw+$. This happens when the DBP bound on both the outputs and one of the characteristic hypergraphs equals $\IN^{faqw+}$. So if we apply the DBP result to FAQ's decomposition, we get a runtime of $\Ot(\IN^{2faqw+} + \OUT)$. Thus their decomposition causes them to incur an extra factor of $2$ in the exponent. They similarly incur a factor of $2$ increase in exponent for the submodular width algorithm. 

\subsection{MapReduce and Parallel Processing}\label{subsec:GYM}
The GYM algorithm~\cite{GYM} uses GHDs to efficiently process joins in a MapReduce setting. GYM makes use of the GHD structure to parallelize different parts of the join. Given a GHD of depth $d$, and width $w$, with $n$ attributes, GYM can perform a join in a MapReduce setting in $O(d + \log(n))$ rounds at a communication cost of $M^{-1}(\IN^w + \OUT)^2$ where $M$ is the memory per processor on the MapReduce cluster. Combining this with the degree-based MapReduce algorithm~\cite{2015arXiv150801239J} gives us the following result:

\begin{theorem}\label{thm:ajar-mapreduce-n-rounds}
Given an optimal valid GHD $(\mT^*, \chi)$ of depth $d$, and DBP-width $dbpw$, we can answer an $\ajar$ query with Communication Cost equal to $O(M^{-1}(\IN^{dbpw(\mT, \mH)} + \OUT)^2)$ in $d + \log(n)$ MapReduce rounds, where $n$ is the number of attributes and $M$ is the available memory per processor.
\end{theorem}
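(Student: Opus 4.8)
The plan is to run the GYM algorithm~\cite{GYM} on the given optimal valid GHD $(\mT^*, \chi)$, with two modifications. First, I would replace the worst-case optimal join that GYM uses to materialize each bag relation with the degree-based MapReduce join of~\cite{2015arXiv150801239J}, so that the size of every bag's intermediate relation is governed by the DBP bound $\IN^{dbpw(\mT^*, \mH)}$ rather than the AGM bound $\IN^{fhw}$. Second, I would interleave the semiring aggregations into GYM's bottom-up (rooted-Yannakakis) phase exactly as AggroYannakakis does: when the sweep reaches a node $t$, before sending $R_t$ up to its parent, aggregate out (with the appropriate $\oplus$) every attribute $A$ with $TOP_{\mT^*}(A) = t$, carrying the semiring annotations along with the tuples throughout. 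The first modification produces the $\IN^{dbpw}$ term, the second turns the join plan into an \ajar plan, and the round and communication bounds are then inherited verbatim from GYM with $w$ replaced by $dbpw$.

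The crux is arguing that this interleaving is correct and does not disturb GYM's analysis; this is where validity of $(\mT^*, \chi)$ enters. Because $(\mT^*, \chi)$ is valid, Theorem~\ref{thm:validcorrect} guarantees that whenever $TOP_{\mT^*}(A)$ is an ancestor of $TOP_{\mT^*}(B)$ we have $B \not<_{\mH,\alpha} A$; hence moving up the tree aggregates attributes out in an order consistent with $<_{\mH,\alpha}$, which by Theorem~\ref{thm:dag-soundness-completeness} is an order equivalent to $\alpha$, so the output is correct. The same condition forces every output attribute's $TOP$ node to lie at or above the $TOP$ node of every non-output attribute, so the relation produced at the root has exactly the attribute set $V(-\alpha)$ and size $\OUT$. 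For the parallel evaluation, I would use the facts that GYM's tree-contraction only ever merges subtrees (hence already-computed partial aggregates) and that $\otimes$ is commutative/associative with $\otimes$ distributing over each $\oplus$; this means GYM combines the same multiset of semiring values in whatever order its contraction schedule dictates, so it computes the same answer as sequential AggroYannakakis. Finally, since semijoins and aggregations only shrink relations, every intermediate relation remains a subset of a bag-join relation, so its size stays bounded by $\IN^{dbpw(\mT^*, \mH)}$.

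Putting the pieces together: materializing all bag relations with the degree-based MapReduce join takes a constant number of rounds and communication $O(M^{-1}\IN^{2\,dbpw(\mT^*,\mH)})$; GYM's semijoin and modified bottom-up phases run in $O(d + \log n)$ rounds with communication $O(M^{-1}(\IN^{dbpw(\mT^*,\mH)} + \OUT)^2)$, which absorbs the materialization cost; so the total is $d + \log n$ rounds at communication cost $O(M^{-1}(\IN^{dbpw(\mT^*,\mH)} + \OUT)^2)$, and Corollary~\ref{cor:optimal-construct} rewrites $dbpw(\mT^*,\mH)$ as $\max_{\mH' \in H(\mH,\alpha)} dbpw(\cdot, \mH')$ to match the equivalent form. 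I expect the main obstacle to be the second modification: checking that threading the annotations and the $TOP$-node aggregations through GYM's parallel tree-contraction subroutine preserves both correctness (which needs the $<_{\mH,\alpha}$-consistency supplied by validity) and GYM's round and communication accounting (which needs the shrinking property so the DBP size bound is never exceeded). A secondary subtlety is confirming that the degree-based MapReduce join of~\cite{2015arXiv150801239J} really delivers the DBP bound per bag within the quadratic-in-$M$ communication model that GYM assumes.
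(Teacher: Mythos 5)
Your proposal matches the paper's approach: the paper offers no explicit proof of this theorem, presenting it as the direct combination of GYM's $O(d+\log n)$-round, $M^{-1}(\IN^w+\OUT)^2$-cost guarantee with the degree-based MapReduce join of~\cite{2015arXiv150801239J} for the bag materialization, with correctness of the interleaved aggregation resting on validity of the GHD exactly as you argue. Your fleshed-out version (TOP-node aggregation in the bottom-up sweep, validity forcing output attributes above non-output attributes so the root relation has size $\OUT$) is consistent with the paper's AggroYannakakis analysis in Theorems~\ref{thm:aggro-complexity} and~\ref{thm:aggro-compatible}.
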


A GHD can have depth up to $O(n)$, in which case the algorithm can take a very large number of MapReduce rounds ($O(n)$). To address this, the GYM paper uses the `Log-GTA' algorithm to reduce the depth of any given GHD to $\log(n)$ while at most tripling its width. This lets it process joins in $\log(n)$ MapReduce rounds at a cost of $M^{-1}(\IN^{3w} + \OUT)^2$.

Log-GTA involves some shuffling of the attributes in the GHD bags, so naively applying it to a valid GHD could make the GHD invalid (see example~\ref{example:Log-GTA-invalidation} in the Appendix). But our decomposition lets us apply Log-GTA to the GHD of each characteristic hypergraph, and then stitch the short GHDs together. Our decomposition is recursive in nature; let $d'$ be the maximum recursive depth of the decomposition for a given $Q_{\mH,\alpha}$. Then the depth of the shortened GHD of each characteristic hypergraph is $O(\log(n))$, and so the depth of the valid GHD obtained by stitching them together is $O(d'\log(n))$. This gives us the result:

\begin{theorem}\label{thm:ajar-mapreduce-log-rounds}
If $dbpw$ is the DBP width of a \ajar query, we can answer that query with Communication Cost equal to $O(M^{-1}(\IN^{3 \times dbpw(\mT, \mH)} + \OUT)^2)$ in $d'\log(n)$ MapReduce rounds, where $n$ is the number of attributes and $M$ is the available memory per processor.
\end{theorem}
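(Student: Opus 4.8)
The plan is to combine three ingredients: the characteristic-hypergraph decomposition of Theorem~\ref{thm:decomp}, the Log-GTA depth-reduction procedure of GYM~\cite{GYM}, and the MapReduce bound of Theorem~\ref{thm:ajar-mapreduce-n-rounds}. First I would invoke Theorem~\ref{thm:dbp-width} (and Corollary~\ref{cor:optimal-construct}) to obtain a valid GHD of $Q_{\mH,\alpha}$ whose DBP-width is $dbpw$, realized as a stitching of GHDs $G_0, G_1, \ldots, G_k$ of the characteristic hypergraphs $H(\mH,\alpha)$, with $dbpw(G_i) \le dbpw$ for each $i$. The key move is then to apply Log-GTA \emph{separately} to each $G_i$ rather than to the stitched GHD. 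Because the characteristic hypergraphs carry no aggregation-ordering constraint — Theorem~\ref{thm:decomp} guarantees that \emph{arbitrary} GHDs of the $\mH_i$ stitch back into a decomposable, hence valid (Theorem~\ref{thm:decompisvalid}), GHD of $Q_{\mH,\alpha}$ — the attribute shuffling performed by Log-GTA inside a single $G_i$ cannot destroy validity; it only needs to output a genuine GHD of $\mH_i$, which is precisely Log-GTA's guarantee. This sidesteps the invalidation problem illustrated by Example~\ref{example:Log-GTA-invalidation}, since each $\mH_i$ already contains as an explicit edge the intersection used to connect it to its neighbour, so that edge stays covered by some bag after Log-GTA.

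Next I would track the two parameters of the stitched GHD $G$. For width: Log-GTA at most triples the width of each $G_i$, and since the bags of $G$ are exactly the union of the bags of the shortened $G_i$ (together with the intersection edges used to connect adjacent subtrees, which are subsets of existing bags), the DBP-width of $G$ is the maximum of the per-component DBP-widths, hence at most $3 \times dbpw$ — the same additivity-under-bag-union accounting already used for Theorem~\ref{thm:decomp} and exploited in Theorem~\ref{thm:dbp-width}. For depth: after Log-GTA each $G_i$ has depth $O(\log n)$; the decomposition of Definition~\ref{def:decomp} stacks these subtrees in a recursion of depth $d'$ (the top subtree $\mT_0$, then its component subtrees $\mT_C$, recursively), so $G$ has depth $O(d'\log n)$.

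Finally I would feed this valid GHD $G$ — of depth $O(d'\log n)$ and DBP-width at most $3 \times dbpw$ — into the GYM-plus-degree-based algorithm underlying Theorem~\ref{thm:ajar-mapreduce-n-rounds}; note that the optimality of the GHD there is not used, only its depth and DBP-width. Applied to a valid GHD of depth $d$ and DBP-width $w$ that machinery gives communication cost $O(M^{-1}(\IN^{w} + \OUT)^2)$ in $O(d + \log n)$ rounds, and substituting $d = O(d'\log n)$, $w = 3 \times dbpw$ yields communication cost $O(M^{-1}(\IN^{3 \times dbpw(\mT,\mH)} + \OUT)^2)$ in $O(d'\log n + \log n) = O(d'\log n)$ rounds, as claimed.

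The main obstacle is the validity argument of the first step: one must be careful that confining Log-GTA to act within each characteristic hypergraph is both \emph{possible} — Log-GTA never needs to mix attributes across components or disturb the stitching edges, since it operates on $\mH_i$ as a self-contained hypergraph that already contains those edges — and \emph{sufficient} — the stitched result is decomposable, hence valid, for \emph{any} choice of the per-component GHDs. Both points reduce to already-established facts (Log-GTA's output is a GHD of its input; the forward direction of Theorem~\ref{thm:decomp} stitches arbitrary component GHDs into a decomposable one), so the argument is assembly rather than new proof; the width and depth bookkeeping in the second step is routine given Theorem~\ref{thm:decomp}'s width identity and the recursion-depth definition of $d'$.
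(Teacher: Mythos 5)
Your proposal is correct and follows essentially the same route as the paper: the paper's argument (given in the text of Section~\ref{subsec:GYM}) is exactly to apply Log-GTA per characteristic hypergraph, re-stitch via Theorem~\ref{thm:decomp} to get a decomposable (hence valid) GHD of width at most $3\times dbpw$ and depth $O(d'\log n)$, and then invoke the GYM-plus-DBP machinery of Theorem~\ref{thm:ajar-mapreduce-n-rounds}. Your write-up is in fact more explicit than the paper's about why confining Log-GTA to each component preserves validity and about the width/depth bookkeeping.
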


$d'$ can vary from $O(1)$ to $O(n)$ depending on the query. The star query from example~\ref{example:star-ajar} has $d' = 2$, which lets us process it in $\log(n)$ MapReduce rounds. Any query that only has a single type of aggregation will have $d' = 2$ as well. On the other hand, a query with one relation having $n$ attributes, $1$ output attribute, and alternating $\sum$ and $\max$ aggregations, will have $d' = n$, and will be hard to parallelize. 

Olteanu and Zavodny~\cite{OZVLDB13,OZTODS15} use valid GHDs to answer \ajar queries for the special case of a single type of aggregation. But they have no notion of a decomposition and attempting to shorten a valid GHD directly, without using a decomposition, may make it invalid. FAQ's decomposition may be used to shorten GHDs similarly to ours, but leads to an increased width of $4faqw$ compared to our $3w$ (where $w \leq faqw$ is the width of our optimal valid GHD). This is again because of output addition, if the output attributes have a width of $faqw$, and the shortened GHDs of the characteristic hypergraphs have a width of $3faqw$, then the total width will be $4faqw$.

\section{Product Aggregations}\label{sec:univ-aggregation}
The primary application of queries with multiple aggregations is to establish bounds for the Quantified Conjunctive Query ($QCQ$) problem~\cite{FAQ}, and its counting variant, $\#QCQ$. We now introduce a new type of aggregation, called product aggregation, that lets us efficiently handle $QCQ$ queries. We define the \ajar problem for product aggregations, and then extend our algorithm from Section~\ref{subsec:simple-solution} to handle this new type of \ajar query. We then define a decomposition analogous to that in Section~\ref{sec:decomposing}. A more detailed version of this section with additional motivation, examples, and proofs can be found in Appendix~\ref{sec:univ-aggregation-proofs}.

\subsection{AJAR queries with product aggregates} 

A {\em product aggregation} aggregates using the $\otimes$ operator. Throughout the paper, we assumed that an absent tuple effectively has an annotation of $0$. Taking this into account, we formally define the product aggregation. Let $B = F \backslash A$:

\begin{definition}
$\displaystyle \sum_{(A, \otimes)} R_{AB} = \{(t_B, \lambda): \forall t_A \in \mD^A, t_B \circ t_A \in R_{AB} \text{ and } \lambda = \prod_{(t, \lambda_t) \in R_{AB}: \pi_B t = t_B} \lambda_t \}$
\end{definition}

We can adjust the definition of aggregation orderings and $\ajar$ queries to allow this new type of aggregation. $QCQ$ queries can now be expressed as \ajar queries on the $(\{0,1\}, \max, \cdot)$ semiring. We assume for this section that $\otimes$ is idempotent, i.e. $a \otimes a = a$ for all $a$. We describe how to work with non-idempotent products in Appendix~\ref{subsec:non-idempotent-product}. 


\subsection{Algorithms for product aggregates}
For aggregation orderings that have product aggregations, the rules for determining when two orderings are equivalent are somewhat different; product aggregations can be performed {\em before} a join. We illustrate this with an example:
\begin{example}
In the semiring $(\{0,1\}, \max, \cdot)$, suppose we have two relations $R(A,B) = \{((0,0), x), ((0,1), y)\}$ and $S(B,C) = \{((0, 1), p), ((1,1),q)\}$. Consider the \ajar query $\sum_{(B, \cdot)} R(A,B) \Join S(B,C)$. If we compute the join, we will get two tuples with the annotations $x \cdot p$ and $y \cdot q$, and then aggregating over $B$ will produce a relation with the element $((0,1), x \cdot p \cdot y \cdot q)$. However, note that $x \cdot p \cdot y \cdot q = (x \cdot y) \cdot (p \cdot q)$, implying that $\sum_{(B, \cdot)} R(A,B) \Join S(B,C) = (\sum_{(B, \cdot)} R(A,B)) \Join (\sum_{(B, \cdot)} S(B,C))$.
\end{example}

Now we describe our algorithm for solving \ajar queries when product aggregations are present. Our algorithm follows the same lines as the algorithm from Section~\ref{subsec:simple-solution}. Recall that the algorithm consisted of searching for {\em equivalent orderings}, then searching for GHD {\em compatible} with an equivalent ordering, and running AggroGHDJoin on the GHD with the smallest fhw. For product aggregations, we need to modify our algorithm for testing equivalent orderings, and our definition of compatibility; we do these in turn.

\paragraph*{Testing orderings for equivalence} We describe how we modify Algorithm~\ref{algo:equivalance-test} when product aggregations are present. Let $\pa(\alpha)$ denote the set of product attributes in ordering $\alpha$. We make two changes to Algorithm~\ref{algo:equivalance-test}. (1) Instead of removing $V(-\alpha)$ and dividing $\mH$ into components, we remove $V(-\alpha) \cup \pa(\alpha)$ and then divide $\mH$ into components $C_1, C_2,\ldots,C_m$. Then for each $C_i$ we define $C'_i = C_i \cup \bigcup_{e \in \mE_{C_i}} (\pa(\alpha) \cap e)$.
\footnote{Recall for any $V \subseteq \mV$, $\mE_V$ is defined to be $\{E \in \mE| E \cap V \neq \emptyset\}$.}
That is $C'_i$ has the attributes of $C_i$ as well as the product attributes that are in the same hyperedge as some attribute in $C_i$. Then we recursively call the equivalence test on $(\alpha_{C'_i}, \beta_{C'_i})$ instead of on $(\alpha_{C_i},\beta_{C_i})$. (2) When we are checking for a $i < j$ such that $\odot'_i \neq \odot'_j$ and there is a path in $\{b_i,b_{i+1},\ldots,b_{|\alpha|}\}$, we instead check for a path in 
$$(\{b_i,b_{i+1},\ldots,b_{|\alpha|}\} \setminus \pa(\alpha)) \cup \{b_i,b_j\}$$ 
That is, we look for a $b_i$ that has a different operator that $b_j$, and has a path to $b_j$ consisting only of $b_i$, $b_j$, and semiring attributes in $\{b_i,b_{i+1},\ldots,b_{|\alpha|}\}$. Appendix~\ref{sec:univ-aggregation-proofs} gives the pseudo-code for the modified algorithm (Algorithm~\ref{algo:equivalance-test-prod}) and proves that it is sound and complete.

\begin{lemma}\label{lemma:AlgoP-sound-complete}
The above Algorithm returns True if and only if $\alpha \equiv_\mH \beta$.
\end{lemma}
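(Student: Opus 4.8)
The plan is to mirror the argument that underlies Lemma~\ref{lemma:equivalence-test-sound-complete} (i.e.\ the machinery of Theorem~\ref{commute} and Theorem~\ref{thm:dag-soundness-completeness}), adding one new commuting rule that captures \emph{product pushdown} and then re-running the soundness/completeness bookkeeping with product attributes treated as ``transparent'': they split the edges they sit on and they do not block paths. Concretely, I would first extend Theorem~\ref{commute}. The key new identity, generalizing the example in the text, is that for a product attribute $B$,
\[
\Sigma_{(B,\otimes)}\bigl(\Join_{F\in\mE}R_F\bigr)\;=\;\Bigl(\Join_{F\in\mE:\,B\notin F}R_F\Bigr)\Join\Bigl(\Join_{F\in\mE:\,B\in F}\Sigma_{(B,\otimes)}R_F\Bigr),
\]
which holds because $\otimes$ is the semiring product (hence distributes over $\Join$) and, under the idempotence assumption, relations that originally shared $B$ may be safely rejoined after $B$ is aggregated out of each. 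From this I would derive: two adjacent aggregations $(A,\oplus),(B,\oplus')$ commute if $(1)$ $\oplus=\oplus'$, or $(2)$ their attributes are separable in the join, or $(3)$ one of them is a product aggregation, say $(B,\otimes)$, and $A$ is unreachable from $B$ in $\mH$ along a path whose interior avoids $B$; moreover a product aggregation can always be moved to be performed first, regardless of the operators it passes.

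Step 2, soundness: induct on $|\alpha|$, exactly as for Algorithm~\ref{algo:equivalance-test}, with $|\alpha|=|\beta|=0$ trivial. For the component-splitting step, removing $V(-\alpha)\cup\pa(\alpha)$ is justified because output attributes disconnect the query in the usual sense and, by the pushdown identity, product attributes can be pushed into their incident edges; this is exactly what building the (overlapping) $C'_i = C_i\cup\bigcup_{e\in\mE_{C_i}}(\pa(\alpha)\cap e)$ records, so aggregations landing in distinct $C'_i$ commute freely and the recursion on $(\alpha_{C'_i},\beta_{C'_i})$ is sound even though the $C'_i$ overlap on product attributes (the overlap is harmless since those attributes are product-aggregated). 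For the ``bring $\alpha_1$ to the front of $\beta$'' step, each transposition moving $\beta_j$ leftward past some $\beta_i$ is legal precisely when the modified path test fails, i.e.\ there is no path from $b_i$ to $b_j$ through $(\{b_i,\dots,b_{|\alpha|}\}\setminus\pa(\alpha))\cup\{b_i,b_j\}$ with differing operators, which by conditions $(1)$--$(3)$ of the extended Theorem~\ref{commute} is exactly when $(b_i,\oplus'_i)$ and $(b_j,\oplus'_j)$ commute; removing $\alpha_1$ and recursing closes the induction.

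Step 3, completeness: show that whenever the algorithm returns \textsc{False} there is an instance $I$ with $Q_{\mH,\alpha}(I)\neq Q_{\mH,\beta}(I)$. If a recursive call on $(\alpha_{C'_i},\beta_{C'_i})$ returns \textsc{False}, lift its counterexample on the subhypergraph induced by $C'_i$ to all of $\mH$ by putting a single-tuple, annotation-$1$ relation on every edge not meeting $C'_i$ and a one-element domain on every attribute outside $C'_i$; the pushdown identity guarantees the lifted instance still separates $\alpha$ from $\beta$. If instead the path test succeeds, adapt the construction behind Lemma~\ref{Pathing2} and the completeness half of Theorem~\ref{thm:dag-soundness-completeness}: take the detected path $P$ from $b_i$ to $b_j$ with interior in $\{b_{i+1},\dots,b_{|\alpha|}\}\setminus\pa(\alpha)$, collapse every product attribute and every attribute not on $P$ to a one-element domain (so its aggregation is the identity), and apply the known two-operator non-commuting gadget along $P$ to get distinct outputs under the two orders. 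The point is that once product attributes are singletons they are inert, so the obstruction restricts to the already-solved product-free case along $P$.

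The main obstacle is completeness, specifically two points. First, one must verify that collapsing product attributes to singleton domains genuinely neutralizes them without destroying the obstruction: because $\Sigma_{(A,\otimes)}R_{AB}$ demands $t_B\circ t_A\in R_{AB}$ for \emph{all} $t_A\in\mD^A$, one has to check the gadget relations stay nonempty and carry the intended annotations after the domain restriction. Second, the overlapping-component recursion must be made airtight: since the $C'_i$ share product attributes, one needs that a counterexample confined to one $C'_i$ is not ``averaged away'' by product aggregations over the shared attributes when lifted---this is exactly where idempotence of $\otimes$ together with the pushdown identity are used. Once these are settled, the soundness side and the induction on $|\alpha|$ are routine variants of the arguments already given for Algorithm~\ref{algo:equivalance-test}.
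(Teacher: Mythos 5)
Your soundness half tracks the paper's proof closely: the same induction on $|\alpha|$, the same component-splitting identity justified by pushing the product aggregation into each relation that contains the attribute (with idempotence making the overlap of the $C'_i$ harmless), and the same transposition argument using the modified path test. That part is fine. Two things go wrong elsewhere, one cosmetic and one substantive. Cosmetically, your extended commuting rule $(3)$ and the claim that ``a product aggregation can always be moved to be performed first, regardless of the operators it passes'' are false as stated: $\Sigma_{(A,\max)}\Sigma_{(B,\otimes)}R(A,B) \neq \Sigma_{(B,\otimes)}\Sigma_{(A,\max)}R(A,B)$ in general (this is $\exists\forall$ versus $\forall\exists$), and the algorithm correctly refuses to commute them since $A$ and $B$ share an edge. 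The actual new commuting power is only that \emph{interior} product attributes on a connecting path do not block commutation; your operative path test in Step~2 states this correctly, so the soundness argument survives, but the preliminary ``extended Theorem~\ref{commute}'' should be repaired.

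The substantive gap is in completeness, in exactly the case you flag but do not resolve: when one of the two obstructing aggregations, say $\beta_j=\alpha_1=(b_j,\otimes)$, is itself a product aggregation. Your plan is to ``collapse every product attribute \dots to a one-element domain (so its aggregation is the identity)'' and reduce to the product-free gadget along $P$. But $b_j$ lies on $P$, and collapsing its domain to a singleton makes $\Sigma_{(b_j,\otimes)}$ a no-op in \emph{both} orders, erasing the very obstruction you are trying to witness. Nor can you keep $b_j$ uncollapsed and reuse the product-free gadget, because that gadget needs $x,y$ with $x\oplus y\neq x\oplus' y$, and the way $\otimes$-aggregation distinguishes orders is not of this form: its semantics require $t_B\circ t_A\in R_{AB}$ for \emph{every} $t_A\in\mD^A$. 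The paper's construction instead sets $x=y=1$, gives every attribute on $P$ (including the product endpoint) domain $\{0,1\}$, and exploits precisely this universal-coverage requirement: performing the product aggregation early (as in $\alpha$) sees both values of $b_j$ in one group and yields annotation $x\otimes y=1$, while performing it late (as in $\beta$) sees only one value of $b_j$ per group, fails the coverage condition, and yields $0$. Without this separate gadget your completeness argument does not go through for orderings in which the first aggregation of $\alpha$ is a product aggregation, which is the main new case this lemma must handle.
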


\paragraph*{Compatible GHDs} Product aggregates not only change the set of equivalent orderings, but also the set of GHDs compatible with a given ordering. In fact, product aggregates allow us to break the rules of GHDs without causing incorrect behavior. We express this using a simple variant of GHDs, called {\em aggregating generalized hypertree decompositions} (AGHDs). Informally, AGHDs are GHDs that can violate the running intersection property for attributes that have a product aggregation. AGHDs are formally defined in Appendix~\ref{sec:univ-aggregation-proofs}. We determine compatibility for AGHDs as follows: An AGHD is compatible with an ordering $\beta$ if for every attribute pair $a$, $b$ such that one of the $TOP(a)$ nodes is an ancestor of a $TOP(b)$ node, $a$ precedes $b$ in $\beta$. 

We can now modify our algorithm from Section~\ref{subsec:simple-solution} to detect equivalent orderings using Algorithm~\ref{algo:equivalance-test-prod}, then search for compatible AGHDs, and run AggroGHDJoin over the compatible AGHD with the smallest fhw. Our runtime is given by the next theorem.

\begin{theorem}\label{thm:ajar-product-runtime}
Given a \ajar query $Q_{\mH,\alpha}$ possibly involving idempotent product aggregates, let $w^*$ be the smallest fhw for an AGHD compatible with an ordering equivalent to $\alpha$. Then the runtime for our algorithm is $\Ot(\IN^{w^*} + \OUT)$.
\end{theorem}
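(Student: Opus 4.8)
The plan is to follow the template of the proof of Theorem~\ref{thm:ajar-runtime}, isolating the one place where product aggregations genuinely change the argument. The statement splits into a correctness claim (on every instance $I$ the algorithm outputs $Q_{\mH,\alpha}(I)$) and a runtime claim (it does so within $\Ot(\IN^{w^*}+\OUT)$). First I would dispatch well-definedness: a single bag containing all of $\mV$ is an AGHD compatible with any ordering, so the set of AGHDs over which $w^*$ is a minimum is nonempty. When $\alpha$ has no product attributes, AGHDs coincide with GHDs and the statement is exactly Theorem~\ref{thm:ajar-runtime}; the real content is the product case.

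For correctness, Lemma~\ref{lemma:AlgoP-sound-complete} guarantees that every ordering $\beta$ the search accepts satisfies $\beta\equiv_\mH\alpha$, hence $Q_{\mH,\beta}(I)=Q_{\mH,\alpha}(I)$, so it suffices to show that for such a $\beta$ and any AGHD $(\mT,\chi)$ compatible with $\beta$, AggroGHDJoin (suitably extended to AGHDs, as in the appendix) run with aggregation order $\beta$ computes $Q_{\mH,\beta}(I)$. For the attributes that are not product-aggregated this is precisely the correctness of AggroGHDJoin on compatible ordinary GHDs used in Theorem~\ref{thm:ajar-runtime}: the running intersection property holds for them, their $TOP$ node is unique, and compatibility forces each to be aggregated no earlier than any attribute whose $TOP$ lies strictly above it. The new ingredient is a lemma about a product attribute $a$ whose bags need not form a connected subtree but instead split into connected blocks $\mT_1,\dots,\mT_\ell$. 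I would prove that $\otimes$-aggregating $a$ out separately inside each block and then joining the blocks equals $\otimes$-aggregating $a$ out after the full join; this is the multi-block generalization of the identity $\sum_{(a,\otimes)}(R_1\Join R_2)=(\sum_{(a,\otimes)}R_1)\Join(\sum_{(a,\otimes)}R_2)$ illustrated in the product example, obtained by unfolding the product-aggregation definition and using commutativity and associativity of $\otimes$. Idempotency enters exactly here: when an input relation $R_F$ with $a\in F$ is assigned to several bags (AggroGHDJoin multiplies its annotation into only one of them, yet $a$ may be product-aggregated in more than one block), the several $\otimes$-contributions of a shared annotation collapse to one because $x\otimes x=x$. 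Folding this lemma into the bottom-up phase of AggroYannakakis — which at each block's local $TOP$ node aggregates out the attributes anchored there before the semijoin/join with the parent — yields correctness.

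For the runtime, the outer loop enumerates all orderings of $V(\alpha)$ (at most $|\alpha|!$ of them, since operators are fixed) and all AGHDs on $\mV$ (a number bounded purely in terms of $|\mH|$), runs Algorithm~\ref{algo:equivalance-test-prod} (polynomial in $|\mH|$ by Lemma~\ref{lemma:AlgoP-sound-complete} and its accompanying analysis) and the compatibility check (polynomial), and evaluates each candidate's fhw by solving one linear program per bag. None of this depends on $\IN$, so it is absorbed into $\Ot(\cdot)$. The AGHD $(\mT,\chi)$ actually used has $fhw(\mT,\mH)=w^*$ by choice. Its join-within-each-bag step is unchanged, so each intermediate relation has size at most $\IN^{fhw(\mT,\mH)}$ by the AGM bound, and the AggroYannakakis step (semijoins, then the aggregating bottom-up join with the product-block handling above) runs in time linear in the sizes of those intermediate relations plus $\OUT$; hence AggroGHDJoin runs in $\Ot(\IN^{fhw(\mT,\mH)}+\OUT)=\Ot(\IN^{w^*}+\OUT)$.

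The main obstacle is the product-attribute correctness lemma: showing that relaxing the running intersection property for $\otimes$-aggregated attributes is harmless. The bookkeeping — which block aggregates which occurrence of $a$, which bag carries a shared relation's annotation, and why idempotency makes the order and multiplicity of these $\otimes$-contributions irrelevant — is the delicate part, and it is what a full proof would spend most of its effort on; everything else is a direct transcription of the proof of Theorem~\ref{thm:ajar-runtime}.
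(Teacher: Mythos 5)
Your proposal is correct in substance and identifies the right crux (disconnected occurrences of a product attribute can be aggregated independently, and idempotence of $\otimes$ is what makes the duplicated annotation contributions harmless), but it packages the argument differently from the paper. The paper does not reopen the correctness induction of AggroYannakakis at all: it introduces a \emph{product partition hypergraph} in which each product attribute $a$ is renamed into fresh copies $a_1,\dots,a_{|P_a|}$, one per block of bags, and rewrites $\alpha$ into $\alpha^P$ with one $\otimes$-aggregation per copy. Lemma~\ref{lemma:product-renaming-one} (your ``multi-block lemma,'' proved there for a single attribute via $\otimes_i n_i\otimes_j m_j=\otimes_i(n_i\otimes m_i)$, which is exactly where idempotence is consumed) and its iteration, Lemma~\ref{lemma:product-renaming-ajar}, show $Q_{\mH,\alpha}(I)=Q^P_{P,\alpha^P}(I)$. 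After that, the AGHD is literally an ordinary GHD of the renamed hypergraph, compatible with $\alpha^P$, so Theorems~\ref{thm:aggro-compatible} and~\ref{thm:aggro-complexity} apply as black boxes and the whole theorem reduces to Theorem~\ref{thm:ajar-runtime}. Your route instead relaxes the running intersection property inside AggroYannakakis and folds the block identity into its bottom-up induction; this works, but it is exactly the ``delicate bookkeeping'' you flag, and it forces you to re-verify the inductive invariant of Theorem~\ref{thm:aggro-compatible} for attributes with multiple $TOP$ nodes. The renaming reduction buys you that verification for free, at the cost of the (routine) partition formalism; if you write your version out, the multi-block lemma you prove will be logically equivalent to the paper's two renaming lemmas, so consider stating it in the renamed form to avoid touching the algorithm's proof. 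Your runtime analysis matches the paper's.
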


\paragraph*{Decomposing AGHDs} We can apply the ideas from Section~\ref{sec:decomposing} to \ajar queries with product aggregates as well. We can define a notion of decomposable AGHDs for queries with product aggregates, and show the following results:

\begin{theorem}\label{thm:decompaghdisvalid}
All decomposable AGHDs are compatible with an ordering $\beta$ such that $\beta \equiv_\mH \alpha$.
\end{theorem}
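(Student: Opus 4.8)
The plan is to mirror the structure of the proof of Theorem~\ref{thm:decompisvalid}, adapting it to the AGHD setting where the running intersection property may fail on product attributes. The key definitions I would rely on are the (appendix) notion of a decomposable AGHD, the modified compatibility criterion for AGHDs (``$a$ precedes $b$ in $\beta$ whenever some $TOP(a)$ node is an ancestor of some $TOP(b)$ node''), and the modified partial order $<_{\mH,\alpha}$ arising from the product-aggregation variant of Algorithm~\ref{algo:equivalance-test-prod}, together with its soundness and completeness (the analogue of Theorem~\ref{thm:dag-soundness-completeness}, proved alongside Lemma~\ref{lemma:AlgoP-sound-complete}). Since the soundness/completeness theorem says $\beta \equiv_\mH \alpha$ iff $\beta$ is a linear extension of $<_{\mH,\alpha}$, it suffices to show that every decomposable AGHD is compatible with \emph{some} linear extension of $<_{\mH,\alpha}$; indeed, by the AGHD compatibility criterion, it suffices to show that the ``ancestor-of-$TOP$'' relation induced by the AGHD, together with $<_{\mH,\alpha}$, admits no cycle — i.e., there is no pair $A,B$ with some $TOP(A)$ an ancestor of some $TOP(B)$ yet $B <_{\mH,\alpha} A$.

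First I would set up the induction on $|\alpha|$, exactly as in Theorem~\ref{thm:decompisvalid}. The base case $|\alpha| = 0$ is trivial: with no aggregations the partial order is empty and every AGHD is compatible. For $|\alpha| > 0$, I would unpack the definition of decomposable AGHD: there is a rooted subtree $\mT_0$ whose bags cover exactly the output attributes $\mV(-\alpha)$ together with the product attributes that are ``pushed up'' (the product-aggregation analogue of the output subtree — this is where $\pa(\alpha)$ enters, matching the $C'_i$ construction in the modified equivalence test), and for each connected component $C$ of $\mH \setminus (\mV_{-\alpha} \cup \pa(\alpha))$ there is a child subtree $\mT_{C'}$ that is a decomposable AGHD of the smaller \ajar query $Q_{(\cup_{E \in \mE_{C'}} E,\, \mE_{C'}),\, \alpha_{C' \backslash C'^O}}$ (with $C'$ the enlarged component). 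I would then split into cases on a pair $A,B$ with $TOP(A)$ an ancestor of $TOP(B)$: (i) if $B$ is a product attribute, then by the AGHD relaxation there is nothing to check against the running-intersection-style constraint and the compatibility condition only forces $A$ before $B$, which is consistent with any linear extension that handles product attributes appropriately — I would lean on the fact that the modified $<_{\mH,\alpha}$ treats product attributes so that they can always be placed last within their block; (ii) if $A,B$ are both ``ordinary'' (non-product, non-output) attributes, then since $\mT_0$'s bags consist only of output/product attributes, both $TOP(A)$ and $TOP(B)$ lie strictly below $\mT_0$, hence in the \emph{same} child subtree $\mT_{C'}$ (using connectivity of components), so the inductive hypothesis applied to the sub-AGHD gives $B \not<_{\mH',\alpha'} A$ in the sub-query's partial order; (iii) the remaining case is $B$ an output attribute, which is impossible since $TOP(B)$ would then lie in $\mT_0$, above $TOP(A)$ only if $A$ is also output-or-product, and that branch is handled by (i) or is vacuous. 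The crux of case (ii) is the lemma — which I would state and prove separately, paralleling the reasoning sketched in Theorem~\ref{thm:decompisvalid} — that the restriction of $<_{\mH,\alpha}$ to the attributes of a component $C'$ refines (is implied by) the partial order $<_{\mH',\alpha'}$ of the corresponding sub-query, so that a violation downstairs would lift to a violation upstairs; equivalently, the sub-query's order-constraints are exactly the original ones restricted, which is essentially what Lemma~\ref{lemma:AlgoP-sound-complete}'s recursive structure already encodes.

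The main obstacle I anticipate is case (i) and, more generally, making precise how product attributes interact with the compatibility condition and the partial order. In the ordinary (non-product) setting, ``$TOP_\mT(A)$ ancestor of $TOP_\mT(B)$'' is an unambiguous relation because running intersection makes $TOP$ unique; with AGHDs a product attribute $B$ can appear in several disconnected ``$TOP$'' nodes, so I must be careful that the AGHD-compatibility definition (``\emph{one of} the $TOP(a)$ nodes is an ancestor of \emph{a} $TOP(b)$ node'') is the right quantifier pattern and that the equivalence test's relaxed path condition — searching for a path through $(\{b_i,\ldots,b_{|\alpha|}\} \setminus \pa(\alpha)) \cup \{b_i,b_j\}$ rather than through all later attributes — is exactly what guarantees no spurious precedence constraint is generated for such $B$. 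Concretely, I would need to show: whenever the AGHD places a product attribute $B$ with some $TOP(B)$ below some $TOP(A)$, the modified $<_{\mH,\alpha}$ does not force $B <_{\mH,\alpha} A$; this should follow because the only constraints forcing $B$ early come from paths/edges avoiding product attributes (by the modified Lemmas~\ref{lemma:DNCbasecase}--\ref{lemma:DNCextension} analogues), and if such a path existed the decomposability of the AGHD — specifically the fact that $B \in \pa(\alpha)$ is added to every component $C'$ whose edges it touches — would already place $A$ and $B$ in a common subtree in a consistent position. I would then assemble these observations and conclude, by the soundness half of the product-aggregation equivalence theorem, that the decomposable AGHD is compatible with a genuine linear extension $\beta$ of $<_{\mH,\alpha}$, hence with an ordering $\beta \equiv_\mH \alpha$, completing the induction.
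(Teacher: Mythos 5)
Your proposal diverges from the paper's proof in a way that exposes two genuine gaps. The paper does \emph{not} prove this theorem by an acyclicity argument against a partial order; it explicitly \emph{constructs} the ordering $\beta$ by induction, taking the compatible sub-orderings $\beta^C$ supplied by the inductive hypothesis for each subtree $\mT_C$, prepending an ordering $\beta^O$ of $\alpha_{C^+}^O$ (after showing the operators in $\alpha_{C^+}^O$ are uniform so they commute freely), and then interleaving the resulting $\beta^{C+}$ across components. Your route instead requires a product-aggregation analogue of Theorem~\ref{thm:dag-soundness-completeness}, i.e.\ a partial order $<_{\mH,\alpha}$ whose linear extensions are exactly the orderings equivalent to $\alpha$ in the presence of product aggregates. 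No such characterization exists in the paper: Lemma~\ref{lemma:AlgoP-sound-complete} is only a pairwise equivalence \emph{test} for two given orderings, and it is not automatic (nor obviously true) that the equivalence class of $\alpha$ under the modified commutation rules is the set of linear extensions of some partial order. In the non-product case this fact is the content of Theorem~\ref{thm:dag-soundness-completeness} and costs two nontrivial path lemmas (Lemmas~\ref{Pathing} and~\ref{Pathing2}); you would have to redo all of that with the relaxed path condition before your reduction even gets off the ground. Your claim that product attributes ``can always be placed last within their block'' is also unsupported: the modified test still checks paths through $(\{b_i,\ldots,b_{|\alpha|}\}\setminus\pa(\alpha))\cup\{b_i,b_j\}$ when $b_j$ is a product attribute, and the completeness proof of Lemma~\ref{lemma:AlgoP-sound-complete} constructs instances where moving a product aggregation past a semiring aggregation changes the answer, so product attributes do carry real precedence constraints.

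The second gap is that your argument never confronts the feature that makes the product case genuinely different: a single product attribute $a$ may appear in several components $C^+$ and hence have $TOP$ nodes in several disjoint subtrees $\mT_C$, each of whose inductively-obtained orderings places its own copy of $a$ somewhere. The final $\beta$ must contain $a$ exactly once, consistently with \emph{all} of these subtrees simultaneously. This is precisely why the paper strengthens the inductive hypothesis to additionally require $\pa(\beta^C)=\pa(\alpha^C)$ and why its interleaving procedure pulls the product aggregations out of the $\beta^{C+}$ in the fixed global order $\pa(\alpha)$. An acyclicity-plus-arbitrary-linear-extension argument gives you no handle on this cross-component consistency, and a per-pair check of ``$TOP(A)$ ancestor of $TOP(B)$'' cannot see it. (A smaller inaccuracy: in Definition~\ref{def:decomp2} the top subtree satisfies $\chi(\mT_0)=\mV(-\alpha)$ exactly; the product attributes are not ``pushed up'' into $\mT_0$ but absorbed into the enlarged components $C^+$ of the child subtrees.) As written, the proposal would not yield a proof without first establishing the missing characterization theorem and adding a mechanism for merging the product attributes across subtrees.
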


\begin{theorem}\label{thm:decompaghdwidth}
For every valid AGHD $(\mT, \chi)$, there exists a decomposable $(\mT', \chi')$ such that for all node-monotone functions $\gamma$, the $\gamma$-width of $(\mT', \chi')$ is no larger than the $\gamma$-width of $(\mT, \chi)$.
\end{theorem}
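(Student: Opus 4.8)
The plan is to mirror the proof of Theorem~\ref{thm:decompwidth}: starting from the valid AGHD $(\mT,\chi)$, apply a sequence of \emph{width-preserving} transformations --- transformations after which every new bag is a subset of some old bag --- to reach a decomposable AGHD $(\mT',\chi')$ with the property that for each $v'\in\mV_{\mT'}$ there is a $v\in\mV_\mT$ with $\chi'(v')\subseteq\chi(v)$. Once this is in place, node-monotonicity of any $\gamma$ gives $\gamma(\chi'(v'))\le\gamma(\chi(v))\le\gamma\text{-width}(\mT,\chi)$, hence $\gamma\text{-width}(\mT',\chi')\le\gamma\text{-width}(\mT,\chi)$ as required. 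So the whole content of the theorem is exhibiting the transformations and verifying that the transformed AGHD meets the appendix definition of a decomposable AGHD.

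The transformations I would use are the AGHD analogues of the two in Theorem~\ref{thm:decompwidth}. First, arrange that every node of $\mT$ is a $TOP$ node of exactly one attribute: split a node that is top-most for several attributes into a short chain of copies (one per attribute), and contract a node that is top-most for none into its parent; copying a bag downward and contracting are both width-preserving. For a product attribute --- which in an AGHD may legitimately have several $TOP$ nodes --- this is done at each of its top-most occurrences separately. Second, arrange the connectivity property: for every node $t$ with rooted subtree $\mT_t$, the non-output, non-product attributes whose ($\mH\setminus(V(-\alpha)\cup\pa(\alpha))$-unique) $TOP$ lies in $\mT_t$ induce a connected subgraph of $\mH\setminus(V(-\alpha)\cup\pa(\alpha))$ --- i.e.\ lie in a single connected component $C_i$ --- and the product attributes occurring in $\mT_t$ are exactly those appearing in a hyperedge of that component, realizing the augmentation from $C_i$ to $C_i'$ used in the product-aggregation decomposition. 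This is again achieved by pruning and reattaching subtrees and by duplicating bags downward, all width-preserving. Then, by induction on $|\alpha|$, one checks that a valid AGHD satisfying these two properties is decomposable: the first property produces the subtree handling the output attributes, and the second produces, for each component $C_i$, a single subtree that is a decomposable AGHD of the sub-query on $C_i'$ (with $C_i^O$ demoted to output), and Theorem~\ref{thm:decompaghdisvalid} confirms the pieces are genuine valid sub-AGHDs.

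The main obstacle is the product attributes. In an ordinary GHD the running intersection property pins every attribute to one connected region of the tree, which is precisely what forces the partition into $\mT_0,\mT_{C_1},\dots,\mT_{C_k}$; an AGHD deliberately relaxes this for product attributes, so a product attribute $p$ sitting in hyperedges that meet two distinct components $C_i$ and $C_j$ will --- after transformation --- need a presence inside both $\mT_{C_i'}$ and $\mT_{C_j'}$, and these presences need not be joined by a path of $p$-bags. I would handle this by tracking each top-most occurrence of $p$ separately during the first transformation, routing it into whichever component's subtree the surrounding hyperedge belongs to, and maintaining the invariant that the transformations only ever copy a product attribute into a bag that already contains an attribute of that same hyperedge; under this invariant no bag grows, so width is preserved, and the resulting disconnectedness of $p$'s occurrences is exactly what the AGHD and decomposable-AGHD definitions permit. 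Finally, the compatibility constraint on $TOP$ nodes (an ancestor $TOP(a)$ forcing $a$ before $b$, hence $b\not<_{\mH,\alpha}a$) is what guarantees these reroutings stay consistent with $<_{\mH,\alpha}$ and never yield an invalid AGHD. Verifying this bookkeeping carefully is the delicate part; the remainder is a routine adaptation of the GHD argument.
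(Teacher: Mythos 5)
Your plan is essentially the paper's own proof: apply the width-preserving TOP-unique and (product-adjusted) subtree-connected transformations, then show by induction on $|\alpha|$ that a valid AGHD with these properties is decomposable, with the product attributes routed into the subtree of whichever component's hyperedges contain them (the $C_i^+$ augmentation). One small caution: Theorem~\ref{thm:decompaghdisvalid} runs in the wrong direction for certifying that the subtrees $\mT_C$ are valid sub-AGHDs --- the paper instead directly exhibits a compatible sub-ordering $\beta_{C^+\setminus\beta_{C^+}^O}$ for each $\mT_C$, and that compatibility argument is the bulk of the work you defer as ``routine bookkeeping.''
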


We can define characteristic hypergraphs similarly to how we did in Section~\ref{sec:decomposing} (see Appendix~\ref{sec:univ-aggregation-proofs} for a formal definition). We have the following result:

\begin{theorem}\label{thm:decompaghd}
For an $\ajar$ query $Q_{\mH, \alpha}$ involving product aggregates, suppose $\mH_0, \dots, \mH_k$ are the characteristic hypergraphs $H(\mH, \alpha)$. Then GHDs $G_0, G_1, \dots, G_k$ of $\mH_0, \dots, \mH_k$ can be connected to form a decomposable AGHD $G$ for $Q_{\mH, \alpha}$. Conversely, any decomposable AGHD $G$ of $Q_{\mH, \alpha}$ can be partitioned into GHDs $G_0, G_1, \dots, G_k$ of the characteristic hypergraphs $\mH_0, \dots, \mH_k$. Moreover, in both of these cases, $\gamma\text{-width}(G) = \max_{i} \gamma\text{-width}(G_i)$.
\end{theorem}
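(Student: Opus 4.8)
The plan is to mirror the proof of Theorem~\ref{thm:decomp}, which is a direct unwinding of definitions, while carefully tracking the features new to product aggregations: when the root piece is peeled off one removes $\mV_{-\alpha}\cup\pa(\alpha)$ rather than just $\mV_{-\alpha}$, the resulting components are the augmented ones $C'_i$ (each component together with the product attributes sharing a hyperedge with it), and an AGHD is allowed to violate the edge-cover and running intersection axioms precisely on attributes of $\pa(\alpha)$. With the appendix definitions of decomposable AGHDs and of the product-aggregation characteristic hypergraphs $H(\mH,\alpha)$ in hand, the argument has three parts --- the stitching direction, the partition direction, and the width bookkeeping --- and the recursion in the definition of $H$ lets each part be carried out by induction on $|\alpha|$ (equivalently, on the recursion depth of $H$), with the $|\alpha|=0$ case trivial.

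For the stitching direction, given GHDs $G_0,G_1,\dots,G_k$ of $\mH_0,\dots,\mH_k$, I would assemble them along the recursion tree of $H(\mH,\alpha)$: $G_0$ plays the role of the root subtree, and each characteristic hypergraph arising from a component is attached to its parent along the designated ``intersection edge'' that, by construction of $H$, both hypergraphs contain --- pick a bag covering that edge on each side and add a tree edge between them. Checking that the result is an AGHD amounts to verifying the relaxed AGHD axioms: every hyperedge of $\mH$, after the early product aggregations an AGHD permits, is covered either inside $G_0$ (if it involves only output attributes) or inside the $G_i$ of the component it meets (its non-product, non-output attributes all lie in a single component $C_i$, and its product attributes lie in the corresponding $C'_i$); and running intersection holds for every non-product attribute, since each such attribute lives entirely within one of the assembled subtrees and, for an attribute on an intersection edge, occurs in a bag on each side of the new tree edge so its occurrences stay connected --- while attributes of $\pa(\alpha)$ are exactly those AGHDs exempt from this requirement. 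That the assembled object matches the appendix definition of a decomposable AGHD is then immediate from how it was built.

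For the partition direction, starting from a decomposable AGHD $G$ of $Q_{\mH,\alpha}$, the defining subtree structure hands us the root piece $G_0$ and, recursing into the remaining subtrees, GHDs $G_i$ of the characteristic hypergraphs; the one point needing care is that $G_0$ really is a GHD of $\mH_0$, i.e., that besides covering the hyperedges contained in $\mV_{-\alpha}$ it has a single bag covering each intersection edge. This is the standard boundary-bag argument: every attribute on the intersection edge for a component occurs both in the root piece and in the subtree handling that component, so running intersection forces all of them into the bag at the boundary between those two parts; if that bag does not literally contain the whole edge, one first applies a width-preserving transformation inserting a copy of it, exactly as in the non-product proof. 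For the width claim, in both directions every bag of each $G_i$ appears verbatim as a bag of $G$, so $\gamma\text{-width}(G)\ge\max_i\gamma\text{-width}(G_i)$; the only bags of $G$ not accounted for this way are the copies of intersection edges used in the stitching, and each such edge is a subset of some bag of $G_0$, so by node-monotonicity of $\gamma$ these extra bags cannot raise the width, giving the exact equality $\gamma\text{-width}(G)=\max_i\gamma\text{-width}(G_i)$.

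I expect the main obstacle to be the bookkeeping around $\pa(\alpha)$ rather than any single computation. A product attribute can belong to several augmented components $C'_i$ at once, hence appear in several of the $G_i$ and genuinely violate running intersection inside $G$; one must confirm this is harmless for the relaxed AGHD axioms (it is, by design of AGHDs) and, more delicately, that the augmented components $C'_i$ together with the intersection edges are exactly the right objects for both the stitching and the partition to close up. This is precisely where the product-aggregation definition of $H(\mH,\alpha)$ departs from the plain one, and where a careless transcription of the Theorem~\ref{thm:decomp} proof would fail.
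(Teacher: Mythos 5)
Your proposal is correct and follows essentially the same route as the paper: the paper's own proof of this theorem is the single sentence ``the proof is the exact same as the proof of Theorem~\ref{thm:decomp},'' and your argument is precisely that induction on $|\alpha|$ along the recursion of $H(\mH,\alpha)$, with the same stitching via intersection edges, the same running-intersection boundary-bag argument for the partition direction, and the same width bookkeeping. The extra care you take with $\pa(\alpha)$ --- the augmented components $C_i^+$ and the relaxed AGHD axioms --- is exactly the bookkeeping the paper leaves implicit, so nothing is missing.
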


These theorems let us apply all the optimizations from Section~\ref{subsec:optimal-valid},~\ref{subsec:dbp-width}, and~\ref{subsec:GYM} to \ajar queries with product aggregates.

\paragraph*{Comparison to FAQ} The runtime of InsideOut on a query involving idempotent product aggregations is given by $\Ot(\IN^{faqw})$, where the faqw depends on the ordering, and the presence of product aggregations. Our algorithm for handling product aggregations recovers the runtime of FAQ. Formally,

\begin{theorem}
For any \ajar query involving idempotent product aggregations, $\IN^{w^*} + \OUT \leq 2 \cdot \IN^{faqw}$.
\end{theorem}

The proof is in Appendix~\ref{subsec:faq-comparison-proof}. By applying ideas from the FAQ paper to our setting, we can also recover the FAQ runtime on $\#QCQ$ (Appendix~\ref{subsec:recovering-hash-qcq}). Algorithm~\ref{algo:equivalance-test-prod} for detecting equivalence of orderings is both sound and complete; in contrast, FAQ's equivalence testing algorithm is sound but not complete. Moreover, we have a width-preserving decomposition for queries with product aggregates. This allows us to get tighter runtime exponents in terms of submodular and DBP-widths (Theorems~\ref{thm:submodular-width},~\ref{thm:dbp-width}) and efficient MapReduce Algorithms (Theorems~\ref{thm:ajar-mapreduce-n-rounds},~\ref{thm:ajar-mapreduce-log-rounds}). 

\section{Conclusion}
\label{sec:conclusion}
We investigate solutions to and the structure of $\ajar$ queries: aggregate-join queries with multiple aggregators over annotated relations. We start by providing a very simple algorithm based on a variant of the standard GHDJoin algorithm that generates query plans by relying on a simple test of equivalence between aggregation orderings. This naive approach is sufficient to recover and surpass the runtime of state-of-the-art solutions. We proceed to investigate more interesting technical questions regarding the structure of $\ajar$ queries. We first develop a partial ordering that fully characterizes equivalent orderings. We then develop a characterization of the corresponding valid GHDs, describing how they can be decomposed into ordinary, unrestricted GHDs. This reduction connects us to a trove of parallel work on GHDs. We finish by extending our work to handle product aggregations.

\paragraph*{Acknowledgements}
We thank Atri Rudra for invaluable insights and feedback developing our approach. CR gratefully acknowledges the support of the Defense Advanced Research Projects Agency (DARPA) XDATA Program under No. FA8750-12-2-0335 and DEFT Program under No. FA8750-13-2-0039, DARPAs MEMEX program under No. FA8750-14-2-0240, the National Science Foundation (NSF) under CAREER Award No. IIS-1353606, Award No. CCF-1356918 and EarthCube Award under No. ACI-1343760, the Office of Naval Research (ONR) under awards No. N000141210041 and No. N000141310129, the Sloan Research Fellowship, the Moore Foundation Data Driven Investigator award, and gifts from American Family Insurance, Google, Lightspeed Ventures, and Toshiba.

\bibliography{ajar-bib_ARXIV}

\appendix
\section{Background}\label{sec:background}
The AggroGHDJoin algorithm is a simple variant of some well-known join algorithms. We describe these algorithms next.

\paragraph*{GenericJoin}
We first describe the AGM bound on the join output size developed by Atserias, Grohe, and Marx~\cite{AGM}.  Given query hypergraph $\mH_Q = (\mV, \mE)$ and relations $\{R_F | F \in \mE\}$, consider the following linear program:
\begin{align*}
\textrm{Minimize } &\sum_{F \in \mE} x_F \log_{\IN}(|R_F|) \\
\forall v \in \mV : &\sum_{F : v \in F} x_F \ge 1 \\
\forall F \in \mE: & x_F \ge 0
\end{align*}

Any feasible solution $\overrightarrow{x}$ is a \emph{fractional edge cover}. Suppose $\rho^*$ is the optimal objective. Then the \emph{AGM bound} on the worst-case output size of join $\Join_{F \in \mE} R_F$ is given by $\IN^{\rho^*} = \prod_{F \in \mE} |R_F|^{x_F^*}$. We will use $\IN^{AGM(Q)}$ to denote the AGM bound on a query $Q$. The GenericJoin (GJ) algorithm~\cite{NRR} computes a join in time $\Ot(\IN^{AGM(Q)})$ for any join query. GJ will be used as a subroutine in a later algorithm, where $GJ(\mH, \{ R_F | F \in \mE_\mH\})$ denotes a call to GenericJoin with one input relation $R_F$ per hyperedge $F$ in hypergraph $\mH$.

\paragraph*{Yannakakis}
Yannakakis' algorithm~\cite{Yannakakis81} operates on $\alpha$-acyclic queries. There are several different equivalent definitions of $\alpha$-acyclicity; we provide the definition that builds a tree out of the relations as it most naturally relates to generalized hypertree decompositions. 
\begin{definition}
Given a hypergraph $\mH = (\mV_\mH, \mE_\mH)$, a \emph{join tree} over $\mH$ is a tree $\mT = (\mV_\mT, \mE_\mT)$ with $\mV_\mT = \mE_\mH$ such that for every attribute $A \in \mV_\mH$, the set $\{v \in \mV_\mT | A \in v\}$ forms a connected subtree in $\mT$.
\end{definition}

A hypergraph $\mH$ is $\alpha$-acyclic if there exists a join tree over $\mH$ ~\cite{ALICE, Yannakakis81}. We can use the classic GYO algorithm to produce a join tree~\cite[ch.6]{ALICE}. The Yannakakis algorithm
takes a join tree as input. It's pseudo-code is given in Section~\ref{subsec:background-1}.

\begin{theorem}
Algorithm \ref{NoAggroY} runs in $O(\IN + \OUT)$ where $\IN$ and $\OUT$ are the sizes of the input and output, respectively.
\end{theorem}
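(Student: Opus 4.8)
The plan is to reproduce the classical ``full-reducer plus projection'' analysis of Yannakakis' algorithm. Write $J = \Join_{F \in \mV} R_F$ for the full join; since this is a pure join query we have $\OUT = |J|$. For a tree node $F$ let $S_F$ denote the union of the attribute sets of the relations in the subtree rooted at $F$. The argument splits into two independent pieces: (i) the two semijoin passes run in $O(\IN)$ time and leave the database \emph{globally consistent}, and (ii) given global consistency, the final bottom-up join pass runs in $O(\IN + \OUT)$ time because every intermediate relation it produces has size at most $\OUT$.

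For (i), I would first note that a single semijoin $R_P \ltimes R_F$ never enlarges $R_P$ and, hashing on the common attributes $P \cap F$, can be evaluated in time $O(|R_P| + |R_F|)$; since each tree edge is traversed once in each pass and relation sizes only shrink, both passes together take $O(\IN)$. I would then invoke (with a short inductive sketch) the standard fact that on an $\alpha$-acyclic query a bottom-up semijoin pass followed by a top-down semijoin pass is a \emph{full reducer}: afterwards, for every node $F$ and every surviving tuple $t \in R_F$ there is a $u \in J$ with $\pi_F(u) = t$. The induction runs on the tree — the bottom-up pass guarantees every surviving tuple of $R_F$ extends to a consistent assignment over $F$'s subtree, and the top-down pass propagates this to the whole tree via the running-intersection property.

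For (ii), the key claim, proved by induction on the bottom-up processing order of the join pass, is that when node $F$ is reached the current contents of $R_F$ equal $\pi_{S_F}(J)$. The base case (a leaf $F$) is exactly global consistency together with the trivial inclusion $\pi_F(J) \subseteq R_F$. For the inductive step, $R_F$ has by then become $\pi_F(J) \Join \bigl(\Join_i \pi_{S_{F_i}}(J)\bigr)$ over the children $F_i$ of $F$, and acyclicity plus running intersection (no dangling tuples, by global consistency) force this to collapse to $\pi_{S_F}(J)$. Since projection never increases size, $|R_F| \le |J| = \OUT$ throughout the join pass, so each join $R_P \Join R_F$ has both operands of size $O(\IN + \OUT)$ and output of size at most $\OUT$; a hash join evaluates it in $O(\IN + \OUT)$. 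There are $|\mV| - 1$ such joins, and treating the query (hence $|\mV|$) as fixed, the whole pass is $O(\IN + \OUT)$. Adding the $O(\IN)$ for the semijoin passes gives the bound, and the returned relation $R_R = \pi_{S_R}(J) = J$ is the correct output.

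The main obstacle is the careful justification of the two acyclicity-driven facts — that the two-pass semijoin program is a full reducer, and that the join of any connected subtree of a globally consistent acyclic database equals the projection of the full join onto that subtree. Both are standard (Bernstein--Goodman / Yannakakis) but are where all the structure is used; everything else is routine accounting for hash-based semijoin and join. A minor point worth flagging is that without the top-down semijoin pass the intermediate relations in the join pass need not be bounded by $\OUT$ (e.g. on a path query rooted at one end), which is precisely why both passes are needed for the runtime bound even though correctness requires only soundness of the semijoins.
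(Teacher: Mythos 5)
Your proposal is correct, and it is exactly the classical Yannakakis/Bernstein--Goodman argument that the paper itself relies on: the paper states this theorem without proof, simply citing the original result, so there is no in-paper proof to diverge from. Your two-part decomposition (full reducer in $O(\IN)$, then the invariant that every intermediate relation in the bottom-up join pass is a projection of the full join and hence of size at most $\OUT$) is the standard and complete justification, and your remark that the top-down semijoin pass is needed for the size bound rather than for correctness is accurate.
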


To leverage the speed of Yannakakis for cyclic queries, we look to GHDs~\cite{FHTW,gottlob:ght}. The intuition behind a GHD is to group the attributes into bags (as specified by the function $\chi$) such that we can build a \emph{join tree} over these bags. This allows us to run $GJ$ within each bag and then Yannakakis on the join tree.
The resulting algorithm is GHDJoinwhose pseudo-code is given in Algorithm~\ref{NoAggroGHD}. The runtime of GHDJoin is given by $\Ot(\IN^{fhw(\mT,\mH)} + \OUT)$

\begin{algorithm}[t]
\caption{GHDJoin($\mH = (\mV_\mH, \mE_\mH)$, $(\mT(\mV_\mT, \mE_\mT), \chi)$, $\{R_F | F \in \mE_\mH\}$)}
\label{NoAggroGHD}
\textbf{Input:} Query hypergraph $\mH$, GHD $(\mT, \chi)$, Relations $R_F$ for each $F \in \mE_\mH$
\begin{algorithmic}[1]
\State $S_R \gets \emptyset$
\ForAll{$t \in \mV_\mT$}
\State $\mH_t \gets (\chi(t), \{ \pi_{\chi(t)} F | F \in \mE_\mH \})$
\State $S_R \gets S_R \cup GJ(\mH_t, \{ \pi_{\chi(t)} R_F | F \in \mE_\mH\})$
\EndFor
\State \Return $Yannakakis(\mT, S_R)$
\end{algorithmic}
\end{algorithm}

\begin{theorem}
Algorithm \ref{NoAggroGHD} runs in $\Ot(\IN^{fhw(\mT, \mH)} + \OUT)$.
\end{theorem}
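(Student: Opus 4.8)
The plan is to split GHDJoin's cost into its two phases — the per-bag GenericJoin calls (lines 2--5) and the final Yannakakis call (line 6) — and bound each using results already in hand (the GJ runtime, the Yannakakis runtime, the AGM bound, and the definition of $fhw$). First I would fix a tree node $t$ and observe that the hypergraph $\mH_t = (\chi(t), \{\pi_{\chi(t)}F \mid F \in \mE_\mH\})$ has a fractional-edge-cover linear program identical to the one defining $\rho^*_t$; hence $AGM(\mH_t) = \rho^*_t \le fhw(\mT,\mH)$. Since the relations fed to this call satisfy $|\pi_{\chi(t)}R_F| \le |R_F| \le \IN$, the GJ bound gives that the call runs in $\Ot(\IN^{AGM(\mH_t)}) \le \Ot(\IN^{fhw(\mT,\mH)})$, and by the AGM bound the resulting bag relation has size at most $\IN^{\rho^*_t} \le \IN^{fhw(\mT,\mH)}$. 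There are $|\mV_\mT|$ such calls, and $|\mV_\mT|$ depends only on the query, not the data, so the first phase runs in $\Ot(\IN^{fhw(\mT,\mH)})$ total and produces a set $S_R$ of intermediate relations of total size $\Ot(\IN^{fhw(\mT,\mH)})$.

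Next I would handle the Yannakakis phase. The key structural point is that the bags $\{\chi(t)\}_{t \in \mV_\mT}$ together with $\mT$ form a join tree: the running intersection property of the GHD states exactly that, for each attribute $A$, the nodes whose bag contains $A$ induce a connected subtree, which is the defining condition of a join tree on the bag relations. Thus Yannakakis applies, and by its stated runtime the call costs $O(\IN_{S_R} + \OUT_{S_R})$, where $\IN_{S_R}$ is the total size of the relations in $S_R$ and $\OUT_{S_R}$ is the size of the join of the bag relations. We have already bounded $\IN_{S_R} = \Ot(\IN^{fhw(\mT,\mH)})$.

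It then remains to argue $\OUT_{S_R} = \OUT$, i.e. that the join of the bag relations equals $\Join_{F \in \mE_\mH} R_F$. This is the standard GHDJoin correctness argument: each bag relation at $t$ is a join of \emph{all} projected inputs, hence is contained in $\pi_{\chi(t)}\!\left(\Join_{F} R_F\right)$; conversely, since every edge satisfies $F \subseteq \chi(t)$ for some $t$, the constraint imposed by $R_F$ is preserved, and the running intersection property guarantees the per-bag projections glue back consistently to recover $\Join_F R_F$ exactly. So the Yannakakis output has size $\OUT$, and the second phase costs $\Ot(\IN^{fhw(\mT,\mH)} + \OUT)$. Adding the two phases yields total runtime $\Ot(\IN^{fhw(\mT,\mH)}) + \Ot(\IN^{fhw(\mT,\mH)} + \OUT) = \Ot(\IN^{fhw(\mT,\mH)} + \OUT)$, as claimed.

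I expect the main obstacle to be not the arithmetic but the correctness sub-claim $\OUT_{S_R} = \OUT$: one must check that pushing \emph{every} projected relation into each bag's GJ (rather than only the edges covered by that bag) still yields precisely $\Join_F R_F$ after the join-tree join — this is exactly where the edge-cover property and the running intersection property are both needed. A secondary point to watch is keeping the dependence on query size (number of bags, arity) inside the $\Ot$ and polynomial factors, so that the final data-complexity bound $\Ot(\IN^{fhw(\mT,\mH)} + \OUT)$ stays clean.
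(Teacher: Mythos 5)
The paper does not actually prove this theorem; it states the $\Ot(\IN^{fhw(\mT,\mH)}+\OUT)$ bound for GHDJoin as a known result from the GHD literature (Grohe and Marx). Your proposal supplies the standard argument behind that citation, and its overall structure --- bound each bag's GenericJoin call by the AGM bound on $\chi(t)$, observe that the bags with $\mT$ form a join tree, and invoke the Yannakakis $O(\IN+\OUT)$ bound with the bag relations as input --- is the right one and matches what the paper implicitly relies on. Two small remarks: the LP for $AGM(\mH_t)$ uses the projected relation sizes $|\pi_{\chi(t)}R_F|$ in its objective, so you get $AGM(\mH_t)\le\rho^*_t$ rather than equality, which is all you need.

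The one claim you should fix is the containment direction in the correctness sub-argument. You write that the bag relation at $t$ ``is contained in $\pi_{\chi(t)}(\Join_F R_F)$''; in fact the containment goes the other way: a join of projections is a \emph{superset} of the projection of the join (e.g.\ a triangle query whose full join is empty can still have nonempty bag relations --- this is precisely why Yannakakis needs its semijoin passes). The conclusion $\Join_t R_t = \Join_F R_F$ is still true, but via the two correct containments: (i) $\pi_{\chi(t)}(\Join_F R_F)\subseteq R_t$ for every $t$, together with the running intersection property, gives $\Join_F R_F\subseteq\Join_t R_t$; and (ii) since every $F$ satisfies $F\subseteq\chi(t)$ for some $t$, that bag's join includes the unprojected $R_F$ as a factor, so every tuple of $\Join_t R_t$ satisfies every $R_F$ constraint, giving $\Join_t R_t\subseteq\Join_F R_F$. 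Your final sentence about edge covers and the running intersection property already gestures at (ii) and (i) respectively, so this is a local repair rather than a structural flaw.
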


We can make some straightforward modifications to the above join algorithms to perform aggregations. The traditional Yannakakis and GHDJoin algorithms perform the join in a bottom up fashion, after a semijoin phase to ensure that there are no dangling tuples. The modified algorithms above handle aggregations using the same intuition as in traditional query plans: ``push down'' aggregations as far as possible. Since each attribute must occur in a connected subtree of the GHD, we can push its aggregation down to the root of this connected subtree, which is the $TOP$ node of the attribute. There is a standard modification to Yannakakis for project-join queries that projects away attributes at their $TOP$ node~\cite{Yannakakis81}. Instead of projecting, we perform aggregation.

We provide the pseudo-code of AggroYannakakis, which is a simple variant of the well-known Yannakakis~\cite{Yannakakis81} algorithm, in Algorithm~\ref{AggroY}. Algorithm~\ref{Aggro} gives the pseudo-code of AggroGHDJoin, which is a variant of GHDJoin that calls AggroYannakakis instead of Yannakakis. AggroGHDJoin also does some extra work to ensure we pass each annotation to $GJ$ only once. The $\pi^1$ operator in AggroGHDJoin denotes a projection that projects tuples while replacing the annotation by $1$, to ensure that the same annotation isn't counted more than once.

\begin{algorithm}[H]
\caption{AggroYannakakis($\mT = (\mV, \mE$), $\alpha$, $\{R_F | F \in \mV\}$)}
\label{AggroY}
\textbf{Input:} Join tree $\mT = (\mV, \mE)$, Aggregation order $\alpha$, Relations $R_F$ for each $F \in \mV$
\begin{algorithmic}
\ForAll{$F \in \mV$ in some bottom-up order}
\Comment Semi-join reduction up
\State $P \gets$ parent of $F$
\State $R_P \gets R_P \ltimes R_F$
\EndFor
\ForAll{$F \in \mV$ in some top-down order}
\Comment Semi-join reduction down
\State $P \gets$ parent of $F$
\State $R_F \gets R_F \ltimes R_P$
\EndFor
\While{$F \in \mV$ in some bottom-up order}
\Comment Aggregation
\State $\beta \gets \alpha \cap \{a \in \mV | TOP_\mT(a) = F\}$
\State $R' \gets \Sigma_{\beta} R_F$
\If{$F$ is not the root}
\State $P \gets$ parent of $F$
\State $R_P \gets R_P \Join R'$
\Comment Compute the join
\EndIf
\EndWhile
\State \Return $R_R$ for the root $R$
\end{algorithmic}
\end{algorithm}

\begin{algorithm}[H]
\caption{AggroGHDJoin($\mH = (\mV_\mH, \mE_\mH)$, $(\mT(\mV_\mT, \mE_\mT), \chi)$, $\{R_F | F \in \mE_\mH\}$)}
\label{Aggro}
\textbf{Input:} Query hypergraph $\mH$, GHD $(\mT, \chi)$, Relations $R_F$ for each $F \in \mE_\mH$
\begin{algorithmic}
\State $S_R \gets \emptyset$
\ForAll{$t \in \mV_\mT$}
\State $\mH_t \gets (\chi(t), \{ \pi_{\chi(t)} F | F \in \mE_\mH \})$
\State $I \gets \{ R_F | F \subseteq \chi(t), \exists a \in F : TOP_\mT(a) = t\} \cup \{ \pi_{\chi(t)}^1 R_F | F \not\subseteq \chi(t) \text{ or } \forall a \in F : TOP_\mT(a) \neq t\}$
\State $S_R \gets S_R \cup GJ(\mH_t, I)$
\EndFor
\State \Return AggroYannakakis$(\mT, S_R)$
\end{algorithmic}
\end{algorithm}

In the classic analysis of Yannakakis, the runtime of the semi-join portion is bounded by $O(\IN)$ and the bottom-up join is bounded by $O(\OUT)$. In AggroYannakakis, the analysis of the semi-join portion is unchanged, but the aggregation reduces the size of the output, thereby making the $\OUT$ bound harder to achieve. In particular, during the bottom-up join, we may compute an intermediate relation whose attributes are not a subset of the output attributes, meaning that its size may not bounded by $\OUT$. These potentially large intermediate relations are the underlying cause for the traditional $\IN \cdot \OUT$ runtime.

However, using intuition discovered in reference~\cite{OZVLDB13}, if we require the output attributes to appear above non-output attributes, we can preserve the $\IN + \OUT$ runtime.

\begin{theorem}\label{thm:aggro-complexity}
Suppose we have a GHD such that for any output attribute $A$ and non-output attribute $B$, $TOP_\mT(B)$ is not an ancestor of $TOP_\mT(A)$. AggroGHDJoin runs in $\Ot(\IN^{fhw(\mT, \mH)} + \OUT)$ given this GHD.
\end{theorem}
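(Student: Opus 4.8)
The plan is to bound the runtime of AggroGHDJoin phase by phase — the per-bag $GJ$ calls, the two semijoin passes of AggroYannakakis, and the bottom-up aggregate-and-join pass — by showing that every relation that ever arises has size at most $\IN^{fhw(\mT,\mH)} + \OUT$. Correctness is standard and I would only sketch it: the semijoin passes make the join tree globally consistent, the $\pi^1$ bookkeeping ensures each annotation is multiplied exactly once, and pushing each aggregation down to its $TOP$ node is justified by Theorem~\ref{commute}. For the runtime, the easy phases come first. A $GJ$ call at bag $t$ returns a relation over $\chi(t)$ whose size is at most the AGM bound applied to $\chi(t)$, which is $\IN^{\rho^*_t} \le \IN^{fhw(\mT,\mH)}$ (projecting the input relations onto $\chi(t)$ only shrinks them, so the optimal fractional cover in the $fhw$ LP still certifies the bound). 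The two semijoin passes only delete tuples, so throughout them every $R_F$ retains size $\le \IN^{fhw(\mT,\mH)}$. Hence these phases cost $\Ot(\IN^{fhw(\mT,\mH)})$.

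The crux is the aggregate-and-join pass. For a tree node $F$, let $A_F$ be the attribute set of $R_F$ at the moment $F$ is processed in this pass. I would first establish, from the running intersection property together with the fact that a non-output attribute is aggregated away exactly at its $TOP$ node while an output attribute is never aggregated, that $A_F = \chi(F) \sqcup \{B \in V(-\alpha) : TOP_\mT(B) \text{ is a proper descendant of } F\}$: an attribute with $TOP$ at or above $F$ that occurs anywhere in $F$'s subtree must by connectivity occur in $\chi(F)$; an output attribute with $TOP$ strictly below $F$ flows up into $R_F$ and is never removed; a non-output attribute with $TOP$ strictly below $F$ has already been aggregated out. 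Now I would invoke the hypothesis: if $\chi(F)$ contains any non-output attribute $B$, then $TOP_\mT(B)$ lies at or above $F$, hence is an ancestor of $TOP_\mT(A)$ for every output $A$ whose $TOP$ is strictly below $F$; since the hypothesis forbids a non-output $TOP$ from being an ancestor of an output $TOP$, there is no such $A$, so the second set in $A_F$ is empty and $A_F = \chi(F)$. This yields a dichotomy for each $F$: either (i) $\chi(F)$ has a non-output attribute and $A_F = \chi(F)$, so $R_F$ is obtained from the $GJ$-output at $F$ by semijoins and by joins with child relations all of whose attributes lie in $\chi(F)$, whence $|R_F| \le \IN^{fhw(\mT,\mH)}$; or (ii) $\chi(F) \subseteq V(-\alpha)$, so $A_F \subseteq V(-\alpha)$ and, by the global consistency established in the semijoin passes, $R_F$ is a subset of the projection onto $A_F$ of the full join, which equals the projection onto $A_F$ of the final output relation, whence $|R_F| \le \OUT$. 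The same argument applies at every intermediate step while child contributions are being joined in, since the projection of the current $R_F$ onto $\chi(F)$ never exceeds the $GJ$-output at $F$ and its projection onto the output attributes never exceeds $\OUT$. So every relation in this pass has size at most $\IN^{fhw(\mT,\mH)} + \OUT$.

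Finally, each aggregation $\Sigma_\beta R_F$ and each join $R_P \Join R'$ is over relations of size $\Ot(\IN^{fhw(\mT,\mH)} + \OUT)$ and so costs $\Ot(\IN^{fhw(\mT,\mH)} + \OUT)$; summing over the polynomially many tree nodes gives the stated bound $\Ot(\IN^{fhw(\mT,\mH)} + \OUT)$. The main obstacle is exactly the structural claim of the previous paragraph: correctly characterizing $A_F$ and converting the ``output attributes above non-output attributes'' hypothesis into the dichotomy ``$|R_F| \le \IN^{fhw(\mT,\mH)}$ or $|R_F| \le \OUT$'' — as opposed to only $|R_F| \le \IN^{fhw(\mT,\mH)} \cdot \OUT$, which is what one gets without the hypothesis and which is the source of the naive $\IN^{fhw}\cdot\OUT$ running time. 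The remaining ingredients — the AGM bound on bag relations and the global consistency guaranteed by the semijoin passes — are standard.
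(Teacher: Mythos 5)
Your proposal is correct and follows essentially the same route as the paper's proof: bound the per-bag $GJ$ calls by $\IN^{fhw(\mT,\mH)}$, note the semijoin passes only shrink relations, and use the ``output $TOP$s above non-output $TOP$s'' hypothesis to show every intermediate relation in the bottom-up pass is bounded either by a bag's size or by $\OUT$. The only (cosmetic) difference is that you establish this dichotomy globally per node by characterizing the attribute set $A_F$, whereas the paper argues per join step via the case split $F\setminus\beta\subseteq P$ versus $F\setminus\beta\not\subseteq P$; the underlying use of the hypothesis is identical.
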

\begin{proof}
$GJ$ on each bag still runs in $\Ot(\IN^{fhw(\mT, \mH)})$. We need to prove the Yannakakis portion runs in $O(\IN+\OUT)$ after running $GJ$.

The semijoin portion runs in $O(IN)$ as in the original Yannakakis algorithm. In the join phase, we have two types of joins. In the first type, $F \setminus \beta \subseteq P$. This implies the join output is a subset of $R_P$ (with different annotations). So the total runtime of this type of join is $O(\IN)$. For the second type, $F \setminus \beta \subsetneq P$. This means some attribute in $(F \setminus \beta) \setminus P$ must be an output attribute, and all attributes in $P$ must be output attributes as well (as their $TOP$ value is an ancestor of $F$). So the result of our join must be a subset of the output table; the total runtime of this type of join is $O(\OUT)$. Thus the total runtime of the algorithm is $O(\IN+\OUT)$.
\end{proof}

Note that while AggroGHDJoin runs in $\Ot(\IN^{fhw(\mT, \mH)} + \OUT)$ time on the GHDs above, it may not necessarily produce the right output unless the GHD satisfies additional conditions, to ensure that aggregations can be done in the proper order. In particular, recall our definition a GHD $(\mT, \chi)$ is compatible with $\alpha$ if for all attribute pairs $A,B$, $TOP_\mT(A)$ being an ancestor of $TOP_\mT(B)$ implies that either $A$ is an output variable or $A$ occurs before $B$ in $\alpha$.

\begin{theorem}\label{thm:aggro-compatible}
If a GHD $(\mT, \chi)$ is compatible with $\alpha$, then AggroGHDJoin given $(\mT, \chi)$ correctly computes $Q_{\mH, \alpha}$.
\end{theorem}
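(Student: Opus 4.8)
The plan is to split the argument into two essentially independent pieces, neither of which alone needs the full hypothesis. Part~(I): the bag-level \textsf{GenericJoin} phase of \textsc{AggroGHDJoin}, together with the semijoin-reduction phase of \textsc{AggroYannakakis}, faithfully materializes $\Join_{F \in \mE_\mH} R_F$ as an annotated relation spread over the join tree of bags, with each input annotation $\lambda_F$ multiplied exactly once per output tuple. Part~(II): the bottom-up aggregate-join phase of \textsc{AggroYannakakis} then performs the aggregations in an elimination order that is equivalent to $\alpha$ on this instance, and this is exactly where the compatibility hypothesis is used. Composing the two shows that \textsc{AggroGHDJoin} run on $(\mT,\chi)$ returns $Q_{\mH,\alpha}$.

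For Part~(I), the bags of $(\mT,\chi)$ form a join tree by the running intersection property, so the non-annotated skeleton is covered by the usual \textsc{GHDJoin}/\textsc{Yannakakis} correctness argument; the only new point is that a relation $R_F$ with $F \subseteq \chi(t)$ for several $t$ must contribute its annotation only once. I would show that $\{t : F \subseteq \chi(t)\}$ is a connected subtree (it is the intersection over $A \in F$ of the subtrees $\{t : A \in \chi(t)\}$) and that its topmost node $t^\star_F$ is the unique bag $t$ satisfying both $F \subseteq \chi(t)$ and $TOP_\mT(a) = t$ for some $a \in F$: existence holds because the parent of $t^\star_F$ (or, when $t^\star_F$ is the root, the root structure itself) omits some $a \in F$, which pins $TOP_\mT(a)$ to $t^\star_F$; uniqueness holds because any other such bag would be a strict descendant of $t^\star_F$ yet lie at or below $TOP_\mT(a)$ for some $a \in F \subseteq \chi(t^\star_F)$, contradicting running intersection. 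Hence \textsc{AggroGHDJoin} feeds $R_F$ with its true annotation only into $t^\star_F$ and a $\pi^1$-flattened copy into every other bag, so joining the bag relations over the join tree produces $\prod_F \lambda_F$ exactly once on each output tuple. The semijoin reductions only delete tuples with no join partner, which are precisely the tuples that the final join would eliminate anyway, so they preserve the tuple set and the annotations.

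For Part~(II), I would argue by induction over the rooted subtrees of $\mT$: after \textsc{AggroYannakakis} finishes the subtree $\mT_F$ rooted at $F$, the relation it holds at $F$ equals a $\Sigma$ over the attributes whose $TOP$ node lies in $\mT_F$ (these occur in no bag outside $\mT_F$), applied to the join of the bag relations of $\mT_F$; output attributes, never eliminated, simply ride along in every relation. The order in which those attributes are eliminated is immaterial by Theorem~\ref{commute}, since attributes sitting in sibling subtrees are separable, so we may take it to be the algorithm's order $\gamma_F$. The induction step is a single push-down: the attributes eliminated at $F$ occur only inside $\mT_F$, hence are absent from everything joined after $R_F$ is absorbed, so that elimination commutes past the join into the parent by the push-down identity $\Sigma_{(A,\oplus)}(R_1 \Join R_2) = R_1 \Join \Sigma_{(A,\oplus)} R_2$ (for $A$ absent from $R_1$) used in the proof of Theorem~\ref{commute}. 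At the root this gives that \textsc{AggroGHDJoin} outputs $\Sigma_\gamma \Join_{F \in \mE_\mH} R_F$, where $\gamma$ is the global elimination order: deeper $TOP$ nodes first, $\alpha$-order within a common bag, some fixed interleaving of attributes with pairwise incomparable $TOP$ nodes.

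It remains to show $\Sigma_\gamma \Join_{F \in \mE_\mH} R_F = \Sigma_\alpha \Join_{F \in \mE_\mH} R_F = Q_{\mH,\alpha}$, which I would obtain by transforming $\alpha$ into $\gamma$ by adjacent transpositions, invoking Theorem~\ref{commute} for each swap. The key claim is that any pair $A,B$ ordered differently by $\gamma$ and $\alpha$ must have incomparable $TOP$ nodes: a common $TOP$ node preserves $\alpha$'s order by construction of $\gamma$, and if (say) $TOP_\mT(B)$ is a proper ancestor of $TOP_\mT(A)$ then compatibility forces $B$ before $A$ in $\alpha$ while the bottom-up traversal eliminates $A$ (the deeper one) first, so $\gamma$ agrees. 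For an incomparable pair, deleting the tree edge directly above $TOP_\mT(A)$ splits $\mE_\mH$ into the relations covered by the component containing the root — which includes every relation mentioning $B$ and none mentioning $A$, since $A$ occurs only strictly below that edge — and the remaining relations, yielding a factorization $R_{F_1} \Join R_{F_2} = \Join_{F \in \mE_\mH} R_F$ with $A \notin F_1$ and $B \notin F_2$; so $A$ and $B$ commute by Theorem~\ref{commute}. A bubble-sort from $\alpha$ to $\gamma$ only ever swaps such adjacent pairs, so every swap preserves the value, finishing the proof. The main obstacle I anticipate is making Part~(II) fully rigorous: the push-down invariant and the ``no misordered comparable pair'' claim must be phrased so as to coexist with the output attributes and with the annotation bookkeeping inherited from Part~(I); the conceptual crux is that compatibility is exactly the condition ruling out inversions between comparable $TOP$ nodes, while incomparability of $TOP$ nodes is exactly what makes the relevant relations separable in the sense of Theorem~\ref{commute}.
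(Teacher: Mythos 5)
Your proof is correct and follows essentially the same route as the paper's: part (I) matches the paper's argument that each annotation is fed with its true value only into the topmost bag covering its relation (your existence/uniqueness argument for $t^\star_F$ is in fact more explicit than the paper's one-line justification), and part (II) is the paper's bottom-up induction over rooted subtrees of the join tree using the push-down identity from Theorem~\ref{commute}. The only organizational difference is that the paper states the inductive invariant directly in terms of $\alpha$ restricted to the subtree's attributes and invokes compatibility at each merge step, whereas you carry the algorithm's own elimination order $\gamma$ through the induction and defer the reconciliation $\Sigma_\gamma = \Sigma_\alpha$ to one final commutation argument; both versions ultimately rest on the same two facts, namely that compatibility forbids inversions between comparable $TOP$ nodes and that incomparable $TOP$ nodes (via cutting the tree edge above the deeper one) make the relevant relations separable in the sense of Theorem~\ref{commute}.
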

\begin{proof}
We first show that AggroYannakakis works as expected. We note that the semi-join reduction does not change the output; it only quickens the process. We only consider the bottom-up join. For each node $t$ in the join tree, let $R(t)$ be the relation associated with that node before this loop (i.e. after the semi-join portion). Let $R'(t)$ be the final relation associated with node $t$ when we are processing node $t$ (i.e. after the bottom up join with $t$'s descendants is done, and after the aggregation in $t$). Let $\mT_t$ be the subtree that includes $t$ and all of its descendants. Let $s(t)$ be the attributes aggregated at node $t$, i.e. $\alpha \cap \{a \in \mV | TOP_\mT(a) = t\}$, and let $s(\mT_t) = \cup_{t \in \mT_t} s(t)$. For each non-leaf node $t$, let $c(t)$ be the set of $t$'s children.

For each node $t$, we claim $R'(t) = \sum_{\alpha_{s(\mT_t)}} \Join_{t' \in \mT_t} R(t')$. Proof by induction on the tree. For each leaf $l$, $R'(l) = \sum_{\alpha_{s(l)}} R(l)$ by definition.

For a non-leaf node $t$,
\begin{align*}
R'(t) &= \sum_{\alpha_{s(t)}} R(t) \Join (\Join_{t_c \in c(t)} R'(t_c)) \\
&= \sum_{\alpha_{s(t)}} R(t) \Join \left(\Join_{t_c \in c(t)} \sum_{\alpha_{s(\mT_{t_c})}} \Join_{t' \in \mT_{t_c}} R(t')\right)\\
&= \sum_{\alpha_{s(\mT_t)}} \Join_{t' \in \mT_t} R(t')
\end{align*}
The second step is due to the inductive hypothesis. The final step is simply ``pulling out" the aggregations from the sub-orderings one at a time; we can arbitrarily interleave the aggregation orders $\alpha_{s(\mT_{t_c})}$. We can simply interleave them to match $\alpha_{\cup_{t_c \in c(t)} s(\mT_t)}$. Since the original GHD is compatible with $\alpha$, we know the aggregations $\alpha_{s(t)}$ precede $\alpha_{s(\mT_{t_c})}$ in $\alpha$, implying that $\sum_{\alpha_{s(\mT_t)}} \sum_{\alpha_{\cup_{t_c \in c(t)} s(\mT_t)}} = \sum_{\alpha_{s(\mT_t)}}$.  Our output is $R'(t_r)$ where $t_r$ is the root node, which is $\sum_{\alpha} \Join_{t \in \mT} R(t)$ as desired.

Since AggroYannakakis works as expected, we simply need to ensure that the bags are computed appropriately. Note the GHD ensures for every relation $R_F$, there is a node $t$ such that $F \subseteq \chi(t)$. This means that no tuple is lost; computing AggroYannakakis on the bags will compute the correct tuples. To ensure it computes the correct annotations, we need to ensure every annotation appears in the bags at most once; our algorithm places the annotation of a relation $R_F$ in the top-most node that contains all of the attributes $R_F$.
\end{proof}

{\bf Product Aggregations:} When product aggregations are present in an \ajar query $Q_{\mH, \alpha}$, we have a notion of product partition hypergraphs, AGHDs over product partition hypergraphs, and a corresponding notion of AGHDs compatible with an ordering. We now prove theorem~\ref{thm:ajar-product-runtime} that extends theorems~\ref{thm:aggro-complexity} and~\ref{thm:aggro-compatible} to the case where product aggregations are present.

A product partition partition $P = (\mV_P, \mE_P)$ essentially creates multiple renamed copies of each product attribute $a$ ($a_1, a_2,\ldots,a_{|P_a|}$), and assigns one of the renamed copies to each relation containing $a$. An AGHD is essentially a GHD over $P$. Given $P$, and $a \in \mV_\mH$, let $P(a)$ equal $\{a\}$ if $a$ is not a product attribute, and $\{a_1,\ldots,a_{|P_a|}\}$ otherwise. Given $a' \in \mV_P$, let $P^{-1}(a')$ equal $a$ such that $a' \in P(a)$. Given an edge $F \in \mE_P$, let $P^{-1}(F)$ denote the edge $\{P^{-1}(a') \mid a' \in F\}$. We define a modified ordering $\alpha^P$ over $\mV_P$ that takes $\alpha$ and replaces each occurrence of $(a,\otimes)$ with $(a_1,\otimes)$,$(a_2,\otimes)$,$\ldots$,$(a_{|P_a|},\otimes)$ for each product attribute $a$. For any $F \in \mE_P$, we define the relation $R_F$ to be same as the $R_{P^{-1}(F)}$ (but with the attribute name changed. This gives us the modified \ajar query $Q^P_{P, \alpha^P} = \sum_{\alpha^P} \Join_{F \in \mE_P} R_F$. Then we have,

\begin{lemma}\label{lemma:product-renaming-one}
Suppose $R'(A,C_1)$ is a copy of $R(A,C)$ with $C$ renamed to $C_1$, and $S'(B,C_2)$ is a copy of $S(B,C)$ with $C$ renamed to $C_2$. Then
$$\sum_{(C_1,\otimes)}\sum_{(C_2,\otimes)} R'(A,C_1) \Join S'(B,C_2) = \sum_{(C,\otimes)} R(A,C) \Join S(B,C)$$
\end{lemma}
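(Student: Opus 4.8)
The plan is to pass to the equivalent function representation of $\mathbb{K}$-relations—each relation viewed as a map $\mD^F \to \mathbb{K}$ that is $0$ off its support—and then simply evaluate both sides of the claimed identity as functions and check they agree pointwise. Under this view a join is pointwise $\otimes$-multiplication, and, because $0 \otimes x = 0$, the product aggregation $\sum_{(A,\otimes)}$ becomes $(\sum_{(A,\otimes)} R)(t_B) = \prod_{t_A \in \mD^A} R(t_B \circ t_A)$: the support condition ``$\forall t_A: t_B \circ t_A \in R$'' is enforced automatically by a factor vanishing, and the annotation is the same product. (This matches the set-based definition up to the convention that an annotation of $0$ is equivalent to a tuple being absent.) Working in this representation is what lets us skip separate bookkeeping for supports versus annotations.

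For the left-hand side I would first observe that $R'(a,v) = R(a,v)$ and $S'(b,v) = S(b,v)$ for $v \in \mD^C$, and that $R'(A,C_1)$ and $S'(B,C_2)$ have disjoint schemas, so $\big(R'(A,C_1) \Join S'(B,C_2)\big)(a,c_1,b,c_2) = R(a,c_1)\otimes S(b,c_2)$. Aggregating out $C_2$ gives $\prod_{c_2 \in \mD^C}\big(R(a,c_1)\otimes S(b,c_2)\big)$; by commutativity and associativity this finite $\otimes$-product regroups as $\big(\prod_{c_2} R(a,c_1)\big)\otimes\big(\prod_{c_2} S(b,c_2)\big)$, and idempotence collapses the first factor to $R(a,c_1)$, leaving $R(a,c_1)\otimes\bar S(b)$ with $\bar S(b) := \prod_c S(b,c) = (\sum_{(C,\otimes)} S)(b)$. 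Aggregating out $C_1$ the same way gives $\bar R(a)\otimes\bar S(b)$ with $\bar R(a) := \prod_c R(a,c)$, i.e. exactly $\big((\sum_{(C,\otimes)} R)\Join(\sum_{(C,\otimes)} S)\big)(a,b)$.

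For the right-hand side the computation is shorter: $\big(R(A,C)\Join S(B,C)\big)(a,b,c) = R(a,c)\otimes S(b,c)$, and aggregating out $C$ gives $\prod_{c \in \mD^C}\big(R(a,c)\otimes S(b,c)\big) = \big(\prod_c R(a,c)\big)\otimes\big(\prod_c S(b,c)\big) = \bar R(a)\otimes\bar S(b)$ by the same regrouping. So both sides are the function $(a,b)\mapsto \bar R(a)\otimes\bar S(b)$ and the lemma follows. (If one prefers the set representation, the two support conditions also coincide because $\forall c\,(P(c)\wedge Q(c))$ is equivalent to $(\forall c\,P(c))\wedge(\forall c\,Q(c))$; the function view just makes this step invisible.)

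The one genuinely load-bearing step—and the only place the identity could fail—is the collapse $\prod_{c_2} R(a,c_1) = R(a,c_1)$, which uses idempotence of $\otimes$ because $R$ does not mention the aggregated variable and so reappears once per element of $\mD^C$. Without idempotence one would instead get a $|\mD^C|$-fold $\otimes$-product of $R(a,c_1)$ and the equality would break; this is precisely why the section restricts to idempotent products, the non-idempotent case being treated with the alternative aggregation definition of Appendix~\ref{subsec:non-idempotent-product}. A minor caveat to handle is that all the regroupings are over a finite index set, so one assumes $\mD^C \neq \emptyset$ (or invokes the usual convention that an empty $\otimes$-product is $1$ for the degenerate case).
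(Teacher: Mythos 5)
Your proof is correct and follows essentially the same route as the paper's: fix $(a,b)$, treat absent tuples as zero-annotated so both sides become full $\otimes$-products over $\mD^C$, and observe that the left side reduces to $\bigl(\prod_i n_i\bigr)\otimes\bigl(\prod_j m_j\bigr)$ only because idempotence collapses the repeated factors. Your version is somewhat more explicit than the paper's (which compresses the whole computation into two displayed products), but the key idea and the identified role of idempotence are identical.
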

\begin{proof}
Suppose the annotations for the $C$ values in $R$ are $n_1,n_2,\ldots,n_k$ and in $S$ are $m_1,m_2,\ldots,m_k$ (assume all annotations are present i.e. absent tuples have a zero-annotation). Then the RHS is $\otimes_{i=1}^{k}n_im_i$. The LHS will have $\otimes_{i=1}^{k}n_i \otimes_{j=1}^{k}m_j$. The RHS is equal to the LHS because of idempotence of $\otimes$. Note that if $\otimes$ wasn't idempotent, the LHS would have the $m_j$ terms multiplied $k$ times while the RHS has them once.
\end{proof}

\begin{lemma}\label{lemma:product-renaming-ajar}
For each database instance $I$, $Q_{\mH,\alpha}(I) = Q^P_{P,\alpha^P}(I)$.
\end{lemma}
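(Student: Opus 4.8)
The plan is to prove the identity by peeling the renamed copies of product attributes off one at a time. Fix the instance $I$. Observe that $(P,\alpha^P)$ can be reached from $(\mH,\alpha)$ by a sequence of \emph{elementary splits}: an elementary split picks a product attribute $a$ that currently occurs (possibly already renamed) in a set of relations $\mathcal{R}$, chooses a bipartition $\mathcal{R}=\mathcal{R}_1\sqcup\mathcal{R}_2$ with both parts non-empty, introduces two fresh copies $a'$ (assigned to the relations of $\mathcal{R}_1$) and $a''$ (assigned to those of $\mathcal{R}_2$), and replaces the single aggregation $(a,\otimes)$ in the ordering by the two consecutive aggregations $(a',\otimes),(a'',\otimes)$ at the same position. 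Writing $\mH=P^{(0)},P^{(1)},\dots,P^{(r)}=P$ (with orderings $\alpha=\alpha^{(0)},\dots,\alpha^{(r)}=\alpha^P$) for such a sequence, it suffices to show that each elementary split preserves the query output, i.e. $Q_{P^{(i-1)},\alpha^{(i-1)}}(I)=Q_{P^{(i)},\alpha^{(i)}}(I)$; the lemma then follows by a trivial induction on $i$.

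So fix one elementary split, splitting $a$ into $a'$ (for $\mathcal{R}_1$) and $a''$ (for $\mathcal{R}_2$). Both orderings have the form $\Sigma_{\mathrm{prefix}}\,\Sigma_{(a,\otimes)}\,\Sigma_{\mathrm{suffix}}$ and $\Sigma_{\mathrm{prefix}}\,\Sigma_{(a',\otimes)}\Sigma_{(a'',\otimes)}\,\Sigma_{\mathrm{suffix}}$, where $\mathrm{prefix}$ and $\mathrm{suffix}$ are identical strings of aggregations in the two cases and neither mentions $a$, $a'$, or $a''$. Since applying the identical aggregations of $\mathrm{prefix}$ to equal relations yields equal relations, it is enough to prove
\[ \Sigma_{(a,\otimes)}\,\Sigma_{\mathrm{suffix}}\, J \;=\; \Sigma_{(a',\otimes)}\Sigma_{(a'',\otimes)}\,\Sigma_{\mathrm{suffix}}\, J', \]
where $J$ and $J'$ are the joins of the relations in $P^{(i-1)}$ and $P^{(i)}$ respectively. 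The core computational fact is a mild generalization of Lemma~\ref{lemma:product-renaming-one}: letting $R'$ be the join of the relations in $\mathcal{R}_1$ and $S'$ the join of all remaining relations (those of $\mathcal{R}_2$ together with the relations not containing $a$), one has $\Sigma_{(a',\otimes)}\Sigma_{(a'',\otimes)}(R'\Join S')=\Sigma_{(a,\otimes)}(R\Join S)$, where $R,S$ are $R',S'$ with $a',a''$ renamed back to $a$. The proof of Lemma~\ref{lemma:product-renaming-one} carries over essentially verbatim --- working pointwise over the attributes other than the renamed copies, the split side accumulates extra powers of the per-value annotations that idempotence of $\otimes$ collapses; overlap between the attribute sets of $R'$ and $S'$ is harmless because those attributes are never aggregated inside this step. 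This already gives $\Sigma_{(a',\otimes)}\Sigma_{(a'',\otimes)} J' = \Sigma_{(a,\otimes)} J$, i.e. the identity we want \emph{before} the $\mathrm{suffix}$ aggregations.

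The remaining --- and main --- obstacle is to pass the $\mathrm{suffix}$ aggregations to the inside, i.e. to interchange $\Sigma_{(a',\otimes)}\Sigma_{(a'',\otimes)}$ with $\Sigma_{\mathrm{suffix}}$ on the right and $\Sigma_{(a,\otimes)}$ with $\Sigma_{\mathrm{suffix}}$ on the left. Product aggregation does not commute with $\oplus$-aggregation in general, so this step is not formal and is exactly where the structure of $P$ must be used: $P$ is the product partition induced by an AGHD compatible with $\alpha$ (the setting in which this lemma is invoked), and compatibility forces the bipartition $\mathcal{R}_1\sqcup\mathcal{R}_2$ to be \emph{clean} --- any attribute shared between a relation of $\mathcal{R}_1$ and a relation of $\mathcal{R}_2$ is either an output attribute or is aggregated outside $(a,\otimes)$ in $\alpha$ (this is because the copies of $a$ come from subtrees of the AGHD that are separated by a node not containing $a$, which by the running-intersection and compatibility conditions must lie above $TOP$ of that shared attribute). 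Consequently, no attribute aggregated in $\mathrm{suffix}$ is shared across $\mathcal{R}_1$ and $\mathcal{R}_2$, so $\Sigma_{\mathrm{suffix}}$ factors into independent pieces acting on the two sides, the group key of $\Sigma_{(a,\otimes)}\Sigma_{\mathrm{suffix}}$ decomposes accordingly, and the elementary-split identity above can be applied separately inside each fibre. Carrying this bookkeeping through, and reinstating $\mathrm{prefix}$, completes the induction step and hence the lemma. The only place where real work (rather than unwinding the definitions of join, aggregation, and product aggregation) is needed is this cleanness/independence argument.
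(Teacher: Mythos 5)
Your overall strategy---realize $(P,\alpha^P)$ by a sequence of binary splits of a product attribute and reduce each split to the two-copy idempotence identity---is exactly the paper's approach: its entire proof is the single sentence ``this lemma can be proved by repeated application of Lemma~\ref{lemma:product-renaming-one}.'' Where you go further is in noticing that the repeated application is not free. Lemma~\ref{lemma:product-renaming-one} applies to the joined relations \emph{before} the aggregations that sit inside $(a,\otimes)$ in the ordering, so one must interchange the product aggregations with $\Sigma_{\mathrm{suffix}}$, and product aggregation does not commute with $\oplus$-aggregation over a shared attribute. Your instinct that this is where the real work lies is correct, and in fact necessary: for an \emph{arbitrary} product partition the lemma is false. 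Take $R(A,C,D)$ and $S(B,C,D)$ over the semiring $(\{0,1\},\max,\wedge)$ with $\alpha=((C,\otimes),(D,\max))$, domains $\mD^C=\mD^D=\{1,2\}$, all tuples present with annotation $[c=d]$ in both relations, and split $C$ into $C_1$ for $R$ and $C_2$ for $S$. Then $Q_{\mH,\alpha}(I)$ gives annotation $\wedge_c\max_d[c=d]=1$, while $Q^P_{P,\alpha^P}(I)$ gives $\wedge_{c_1}\wedge_{c_2}\max_d[c_1=d][c_2=d]=0$, because the inner $\max_d$ no longer ties the two copies of $C$ to a common $D$-value. So the statement must be read with $P$ restricted to partitions induced by an AGHD compatible with the ordering (the only setting in which the paper invokes it), and your ``cleanness'' argument from running intersection plus compatibility is precisely the missing justification. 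In that sense your proof is more complete than the paper's own.

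One place your bookkeeping is still loose: for $\Sigma_{\mathrm{suffix}}$ to factor, every suffix attribute must live entirely on one side of the bipartition you actually join over, which in your write-up is $\mathcal{R}_1$ versus $\mathcal{R}_2$ \emph{together with all relations not containing $a$}. Your cleanness argument only excludes suffix attributes shared between $\mathcal{R}_1$ and $\mathcal{R}_2$. A suffix attribute $D$ shared between a relation of $\mathcal{R}_1$ and an $a$-free relation is not excluded: compatibility only forces $TOP(D)$ to lie at or below $TOP(a')$, which places that $a$-free relation's covering bag \emph{inside} the $a'$-subtree rather than outside it. The natural fix is to take the bipartition by subtree membership in the AGHD (relations covered below $TOP(a')$ versus the rest) rather than by whether a relation contains $a$; with that adjustment the factoring and the rest of your induction go through.
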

This lemma can be proved by repeated application of Lemma~\ref{lemma:product-renaming-one}. 

Now we can easily prove Theorem~\ref{thm:ajar-product-runtime}. Suppose we have a AGHD $D = (\mT, \chi, P)$ which is compatible with an ordering $\alpha$. Then the GHD $(\mT, \chi)$ over hypergraph $P$, is compatible with $\alpha^P$. Running AggroGHDJoin over this GHD, with ordering $\alpha^P$ correctly computes $Q^P_{P,\alpha^P}(I)$, due to theorem~\ref{thm:aggro-compatible}. And by Lemma~\ref{lemma:product-renaming-ajar}, this also equals $Q_{\mH,\alpha}(I)$, which is the output we want. Also, since the AGHD is compatible with $\alpha$, the GHD must satisfy the condition of Theorem~\ref{thm:aggro-complexity}, and hence AggroGHDJoin runs on it in time $\Ot(\IN^{fhw} + \OUT)$.

\section{Comparison with Related Work}
\label{sec:app-related}
\subsection{Section \ref{sec:prelim}}
\label{subsec:faq-comparison-proof}

In Section \ref{sec:prelim}, we define a simple approach to solving $\ajar$ queries, and we claim in Theorem~\ref{thm:ajar-vs-faq-runtime} that our runtime guarantee of $\Ot(IN^{w^*} + \OUT) \le \Ot(IN^{faqw})$. We note that the $faqw$ exponent is actually the optimum value of $faqw(\sigma)$ over the equivalent orderings $\sigma$ they consider (we discuss the space of orderings they consider in the next subsection). Our approach will recognize $\sigma$ as being equivalent, and will search for the best compatible GHD for $\sigma$. We will show that there exists a compatible AGHD $(\mT, \chi, P)$ for every equivalent ordering $\sigma$ such that $fhw(\mT, \mH) = faqw(\sigma)$ (as Example~\ref{ex:faqoutput} shows, the compatible AGHD $\mT$ we construct may not be the optimal compatible GHD).

We start by briefly summarizing FAQ's algorithm, with the pseudo-code (written in the notation of this paper) given in Algorithm~\ref{algo:InsideOut}. Let $\sigma$ be the ordering used for aggregation. Let $n$ denote the total number of attributes $|\mV_\mH|$ and $f$ denote the number of output attributes (thus $|\sigma| = n - f$). For notational convenience, we will be using $\sigma[i]$ to denote both the attribute and the operator that make up the $i^{th}$ operator-attribute pair in the ordering. 

\begin{algorithm}[H]
\caption{InsideOut($\mH = (\mV_\mH, \mE_\mH)$, $\sigma$, $\{R_F | F \in \mE_\mH\}$)}
\label{algo:InsideOut}
\textbf{Input:} Hypergraph $\mH = (\mV_\mH, \mE_\mH)$, Aggregation ordering $\sigma$, Relations $R_F$ for each $F \in \mE_\mH$
\begin{algorithmic}
\State $E_n \gets \{R_F \mid F \in \mE_\mH\}$
\For{$(k = n; k > f; k--)$}
\State $\delta(k) \gets \{R_F \in E_k \mid \sigma[k-f] \in F\}$
\If{$\sigma[k-f]$ is not a product aggregation}
\State $U_k \gets \Join_{R \in \delta(k)} R$
\State $E_{k-1} \gets (E_k \setminus \delta(k)) \cup \{\sum_{\sigma[k-f]} U_k\}$
\Else 
\State $E_{k-1} \gets (E_k \setminus \delta(k)) \cup \{\sum_{\sigma[k-f]} R \mid R \in \delta(k) \}$ 
\EndIf
\EndFor
\State \Return $\Join_{R \in E_f} R$
\end{algorithmic}
\end{algorithm}

FAQ relies on a worst-case optimal algorithm to compute each of the joins, implying that in the $\Ot(IN^{faqw})$ runtime guarantee, faqw is defined as the maximum AGM bound placed on each of the computed joins. Define $p^*_H : 2^{\mV_\mH} \to \mathcal{R}$ to be a function that maps a subset of the attributes to the AGM bound on the subset (i.e. the optimal value of the canonical linear program). Then $faqw = \max(\max_k p^*_H(U_k), p^*_H(V(-\sigma)))$~\cite{FAQ}.

We will build up the compatible AGHD $(\mT, \chi, P)$ in rounds corresponding to each of the $k$ values of $InsideOut$. We first describe how to construct $(\mT, \chi)$, and later describe how to obtain $P$. At the start of round corresponding to a particular $k$, we will have a forest of AGHDs, each of which will have a root mapped (by $\chi$) to the attribute sets of $E_k$, and at the end of each round, the forest's roots will be mapped to the relations of $E_{k-1}$.

For an attribute set $F$, let $t(F)$ represent the node such that $\chi(t(F)) = F$. We start by creating the $|\mE_\mH|$ nodes $\{t(F) | F \in E_n\}$, which are simply nodes mapped to the input relations. Then for each $k$ from $n$ to $f+1$, let $T$ represent the set of nodes $\{t(F) | F \in \delta(n)\}$; these are the nodes that will be processed (i.e. the nodes for whom we will create parents). If $\sigma[k-f]$ is not a product aggregation, we create a node $t(U_k)$ and set $parent(t) = t(U_k)$ for all $t \in T$. We then create a node $t(U_k \backslash \{\sigma[k-f]\})$ and set it to be $parent(t(U_k))$. Note that this process has transformed the set of the forest's roots by removing $T$ and adding $t(U_k \backslash \{\sigma[k-f]\})$, mirroring the transformation between $E_k$ and $E_{k-1}$. If $\sigma[k-f]$ is a product aggregation, then for each $F \in \delta(n)$, we create a node $t(F \backslash \{\sigma[k-f]\})$ and set it to be $parent(t(F))$; in this case as well the set of the forest's roots match $E_{k-1}$.

At the end of this process, we will have a forest of AGHDs whose roots map to the relations in $E_f$. To conclude our construction, we simply construct the node $t(V(-\sigma))$ and set it to be $parent(t(F))$ for all $F \in E_f$. If there are no product aggregations, then $(\mT, \chi)$ forms a GHD.

$(\mT, \chi)$ satisfies the running intersection property for all non-product attributes, but a product attribute $a$ can be present in multiple disconnected parts of $\mT$. We now describe a product partition $P$ such that $(\mT,\chi,P)$ forms an AGHD for the ordering $\sigma$. Let $P_a$ denote the number of distinct connected components of $\mT$ in which $a$ is present. Then we create $P_a$ copies of $a$ ($a_1$, $a_2$,$\ldots$,$a_{|P_a|}$), and assign a copy to each component in some order. For each $F \in \mE_\mH$ that contains $a$, if the component that $t(F)$ belongs to is assigned $a_i$, then $P$ assigns $a_i$ to $F$. Then $(\mT, \chi, P)$ is an AGHD for $\sigma$.

The AGHD $(\mT, \chi, P)$ as described is trivially compatible with $\sigma$ since we construct $parent(TOP_\mT(\sigma[k-f]))$ explicitly in round $k$; this ensures that $TOP_\mT(\sigma[i])$ cannot be an ancestor of $TOP_\mT(\sigma[j])$ if $i > j$. 

\begin{lemma}
Define $p^*_\mH$ to be a function that maps a set of attributes to the AGM bound on the set (the optimal value of the canonical linear program). The AGHD $(\mT, \chi, P)$ constructed as described satisfies $$fhw(\mT, \mH) = \max(p^*_\mH(V(-\alpha)), \max_{k}{p^*_\mH(U_k)}) = faqw(\sigma).$$
\end{lemma}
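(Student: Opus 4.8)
The plan is to show that the AGHD $(\mT, \chi, P)$ constructed in the rounds mirroring InsideOut has fractional hypertree width exactly equal to the maximum AGM bound over the relevant attribute sets. Recall that $fhw(\mT, \mH) = \max_{t \in \mV_\mT} \rho^*_t$, where $\rho^*_t = p^*_\mH(\chi(t))$ since we attach every edge to every bag. So the lemma reduces to computing $\max_{t \in \mV_\mT} p^*_\mH(\chi(t))$ and showing it equals $\max(p^*_\mH(V(-\alpha)), \max_k p^*_\mH(U_k))$. The second equality, $\max(p^*_\mH(V(-\alpha)), \max_k p^*_\mH(U_k)) = faqw(\sigma)$, is just the definition of $faqw(\sigma)$ quoted from FAQ~\cite{FAQ}, so there is nothing to prove there; the entire content is the first equality.

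First I would enumerate the bags of $\mT$ by the round in which they were created. The bags are: (i) the input-relation bags $\{F : F \in E_n\}$ created at the start; (ii) for each non-product step $k$, the bag $U_k$ and the bag $U_k \setminus \{\sigma[k-f]\}$; (iii) for each product step $k$ and each $F \in \delta(k)$, the bag $F \setminus \{\sigma[k-f]\}$; and (iv) the final bag $V(-\sigma) = V(-\alpha)$. I would then argue the $\max$ of $p^*_\mH$ over all these bags equals $\max(p^*_\mH(V(-\alpha)), \max_k p^*_\mH(U_k))$ by two inclusions. For the ``$\geq$'' direction: $V(-\alpha)$ is literally one of the bags, and each $U_k$ (for non-product $k$) is one of the bags, so the left side is at least the right side. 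For the ``$\leq$'' direction I need to bound $p^*_\mH$ on each of the remaining bag types by something already on the right: each input bag $F \in E_n$ is a single hyperedge, so $p^*_\mH(F) = 1 \le $ any of these widths (all widths are $\ge 1$ whenever there is at least one attribute, using node-monotonicity of $p^*_\mH$ and the fact that $p^*_\mH$ of a single edge is $1$); each bag of the form $U_k \setminus \{\sigma[k-f]\}$ is a subset of $U_k$, so $p^*_\mH(U_k \setminus \{\sigma[k-f]\}) \le p^*_\mH(U_k)$ by node-monotonicity of the AGM bound (Proposition~\ref{prop:node-monotone-widths} covers this, or one argues directly that dropping an attribute only removes a constraint from the covering LP and so cannot increase the optimum); and each product-step bag $F \setminus \{\sigma[k-f]\} \subseteq F$ with $F$ a single hyperedge, so its AGM bound is at most $1$. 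Combining these gives the desired equality.

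The cleanest way to organize this is to observe that $p^*_\mH$ is node-monotone (it is the Fractional-Hypertree-Width-type LP objective, monotone because adding an attribute to a bag only adds a covering constraint), so for any bag $B$ that is contained in $U_k$ for some non-product $k$, or contained in a single edge, or equal to $V(-\alpha)$, we have $p^*_\mH(B)$ bounded by $\max(p^*_\mH(V(-\alpha)), \max_k p^*_\mH(U_k), 1)$; and since $p^*_\mH$ on any nonempty set containing at least one attribute that is covered by an edge is at least $1$, the stray ``$1$'' is absorbed. I would then just check that every bag of $\mT$ falls into one of these three categories, reading off the construction round by round.

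The main obstacle is really bookkeeping rather than a deep difficulty: I must make sure that every bag the construction creates is accounted for (in particular the ``intermediate'' bags $U_k \setminus \{\sigma[k-f]\}$ and the product-aggregation bags $F \setminus \{\sigma[k-f]\}$, which are easy to forget), and I must be careful that $p^*_\mH$ here is taken with respect to the \emph{original} hypergraph $\mH$ rather than $P$ — the construction of $P$ only renames product attributes into distinct copies, and since $p^*_\mH$ is evaluated on $\mH$ (as stated in the lemma) and product bags have AGM bound $1$ anyway, the renaming does not affect the width computation. A secondary subtlety is confirming that $p^*_\mH(U_k)$ as used here coincides with the $U_k$ appearing in FAQ's definition of $faqw(\sigma)$ — but this is immediate since we defined the bag $U_k$ to be exactly FAQ's $U_k = \Join_{R \in \delta(k)} R$, viewed as its attribute set. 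Once the bag inventory is pinned down, the inequalities are one line each.
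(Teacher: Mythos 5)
Your proposal is correct and follows essentially the same route as the paper's proof: both classify every bag of the constructed tree as either $V(-\alpha)$, some $U_k$, an input relation (AGM bound $1$), or a subset of another bag (handled by monotonicity of the AGM bound), and then read off the second equality as FAQ's definition of $faqw$. Your version is simply a more explicit, round-by-round bookkeeping of the same argument.
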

\begin{proof}
The nodes in our tree that do not map to $V(-\alpha)$ or the $U_k$ either map to an input relation or to a relation created by aggregating an attribute from a single child node. In the former case, $p^*_\mH$ would evaluate to $1$, so we can ignore them in our maximum. In the latter case, the attributes are a strict subset of its child's attributes, implying we can ignore them too. As such, the fractional hypertree width is simply the maximum fractional cover over $V(-\alpha)$ and the $U_k$. This shows the first part of the equality.
 
The second part of the equality is the definition of $faqw$~\cite{FAQ}.
\end{proof}

Theorem~\ref{thm:ajar-vs-faq-runtime}, as well as its analogue for product aggregations, follow as a simple corollary. We now show an example where the runtime of InsideOut is much worse than the runtime of our Algorithm, primarily due to the fact that it is not output-sensitive.

\begin{example}\label{ex:faqoutput}
Let $n$ be an even number, and consider an \ajar query $Q_{\mH, \alpha}$ where $\mH = (\{A_i \mid 1 \leq i \leq n\}, \{\{A_i, A_{i+1}\} \mid 1 \leq i \leq n-1 \} \cup \{\{A_n,A_1\}\})$, and $\alpha$ is empty (i.e. the query is just a join). Also let each attribute take values $1, 2, 3,\ldots 2\times\lfloor \sqrt{N} \rfloor$. Suppose each relation $\{A_i,A_{i+1}\}$ for $1 \leq i \leq n-1$ connects values of the same parity, while relation $\{A_n,A_1\}$ connectes values of opposite parities. Thus each relation has size $N$, and $\IN = O(N)$ ($n$ is a constant), and the join output is empty. There is a GHD with bags $\{A_1,A_2,A_3\}, \{A_1,A_3,A_4\}, \ldots \{A_1, A_{n-1},A_n\}$ that is compatible with the empty ordering. The fhw of this GHD is $2$, so we have $w^* = 2$. Thus the runtime of our algorithm will be $\Ot(\IN^2)$. InsideOut will compute an intermediate output consisting of the join of $n-1$ of the relations, which has size $N^{(n-1)/2}$ , so InsideOut's runtime will be at least $\Ot(\IN^{(n-1)/2})$.
\end{example}

FAQ does discuss, at a very high-level and without proofs, changes to InsideOut that will allow their runtime to be output-sensitive~\cite[Section 10.2]{FAQ}. Their most general and useful change involves building a GHD for the output variables and running a message passing algorithm between the bags, which exactly describes GHDJoin. Implementing this change would make InsideOut completely equivalent to AggroGHDJoin. We note that the FAQ paper frames these changes as decisions in how to \emph{represent} the output, whereas we present the optimization in an algorithmic context, independent of any other storage optimizations.

\subsection{Section \ref{sec:equivalent-orderings}}

In Section \ref{sec:equivalent-orderings}, we define a partial order $<_{\mH, \alpha}$ that exactly characterizes the constraints an aggregation ordering must satisfy to be equivalent to a given ordering $\alpha$. Our partial ordering is complete, which is a result that FAQ cannot match. Much like our approach, FAQ actually defines their own partial ordering, which we denote $<_{FAQ}$, and their work only considers orderings that are linear extensions of $<_{FAQ}$. However, we will show an example where $<_{FAQ}$ has unnecessary constraints:

\begin{example}\label{example:faq-incompleteness}
Consider the $\ajar$ query given by \main{\\}$\sum_A \max_B \sum_C R(A,B) S(A,C)$. By our characterization, $A <_{\mH, \alpha} B$ is the only constraint, giving rise to $3$ different valid orderings. The FAQ characterization, however, has two constraints: $A <_{FAQ} B$ and $C <_{FAQ} B$, which only allows for $2$ different valid orderings. Note that FAQ constraints \emph{preclude the original ordering} ABC.
\end{example}

\subsection{Section \ref{sec:decomposing}}
In Section \ref{sec:decomposing}, we define a decomposition that relates the width of a valid GHD to the widths of a series of ordinary GHDs. Variable orderings (as used by FAQ) are not as readily suited as GHDs are for decompositions. FAQ does derive their own version of a decomposition, but the difficulties that arise when using variable orderings are exemplified in the way FAQ switches between GHDs and variable orderings in their proofs~\cite{FAQ}. In addition, the FAQ decomposition is demonstrably weaker than ours; their decomposition incurs some overhead costs when combining the sub-orderings to build the overall ordering, precluding a result like Corollary~\ref{cor:optimal-construct} that provides the groundwork for the variety of extensions we provide. To exemplify the gap in the two decompositions, we inspect a specific $\ajar$ query:

\begin{example}\label{example:faq-decomposition}
Consider the query \main{\\} $\sum_B \sum_C \sum_D R(A,B) S(B,C) T(C,D), U(D,A)$. Suppose $|A| = \sqrt{N}$, $|B| = 2 = |D|$, $|C| = N$, and all of the pairwise relations are constructed as complete cross products of the attributes' values. Our decomposition will result in the chain GHD $A-ABD-BCD$, while the FAQ decomposition will result in the GHD $A-ABC-ACD$. The runtimes of both FAQ and GHDJoin using the former GHD is $\Ot(N)$, whereas the runtimes using the latter GHD are $\Ot(N^{3/2})$. As such, the FAQ decomposition will perform asymptotically worse than our decomposition.

More generally, consider a query $Q_n$ with relations $R_i(A_i,B)$, $S_i(B,C_i)$, $T_i(C_i,D)$, $U_i(D,A_i)$ for $1 \leq i \leq n$. Like before, all $|A_i|$'s are $\sqrt{N}$, all $|C_i|$'s are $N$, and $|B| = |D| = 2$. And the aggregation ordering only has the $+$ operator, on $B$, $D$ and all $C_i$'s. Our decomposition gives the chain $A_1\ldots A_n$ $-$ $A_1\ldots A_nBD$ $-$ $C_1\ldots C_n BD$. This results in a runtime of $\Ot(N^n)$. FAQ's decomposition gives $A_1\ldots A_nBC_1\ldots C_n$ $-$ $A_1\ldots A_n D C_1\ldots C_n$. This decomposition, and its corresponding ordering, give a runtime of $\Ot(N^{3n/2})$. Thus the difference between runtime exponents caused by FAQ's decomposition and our decomposition can be arbitrarily high.
\end{example}

\subsection{Section~\ref{subsec:GYM}}
\begin{example}\label{example:Log-GTA-invalidation}
Suppose we a \ajar query $Q_{\mH, \alpha}$ with 
$$\mH = (\{A,B,C,D,E,F\}, \{\{A,B\}, \{B,C\}, \{B,D,E\}, \{D,F\}\})$$
and $\alpha = ((D, \sum), (E, \sum), (F, \sum))$. We start with the width-$1$ valid GHD $(\mT, \chi)$ with $\mV-\mT = \{v_1,v_2,v_3,v_4\}$  and $$\mE_\mT = \{v_1,v_2\}, \{v_2,v_3\}, \{v_3,v_4\}$$ such that $v_1$ is the root, and $\chi(v_1) = \{A,B\}$, $\chi(v_2) = \{B,C\}$, $\chi(v_3) = \{B,D,E\}$, $\chi(v_4) = \{D,F\}$. 

Applying Log-GTA gives us a shorter GHD $(\mT', \chi')$ with $\mV_\mT' = \{u, v_1, v_2, v_3, v_4\}$, $$\mE_\mT' = \{\{v_1,u\}, \{u,v_2\}, \{u,v_3\}, \{u,v_4\} \}$$ with $v_1$ as the root. $\chi'(u) = \{B,D\}$ and $\chi'(v_i) = \chi(v_i)$ for all $i$. Now $TOP(D) = u$ which is an ancestor of $TOP(C) = v_2$, despite $C$ being an output attribute and $D$ not being an output attribute. This means GHD $(\mT', \chi')$ is invalid, showing that applying Log-GTA to a valid GHD may make it invalid. 
\end{example}

As the above example shows, we cannot directly apply Log-GTA to a valid GHD to get a shorter valid GHD. 

\section{Characterizing Equivalent Orderings: Proofs}\label{sec:app-equiv}

We now formally present our partial order $<_{\mH,\alpha}$ that characterizes the interaction of the two forms of commuting. As we said in Section~\ref{sec:equivalent-orderings}, we have two relations $\pre$ and $\dnc$, that are mutually recursive. We initialize the constraints to a base case and iteratively update them till we reach a fixed point. We now formalize this. We use binary operator $<^i_{\mH,\alpha}$ to denote the constraint $\pre$ after $i$ iterations, and operator $\sim^i_{\mH,\alpha}$ to denote $\dnc$ after $i$ iterations, with one difference; both operators behave slightly differently for output attributes. To readily incorporate output attributes into the constraints, we define an augmented aggregation ordering below:

\begin{definition}
For any aggregation ordering $\alpha$, let $F$ be the set of output variables. Then define $\alpha^O = \alpha^O_1, \alpha^O_2, \dots, \alpha^O_n$ to be a sequence such that $\alpha^O_i = (F_i, \texttt{NULL})$ for $1 \le i \le |F|$ and $\alpha^O_i = \alpha_{i+|F|}$ for $|F|+1 \le i \le n$.
\end{definition}

Note that $n$ is defined to be the number of attributes in the query. Now we can formally define $<^i_{\mH,\alpha}$ and $\sim^i_{\mH,\alpha}$. Both of these binary operators operator over attribute-operator pairs, but since each attribute occurs at most once in an ordering, we can equivalently think of them as operating over attributes. We use these two interchangeably e.g. $A <_{\mH, \alpha} B$ denotes the same thing as $(A, \oplus) <_{\mH, \alpha} (B, \oplus')$.

\begin{definition}
For a given query $Q_{\mH, \alpha}$ with $\mH = (\mV, \mE)$, we define relations $\sim_{\mH, \alpha}^i$ and partial orders $<_{\mH, \alpha}^i$ over attribute-operator pairs in $\alpha^O$. For any $A,B \in \mV$, suppose $(A, \oplus), (B, \oplus') \in \alpha^O$. Then, for $i = 0$, $(A, \oplus) \sim_{\mH, \alpha}^0 (B, \oplus')$ if and only if one of the following is true:
\begin{itemize}
\item $\oplus \neq \oplus'$ and $\exists E \in \mE: A,B \in E.$ \hfill $(0.1)$
\item $\oplus \neq \oplus'$ and either $\oplus = \texttt{NULL}$ or $\oplus' = \texttt{NULL}$ \hfill $(0.2)$
\end{itemize}

For $i > 0$, $(A, \oplus) \sim^i_{\mH, \alpha} (B, \oplus')$ if and only if $(A, \oplus) \not\sim^j_{\mH, \alpha} (B, \oplus')$ for all $j < i$ and one of the following is true:
\begin{itemize}
\item $\oplus \neq \oplus'$ and $\exists E \in \mE, (C, \oplus'') \in \alpha^O: B,C \in E, (A, \oplus) <_{\mH, \alpha}^{i-1} (C, \oplus'')$ \hfill $(i.1)$
\item $\exists (C, \oplus'') \in \alpha^O$ and $j,k < i: (A, \oplus) <_{\mH, \alpha}^j (C, \oplus'') <_{\mH, \alpha}^k (B, \oplus')$ \hfill $(i.2)$
\end{itemize}
For any $i \ge 0$, $(A, \oplus) <_{\mH, \alpha}^i (B, \oplus')$ if and only if $(A, \oplus) \sim_{\mH, \alpha}^i (B, \oplus')$ and $(A, \oplus)$ precedes $(B, \oplus')$ in $\alpha^O$.

Finally, $(A, \oplus) \sim_{\mH, \alpha} (B, \oplus')$ if and only if $(A, \oplus) \sim_{\mH, \alpha}^i (B, \oplus')$ for some $i \ge 0$. Similarly, $(A, \oplus) <_{\mH, \alpha} (B, \oplus')$ if and only if $(A, \oplus) <_{\mH, \alpha}^i (B, \oplus')$ for some $i \ge 0$.
\end{definition}

The core of our definition is the four labeled conditions for $\sim$. The condition $0.1$ represents the simplest structure that violates both conditions of Theorem~\ref{commute}; it represents our base case. Condition $0.2$ simply ensures the output attributes precede non-output attributes. Our condition $i.1$ extends the structure from $0.1$ beyond single relations. If $A < C$ and $C$ appears in a relation with $B$, we can guarantee that $A$ and $B$ cannot be separated in the way the second condition of Theorem~\ref{commute} requires, and if $\oplus \neq \oplus'$, the first condition is violated as well. Condition $i.2$ simply ensures that transitivity interacts properly with condition $i.1$. 

We now prove the two lemmas stated in Section~\ref{sec:equivalent-orderings}, followed by proving soundness and completeness of $<_{\mH,\alpha}$.
\begin{lemma}[Copy of Lemma~\ref{Pathing}]
Suppose we are given a hypergraph $\mH = (\mV, \mE)$ and an aggregation ordering $\alpha$. Fix two arbitrary attributes $A,B \in \mV$ such that $(A, \oplus), (B, \oplus') \in \alpha^O$ for differing operators $\oplus \neq \oplus'$. Then, for any path $P$ in $\mH$ between $A$ and $B$, there must exist some attribute in the path $C \in P$ such that $C <_{\mH, \alpha} A$ or $C <_{\mH, \alpha} B$. 
\end{lemma}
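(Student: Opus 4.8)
The plan is to induct on the length of the path $P$, using the iterative construction of $<_{\mH,\alpha}$ and the labeled conditions $(i.1)$, $(i.2)$ for $\sim_{\mH,\alpha}$. For the base case, $P$ has length $2$, i.e. $P = A, B$ with $A$ and $B$ adjacent in some hyperedge $E$. Since $\oplus \neq \oplus'$, condition $(0.1)$ gives $A \sim^0_{\mH,\alpha} B$, hence $A \sim_{\mH,\alpha} B$; whichever of $A, B$ precedes the other in $\alpha^O$ is then $<_{\mH,\alpha}$-below the other, so taking $C$ to be that earlier attribute works (and $C \in \{A,B\} \subseteq P$). Note this is where the conclusion ``$C <_{\mH,\alpha} A$ or $C <_{\mH,\alpha} B$'' is naturally weaker than ``$C$ strictly between'' — $C$ may equal $A$ or $B$.

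For the inductive step, suppose $P = A = v_0, v_1, \ldots, v_\ell = B$ with $\ell \geq 2$. Look at the first interior vertex $v_1$, which shares a hyperedge with $A$. The plan is to do a case analysis on the operator of $v_1$ (call it $\oplus_1$, using $\alpha^O$ so output attributes carry $\texttt{NULL}$). Case (a): $\oplus_1 \neq \oplus$. Then by condition $(0.1)$ applied to the edge containing $A$ and $v_1$, we get $A \sim_{\mH,\alpha} v_1$; again whichever is earlier in $\alpha^O$ is $<_{\mH,\alpha}$ below the other, so we get an attribute $C \in \{A, v_1\} \subseteq P$ with $C <_{\mH,\alpha} A$ or $C <_{\mH,\alpha} v_1$. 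If $C <_{\mH,\alpha} A$ we are done directly; if $C = A <_{\mH,\alpha} v_1$ then $A$ itself is a predecessor of an interior vertex, and I need to propagate this down the path — this is the delicate part, handled below. Case (b): $\oplus_1 = \oplus$ (in particular $\oplus_1 \neq \oplus'$). Then apply the inductive hypothesis to the subpath $P' = v_1, \ldots, v_\ell = B$ between $v_1$ and $B$, which has strictly smaller length and whose endpoints carry differing operators $\oplus_1 \neq \oplus'$. We obtain $C' \in P'$ with $C' <_{\mH,\alpha} v_1$ or $C' <_{\mH,\alpha} B$. In the latter sub-case we are immediately done ($C' \in P' \subseteq P$). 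In the former sub-case, $C' <_{\mH,\alpha} v_1$, and since $v_1$ shares an edge with $A$ and $\oplus_1 \neq \oplus'$... wait — to conclude I want to invoke Lemma~\ref{lemma:DNCextension}/condition $(i.1)$, which requires the edge to contain $B$-analogue and a predecessor of the $A$-analogue; I should instead argue: $C' <_{\mH,\alpha} v_1$ and $v_1, A$ lie in a common edge $E$ with $\oplus_1 = \oplus \neq$ nothing relevant — here I actually use that $C' <_{\mH,\alpha} v_1$ together with transitivity is not automatic, so the cleaner route is to observe $C' <_{\mH,\alpha} v_1$ means in particular $C' \in P$ and if $C' \neq A$ and $C' \neq B$ we still need $C' <_{\mH,\alpha} A$ or $<_{\mH,\alpha} B$. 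The resolution: use condition $(i.1)$ with roles $A \leftarrow C'$ (the predecessor), $C \leftarrow v_1$, $B \leftarrow A$, provided $\oplus_{C'} \neq \oplus$; if instead $\oplus_{C'} = \oplus$ then since $C' <_{\mH,\alpha} v_1$ and we can recurse on the shorter subpath from $C'$... This suggests the real induction should be structured on path length with a strengthened hypothesis.

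Accordingly, the main obstacle — and the step I would spend the most care on — is getting the induction hypothesis strong enough to ``chain'' condition $(i.1)$ along the path: each time we find a predecessor $C$ of some interior vertex $v_i$, we want to slide it toward an endpoint. I would strengthen the statement to: \emph{for any path $P$ from $X$ to $Y$ and any attribute $Z$ with $(Z,\oplus_Z), (X,\oplus_X), (Y,\oplus_Y) \in \alpha^O$, if $\oplus_X \neq \oplus_Y$ then there is $C \in P$ with $C <_{\mH,\alpha} X$ or $C <_{\mH,\alpha} Y$}, and prove it by strong induction on $|P|$, at each step peeling off the first edge and invoking either $(0.1)$ or the inductive hypothesis on the shorter subpath, then using $(i.1)$ (which is exactly Lemma~\ref{lemma:DNCextension}: ``$\pre(A,C)$ and $B,C$ in an edge and $\oplus_A \neq \oplus_B$ imply $\dnc(A,B)$'') together with $(i.2)$/transitivity to push the witness from the interior vertex $v_1$ onto $A$. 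The bookkeeping on which operator equals which, and ensuring the edge/predecessor hypotheses of $(i.1)$ are met at each chaining step, is the crux; everything else is routine unwinding of the definitions of $\sim^i_{\mH,\alpha}$ and $<^i_{\mH,\alpha}$.
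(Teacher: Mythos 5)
There is a genuine gap in the inductive step, and you have in fact located it yourself without closing it. Your plan peels off the first edge of the path and recurses on the subpath from $v_1$ to $B$; the recursion then returns a witness $C'$ with $C' <_{\mH,\alpha} v_1$ or $C' <_{\mH,\alpha} B$. The second alternative is fine, but in the first alternative $C'$ precedes an \emph{interior} vertex, and nothing in the rules $(i.1)$, $(i.2)$ lets you convert ``$C' <_{\mH,\alpha} v_1$'' into ``$C' <_{\mH,\alpha} A$ or $C' <_{\mH,\alpha} B$'': rule $(i.1)$ can at best give $C' \sim_{\mH,\alpha} A$ (when the operators differ), which resolves to either $C' <_{\mH,\alpha} A$ (good) or $A <_{\mH,\alpha} C'$ (useless for the conclusion), and when $C'$ and $A$ share an operator even that move is unavailable. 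The ``strengthened hypothesis'' you propose to fix this is, as written, the original statement with an unused extra variable $Z$, so it buys nothing; the chaining step you defer to ``bookkeeping'' is precisely the part of the argument that does not follow from what you have set up.

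The paper's proof closes this hole with a different choice of pivot: instead of the first interior vertex $v_1$, it takes $C$ to be the interior vertex of $P$ that occurs \emph{earliest in $\alpha^O$}. This minimality has two consequences. First, applying the outer inductive hypothesis to the $A$--$C$ subpath yields a witness $D$ with $D <_{\mH,\alpha} A$ (done) or $D <_{\mH,\alpha} C$; in the latter case $D$ cannot be interior (that would contradict the minimality of $C$), so $D = A$ and hence $A <_{\mH,\alpha} C$. Second, an \emph{inner} induction along the $C$--$B$ subpath $X_0 = C, X_1, \dots, X_k = B$ establishes $A <_{\mH,\alpha} X_i$ for every $i < k$, using $(i.1)$ when $X_i$'s operator differs from $A$'s and using the outer hypothesis plus $(i.2)$ (transitivity through some $X_j$) when it differs from $C$'s instead --- one of the two must hold since $\oplus \neq \oplus''$ was arranged without loss of generality. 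The final edge $\{X_{k-1}, B\}$ then gives $A \sim_{\mH,\alpha} B$ by $(i.1)$, hence $A <_{\mH,\alpha} B$ or $B <_{\mH,\alpha} A$. Your base case and your general framing (induction on path length driven by $(0.1)$, $(i.1)$, $(i.2)$) match the paper, but without the earliest-in-$\alpha^O$ pivot and the two-level induction the argument does not go through.
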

\begin{proof}
We use induction on the length of path $P$. \\
{\em Base Case:} Let $|P| = 2$. This implies that there exists some edge $E \in \mE$ such that $A,B \in E$. Thus $A \sim_{\mH, \alpha}^0 B$. Then, by definition, either $A <_{\mH, \alpha} B$ or $B <_{\mH, \alpha} A$ depending on which attribute appears first in $\alpha$.

{\em Induction:} Suppose $|P| = N > 2$ and assume the lemma is true for paths of length $< N$. We call this assumption the {\em outer inductive hypothesis}, for reasons that will become apparent later. Path $P$ can be rewritten as $P = AP'B$ where $P'$ is a path of length at least $1$. Let $C$ be the node in $P'$ that appears earliest in $\alpha^O$; this implies that there exists no attribute in our path $D \in P'$ such that $D <_{\mH, \alpha} C$. Define an operator $\oplus''$ such that $(C, \oplus'') \in \alpha^O$. Since $\oplus \neq \oplus'$, either $\oplus \neq \oplus''$ or $\oplus' \neq \oplus''$. Without loss of generality, assume that $\oplus \neq \oplus''$.

Consider the subpath of $P$ from $A$ to $C$. It is shorter than $N$ and connects two attributes with different operators. We apply our inductive hypothesis to get that there exists some $D \in P$ such that either $D <_{\mH, \alpha} A$ or $D <_{\mH, \alpha} C$. In the first case, we have found an attribute that satisfies our conditions and we are done. In the second case, we know that $D \notin P'$ by our definition of $C$. Thus $D$ must be $A$; we have that $A <_{\mH, \alpha} C$.

Consider the subpath of $P$ from $C$ to $B$; let $X_i$ denote the $i^{th}$ node in this path for $0 \le i \le k$, where $X_0 = C$ and $X_k = B$. We claim that for all $i < k$, $A <_{\mH, \alpha} X_i$. We argue this inductively; for our base case, we are given that $A <_{\mH, \alpha} C = X_0$. Now let $i \ge 1$, and assume $A <_{\mH, \alpha} X_{j}$ for $j < i$. Call this the {\em inner inductive hypothesis}.

Note that we have $A<_{\mH, \alpha} C$ and that $C$ must precede $X_i$ by definition. Thus $A$ precedes $X_i$ in $\alpha^O$. All that remains is showing that $A \sim_{\mH, \alpha} X_i$. Define $\oplus^i$ such that $(X_i, \oplus^i) \in \alpha^O$. Since we assumed earlier that $\oplus \neq \oplus''$, we know that either $\oplus^i \neq \oplus$ or $\oplus^i \neq \oplus''$. 
\begin{itemize}
\item $\oplus^i \neq \oplus$ \\
By our (inner) inductive hypothesis, we know that $A <_{\mH, \alpha} X_{i-1}$. We also know that there must exist some edge $E \in \mE$ such that $X_{i-1}, X_i \in E$. Thus by condition $i.1$, $A \sim_{\mH, \alpha} X_i$.
\item $\oplus^i \neq \oplus''$ \\
By our (outer) inductive hypothesis, we know that for some $0 \le j \le i$, $X_j <_{\mH, \alpha} C$ or $X_j <_{\mH, \alpha} X_i$. By our definition of $C$, the first case is impossible. And by our (inner) inductive hypothesis, we have that $A <_{\mH, \alpha} X_j$. We thus have that $A <_{\mH, \alpha} X_j <_{\mH, \alpha} X_i$, which implies that $A \sim_{\mH, \alpha} X_i$ by condition $i.2$.
\end{itemize}
This gives us that $A <_{\mH, \alpha} X_{k-1}$. Since there exists an edge $E \in \mE$ such that $X_{k-1}, B \in E$, condition $i.1$ tells us that $A \sim_{\mH, \alpha} B$. As before, this implies that either $A <_{\mH, \alpha} B$ or $B <_{\mH, \alpha} A$. 
\end{proof}

\begin{lemma}[Copy of Lemma~\ref{Pathing2}]
Given a hypergraph $\mH = (\mV, \mE)$ and an aggregation ordering $\alpha$, suppose we have two attributes $A, B \in V(\alpha)$ such that $A <_{\mH, \alpha} B$. Then there must exist a path $P$ from $A$ to $B$ such that for every $C \in P, C \neq A$ we have $A <_{\mH, \alpha} C$.
\end{lemma}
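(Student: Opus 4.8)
\textbf{Proof plan for Lemma~\ref{Pathing2}.}
The plan is to induct on the smallest index $i$ for which $A <^i_{\mH,\alpha} B$ holds; such an $i$ exists because $<_{\mH,\alpha}$ is by definition the union of the $<^i_{\mH,\alpha}$. The inductive statement I would actually prove is: for every $i\ge 0$ and every pair $A,B\in V(\alpha)$ with $A <^i_{\mH,\alpha} B$, there is a path $P$ from $A$ to $B$ with $A <_{\mH,\alpha} C$ for all $C\in P$, $C\neq A$. A preliminary observation I would establish first and use repeatedly: if $X <^{j}_{\mH,\alpha} Y$ and $X\in V(\alpha)$, then $Y\in V(\alpha)$. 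Indeed, that relation forces $X$ to precede $Y$ in $\alpha^O$, and in $\alpha^O$ every output attribute precedes every non-output attribute, so a non-output $X$ cannot precede an output $Y$. Consequently the whole recursion stays inside $V(\alpha)$, and the inductive hypothesis is always applicable to the auxiliary attributes appearing in the recursive clauses.

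For the base case $i=0$, the relation $A\sim^0_{\mH,\alpha} B$ must come from condition $(0.1)$, since condition $(0.2)$ is impossible (both $A,B\in V(\alpha)$, so neither operator is $\texttt{NULL}$). Hence some edge $E$ contains both $A$ and $B$, and the length-two path $A,B$ suffices: its only non-$A$ vertex is $B$, and $A <_{\mH,\alpha} B$ by hypothesis.

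For the inductive step, $A\sim^i_{\mH,\alpha} B$ arises from $(i.1)$ or $(i.2)$. In case $(i.1)$ there is $C$ with $A <^{i-1}_{\mH,\alpha} C$ and an edge $E\supseteq\{B,C\}$; by the observation $C\in V(\alpha)$, and since $i-1<i$ the inductive hypothesis gives a path $P_1$ from $A$ to $C$ all of whose non-$A$ vertices exceed $A$; appending $B$ (legal because $B,C\in E$) yields the desired path, the new final vertex $B$ being covered by the hypothesis $A <_{\mH,\alpha} B$. In case $(i.2)$ there is $C$ with $A <^{j}_{\mH,\alpha} C$ and $C <^{k}_{\mH,\alpha} B$ for some $j,k<i$; the inductive hypothesis supplies a path $P_1$ from $A$ to $C$ and a path $P_2$ from $C$ to $B$, and I would concatenate them. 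The order constraint is then checked vertex-by-vertex: non-$A$ vertices of $P_1$ exceed $A$ directly; $C$ exceeds $A$ because $A<_{\mH,\alpha}C$; and non-$C$ vertices of $P_2$ exceed $C$, hence exceed $A$ by transitivity of $<_{\mH,\alpha}$ (which itself follows from clause $(i.2)$). Finally, to produce an honest simple path I would note that deleting cycles from a path only removes vertices, so the order property is preserved.

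I expect the main obstacle to be bookkeeping rather than conceptual: matching each of the two recursive clauses to the correct application of the inductive hypothesis, verifying that the intermediate attribute $C$ is itself a non-output attribute so the hypothesis even applies, and — in the concatenation case — confirming that \emph{every} vertex of the combined path, including the junction vertex $C$ and any repeated vertices, satisfies $A <_{\mH,\alpha}\cdot$. Transitivity of $<_{\mH,\alpha}$ is the single fact that makes the $P_2$-portion go through, so I would make sure to state and justify it (from clause $(i.2)$) before invoking it.
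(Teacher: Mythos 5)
Your overall strategy --- induction on the least $i$ with $A <^i_{\mH,\alpha} B$, with a case split on which clause of the definition produced $A \sim^i_{\mH,\alpha} B$ --- is exactly the paper's, and your treatment of the base case, of clause $(i.1)$ in the form ``$\exists C$ with $B,C\in E$ and $A<^{i-1}_{\mH,\alpha}C$'', and of the transitivity clause $(i.2)$ matches the paper's argument step for step. Your preliminary observation that the recursion stays inside $V(\alpha)$ is a detail the paper glosses over; it is correct and worth stating, since otherwise the inductive hypothesis might not apply to the intermediate attribute $C$.

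However, your case analysis is incomplete. The relation $\sim_{\mH,\alpha}$ formalizes $\dnc$, which is symmetric, so $A\sim^i_{\mH,\alpha}B$ can also be witnessed by clause $(i.1)$ applied to the pair in the opposite order: there may exist an edge $E$ and an attribute $C$ with $A,C\in E$ and $B<^{i-1}_{\mH,\alpha}C$. This is genuinely different from your first case --- here the witness $C$ comes \emph{after} $B$ in the ordering and is adjacent to $A$, not to $B$ --- and the paper handles it as a separate third case. Neither of your two cases covers it, so as written the inductive step does not go through. The fix uses only tools you already have: apply the inductive hypothesis to $B<^{i-1}_{\mH,\alpha}C$ to obtain a path $P'$ from $B$ to $C$ all of whose non-$B$ vertices $D$ satisfy $B<_{\mH,\alpha}D$; then the path $A\,\overline{P'}$ (the edge joining $A$ to $C$, followed by the reversal of $P'$) runs from $A$ to $B$, and every non-$A$ vertex on it is dominated by $A$: the interior vertices and $C$ via transitivity through $A<_{\mH,\alpha}B<_{\mH,\alpha}D$, and $B$ itself via the hypothesis $A<_{\mH,\alpha}B$. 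You should add this case (and, ideally, a sentence justifying that the symmetric reading of $(i.1)$ is the intended one, since $\dnc$ is symmetric by definition).
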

\begin{proof}
Define $\oplus$ and $\oplus'$ such that $(A, \oplus), (B, \oplus') \in \alpha$. In addition define $i$ such that $A <_{\mH, \alpha}^i B$, which implies that $A$ precedes $B$ in $\alpha$ and that $A \sim_{\mH, \alpha}^i B$. Our proof is by induction on $i$. For our basecase, if $A \sim_{\mH, \alpha}^0 B$, we know that $\exists E \in \mE: A, B \in E$. Thus the path $P = AB$ satisfies our conditions.

For $i > 0$, we have the following cases:
\begin{itemize}
\item $\oplus \neq \oplus'$ and $\exists E \in \mE, (C, \oplus'') \in \alpha^O: B,C \in E, A <_{\mH, \alpha}^{i-1} C$\\
By our inductive hypothesis, there must exist a path $P'$ from $A$ to $C$ such that for all $D \in P', D \neq A$ we have $A <_{\mH, \alpha} D$. Then the path $P = P'B$ satisfies our conditions.
\item $\oplus \neq \oplus'$ and $\exists E \in \mE, (C, \oplus'') \in \alpha^O: A,C \in E, B<_{\mH, \alpha}^{i-1} C$ \\
By our inductive hypothesis, there must exist a path $P'$ from $B$ to $C$ such that for all $D \in P', D \neq B$ we have $B <_{\mH, \alpha} D$, which also implies that $A <_{\mH, \alpha} D$ by $i.2$. Let $\overline{P'}$ be the reverse of $P'$. Then the path $P = A\overline{P'}$ satisfies our condition.
\item $\exists C \in \mV$ and $j,k < i: A <_{\mH, \alpha}^j C <_{\mH, \alpha}^k B$ \\
By our inductive hypothesis, there must exist two paths $P'$ and $P''$. $P'$ is a path from $A$ to $C$ such that for all $D \in P', D \neq A$ we have $A <_{\mH, \alpha} D$. Similarly, $P''$ is a path from $C$ to $B$ such that for all $D \in P'', D \neq C$ we have $C <_{\mH, \alpha} D$, which implies $A <_{\mH, \alpha} D$. Thus the path $P = P'P''$ satisfies our conditions.
\end{itemize}
\end{proof}

\begin{theorem}[Copy of Theorem~\ref{thm:dag-soundness-completeness}]
Suppose we are given a hypgergraph $\mH = (\mV, \mE)$ and aggregation orderings $\alpha, \beta$. Then $\alpha \equiv_\mH \beta$ if and only if $\beta$ is a linear extension of $<_{\mH,\alpha}$.
\end{theorem}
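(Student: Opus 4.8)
The plan is to deduce both directions from Theorem~\ref{commute} together with the two path lemmas, Lemma~\ref{Pathing} and Lemma~\ref{Pathing2}. First a preliminary reduction: if $\alpha \equiv_\mH \beta$ then $V(\alpha)=V(\beta)$ (otherwise the two outputs have different schemas) and $\alpha,\beta$ assign the same operator to every attribute (otherwise, on an instance in which every attribute but the offending one has a singleton domain and one relation containing that attribute carries two tuples annotated $x,y$, the whole query collapses to $x\oplus y$ versus $x\oplus' y$, which differ for suitable $x,y$ since $\oplus\neq\oplus'$); in the converse direction both facts are automatic, since a linear extension of $<_{\mH,\alpha}$ is a total order on exactly the attribute-operator pairs of $\alpha^O$. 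I also read ``$\beta$ is a linear extension of $<_{\mH,\alpha}$'' as ``$\beta^O$ extends $<_{\mH,\alpha}$''; since every output pair lies $<_{\mH,\alpha}$-below every non-output pair, this is really a constraint on the relative order of the non-output attributes only.

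For the ``if'' direction I would transform $\beta$ into $\alpha$ by a bubble-sort: repeatedly pick an adjacent pair $x=(A,\oplus),\, y=(B,\oplus')$ in the current ordering that is inverted relative to $\alpha$, and swap it. Because the current ordering is always a linear extension of $<_{\mH,\alpha}$ and the pair is inverted relative to $\alpha$ (also a linear extension), $A$ and $B$ are $<_{\mH,\alpha}$-incomparable, and swapping keeps the ordering a linear extension while reducing the number of inversions by one. If $\oplus=\oplus'$ the swap preserves the output by the first clause of Theorem~\ref{commute}. If $\oplus\neq\oplus'$, let $W$ be the set of attributes not yet aggregated at the stage where these two aggregations are performed; $W$ contains $A$, $B$, every attribute the current ordering places after them, and every output attribute, whereas any attribute $<_{\mH,\alpha}$-below $A$ or $B$ must already have been aggregated and so lies outside $W$. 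If $A$ and $B$ were in the same connected component of $\mH$ restricted to $W$, Lemma~\ref{Pathing} would produce an attribute $C$ on the connecting path with $C<_{\mH,\alpha}A$ or $C<_{\mH,\alpha}B$ --- impossible, since such a $C$ lies in $W$ yet must lie outside it. Hence the residual relation over $W$ factors through this separation, the second clause of Theorem~\ref{commute} applies, and the swap preserves the output. Iterating yields $\alpha$, so $\alpha\equiv_\mH\beta$.

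For the ``only if'' direction I would prove the contrapositive. If $\beta$ does not extend $<_{\mH,\alpha}$ there are non-output attributes $A,B$ with $A<_{\mH,\alpha}B$ (hence $\pre(A,B)$, so $\dnc(A,B)$ and $A$ before $B$ in $\alpha$) but $B$ before $A$ in $\beta$; I would induct on the smallest round $i$ for which $A<^i_{\mH,\alpha}B$ holds, along the chain $<^0_{\mH,\alpha},<^1_{\mH,\alpha},\dots$ that builds up $<_{\mH,\alpha}$ via Lemmas~\ref{lemma:DNCbasecase}, \ref{lemma:DNCextension}, \ref{lemma:DNCtransitive}. In the base case there is a hyperedge $E$ containing $A$ and $B$ with $\oplus_A\neq\oplus_B$: build an instance $I$ in which every attribute other than $A,B$ has a singleton domain, every relation except $R_E$ is a full cross product, and $R_E$ carries the $2\times2$ annotation block equal to $x$ on $(A{=}1,B{=}1)$, to $y$ on $(A{=}2,B{=}2)$, and $0$ elsewhere. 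Then all aggregations except those over $A$ and $B$ are identities, so $Q_{\mH,\alpha}(I)$ and $Q_{\mH,\beta}(I)$ reduce (up to a common constant factor) to aggregating $A$-then-$B$ versus $B$-then-$A$ of that block, giving $x\oplus_A y$ versus $x\oplus_B y$, which differ for the $x,y$ witnessing $\oplus_A\neq\oplus_B$. For the extension and transitivity rounds I would locate, via the defining condition and the path supplied by Lemma~\ref{Pathing2}, either a violated pair of strictly smaller round that $\beta$ also inverts (reducing to the inductive hypothesis) or an extra hyperedge along the path through which the previous instance can be prolonged while keeping the $\alpha$/$\beta$ disagreement intact.

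The step I expect to be the main obstacle is exactly this inductive construction in the ``only if'' direction --- ensuring the disagreement always localizes to one non-commuting hyperedge and that none of the many aggregations interleaved with $A$ and $B$ in $\alpha$ or $\beta$ reintroduces a dependence that cancels it --- and, on the ``if'' side, pinning down precisely which residual relation $A$ and $B$ must be separated in so that Lemma~\ref{Pathing} delivers the needed contradiction; both require care with how $<_{\mH,\alpha}$ behaves under restriction to the attributes surviving at each stage of aggregation.
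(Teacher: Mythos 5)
Your soundness direction follows the paper's route (bubble-sort on adjacent inverted, hence $<_{\mH,\alpha}$-incomparable, pairs, with Theorem~\ref{commute} licensing each swap), but the separation step is set up on the wrong set of attributes. Since the query is written $\Sigma_{\alpha_1}\cdots\Sigma_{\alpha_{|\alpha|}}\Join_{F}R_F$, the aggregations listed \emph{later} are applied \emph{first}; so when the adjacent pair $(A,\oplus),(B,\oplus')$ at positions $i,i+1$ is reached, the already-aggregated attributes are exactly those at positions $>i+1$, while every attribute at a position $<i$ --- in particular every $C$ with $C<_{\mH,\alpha}A$ or $C<_{\mH,\alpha}B$, and every output attribute --- is still present. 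Your $W$ has this backwards, and the claim you then try to prove (that $A$ and $B$ are disconnected in $\mH$ restricted to the surviving attributes, outputs included) is false as stated: in $\sum_{A}\max_{B} R(A,O)\Join S(O,B)$ with $O$ an output attribute, $A$ and $B$ are incomparable and do commute, yet they are connected through $O$, which lies in your $W$ and is exactly the witness Lemma~\ref{Pathing} would hand you, so your contradiction never fires. The correct statement, and the one the paper uses, is that $A$ and $B$ are disconnected in $\mH$ restricted to $\{A,B\}$ together with the \emph{already-aggregated} attributes: a connecting path there would avoid every $C$ with $C<_{\mH,\alpha}A$ or $C<_{\mH,\alpha}B$ (all of which survive to this stage), contradicting Lemma~\ref{Pathing}. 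This weaker separation suffices because the factorization $R_F=R_{F_1}\Join R_{F_2}$ in Theorem~\ref{commute} permits $F_1\cap F_2\neq\emptyset$, so the still-present outer and output attributes may be shared between the two factors.

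For completeness the two arguments genuinely diverge, and yours is not yet a proof. The paper does not induct on the round at which $A<^i_{\mH,\alpha}B$ is derived; it first replaces $\alpha$ by an equivalent linear extension $\alpha'$ that shares a maximal prefix with $\beta$ and places the offending attribute as early as possible, so that the violated pair becomes \emph{adjacent} in $\alpha'$. Adjacency plus minimality force the two operators to differ (a same-operator violation always has an intermediate witness; cf.\ Lemma~\ref{lemma:inbetween}), and Lemma~\ref{Pathing2} then supplies a path on which $B$ is the outermost aggregation in $\alpha'$ and $A$ is the outermost in $\beta$, so one two-tuple instance supported on that path evaluates to $x\oplus'y$ versus $x\oplus y$. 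Your inductive step leaves open precisely the hard cases: for rule $(i.1)$, when $\beta$ places $C$ after $A$ you must still show that the outermost path attribute under $\beta$ yields something other than $x\oplus y$, and an interior vertex of the Lemma~\ref{Pathing2} path that $\beta$ moves outward need not give an inversion derivable at a strictly smaller round, so neither branch of your ``either/or'' is guaranteed to apply. You flag this as the main obstacle, but it is the entire content of the direction; I would adopt the paper's reduction to an adjacent violated pair in an extremal equivalent ordering rather than trying to repair the round induction.
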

\begin{proof}
{\bf Soundness:} 

We use induction on the number of inversion in $\beta$ with respect to the ordering $\alpha$. 
{\em Base Case:} $0$ inversions. Then $\beta$ is identical to $\alpha$ and $\alpha \equiv_\mH \beta$. 

{\em Induction:} Suppose $\beta$ has $N > 0$ inversions, and assume the lemma is true for orderings with $<N$ inversions. There must be some $\beta_i$ and $\beta_{i+1}$ that are inverted with respect to $\alpha$. Consider the ordering $\beta'$ derived by swapping $\beta_i$ and $\beta_{i+1}$. It has $N-1$ inversions with respect to $\alpha$ and is clearly a linear extension of $<_{\mH, \alpha}$. Thus, by the inductive hypothesis, $\alpha \equiv_\mH \beta'$.

We now show that $\beta \equiv_\mH \beta'$. Suppose $\beta_i = (A, \oplus)$ and $\beta_{i+1} = (B, \oplus')$. We have two cases to consider.
\begin{itemize}
\item $\oplus = \oplus'$ \\
By Theorem \ref{commute}, we can swap $\beta_i$ and $\beta_{i+1}$ without affecting the output. This implies that $\beta \equiv_\mH \beta'$.
\item $\oplus \neq \oplus'$ \\
By Lemma \ref{Pathing} and since we know $A$ and $B$ are incomparable under $<_{\mH, \alpha}$, any path between $A$ and $B$ must go through some attribute $C$ such that $C<_{\mH, \alpha} A$ or $C<_{\mH, \alpha} B$. Since $\beta$ is a valid linear extension of $<_{\mH,\alpha}$, these attributes $C$ appear earlier than index $i$ in $\beta$. This implies that $A$ and $B$ are in separate connected components in $\pi_{V(\beta_i, \beta_{i+1}, \dots, \beta_{|\beta|})} \mH$, which implies that we can swap $\beta_i$ and $\beta_{i+1}$ without affecting the output by Theorem \ref{commute}. This implies that $\beta \equiv_\mH \beta'$.
\end{itemize}

{\bf Completeness:}

We prove the contrapositive: we assume that we are given aggregation orderings $\alpha, \beta$ such that $\beta$ is not a linear extension of $<_{\mH, \alpha}$, and we will show that $\alpha \not \equiv_{\mH} \beta$. We will do so by constructing an instance $\hat{I}$ such that $Q_{\mH, \alpha}(\hat{I}) \neq Q_{\mH, \beta}(\hat{I})$.

We assume without loss of generality that $\mV = V(\alpha)$, i.e. that there are no output attributes. We will provide an example where $\beta$ and $\alpha$ must differ in the single annotation that comprises the output. If there are output attributes, we can augment our example by putting $1$s in all the output attributes; our output will be composed of a single tuple composed of all $1$s with the same annotation as in our example below.

Consider the set of all valid linear extensions of $<_{\mH, \alpha}$. Suppose the maximum length prefix identical to the prefix of $\beta$ is of size $k$. Among all linear extensions with maximum length identical prefixes, suppose the minimum possible index for $\beta_{k+1}$ is $k'$. Consider a linear extension $\alpha'$ such that $\alpha'_i = \beta_i$ for $i \le k$ and $\alpha'_{k'} = \beta_{k+1}$. By the soundness part of our proof, $\alpha' \equiv_\mH \alpha$; to show that $\alpha \not\equiv_\mH \beta$ we can simply show that $\alpha' \not\equiv_\mH \beta$. 

Suppose $\alpha'_{k'} = (A, \oplus) = \beta_{k+1}$ and $\alpha'_{k'-1} = (B, \oplus')$. We know that $B <_{\mH, \alpha} A$ since $k'$ is the minimum possible index for $\beta_{k+1}$ in any linear extension of $<_{\mH, \alpha}$. Also, since $B$ and $A$ are adjacent in $\alpha'$, we know that there cannot exist any $C$ such that $B <_{\mH, \alpha} C <_{\mH, \alpha} A$. These two facts combine to imply $\oplus \neq \oplus'$. Then, by Lemma \ref{Pathing2}, there exists a path $P$ from $A$ to $B$ such that every attribute in our path $C \in P$ other than $A$ and $B$ must appear after index $k'$ in $\alpha'$.

Since $\oplus \neq \oplus'$, there must exist $x,y \in \textbf{D}$ such that $x \oplus y \neq x \oplus' y$. Define a relation $\widehat{R_\mV}$ with two tuples. The first tuple will contain a $1$ for each attribute and an annotation $x$. The second tuple will contain a $2$ for each attribute in $P$ (including $A$ and $B$) and a $1$ for every other attribute. The second tuple will be annotated with $y$. Note that among the attributes in $P$, $(A, \oplus)$ is the outermost aggregation in $\beta$ and $(B, \oplus')$ is the outermost aggregation in $\alpha$. This implies that
\[\Sigma_{\alpha'_1} \Sigma_{\alpha'_2} \cdots \Sigma_{\alpha'_{n'}} \widehat{R_\mV} = x \oplus' y
\hspace{15mm}
\Sigma_{\beta_1} \Sigma_{\beta_2} \cdots \Sigma_{\beta_{|\beta|}}
\widehat{R_\mV} = x \oplus y \]

Let $C$ be the attribute in $P$ right before $B$; by definition there must exist an edge $E \in \mE$ such that $B,C \in E$. Consider the following instance over the schema $\mH$:
\[ \hat{I} = \{ \pi_{E}\widehat{R_\mV} \} \cup \{ \pi_F^1 \widehat{R_\mV} | F \in \mE, F \neq E\} .\]

By definition, $\Join_{R_F \in \hat{I}} R_F = \widehat{R_\mV}$. Since we know that $x \oplus y \neq x \oplus' y$, we have that
\[\Sigma_{\alpha'_1} \Sigma_{\alpha'_2} \cdots \Sigma_{\alpha'_{n'}} \Join_{R_F \in \hat{I}} R_F
\neq
\Sigma_{\beta_1} \Sigma_{\beta_2} \cdots \Sigma_{\beta_{|\beta|}} \Join_{R_F \in \hat{I}} R_F \]

We thus have that $Q_{\mH, \alpha'}(\hat{I}) \neq Q_{\mH, \beta}(\hat{I})$, which implies that $\alpha' \not\equiv_\mH \beta$.
\end{proof}

\section{Decomposing Valid GHDs: Proofs}\label{sec:app-decomp}
We start by stating and proving a useful lemma about the aggregation orderings seen in the sub-trees of a decomposable GHD.

\begin{lemma}
\label{lemma:sub-orders}
Given a hypergraph $\mH = (\mV, \mE)$ and an aggregation ordering $\alpha$, suppose $C$ is a connected component of $\mH \backslash V(-\alpha)$. Define $\mH_C = (\bigcup_{E \in \mE_C} E, \mE_C)$\footnote{Recall $\mE_C = \{E \in \mE | E \cap C \neq \emptyset\}$.}. For any $A \in C, B \in \mV$, if $A <_{\mH, \alpha} B$ then $A <_{\mH_C, \alpha_{C \backslash C^O}} B$. Similarly, if $A <_{\mH_C, \alpha_C \backslash C^O} B$, either $A \in C^O$ or $A <_{\mH, \alpha} B$. 
\end{lemma}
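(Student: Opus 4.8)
The plan is to reduce both implications to a single statement about pairs of attributes inside $C\setminus C^O$, and then to prove that statement by induction on the round index of the fixed-point construction of $\sim$ and $<$ from Appendix~\ref{sec:app-equiv}; throughout, write $<_C$ for $<_{\mH_C,\alpha_{C\setminus C^O}}$ and $\sim_C$ for the corresponding $\sim$-relation. I first record three observations. (a)~Because $\alpha^O$ (and likewise the augmented ordering of the subquery) is a sequence, it induces a total order on attributes; hence $X\sim^i Y$ iff $X<^i Y$ or $Y<^i X$, so once the ordering direction is fixed it suffices to transfer the $\sim$-relation. (b)~By a short induction on $i$, if $X<^i Y$ then $Y$ is not an output attribute: being before an output attribute in $\alpha^O$ forces being an output attribute, and then $(0.1)$, $(0.2)$, $(i.1)$ all require distinct operators while $(i.2)$ pushes the witness to a strictly smaller round; this applies both to $(\mH,\alpha)$ and to $(\mH_C,\alpha_{C\setminus C^O})$. (c)~The vertex set of $\mH_C$ lies in $C\cup V(-\alpha)$ (a non-output attribute appearing in an edge that meets $C$ is connected to $C$, hence lies in $C$); its non-output attributes are exactly $C\setminus C^O$; its edge set is $\mE_C=\{E\in\mE:E\cap C\neq\emptyset\}$; and on $C\setminus C^O$ the total orders induced by $\alpha^O$ and by the subquery's augmented ordering coincide.

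Given these, the first claim is handled as follows. Let $A\in C$ and $A<_{\mH,\alpha}B$. By (b), $B\in V(\alpha)$, so Lemma~\ref{Pathing2} gives a path from $A$ to $B$ all of whose internal vertices lie strictly above $A$ in $<_{\mH,\alpha}$, hence are non-output by (b); being a connected set of non-output attributes that meets $C$, this path lies in $C$, so $B\in C$, and since $A\in C$ is a predecessor of $B$ we get $B\in C\setminus C^O$. If $A\in C^O$, then $A$ is an output attribute of the subquery while $B$ is not, so $A<_C B$ by Definition~\ref{def:partial-order}(i); otherwise $A,B\in C\setminus C^O$ and the restricted claim below applies. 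For the second claim, let $A<_C B$. By (b), $B\in C\setminus C^O$. If $A\in C^O$ we are done; if $A$ is an output attribute of the subquery with $A\notin C^O$, then by (c) $A\in V(-\alpha)$ while $B\in V(\alpha)$, so $A<_{\mH,\alpha}B$ by Definition~\ref{def:partial-order}(i); otherwise $A,B\in C\setminus C^O$ and again the restricted claim applies.

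The restricted claim asserts that for $A,B\in C\setminus C^O$ one has $A\sim_{\mH,\alpha}B$ iff $A\sim_C B$ (by (c) the ordering direction is the same on $A,B$, so by (a) this is exactly what is needed). I would prove each direction by induction on the round index at which the $\sim$-relation is first witnessed. In the base case only condition $(0.1)$ can fire (neither side is \texttt{NULL} and the operators are those of $\mH$), and by (c) an edge of $\mH$ containing $A,B$ is an edge of $\mH_C$ and conversely; together with equality of operators, the two base cases coincide. For an inductive step via $(i.1)$ with witness $W$ (so $B,W$ lie in a common edge $E$, $A<W$, and $W$'s operator differs from the other one), $E\in\mE_C$ is an edge of $\mH_C$ since $B\in C$; the first claim (in the $\mH\to\mH_C$ direction, where $A\in C$ and $A<_{\mH,\alpha}W$ already force $W\in C\setminus C^O$) and observation (b) (in the $\mH_C\to\mH$ direction, where $A<_C W$ forces $W$ non-output, hence $W\in C\setminus C^O$) both put $W$ in $C\setminus C^O$, so $W$ keeps its operator and the inductive hypothesis gives $A<W$ in the target structure; thus $(i.1)$ transfers. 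A step via $(i.2)$ transfers identically, with the intermediate attribute forced into $C\setminus C^O$ by the same reasoning and then transitivity of $<$ in the target structure. This closes the induction.

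\textbf{Expected main obstacle.}
The technical heart is the bookkeeping around $C^O$: each time a derivation uses an intermediate attribute $W$, one must certify that $W$ is a non-output attribute of the \emph{subquery} (i.e.\ $W\in C\setminus C^O$), so that $W$ retains its original operator and the inductive hypothesis applies to the pair containing $W$. Making the path argument (Lemma~\ref{Pathing2}) and observation (b) cooperate in both directions of the equivalence --- in particular checking that ``lying above $A$'' keeps a vertex inside $C$ and out of $C^O$ --- is where the care is needed; the remainder is routine unwinding of the definitions of $\sim^i$ and $<^i$.
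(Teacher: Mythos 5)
Your proposal is correct and follows essentially the same route as the paper's proof: isolate the case $A \in C^O$, use Lemma~\ref{Pathing2} together with the fact that successors under $<_{\mH,\alpha}$ are non-output to force $B \in C\setminus C^O$, and then transfer the relation for pairs in $C\setminus C^O$ by induction on the rounds of the fixed-point construction of $\sim^i$ and $<^i$. Your write-up is somewhat more careful than the paper's (which asserts the inductive step "trivially" and fixes $A$ in its inductive hypothesis even though condition $(i.2)$ needs the hypothesis for the pair $(W,B)$ as well), but the underlying argument is the same.
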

\begin{proof}
First we show that for any $A \in C$, if $A <_{\mH, \alpha} B$ then $B \in C \backslash C^O$. If $B \in C^O$, then by definition, $A \not <_{\mH, \alpha} B$. If $B \notin C$, then every path between $A$ and $C$ must go through attributes in $V(-\alpha)$. Thus, by the contrapositive of Lemma~\ref{Pathing2}, $A \not <_{\mH, \alpha} B$. This implies that $A <_{\mH, \alpha} B$ only for $B \in C \backslash C^O$.

Note that for any $A \in C^O$, since $A$ is an output attribute in $\alpha_{C \backslash C^O}$, $A <_{\mH_C, \alpha_{C \backslash C^O}} B$ for all $B \in C \backslash C^O$. This proves our lemma for $A \in C^O$.

For $A \in C \backslash C^O$, we prove the lemma by showing that for any $i \ge 0$, $\{B | A <_{\mH, \alpha}^i B\} = \{B | A <^i_{\mH_C, \alpha_{C \backslash C^O}} B\}$. Note that our earlier result shows that both of these sets are subsets of $C \backslash C^O$, so we know that for any $i$, any $B$ in either set appears in both aggregation orderings.

Proof by induction on $i$. We first consider the base case: $i = 0$. We note that since $A$ is not an output attribute condition $(0.2)$ is irrelevant. Since $\mH_C$ contains all edges involving attributes in $C$ and $\alpha_{C \backslash^i C^O}$ preserves the ordering and the operators of elements of $\alpha$, condition $(0.1)$ applies to the same set of attributes in both $\ajar$ queries $Q_{\mH, \alpha}$ and $Q_{\mH_C, \alpha_{C \backslash C^O}}$. Thus $\{B | A <_{\mH, \alpha}^0 B\} = \{B | A <_{\mH_C, \alpha_{C \backslash C^O}}^0 B\}$.

For $i > 0$, the inductive hypothesis supposes $\{B | A <_{\mH, \alpha}^j B\} = \{B | A <_{\mH_C, \alpha_{C \backslash C^O}}^j B\}$ for all $j < i$. Again since $\mH_C$ contains all edges involving attributes in $C$ and $\alpha_{C \backslash^i C^O}$ preserves the ordering and the operators of elements of $\alpha$, the inductive hypothesis trivially implies conditions $(i.1)$ and $(i.2)$ apply to the same set of attributes.
\end{proof}

\begin{theorem}[Copy of Theorem~\ref{thm:decompisvalid}]
Every decomposable GHD is valid.
\end{theorem}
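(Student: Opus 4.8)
The plan is to induct on $|\alpha|$, using Theorem~\ref{thm:validcorrect} as the working definition of validity: $(\mT,\chi)$ is valid for $Q_{\mH,\alpha}$ exactly when, for every pair of attributes $A,B$ with $TOP_\mT(A)$ an ancestor of $TOP_\mT(B)$, one has $B\not<_{\mH,\alpha}A$. The base case $|\alpha|=0$ is immediate: there are no non-output attributes, so by Definition~\ref{def:partial-order} the relation $<_{\mH,\alpha}$ is empty and the validity condition holds vacuously (and every GHD is trivially decomposable).

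For $|\alpha|>0$, fix a decomposable GHD $(\mT,\chi)$ with subtrees $\mT_0$ and $\{\mT_C\}_C$ as in Definition~\ref{def:decomp}, where $C$ ranges over the connected components of $\mH\setminus V(-\alpha)$. I would first record the structural consequences of decomposability. Since the bags of $\mT_0$ collectively equal $V(-\alpha)$, no non-output attribute appears in any bag of $\mT_0$, so $TOP_\mT$ of every non-output attribute lies in $\mT\setminus\mT_0=\bigsqcup_C\mT_C$; as each $\mT_C$ is a subtree hanging below $\mT_0$ it is downward closed, so by the running intersection property each non-output attribute occurs in exactly one $\mT_C$. Because $\mT_C$ is a GHD of $\mH_C:=\bigl(\bigcup_{E\in\mE_C}E,\ \mE_C\bigr)$, its bags lie inside $C^+:=\bigcup_{E\in\mE_C}E$, and since any vertex of an edge meeting $C$ that also lies in $V(\alpha)$ must itself belong to the component $C$, we get $C^+\cap V(\alpha)=C$. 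Hence for a non-output attribute $B$ with $TOP_\mT(B)\in\mT_C$ we have $B\in C$, $TOP_{\mT_C}(B)=TOP_\mT(B)$, and ancestry between such $TOP$ nodes is read the same way in $\mT$ and in $\mT_C$.

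Now take $A,B$ with $TOP_\mT(A)$ an ancestor of $TOP_\mT(B)$; the goal is $B\not<_{\mH,\alpha}A$. If $A$ is an output attribute, then by Definition~\ref{def:partial-order} no attribute is $<_{\mH,\alpha}$-below $A$ (its clause (i) fails because $A$ is output, and clause (ii) cannot have an output attribute as its second argument), so $B\not<_{\mH,\alpha}A$. Otherwise $A$ is non-output, so $TOP_\mT(A)\in\mT_C$ for a unique $C$; since $\mT_C$ is downward closed, $TOP_\mT(B)\in\mT_C$ too, which forces $B$ to be non-output, and by the previous paragraph $A,B\in C$ with $TOP_{\mT_C}(A)$ an ancestor of $TOP_{\mT_C}(B)$. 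By Definition~\ref{def:decomp}, $\mT_C$ is a decomposable GHD of $Q_{\mH_C,\,\alpha_{C\setminus C^O}}$; as $C^O$ is nonempty (a finite nonempty poset has a minimal element) we have $|\alpha_{C\setminus C^O}|<|\alpha|$, so the inductive hypothesis makes $\mT_C$ valid and yields $B\not<_{\mH_C,\,\alpha_{C\setminus C^O}}A$. Finally, applying the first assertion of Lemma~\ref{lemma:sub-orders} with its two attributes playing the roles of $B$ and $A$ respectively, $B<_{\mH,\alpha}A$ would entail $B<_{\mH_C,\,\alpha_{C\setminus C^O}}A$; the contrapositive gives $B\not<_{\mH,\alpha}A$, closing the induction.

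The real work is done by Lemma~\ref{lemma:sub-orders}, which transports a precedence constraint between a query and its component sub-query, so I expect the main obstacle to be organizational rather than conceptual: pinning down that decomposability forces each non-output attribute together with its whole $TOP$-subtree into the single $\mT_C$ attached to its component, that restricting from $\mT$ to $\mT_C$ leaves both the $TOP$ nodes and their ancestry unchanged, and that $|\alpha_{C\setminus C^O}|<|\alpha|$ so that the induction actually terminates.
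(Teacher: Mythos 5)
Your proof is correct and follows essentially the same route as the paper's: induction on $|\alpha|$, with $\mT_0$ disposing of the case where $A$ is an output attribute, both non-output attributes landing in a single $\mT_C$, the inductive hypothesis giving validity of $\mT_C$ for the sub-query, and Lemma~\ref{lemma:sub-orders} transporting the precedence constraint back to $Q_{\mH,\alpha}$. You supply some details the paper elides (well-foundedness via $C^O\neq\emptyset$, preservation of $TOP$ nodes and ancestry when passing to $\mT_C$), and you state the validity condition in the orientation of Theorem~\ref{thm:validcorrect} ($B\not<_{\mH,\alpha}A$), whereas the paper's own proof writes the roles of $A$ and $B$ swapped; the substance is identical.
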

\begin{proof}
Suppose the $\ajar$ query is $Q_{\mH, \alpha}$. We need to show for any $A,B$ such that $TOP_\mT(A)$ is an ancestor of $TOP_\mT(B)$, $A \not <_{\mH, \alpha} B$. Proof by induction on $|\alpha|$. If $|\alpha|=0$, all GHDs are valid and decomposable. For $|\alpha > 0|$, we note $\mT_0$ ensure the output attributes are above non-output attributes.

If $A$ and $B$ are non-output attributes and $TOP_\mT(A)$ is an ancestor of $TOP_\mT(B)$, then both are in some $\mT_C$. By the inductive hypothesis, $\mT_C$ is valid with respect to \main{\\}$Q_{(\cup_{E \in \mE_C} E, \mE_C), \alpha_{C \backslash C^O}}$. By Lemma~\ref{lemma:sub-orders}, this implies $A \not <_{\mH, \alpha} B$. 
\end{proof}

\begin{theorem}[Copy of Theorem~\ref{thm:decompwidth}]
For every valid GHD $(\mT, \chi)$, there exists a decomposable GHD $(\mT', \chi')$ such that for all node-monotone functions $\gamma$, the $\gamma$-width of $(\mT', \chi')$ is no larger than the $\gamma$-width of $(\mT, \chi)$.
\end{theorem}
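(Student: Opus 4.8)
The plan is to realize the sketch in the main text. I will transform $(\mT,\chi)$ by a sequence of \emph{width-preserving} operations---each new bag a subset of an old bag, so that by node-monotonicity of $\gamma$ the $\gamma$-width cannot increase---into a GHD $(\mT',\chi')$ that is still valid and additionally satisfies: (P1) every node is $TOP_{\mT'}(A)$ for exactly one attribute $A$; and (P2) for every node $t$, the set $S_t=\{v : TOP_{\mT'}(v)\text{ lies in the subtree rooted at }t\}$ induces a connected sub-hypergraph of $\mH$. Once $(\mT',\chi')$ has these properties, the width claim is immediate, and it remains only to prove that every valid GHD satisfying (P1) and (P2) is decomposable, which I will do by induction on $|\alpha|$.

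To achieve (P1): first delete every node that is not the $TOP$ of any attribute, re-attaching its children to its parent. Running intersection survives because the attributes of a deleted bag all occur in its parent; deletion changes no $TOP$ value, hence preserves validity, and changes no bag, hence preserves widths. Then split each node $t$ that is the $TOP$ of attributes $A_1,\dots,A_j$ into a path $t_1-\dots-t_j$ with $\chi'(t_i)=\chi(t)\setminus\{A_{i+1},\dots,A_j\}$, where the list $A_1,\dots,A_j$ is chosen to be a linear extension of $<_{\mH,\alpha}$ restricted to $\{A_1,\dots,A_j\}$; every new bag is a subset of $\chi(t)$, the $TOP$ of $A_i$ becomes $t_i$, and ordering the split by a linear extension of $<_{\mH,\alpha}$ is exactly what keeps the result valid (if $A_{i'}<_{\mH,\alpha}A_i$ for $i<i'$ the linear-extension property is contradicted). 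After both operations (P1) holds and $|\mV_{\mT'}|=n$. To achieve (P2) while keeping (P1): process the nodes bottom-up; at a node $t$ whose $S_t$ is disconnected, observe that each child-subtree contributes its (already connected, by the bottom-up induction) attribute set entirely to a single component of $\mH[S_t]$, so the component containing the attribute $a_t$ with $TOP=t$ is a union of some child sets together with $\{a_t\}$, and for every other component we re-parent the corresponding child-subtrees to become children of $t$'s parent $p$. Every attribute whose $TOP$ lies in a moved subtree occurs only inside that subtree, and every attribute occurring both inside the subtree and above $t$ also occurs in $p$; hence running intersection is preserved, and since no bag changes, all $TOP$ values, (P1), and all widths are preserved. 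The move only removes ancestor/descendant pairs among $TOP$ nodes, so validity is preserved. The quantity $\sum_v \mathrm{depth}(v)$ strictly decreases at each step, so the process terminates; and once a node has been processed its $S_t$ stays connected, since later steps act only on nodes appearing later in the bottom-up order and therefore never split its subtree.

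For the inductive claim that a valid GHD with (P1) and (P2) is decomposable: the case $|\alpha|=0$ is trivial. For $|\alpha|>0$, validity together with $A<_{\mH,\alpha}B$ whenever $A$ is an output and $B$ is not forces the set of $TOP$-nodes of output attributes to be a rooted connected subtree $\mT_0$, and (P1) then gives $\chi(\mT_0)=\mV(-\alpha)$. Each remaining maximal subtree $\mT_j$, rooted at some child $r_j$ of a node of $\mT_0$, has attribute set $S_{r_j}$ consisting only of non-output attributes and, by (P2), inducing a connected subgraph of $\mH$; since that subgraph avoids $\mV(-\alpha)$, it lies inside one connected component $C$ of $\mH\setminus\mV(-\alpha)$. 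Conversely, as each edge meeting $C$ is covered by a single bag, the $TOP$ nodes of its non-output attributes are pairwise comparable, so all of $C$'s attributes fall into one $\mT_j$; thus the $\mT_j$ and the components $C$ correspond one-to-one. Each $\mT_j$ is a GHD of $(C^+,\mE_C)$ with $C^+=\bigcup_{E\in\mE_C}E$; using Lemma~\ref{lemma:sub-orders} to compare $<_{\mH,\alpha}$ with the partial order of $Q_{(C^+,\mE_C),\,\alpha_{C\setminus C^O}}$ one checks $\mT_j$ is valid for that sub-query and still satisfies (P1) and (P2), so (since $|\alpha_{C\setminus C^O}|=|C|-|C^O|<|\alpha|$, as $C^O\neq\emptyset$ being the set of $<_{\mH,\alpha}$-minimal elements of a finite set) the inductive hypothesis makes each $\mT_j$ a decomposable GHD of $Q_{(C^+,\mE_C),\,\alpha_{C\setminus C^O}}$ --- which is precisely the second clause of Definition~\ref{def:decomp}.

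The main obstacle is this last inductive step: pinning down the exact two-way correspondence between the subtrees $\mT_j$ and the components $C$, correctly identifying the minimal attributes $C^O$ (which become outputs of the sub-query and whose $TOP$ nodes sit near the top of $\mT_j$), and verifying that validity, (P1), and (P2) are all inherited by each $\mT_j$ relative to the smaller query so the recursion closes cleanly; the width-preservation bookkeeping for the transformations in the first two paragraphs is routine by comparison, and the final width inequality follows from node-monotonicity once every bag of $(\mT',\chi')$ is a subset of a bag of $(\mT,\chi)$.
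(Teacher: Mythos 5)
Your overall route is the same as the paper's: normalize the valid GHD by width-preserving transformations into one where every node is the $TOP$ of exactly one attribute and every rooted subtree collects a connected set of attributes, then show by induction on $|\alpha|$ that any valid GHD with these two properties is decomposable. However, your re-parenting step for (P2) breaks the running intersection property as written. You move a child subtree $\mT_{c_i}$ up to $t$'s parent $p$ with "no bag changed," and your case analysis covers attributes whose $TOP$ lies in $\mT_{c_i}$ and attributes occurring both in $\mT_{c_i}$ and above $t$ --- but not $a_t$ itself. Nothing prevents $a_t$ from appearing in bags of $\mT_{c_i}$: GHD bags may contain attributes that share no hyperedge with the rest of the bag, so $a_t \in \chi(c_i)$ is consistent with $S_{c_i}$ lying in a component of $\mH[S_t]$ that avoids $a_t$. (Take $\mE=\{\{A\},\{B\},\{D\}\}$ with the path $p-t-c$, $\chi(p)=\{D\}$, $\chi(t)=\{A\}$, $\chi(c)=\{A,B\}$: moving $c$ under $p$ disconnects the occurrences of $A$.) The paper's Lemma~\ref{lemma:decomp-prop2} avoids this by deleting $a_t$ from every bag of the moved subtree before re-parenting; this only shrinks bags, so it is still width-preserving, and edge coverage survives because any hyperedge containing $a_t$ that was covered inside $\mT_{c_i}$ is already covered by $\chi(t)$.

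Second, and more substantially, the sentence ``using Lemma~\ref{lemma:sub-orders} \dots one checks $\mT_j$ is valid for that sub-query'' skips the hardest part of the whole proof. Lemma~\ref{lemma:sub-orders} reduces everything to the case $A\in C^O$: there $A$ becomes an output attribute of the sub-query, so validity of $\mT_j$ requires that $TOP(B)$ not be an ancestor of $TOP(A)$ for \emph{every} $B\in C\setminus C^O$, and this is not inherited from validity of the original GHD because $A$ and $B$ may be incomparable under $<_{\mH,\alpha}$. The paper devotes the bulk of its final lemma in Appendix~\ref{sec:app-decomp} to exactly this case: it first reduces, via Lemmas~\ref{lemma:inbetween} and~\ref{lemma:ancestor}, to the situation where $A$ and $B$ carry different operators, and then uses Lemma~\ref{Pathing} to show that every $\mH$-path from $A$ to $B$ passes through a predecessor of $B$ whose $TOP$ node cannot lie in the subtree rooted at $TOP(B)$, contradicting subtree-connectedness. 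You correctly flag this as the main obstacle, but without that argument the induction does not close, so the proposal is an accurate outline of the paper's strategy rather than a complete proof.
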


In the proof sketch provided in Section~\ref{sec:decomposing}, we claim to have width-preserving transformations of a GHD that can enforce two additional properties, which we now present and name:
\begin{itemize}
\item \emph{TOP-unique}: every node $t \in \mT$ is $TOP_\mT(A)$ for exactly one attribute $A$\\
\item \emph{subtree-connected}: for any node $t \in \mT$ and the subtree $\mT_t$ rooted at $t$, the attributes $\{v \in \mV | TOP_\mT(v) \in \mT_t\}$ form a connected subgraph of $\mH$
\end{itemize}
We first have two lemmas proving the transformations required to enforce these properties are width-preserving. 

\begin{lemma}\label{lemma:decomp-prop1}
Given a valid GHD $(\mT, \chi)$ with $\gamma$-width $w$, we can transform it to be TOP-unique while ensuring $\gamma$-width $\le w$.
\end{lemma}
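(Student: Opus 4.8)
The plan is to reach $TOP$-uniqueness via two local, width-preserving moves on $(\mT,\chi)$: a \emph{contraction} that removes any node which is $TOP$ for no attribute, and a \emph{split} that replaces any node which is $TOP$ for $k\ge 2$ attributes by a path of $k$ nodes, each $TOP$ for exactly one of them. I will show each move keeps the object a valid GHD and does not increase the $\gamma$-width for any node-monotone $\gamma$, and that iterating the moves terminates with a $TOP$-unique GHD. Throughout we may assume $\mH$ is connected, since a disconnected query is handled one connected component at a time.

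\emph{Contraction.} Let $t$ be a non-root node with $TOP_\mT(A)\ne t$ for every $A\in\chi(t)$. For each such $A$, $TOP_\mT(A)$ is a strict ancestor of $t$ (it is the topmost node of the connected subtree of $A$, which contains $t$ because $A\in\chi(t)$), so the running intersection property puts $A$ on every node of the path from $TOP_\mT(A)$ down to $t$, in particular in the parent $p$ of $t$; hence $\chi(t)\subseteq\chi(p)$. I would delete $t$ and reattach its children to $p$. The running intersection property survives (each $A\in\chi(t)$ is already in $\chi(p)$), the edge cover survives ($F\subseteq\chi(t)\Rightarrow F\subseteq\chi(p)$), no remaining attribute changes its $TOP$ node, so validity is untouched, and deleting a node cannot raise any $\gamma$-width. (If the root itself is $TOP$ for no attribute, its bag is empty, so under our connectivity assumption it has a single child, which we promote to root; again no $TOP$ node changes.)

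\emph{Split.} Let $t$ have $TOP_\mT(A_i)=t$ for distinct $A_1,\dots,A_k$ with $k\ge2$, ordered so that $(A_1,\dots,A_k)$ is a linear extension of $<_{\mH,\alpha}$ restricted to $\{A_1,\dots,A_k\}$ — possible since $<_{\mH,\alpha}$ is a strict partial order. Put $B=\chi(t)\setminus\{A_1,\dots,A_k\}$ and replace $t$ by a path $t_1\to\cdots\to t_k$ with $t_1$ inheriting the parent of $t$, the old children of $t$ reattached under $t_k$, and $\chi(t_j)=B\cup\{A_1,\dots,A_j\}$; note $\chi(t_k)=\chi(t)$ and every $\chi(t_j)\subseteq\chi(t)$. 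A routine check shows this is again a GHD: in it the occurrences of $A_i$ are exactly $t_i,\dots,t_k$ together with the old occurrences of $A_i$ below $t$ (now hanging under $t_k$), which form a connected subtree with top $t_i$, so $TOP_{\mT'}(A_i)=t_i$ and $t_j$ is $TOP$ for exactly $A_j$; the attributes of $B$ and all other attributes keep their occurrences and hence their $TOP$ nodes. Since every new bag lies inside $\chi(t)$, node-monotonicity gives $\gamma$-width $\le w$. The only freshly created ``$TOP$ ancestor of $TOP$'' relations are $t_i$ above $t_j$ for $i<j$, i.e. $TOP(A_i)$ above $TOP(A_j)$ with $i<j$; validity requires $A_j\not<_{\mH,\alpha}A_i$, which holds because $A_j<_{\mH,\alpha}A_i$ would force $A_j$ before $A_i$ in our linear extension, i.e. $j<i$. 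All other $TOP$/ancestor relations are inherited from $(\mT,\chi)$, so validity is preserved.

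\emph{Termination and the main difficulty.} A split lowers the number of multi-$TOP$ nodes by one and creates none, so finitely many splits leave no multi-$TOP$ node; afterwards a contraction lowers the node count and, since it preserves $TOP$ nodes, creates no multi-$TOP$ node, so finitely many contractions leave no zero-$TOP$ node. The outcome is a valid, $TOP$-unique GHD of $\gamma$-width $\le w$. I expect the one genuine obstacle to be validity under splitting: the split forces the attributes $A_1,\dots,A_k$ onto a strict ancestor chain, so their relative order must be chosen compatibly with $<_{\mH,\alpha}$; everything else is bookkeeping with the running intersection property (plus the harmless connectivity assumption that dispatches the empty-root corner case).
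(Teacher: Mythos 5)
Your proof is correct and follows essentially the same route as the paper: contract nodes that are $TOP$ for no attribute (using $\chi(t)\subseteq\chi(parent(t))$), and peel attributes off multi-$TOP$ nodes into new ancestor nodes whose bags are subsets of the original bag --- your one-shot chain $\chi(t_j)=B\cup\{A_1,\dots,A_j\}$ is exactly what the paper's iterated single-attribute insertion produces, since its $X=\chi(t)\cap\chi(parent(t))$ coincides with your $B$. Your write-up is somewhat more careful (explicit linear-extension ordering for validity, the empty-root corner case), but the construction is the same.
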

\begin{proof}
Define a function $TOP^{-1}_\mT: \mT \to 2^\mV$ from nodes to sets of attributes such that $TOP^{-1}_\mT(t) = \{A | TOP_\mT(A) = t\}$.

First we eliminate nodes $t \in \mT$ such that $|TOP^{-1}_\mT(t)| = 0$. We note, by definition, $\chi(t) \subseteq \chi(parent(t))$. This implies that we can simply remove $t$, connecting all of its children to $parent(t)$ without violating any properties of the valid GHD. And the width is trivially preserved.

Now suppose for some node $t \in \mT$, $|TOP^{-1}_\mT(t)| = k > 1$. Let $A_1$ be the attribute in $TOP^{-1}_\mT(t)$ that is earliest in the aggregation ordering. Let $X = \chi(t) \cap \chi(parent(t))$. Then create a new node $t'$ such that $\chi(t') = \{A_1\} \cup X$ and add it to $\mT$ between $t$ and $parent(t)$. All of properties of the valid GHD must still hold, and since the new node contains a subset of the attributes in $t$, the width must be preserved. Note that after adding this node, $|TOP^{-1}_\mT(t)| = k-1$; we can repeat this process until the set is of size $1$.
\end{proof}

\begin{lemma}\label{lemma:decomp-prop2}
Given a valid GHD $(\mT, \chi)$ with $\gamma$-width $w$ that is TOP-unique, we can transform it to be subtree-connected while preserving TOP-unique and $\gamma$-width $\le w$.
\end{lemma}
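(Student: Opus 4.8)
The plan is to exhibit a width-preserving surgery on $(\mT,\chi)$ whose only moves are deleting redundant attributes from bags and re-parenting whole subtrees. Neither move ever enlarges a bag, so every resulting bag is contained in a bag of $(\mT,\chi)$ and hence, by node-monotonicity, $\gamma$-width stays $\le w$ for every $\gamma$. Re-parenting a subtree to an ancestor of its old parent only deletes ancestor--descendant pairs among tree nodes, so by Theorem~\ref{thm:validcorrect} the GHD stays valid, and since no $TOP_\mT$-node ever moves it stays TOP-unique; thus the whole burden is to force subtree-connectedness. We may assume $\mH$ is connected.

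The first step is a normalization: repeatedly delete an attribute $B$ from a bag $\chi(c)$, for $B\notin TOP^{-1}_\mT(c)$, whenever (i) no child of $c$ contains $B$ and (ii) every hyperedge $E$ with $B\in E\subseteq\chi(c)$ is contained in some other bag as well. Each deletion strictly shrinks $\sum_c|\chi(c)|$; it never removes the element of $TOP^{-1}_\mT(c)$, never removes the last bag-occurrence of an attribute (every attribute lies in a hyperedge, which prevents (ii) from holding there), and removes $B$ from $\chi(c)$ only when $c$ is a leaf of the connected subtree of $B$-containing bags --- so it terminates and preserves validity and TOP-uniqueness. Identifying each node with its unique $TOP_\mT$-attribute, the fixed point enjoys two facts: (a) if $B\in\chi(c)$ then node $B$ is an ancestor-or-equal of $c$, so the attributes of $\chi(c)$ lie on the root-path to $c$; and (b) for any node $A$, child $c$ of $A$, with $A\in\chi(c)$, the attribute set $D(c)$ (attributes whose $TOP_\mT$-node lies in the subtree rooted at $c$) lies in the same connected component of $\mH[D(A)]$ as $A$. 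Fact (b) is obtained by chasing the reason $A\in\chi(c)$ down the tree: if no descendant-copy of $A$ forces it, normalization yields a node $w$ below $c$ and a hyperedge $E\ni A$ with $E\subseteq\chi(w)$ covered by $\chi(w)$ alone; by (a) all of $E$ lies on the root-path to $w$, and since $E\not\subseteq\chi(A)$ it must meet that path strictly between $A$ and $w$, so $E$ contains an attribute of $D(c)$, yielding the desired $\mH$-edge from $A$ into $D(c)$.

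The main transformation then sweeps the tree nodes in a bottom-up order. On reaching $A$, every descendant of $A$ has already been made subtree-connected, so each child's set $D(c)$ induces a connected subgraph and the connected components of $\mH[D(A)]$ are exactly $\{A\}$ together with unions of whole $D(c)$'s; let $K_0$ be the component containing $A$. For each child $c$ with $D(c)\not\subseteq K_0$, fact (b) forces $A\notin\chi(c)$, so by fact (a) every attribute of $\chi(c)$ other than $c$'s own $TOP_\mT$-attribute has its node an ancestor of $A$, that is, an ancestor-or-equal of $\mathrm{parent}(A)$; this is precisely the condition under which re-parenting $\mT_c$ from $A$ to $\mathrm{parent}(A)$ keeps the running intersection property intact --- the connected subtree of nodes containing any attribute $B$, if it used the edge $(A,c)$, now uses the edge $(\mathrm{parent}(A),c)$ through the same bags. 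After relocating all such $\mT_c$, the attribute set of the subtree at $A$ is exactly $K_0$, which is connected, and since the only node whose descendant-set changed is $A$, no already-processed node is disturbed. Each node is handled once, so the sweep terminates and returns a valid, TOP-unique, subtree-connected GHD of $\gamma$-width $\le w$.

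The step I expect to be the main obstacle is fact (b) and the accompanying check that re-parenting preserves the running intersection property: the argument closes only because ``the children we are forced to move out of $D(A)$'' coincides with ``the children we are permitted to move,'' and pinning this down needs the normalization plus a careful case analysis of why an attribute occupies a given bag (edge-covering versus running-intersection pass-through, including the unary-hyperedge corner case). The remaining items --- termination, width-preservation, and preservation of validity and TOP-uniqueness --- are routine bookkeeping once the re-parenting step is justified.
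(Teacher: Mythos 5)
Your proposal is correct and follows essentially the same route as the paper's proof: a bottom-up sweep that, at each node $t=TOP_\mT(A)$, detaches every child subtree whose descendant-attribute set is disconnected from $A$ in $\mH$ and re-parents it to $\mathrm{parent}(t)$, using TOP-uniqueness ($\chi(t)\setminus\{A\}\subseteq\chi(\mathrm{parent}(t))$) to keep the running intersection property. The only real difference is how the two arguments ensure the detached subtree's bags need not contain $A$: the paper simply deletes $A$ from $\chi(\mT_{c})$ (justified by the disconnection), whereas you add a preliminary normalization and your fact (b) to show $A$ was never there --- a more laborious but equally valid way to close the same step.
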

\begin{proof}
For any node $t \in \mT$, define $\mV_t = \{v \in \mV | TOP_\mT(v) \in \mT_t\}$.

We proceed with a proof by (bottom-up) induction on the tree $\mT$. As our base case, we consider the leaves of $\mT$. Since $l$ does not have any children, $\mV_l$ must contains exactly one attribute, which is trivially a connected subgraph of $\mH$.

Now we consider the subtree $\mT_t$ rooted at some internal node $t$. Let $A$ be the attribute such that $TOP_\mT(A) = t$. Let $c_1, c_2, \dots, c_k$ be the children of $t$. By the inductive hypothesis, the subtrees rooted at these children satisfy all of the desired properties. We note that, by definiton, $\chi(t) \backslash A \subseteq \chi(parent(t))$. For any child $c_i$ such that $\mV_{c_i}$ and $A$ are not connected in $\mH$, we can remove $A$ from $\chi(\mT_{c_i})$ and set $parent(c_i)$ to be $parent(t)$. By doing so for all such children $c_i$, we ensure that $\mV_t$ is a connected subgraph of $\mH$. Since $A$ is not connected to $\mV_{c_i}$, this transformation does not violate the properties of GHDs. Since we are not creating any new ancestral relationships between nodes, the transformation does not violate the properties of valid GHDs. Finally, the $\gamma$-width $\le w$ and TOP-unique properties are preserved trivially.
\end{proof}

We have thus established that we can transform any valid GHD to additionally satisfy TOP-unique and subtree-connected while preserving width. We now show that any valid GHD satisfying the two additional properties is decomposable. Combined with the two lemmas above, this will complete the proof of Theorem~\ref{thm:decompwidth}. Before we dive into the proof, we prove two helpful lemmas.

\begin{lemma}
\label{lemma:inbetween}
Given an AJAR+ query $Q_{\mH, \alpha}$, if $A <_{\mH, \alpha} B$ for $A,B$ with identical operators, there must exist some $C$ with a different operator such that $A <_{\mH, \alpha} C <_{\mH, \alpha} B$. 
\end{lemma}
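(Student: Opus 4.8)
The plan is to argue directly from the iteration-indexed recursive definition of $<_{\mH,\alpha}$ (equivalently $\sim_{\mH,\alpha}$) given in Appendix~\ref{sec:app-equiv}, by induction on the first iteration at which the relation is witnessed. First I would observe that since $A$ and $B$ have identical operators, neither can be an output attribute (output attributes carry the operator \texttt{NULL}), so $A <_{\mH,\alpha} B$ is exactly $\pre(A,B)$, and in particular $A \sim^i_{\mH,\alpha} B$ for some $i \geq 0$.

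The key structural observation is that every rule producing $\sim_{\mH,\alpha}$ \emph{except} the transitivity rule $(i.2)$ requires the two operators to differ: conditions $(0.1)$, $(0.2)$, and $(i.1)$ all have $\oplus \neq \oplus'$ among their hypotheses. Hence, when $A$ and $B$ have the same operator, $A \sim^i_{\mH,\alpha} B$ can only hold via $(i.2)$, i.e. there is an attribute $C$ and indices $j,k < i$ with $A <^j_{\mH,\alpha} C$ and $C <^k_{\mH,\alpha} B$, and in particular $A <_{\mH,\alpha} C <_{\mH,\alpha} B$. I would then induct on $i$. If $C$ has a different operator from $A$, we are done; and since $B$ has the same operator as $A$, this $C$ also differs from $B$, so $C$ is a witness of the required form. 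Otherwise $C$ has the same operator as $A$; since $j < i$, the inductive hypothesis applies to the pair $A <^j_{\mH,\alpha} C$ and yields some $C'$ with a different operator such that $A <_{\mH,\alpha} C' <_{\mH,\alpha} C$. Transitivity of $<_{\mH,\alpha}$ (it is a partial order) then gives $C' <_{\mH,\alpha} B$, so $A <_{\mH,\alpha} C' <_{\mH,\alpha} B$ with $C'$ of the required form.

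For the base case, note that the smallest index at which $A \sim^i_{\mH,\alpha} B$ can hold via $(i.2)$ is $i = 1$ (at $i=0$ only $(0.1)$ and $(0.2)$ are available), forcing $j = k = 0$; but $A <^0_{\mH,\alpha} C$ implies $A \sim^0_{\mH,\alpha} C$, which requires $A$ and $C$ to have distinct operators --- exactly the first case above. So the recursion always bottoms out.

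I expect the only real content to be the structural observation that $(0.1)$, $(0.2)$, and $(i.1)$ each require distinct operators, which pins the derivation of $A \sim_{\mH,\alpha} B$ down to the transitivity rule; the rest is a routine induction together with one appeal to transitivity of $<_{\mH,\alpha}$. The point to be careful about is keeping the induction well-founded: I would induct on the least $i$ with $A \sim^i_{\mH,\alpha} B$, using that $j,k < i$ so that the inductive hypothesis genuinely applies to the strictly "earlier" pair $(A,C)$.
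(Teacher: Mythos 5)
Your proposal is correct and follows essentially the same route as the paper's proof: observe that rules $(0.1)$, $(0.2)$, and $(i.1)$ all require distinct operators, so $A \sim^i_{\mH,\alpha} B$ for same-operator $A,B$ can only arise from the transitivity rule $(i.2)$, and then descend through the recursion until a differently-operated intermediate attribute appears. The paper phrases this descent informally as ``repeatedly apply this rule,'' whereas you make it a clean induction on the least witnessing index $i$ with an explicit appeal to transitivity to place $C'$ between $A$ and $B$ --- a slightly more careful write-up of the same argument.
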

\begin{proof}
$A <_{\mH, \alpha}^i B$ for some fixed $i$. If $A,B$ have identical operators, the only way $A <_{\mH, \alpha}^i B$ is via rule $(i.2)$, which requires some $C$ and $j,k < i$ such that $A <_{\mH, \alpha}^j C <_{\mH, \alpha}^k B$. If this $C$ has the same operator as $A$ and $B$, we can repeatedly apply this rule until we find some attribute between $A$ and $B$ with a different operator (since both of the rules for $i=0$ only apply to attributes with differing operators). 
\end{proof}

\begin{lemma}
\label{lemma:ancestor}
Given an AJAR+ query $Q_{\mH, \alpha}$ and valid GHD $(\mT< \chi)$. Suppose $A<_{\mH, \alpha} B$, $A$ is not an output attribute, and $TOP_\mT(A)$ is a top node only for $A$. Then, $TOP_\mT(A)$ must be an ancestor of $TOP_\mT(B)$ in any valid GHD.
\end{lemma}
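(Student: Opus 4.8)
(Plan)
The plan is to produce a path in $\mH$ from $A$ to $B$ every intermediate attribute of which is forced below $TOP_\mT(A)$ in the tree, and then to walk along this path, using the running intersection property together with validity (Theorem~\ref{thm:validcorrect}) to drag the $TOP$ node of each successive attribute into the subtree rooted at $TOP_\mT(A)$.

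First I would apply Lemma~\ref{Pathing2}. Since $A$ is not an output attribute, $A \in V(\alpha)$; and since $A <_{\mH, \alpha} B$ with $A \notin V(-\alpha)$, Definition~\ref{def:partial-order} forces $\pre(A,B)$, so $A$ precedes $B$ in $\alpha$ and in particular $B \in V(\alpha)$. Lemma~\ref{Pathing2} then yields a path $P = (A = C_0, C_1, \dots, C_m = B)$ in $\mH$, with $m \ge 1$ since $A \neq B$, such that $A <_{\mH, \alpha} C_i$ for every $1 \le i \le m$. Next, for each $1 \le i \le m$ I would fix a hyperedge $E_i$ with $C_{i-1}, C_i \in E_i$ and a GHD node $t_i$ with $E_i \subseteq \chi(t_i)$; by the running intersection property, the nodes containing a given attribute form a connected subtree whose highest node is that attribute's $TOP_\mT$, so $t_i$ lies in the subtree rooted at $TOP_\mT(C_{i-1})$ and also in the subtree rooted at $TOP_\mT(C_i)$.

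The core step is an induction on $i$ showing $TOP_\mT(C_i)$ is a strict descendant of $TOP_\mT(A)$. In the inductive step, $t_i$ lies below $TOP_\mT(C_{i-1})$, which lies in the subtree rooted at $TOP_\mT(A)$ (inductively, or trivially when $i = 1$ because $C_0 = A$), so $t_i$ lies below $TOP_\mT(A)$; since $t_i$ also lies below $TOP_\mT(C_i)$, the nodes $TOP_\mT(A)$ and $TOP_\mT(C_i)$ are tree-comparable. The contrapositive of Theorem~\ref{thm:validcorrect} applied to $A <_{\mH, \alpha} C_i$ rules out $TOP_\mT(C_i)$ being an ancestor of $TOP_\mT(A)$, and the hypothesis that $TOP_\mT(A)$ is a top node only for $A$ (with $C_i \neq A$) rules out $TOP_\mT(C_i) = TOP_\mT(A)$; hence $TOP_\mT(C_i)$ is a strict descendant of $TOP_\mT(A)$. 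Taking $i = m$ shows $TOP_\mT(B)$ is a strict descendant of $TOP_\mT(A)$, which is the claim; the argument used nothing about the chosen GHD beyond validity and the two stated hypotheses, so it applies to every valid GHD satisfying them. The one genuine obstacle is excluding the case that $TOP_\mT(A)$ and $TOP_\mT(C_i)$ are tree-incomparable --- exactly what the path from Lemma~\ref{Pathing2} buys us, since running intersection inductively forces the covering node $t_i$ into the subtree below $TOP_\mT(A)$, making the two $TOP$ nodes comparable, after which validity fixes the direction.
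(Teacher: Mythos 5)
Your proposal is correct and follows essentially the same route as the paper's proof: invoke Lemma~\ref{Pathing2} to obtain a path from $A$ to $B$ all of whose non-initial attributes are $<_{\mH,\alpha}$-above $A$, then induct along the path, using the running intersection property to make $TOP_\mT(A)$ and $TOP_\mT(C_i)$ tree-comparable and validity plus the TOP-uniqueness hypothesis to force $TOP_\mT(C_i)$ strictly below $TOP_\mT(A)$. The only difference is that you spell out a few details the paper leaves implicit (why $B \in V(\alpha)$, and why the two $TOP$ nodes are comparable), which is fine.
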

\begin{proof}
Lemma~\ref{Pathing2} implies that there exists a path from $A$ to $B$ such that for every $C$ in the path such that $C \neq A$, $A <_{\mH, \alpha} C$. Let $C_0, C_1, \dots, C_k$ represent the path, where $C_0 = A$ and $C_k = B$. We claim that $TOP_\mT(A)$ is an ancestor of $TOP_\mT(C_i)$ for all $1 \le i \le k$. Proof by induction on $i$. Our base case is for $i=1$. By the definition of a path, $A$ and $C_1$ must appear together in some hyperedge, implying that they appear together in some bag of $\mT$. Both $TOP_\mT(C_1)$ and $TOP_\mT(A)$ must either be equal to or an ancestor of this bag. Since $TOP_\mT(C_1)$ cannot be equal to or an ancestor of $TOP_\mT(A)$, $TOP_\mT(A)$ is an ancestor of $TOP_\mT(C_1)$.

For $i>1$, we note since $C_i$ and $C_{i-1}$ appear in an edge together, by the same logic as above, $TOP_\mT(C_i)$ and $TOP_\mT(C_{i-1})$ must both be equal to or an ancestor of some node $t \in \mT$. By the inductive hypothesis, $TOP_\mT(A)$ is an ancestor of $TOP_\mT(C_{i-1}$, implying that $TOP_\mT(A)$ is an ancestor of $t$. Since $TOP_\mT(C_i)$ cannot equal or be an ancestor of $TOP_\mT(A)$, $TOP_\mT(A)$ must be an ancestor of $TOP_\mT(C_i)$.
\end{proof}

\begin{lemma}
Any valid GHD $(\mT, \chi)$ that is TOP-unique and subtree-connected must be decomposable. 
\end{lemma}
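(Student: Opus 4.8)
The plan is to induct on $|\alpha|$. The base case $|\alpha| = 0$ is immediate: every attribute is an output attribute, $\mT_0 = \mT$ works, and by Definition~\ref{def:decomp} every GHD is decomposable. For the inductive step I would first carve off $\mT_0$. Assuming (as Lemma~\ref{lemma:decomp-prop1} already arranges) that every node $t$ is $TOP_\mT$ of \emph{exactly} one attribute $A_t$, set $\mT_0 = \{t : \chi(t) \subseteq V(-\alpha)\}$; this is precisely the set of $t$ for which $A_t$ is an output attribute. I claim $\mT_0$ is upward closed: if $t \in \mT_0$ has parent $p \notin \mT_0$, then $\chi(p)$ contains a non-output attribute $B$, so $TOP_\mT(B)$ is a strict ancestor of $TOP_\mT(A_t) = t$ (it cannot equal $t$ by TOP-uniqueness); but $A_t$ is output and $B$ is not, so $A_t <_{\mH,\alpha} B$ by Definition~\ref{def:partial-order}, contradicting Theorem~\ref{thm:validcorrect}. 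Hence $\mT_0$ is a rooted subtree (possibly empty), and the same validity argument shows $TOP_\mT(A) \in \mT_0$ for every output attribute $A$, so $\chi(\mT_0) = V(-\alpha)$.

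Since $\mT_0$ is upward closed, $\mT \backslash \mT_0$ is a downward-closed forest whose connected components $\mT'$ are exactly the $\mT$-subtrees rooted at the nodes of $\mT\backslash\mT_0$ whose parents lie in $\mT_0$. For such a component put $A(\mT') = \{v : TOP_\mT(v) \in \mT'\}$; because output attributes have TOP nodes in $\mT_0$, each $A(\mT')$ is a nonempty subset of $V(\alpha)$ and these sets partition $V(\alpha)$. By subtree-connectedness of $\mT$ at the root of $\mT'$, $A(\mT')$ is connected in $\mH$, hence connected in $\mH \backslash V(-\alpha)$ (it uses no attribute of $V(-\alpha)$). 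Moreover, if two non-output attributes lie together in an edge disjoint from $V(-\alpha)$ they occur in a common bag of $\mT$, so their TOP nodes are comparable and therefore in the same component of the downward-closed forest $\mT\backslash\mT_0$; thus each $A(\mT')$ is closed under adjacency in $\mH \backslash V(-\alpha)$, and being connected it equals exactly one connected component $C$. This yields the bijection $C \leftrightarrow \mT_C$ required by Definition~\ref{def:decomp}, with $C^+ := \bigcup_{E \in \mE_C} E$ the attributes of $\mH_C = (C^+, \mE_C)$. I would then check that $\mT_C$, with bags intersected down to $C^+$, is a GHD of $\mH_C$ (every $E \in \mE_C$ meets $C$, so the bag of $\mT$ covering $E$ lies in $\mT_C$; running intersection is inherited), that it is TOP-unique, and — the real content — that it is \emph{valid} for the subquery $Q_{\mH_C, \alpha_{C\backslash C^O}}$, whose output attributes are $C^O \cup (C^+ \backslash C)$. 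For two attributes of $C\backslash C^O$ this follows from validity of $\mT$ together with Lemma~\ref{lemma:sub-orders}; for an original output $v \in C^+ \backslash C$ one notes $TOP_{\mT_C}(v)$ is the root of $\mT_C$, which dominates nothing.

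The main obstacle is the remaining case: ruling out an attribute $x \in C\backslash C^O$ whose TOP node is an ancestor of $TOP_\mT(y)$ for some $y \in C^O$. Here I would use that $x \notin C^O$ supplies a $z \in C$ with $z <_{\mH,\alpha} x$ (so $TOP_\mT(z)$ sits above $TOP_\mT(x)$ by Lemma~\ref{lemma:ancestor}), then invoke subtree-connectedness at $TOP_\mT(x)$ to get a path from $x$ to $y$ through attributes whose TOP nodes all lie below $TOP_\mT(x)$, and propagate the relation $z <_{\mH,\alpha} (\cdot)$ along this path using the extension/transitivity behavior of $<_{\mH,\alpha}$ (Lemma~\ref{Pathing}, with Lemma~\ref{lemma:inbetween} used to insert an attribute of differing operator whenever a path edge has endpoints sharing an operator), concluding $z <_{\mH,\alpha} y$ — which contradicts $y \in C^O$. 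Subtree-connectedness of $\mT_C$ as a GHD of the subquery is then inherited from that of $\mT$. Finally, since the restriction of $<_{\mH,\alpha}$ to the nonempty set $C$ has a minimal element, $C^O \neq \emptyset$ and hence $|\alpha_{C\backslash C^O}| < |\alpha|$; the induction hypothesis makes each $\mT_C$ decomposable, and together with $\mT_0$ this exhibits $\mT$ as decomposable.
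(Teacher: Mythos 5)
Your overall skeleton --- induct on $|\alpha|$, carve off $\mT_0$ as the upward-closed set of nodes whose bags contain only output attributes, match the components of $\mT\setminus\mT_0$ bijectively to the connected components of $\mH\setminus V(-\alpha)$ via subtree-connectedness, and reduce everything to validity of each $\mT_C$ for its subquery with the only hard case being the $C^O$-attributes --- is the same as the paper's, and those parts are sound. One technical gap first: the induction hypothesis as you state it does not apply to $\mT_C$, because $\mT_C$ is generally \emph{not} TOP-unique for the subquery: its root is the $TOP_{\mT_C}$ node of every attribute of $\bigl(\bigcup_{E\in\mE_C}E\bigr)\setminus C$ that it contains, in addition to one attribute of $C^O$. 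The paper closes the induction by proving a strengthened statement for ``TOP-semiunique'' GHDs, whose root may be the top node of several output attributes; you need the same relaxation.

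The serious gap is the hard case. The step ``propagate the relation $z<_{\mH,\alpha}(\cdot)$ along this path'' is not a valid inference for this partial order: $z<_{\mH,\alpha}x$ together with a hyperedge containing $x$ and $w$ does not give $z<_{\mH,\alpha}w$ when $z$ and $w$ share an operator and $w$ precedes $x$ in $\alpha$. (Concretely, for $\alpha=((A,+),(C,+),(B,\max))$ with edges $\{A,B\},\{B,C\}$ one has $A<_{\mH,\alpha}B$ and $C<_{\mH,\alpha}B$ but $A\not<_{\mH,\alpha}C$, even though $A,B,C$ is a path.) Neither cited lemma repairs this: Lemma~\ref{lemma:inbetween} interpolates between attributes that are already comparable under $<_{\mH,\alpha}$, not between path-adjacent ones, and Lemma~\ref{Pathing} (which requires endpoints with different operators) only produces a blocking attribute somewhere on a path, not the relation you want. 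The paper runs the contradiction in the opposite direction: it first reduces to the case where $x$ and $y$ carry \emph{different} operators --- using that all of $C^O$ shares one operator (contrapositive of Lemma~\ref{Pathing} along a path inside $C$ joining two $C^O$-elements), together with Lemmas~\ref{lemma:inbetween} and~\ref{lemma:ancestor}, to replace $x$ by an $x'$ of different operator whose top node is still an ancestor of $TOP_\mT(y)$ --- and then applies Lemma~\ref{Pathing} plus the fact that every $<_{\mH,\alpha}$-predecessor of $y\in C^O$ is an output attribute to conclude that \emph{every} $\mH$-path from $y$ to $x'$ contains some $D$ with $D<_{\mH,\alpha}x'$, whose top node cannot lie in the subtree rooted at $TOP_\mT(x')$; this contradicts subtree-connectedness directly. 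You would need either to adopt that argument or to supply an actual proof of your propagation claim in this restricted configuration; as written, the step is unjustified.
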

\begin{proof}
We actually prove a slightly stronger statement. Define the property TOP-semiunique as follows: every non-root node $t \in \mT$ is the $TOP_\mT$ node for exactly one attribute and the root node is either the $TOP_\mT$ for exactly one attribute or more than one output attribute (and zero non-output attributes). Note that the TOP-unique property directly implies the TOP-semiunique property. We will show that if $(\mT, \chi)$ is a valid, TOP-semiunique, and subtree-connected GHD for the $\ajar$ query $Q_{\mH, \alpha}$, it must be decomposable.

Proof by induction on $|\alpha|$. If $|\alpha = 0|$, then every GHD is decomposable. 

Suppose $|\alpha > 0|$. Consider the set of nodes that are $TOP_\mT$ nodes for output attributes, i.e. $\{t \in \mT | \exists A \in V(-\alpha): TOP_\mT(A) = t\}$. Since no non-output attribute can have a top node above an output attribute's top node, the TOP-semiunique property guarantees that this set of nodes forms a rooted subtree $\mT_0$ of $\mT$ such that $\chi(\mT_0) = V(-\alpha)$. 

Consider the subtrees in $\mT \backslash \mT_0$. Call them $\mT_1, \mT_2, \dots, \mT_k$. For any $\mT_i$, let $\mV_i$ be the attributes that have $TOP_\mT$ nodes in $\mT_i$, i.e. $\mV_i = \{A \in \mV | TOP_\mT(A) \in \mT_i\}$. None of these $\mV_i$ can contain any output attributes, and connected-subtree guarantees that each of the $\mV_i$ are connected.  Thus, the $\mV_i$ must be the connected components of $\mH \backslash V(-\alpha)$. So for each connected component $C$ of $\mH \backslash V(-\alpha)$, the corresponding subtree $\mT_C$ is the subtree $\mT_i$ such that $\mV_i = C$. Since for any $A \in C$, $TOP_\mT(A) \in \mT_C$, the attributes in $C$ only appear in $\mT_C$. Note that for every edge $E \in \mE$, there exists a node $t \in \mT$ such that $E \subseteq \chi(t)$. This implies that for every edge $E \in \mE_C$, there exists a node $t \in \mT_C$ such that $E \subseteq \chi(t)$. As such, we can conclude that each $\mT_C$ is a GHD for the hypergraph $(\bigcup_{E \in \mE_C} E, \mE_C)$.

Define $\mV_C = \bigcup_{E \in \mE_C} E$. To complete this proof, we now need to show that each $\mT_C$ is a decomposable GHD for the $\ajar$ query $Q_{(\mV_C, \mE_C), \alpha_{C \backslash C^O}}$. By the inductive hypothesis, if $\mT_C$ is valid, TOP-semiunique and subtree-connected, it must be decomposable. Note that since $\mT$ is TOP-semiunique and subtree-connected, $\mT_C$ must also be TOP-semiunique and subtree-connected. We have also established that $\mT_C$ is a GHD for $(\mV_C, \mE_C)$. Thus to finish this proof, we only need to show that for any $A, B \in \mV_C$ such that $TOP_{\mT_C}(B)$ is an ancestor of $TOP_{\mT_C}(A)$, $A \not <_{(\mV_C, \mE_C), \alpha_{C \backslash C^O}} B$.

For ease of notation, in the rest of this proof we will use $<_C$ to represent $<_{(\mV_C, \mE_C), \alpha_{C \backslash C^O}}$. We show the contrapositive: if $A <_C B$, $TOP_{\mT_C}(B)$ is not an ancestor of $TOP_{\mT_C}(A)$. We consider a few cases. If $A \in \mV_C \backslash C$, $A$ must be in $V(-\alpha)$, implying $TOP_{\mT_C}(A)$ is the root of $\mT_C$. For any $A \in C$, note that $TOP_{\mT_C}(A) = TOP_\mT(A)$. By Lemma \ref{lemma:sub-orders}, for any $A \in C$, if $A <_C B$ then either $A <_{\mH, \alpha} B$ or $A \in C^O$. In the former case, the fact that $\mT$ is valid ensures $TOP_\mT(B)$ is not an ancestor of $TOP_\mT(A)$. For the latter case, assume for contradiction that there exist $A, B$ such that $TOP_\mT(B)$ is an ancestor of $TOP_\mT(A)$, $A <_C B$, and $A \in C^O$. 

We first claim that, without loss of generality, we can suppose that $A$ and $B$ have different operators. To do so, we show that if $A$ and $B$ have the same operator, there must exist a $B'$ with a different operator such that $TOP_\mT(B')$ is an ancestor of $TOP_\mT(A)$ and $A <_C B'$. By the definition of $C^O$, there must exist some $A' \in C^O$ such that $A' <_{\mH, \alpha} B$. Since $A', A \in C$, there must exist a path exclusively in $C$ that connects the two. And since $A', A \in C^O$, no attribute along the path precedes either $A$ or $A'$ in $<_{\mH, \alpha}$. The contrapositive of Lemma~\ref{Pathing} implies that $A'$ and $A$ must have the same operator, which implies that $A'$ and $B$ have the same operator. Lemmas~\ref{lemma:inbetween} and~\ref{lemma:ancestor} imply that there exists some $B'$ with a different operator such that $A' <_{\mH, \alpha} B' <_{\mH, \alpha} B$ and $TOP_\mT(B')$ is an ancestor of $TOP_\mT(B)$. The former result implies $B' \in C \backslash C^O$ and $A <_C B'$. The latter result implies $TOP_\mT(B')$ is an ancestor of $TOP_\mT(A)$.

We now suppose $A$ and $B$ have different operators without loss of generality. Since $A \in C^O$, any $O \in \mV$ such that $O <_{\mH, \alpha} A$ must be an output attribute, thereby implying $O <_{\mH, \alpha} B$ as well. This fact, combined with Lemma~\ref{Pathing}, implies every path between $A$ and $B$ must contain some $D$ such that $D <_{\mH, \alpha} B$. Since $\mT$ is valid and TOP-semiunique, the $TOP_\mT(D)$ for each of these $D$ cannot be in the subtree rooted at $TOP_\mT(B)$. This implies that $A$ and $B$ are disconnected in the subtree rooted at $TOP_\mT(B)$, contradicting the subtree-connected property.
\end{proof}

\begin{theorem}[Copy of Theorem~\ref{thm:decomp}]
For an $\ajar$ query $Q_{\mH, \alpha}$, suppose $\mH_0, \dots, \mH_k$ are the characteristic hypergraphs $H(\mH, \alpha)$. Then GHDs $G_0, G_1, \dots, G_k$ of $\mH_0, \dots, \mH_k$ can be connected to form a decomposable GHD $G$ for $Q_{\mH, \alpha}$. Conversely, any decomposable GHD $G$ of $Q_{\mH, \alpha}$ can be partitioned into GHDs $G_0, G_1, \dots, G_k$ of the characteristic hypergraphs $\mH_0, \dots, \mH_k$. Moreover, in both of these cases, $\gamma\text{-width}(G) = \max_{i} \gamma\text{-width}(G_i)$.
\end{theorem}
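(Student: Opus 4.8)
The plan is to prove both directions at once by induction on $|\alpha|$, since Definition~\ref{def:decomp} and the defining recursion for $H(\mH,\alpha)$ have exactly matching shape; throughout write $e_i := \mV_{-\alpha}\cap C_i^{+}$ for the intersection edge attached to component $C_i$. The base case $|\alpha|=0$ is immediate: $\mH\setminus\mV_{-\alpha}$ has no components, $H(\mH,\alpha)=\{\mH\}$, every GHD of $\mH$ is vacuously decomposable, and the width identity is trivial. So assume $|\alpha|>0$ and that the theorem holds for all \ajar queries with strictly shorter orderings.

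\textbf{Stitching.} Given GHDs for all of $H(\mH,\alpha)=\{\mH_0\}\cup\bigcup_i H(\mH_i^{+},\alpha_{C_i\setminus C_i^{O}})$, I would group them into a GHD $G_0$ of $\mH_0$ and, for each $i$, the GHDs of the characteristic hypergraphs of $Q_{\mH_i^{+},\alpha_{C_i\setminus C_i^{O}}}$; the inductive hypothesis assembles each such group into a decomposable GHD $\Gamma_i$ of that sub-query. I would carry the induction with the extra invariant that $\Gamma_i$ can be taken to have its root bag containing $e_i$ --- this is free, since $e_i$ is the extra edge of $\mH_i^{+}$, which lies inside that sub-query's output attributes and hence is an edge of the output hypergraph used at the root of the sub-recursion, so one simply roots that sub-GHD at a bag containing $e_i$. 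Since $e_i$ is also, by construction, an edge of $\mH_0$, the GHD $G_0$ has a bag $b_i$ with $e_i\subseteq\chi(b_i)$; form $G$ by rooting $G_0$ arbitrarily and attaching the root of each $\Gamma_i$ as a new child of $b_i$. Then $G$ is a GHD of $\mH$: every edge of $\mH$ lies in some $\mE_{C_i}$ (hence is covered inside $\Gamma_i$) or inside $\mV_{-\alpha}$ (hence covered inside $G_0$), and the running intersection property holds because the vertices shared between any two of the pieces sit inside the relevant $e_i$ --- indeed $\mV(\mH_0)\cap\mV(\mH_i^{+})=e_i$ and $\mV(\mH_i^{+})\cap\mV(\mH_j^{+})\subseteq\mV_{-\alpha}$ --- while both $b_i$ and the root of $\Gamma_i$ carry all of $e_i$. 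Setting $\mT_0:=G_0$ and $\mT_{C_i}:=\Gamma_i$ then witnesses that $G$ is decomposable. Finally $G$ has exactly the bags of the given GHDs (only tree edges were added), so node-monotonicity of $\gamma$ gives $\gamma\text{-width}(G)=\max_j\gamma\text{-width}(G_j)$.

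\textbf{Splitting.} Conversely, given a decomposable $G$ with its witnessing subtrees $\mT_0$ and $\mT_{C_i}$, I would set $G_0:=\mT_0$. Since $\chi(\mT_0)=\mV_{-\alpha}$, every bag of $\mT_0$ is contained in $\mV_{-\alpha}$, and running intersection inside $G$ forces every edge of $\mH_0$ --- both the original edges contained in $\mV_{-\alpha}$ and each intersection edge $e_i$, which lies in the bag of $\mT_0$ at which $\mT_{C_i}$ attaches --- to be covered by a bag of $\mT_0$; hence $G_0$ is a GHD of $\mH_0$. The same observation puts $e_i$ in the root bag of $\mT_{C_i}$, so $\mT_{C_i}$ is a GHD of $\mH_i^{+}$ and is decomposable for $Q_{\mH_i^{+},\alpha_{C_i\setminus C_i^{O}}}$; the inductive hypothesis then splits each $\mT_{C_i}$ into GHDs of $H(\mH_i^{+},\alpha_{C_i\setminus C_i^{O}})$. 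Collecting $G_0$ with all of these yields GHDs of precisely the hypergraphs of $H(\mH,\alpha)$, with no bag added or enlarged, so again $\gamma\text{-width}(G)=\max_j\gamma\text{-width}(G_j)$.

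\textbf{The main obstacle.} Every individual step is mechanical, but two bookkeeping points carry the real content and need care in the full proof. First, one must reconcile $\mH_i^{+}=(C_i^{+},\mE_{C_i}\cup\{e_i\})$ with the hypergraph $(\bigcup_{E\in\mE_{C_i}}E,\ \mE_{C_i})$ that actually appears in Definition~\ref{def:decomp}: these differ only by the single edge $e_i$, which consists entirely of output attributes of the sub-query $\alpha_{C_i\setminus C_i^{O}}$, and I would verify that adding such an all-output edge changes neither $<_{\cdot,\alpha_{C_i\setminus C_i^{O}}}$ (the two clauses of its definition that mention an edge also require the pertinent attributes to have distinct --- hence non-\texttt{NULL} --- operators, so an edge consisting of output attributes is never invoked) nor the connected components of the hypergraph with its output attributes removed (the vertices of $e_i$ get deleted anyway), so that ``decomposable for $\mH_i^{+}$'' and ``decomposable for $(\bigcup E,\mE_{C_i})$'' are interchangeable --- which is exactly what makes the recursion close. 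Second, the invariant that each assembled sub-GHD carries its incoming intersection edge in its root bag is what lets the gluing preserve the running intersection property without inserting new bags, and hence keeps the width statement an exact equality rather than merely an inequality.
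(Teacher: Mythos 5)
Your proof is correct and follows essentially the same route as the paper's: induction on $|\alpha|$, matching the recursive structure of Definition~\ref{def:decomp} to the recursion defining $H(\mH,\alpha)$, gluing/splitting at bags covering the intersection edges $\mV_{-\alpha}\cap C_i^{+}$, and observing that no bag is added or enlarged so the $\gamma$-width equality is exact. The only differences are cosmetic refinements --- you carry the ``root bag contains $e_i$'' invariant through the induction where the paper re-roots each subtree post hoc, and you explicitly justify that adjoining the all-output edge $e_i$ does not change the sub-query's partial order or components, a point the paper passes over with ``this implies.''
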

\begin{proof}
Proof by induction on $|\alpha|$. Our base case is $|\alpha| = 0$. In this case, the only characteristic hypergraph is the input hypergraph, that is $H(\mH, \alpha) = \mH$. The theorem is then trivially true.

Suppose $|\alpha > 0|$. Any decomposable GHD $G$ for $Q_{\mH, \alpha}$ must be decomposable into subtrees $\mT_0, \dots, \mT_l$ such that $\chi(\mT_0) = V(-\alpha)$ and $\mT_i$ is a decomposable GHD for $Q_{(\cup_{E \in \mE_{C_i}} E, \mE_{C_i}), \alpha_{C_i \backslash C_i^O}}$ where $C_i$ is the $i^{th}$ connected component of $H \backslash V(-\alpha)$. Define $\mV_{C_i}$ to be $\cup_{E \in \mE_{C_i}} E$. To preserve the running intersection property of a GHD, the root of $\mT_i$ and its parent (in $\mT_0$) must contain the attributes $\mV_{C_i} \cap V(-\alpha)$. This implies each of the $\mT_i$ are decomposable GHDs of $Q_{\mH_i^+, \alpha_{C_i \backslash C_i^O}}$, where $\mH_i^+$ is the hypergraph defined in the definition of the characteristic hypergraphs. By the inductive hypothesis, the $\mT_i$ (for $i \ge 1$) can be broken down into GHDs $G_1, \dots G_k$ of the characteristic hypergraphs $\mH_1, \dots, \mH_k$. In addition, $\mT_0$ must also have nodes that contain the edge $E \in \mE$ such that $E \subseteq V(-\alpha)$, implying it is the GHD $G_0$ of the characteristic hypergraph $\mH_0$.

In the other direction, by the inductive hypothesis, the GHDs $G_1, \dots G_k$ can be stitched together to form $\mT_1, \dots \mT_l$ such that each $\mT_i$ is the decomposable GHD for the \ajar query $Q_{\mH_i^+, \alpha_{C_i \backslash C_i^O}}$. Note that, by definition, for each $i$, $\mT_i$ and $G_0$ must both have a node containing the attributes $\mV_{C_i} \cap V(-\alpha)$; let $t_i$ and $g_i$ denote the appropriate node in $\mT_i$ and $G_0$, respectively. We can re-root $\mT_i$ at $t_i$ without violating any conditions since it amounts to re-rooting the top-most GHD of its decomposition; re-rooting $\mT_i$ at $t_i$ can only change the ancestor relationship between $TOP_\mT$ nodes of output attributes. Once we re-root the $\mT_i$ appropriately, we can simply set $parent(t_i)$ to be $g_i$ to generate a decomposable GHD for the \ajar query $Q_{\mH, \alpha}$.
\end{proof}

\section{Product Aggregations (Detailed version)}\label{sec:univ-aggregation-proofs}
The primary application of queries with multiple aggregations is to establish bounds for the Quantified Conjunctive Query ($QCQ$) problem~\cite{FAQ}. A $QCQ$ query consists of an arbitrary conjunctive query preceded by a series of (existential and/or universal) quantifiers, and a solution must report the satisfying assignments to the non-quantified variables. A $\#QCQ$ query is similar to a $QCQ$ query, but instead of reporting satisfying assignments, we report the number of satisfying assignments. 

We now introduce a new type of aggregation, called product aggregation, that lets us efficiently handle $QCQ$ queries. We define the \ajar problem for product aggregations, and then extend our algorithm from Section~\ref{subsec:simple-solution} to handle this new type of \ajar query.

\subsection{\ajar queries with product aggregates} 

In order to recover $QCQ$ as an \ajar query, we need {\em product aggregations} i.e. aggregations that use the $\otimes$ operator. Throughout the paper, we have assumed that an absent tuple effectively has an annotation of $0$. To maintain this for product aggregations, we need to define product aggregation so that it returns $0$ if any tuple is absent. In particular, we redefine $\sum_{(A, \otimes)}R_F$ to include a projected tuple $t_{F \backslash A}$ in the output only if $(t_{F\backslash A} \circ t_A)$ exists in $R_F$ for \emph{every} possible value $t_{A} \in \mD^A$. More formally, let $B = F \backslash A$:

\begin{definition}
$\displaystyle \sum_{(A, \otimes)} R_{AB} = \{(t_B, \lambda): \forall t_A \in \mD^A, t_B \circ t_A \in R_{AB} \text{ and } \lambda = \prod_{(t, \lambda_t) \in R_{AB}: \pi_B t = t_B} \lambda_t \}$
\end{definition}

Note that this adjusted definition implies an annotation of $0$ is once again fully equivalent to absence. We can adjust the definition of aggregation orderings (and $\ajar$ queries) to possibly include this new type of aggregation. We can construct valid GHDs for such aggregations as before, and run AggroGHDJoin to solve them. 

\begin{example}
Consider the semiring $(\{0,1\}, \max, \cdot)$. Note that in this domain $\max$ is equivalent to a disjunction (and the logical existential quantifier) and $\prod$ is equivalent to a conjunction (and the logical universal quantifier). Thus the space of \ajar queries that use these two aggregators recover all $QCQ$ queries.
\end{example}

An aggregation using $\otimes$ is called a {\em product aggregation}, and an attribute that is aggregated using a product aggregation is called a {\em product attribute}. Aggregations that are not product aggregations are called {\em semiring aggregations}, while attributes that are neither output attributes nor product attributes are called {\em semiring attributes}. 

{\bf Idempotence Assumption:} Using the product aggregation as defined raises one issue. Our semiring aggregates satisfy the distributive property, which is integral in our ability to push-down aggregations and for our results about commuting aggregations (Theorem~\ref{commute}). In general, product aggregations do not distribute: $(a \otimes b) \otimes (a \otimes c) = (a \otimes a) \otimes (b \otimes c) \neq a \otimes (b \otimes c)$. However if we require our product aggregations to be idempotent, that is that $a \otimes a = a$ for any element $a$, our product aggregations will distribute. And for $QCQ$, the domain is restricted to $\{0,1\}$, in which product aggregations are idempotent. So in this section, we will study idempotent product aggregations; we will generalize to non-idempotent aggregations in Appendix~\ref{subsec:non-idempotent-product}. 

\subsection{Solving \ajar queries with product aggregates}
For aggregation orderings that have product aggregations, the rules for determining when two orderings are equivalent are somewhat different. We now discuss how we can optimize this new type of aggregation further; product aggregations are fundamentally different from ordinary aggregation because we can do the aggregation \emph{before} the join, as seen in the following example:
\begin{example}
In the semiring $(\{0,1\}, \max, \cdot)$, suppose we have two relations $R(A,B) = \{((0,0), x), ((0,1), y)\}$ and $S(B,C) = \{((0, 1), p), ((1,1),q)\}$. Consider the \ajar query $\sum_{(B, \cdot)} R(A,B) \Join S(B,C)$. If compute the join, we will get two tuples with the annotations $x \cdot p$ and $y \cdot q$, and then aggregating over $B$ will produce a relation with the element $((0,1), x \cdot p \cdot y \cdot q)$. However, note that $x \cdot p \cdot y \cdot q = (x \cdot y) \cdot (p \cdot q)$, implying that $\sum_{(B, \cdot)} R(A,B) \Join S(B,C) = (\sum_{(B, \cdot)} R(A,B)) \Join (\sum_{(B, \cdot)} S(B,C))$.
\end{example}

Now we describe our algorithm for solving \ajar queries when product aggregations are present. Our algorithm follows the same lines as the algorithm from Section~\ref{subsec:simple-solution}. Recall that the algorithm consisted of searching for {\em equivalent orderings}, then searching for GHD {\em compatible} with an equivalent ordering, and running AggroGHDJoin on the GHD with the smallest fhw. For product aggregations, we need to modify our algorithm for testing equivalent orderings, and our definition of compatibility; we do these in turn.

\paragraph*{Testing orderings for equivalence} Algorithm~\ref{algo:equivalance-test-prod} gives the pseudo-code for our equivalence test for orderings containing product aggregates.

\begin{algorithm}[t]
\caption{TestEquivalence($\mH = (\mV_\mH, \mE_\mH)$, $\alpha$, $\beta$)}
\label{algo:equivalance-test-prod}
\textbf{Input:} Query hypergraph $\mH$, orderings $\alpha$, $\beta$.\\
\textbf{Output:} True if $\alpha \equiv_{\mH} \beta$, False otherwise.
\begin{algorithmic}
\If{$|\alpha| = |\beta| = 0$}
\State \Return True
\EndIf
\State Remove $V(-\alpha)$ from $\mH$, then divide $\mH$ into connected components $C_1,\ldots C_m$.
\If{$m > 1$}
\State \Return $\land_{i} \text{TestEquivalence}(\mH, \alpha_{C_i}, \beta_{C_i})$
\EndIf
\State Choose $j$ such that $\beta_j = \alpha_1$. Let $\beta_j = (b_j, \oplus'_j)$.
\If{$\exists i < j : \beta_i = (b_i, \oplus'_i), \oplus'_i \neq \oplus'_j$ and there is a path from $b_i$ to $b_j$ in $\{b_i,b_{i+1},\ldots,b_{|\alpha|}\}$}
\State \Return False
\EndIf
\State Let $\beta'$ be $\beta$ with $\beta_j$ removed.
\State Let $\alpha'$ be $\alpha$ with $\alpha_1$ removed.
\State \Return $\text{TestEquivalence}(\mH, \alpha', \beta')$
\end{algorithmic}
\end{algorithm}

We have the lemma analogous to Lemma~\ref{lemma:equivalence-test-sound-complete}.

\begin{lemma}[Copy of Lemma~\ref{lemma:AlgoP-sound-complete}]
Algorithm~\ref{algo:equivalance-test-prod} returns True if and only if $\alpha \equiv_\mH \beta$.
\end{lemma}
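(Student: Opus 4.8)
The plan is to mirror the soundness/completeness argument for $<_{\mH,\alpha}$ (Theorem~\ref{thm:dag-soundness-completeness} and Lemma~\ref{lemma:equivalence-test-sound-complete}), but first to enlarge the set of legal rewrites so that product aggregations are accounted for. The single new primitive is that an idempotent product aggregation distributes over joins: for any factorization $R_F = R_{F_1}\Join R_{F_2}$, $\Sigma_{(A,\otimes)}(R_{F_1}\Join R_{F_2}) = (\Sigma_{(A,\otimes)}R_{F_1})\Join(\Sigma_{(A,\otimes)}R_{F_2})$, which follows from idempotence exactly as in the computation underlying Lemma~\ref{lemma:product-renaming-one}. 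Combined with Theorem~\ref{commute}, this yields a generalized commuting lemma: $\Sigma_{(A,\oplus)}\Sigma_{(B,\oplus')}R_F = \Sigma_{(B,\oplus')}\Sigma_{(A,\oplus)}R_F$ whenever $\oplus=\oplus'$, or $R_F$ factors as $R_{F_1}\Join R_{F_2}$ with $A\notin F_1$ and $B\notin F_2$, or one of $\oplus,\oplus'$ is $\otimes$ and the corresponding attribute can be pushed below a factor containing the other. The role played by the hypergraph being disconnected in Theorem~\ref{commute} is now played by the hypergraph being disconnected ``through semiring attributes only'' after product attributes have been pushed out of the way; this is the precise reason the algorithm, in change (2), restricts the connecting path to $(\{b_i,\dots,b_{|\alpha|}\}\setminus\pa(\alpha))\cup\{b_i,b_j\}$ and, in change (1), enlarges each component $C_i$ to $C_i'$ by the product attributes sharing a hyperedge with $C_i$.

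For soundness I will induct on $|\alpha|$. When $\mH\setminus(V(-\alpha)\cup\pa(\alpha))$ splits into components $C_1,\dots,C_m$ with $m>1$, I claim $Q_{\mH,\alpha}$ factors into independent subqueries on the $C_i'$. The clean way to see this is the product-partition view: apply Lemma~\ref{lemma:product-renaming-ajar} to rename each product attribute into one copy per component it touches; the renamed query literally disconnects along the semiring attributes, Theorem~\ref{commute} commutes all cross-component aggregations, and the algorithm's recursion on $(\alpha_{C_i'},\beta_{C_i'})$ coincides with the (non-product) algorithm run on each renamed piece. When $m=1$, the ``no blocker'' test guarantees that every $\beta_i$ with $i<j$ either shares $\alpha_1$'s operator or is separated from $b_j$ in $\mH$ restricted to $\{b_i,\dots,b_{|\alpha|}\}$ once product attributes are pushed away, so each adjacent swap carrying $\beta_j$ leftward to the position of $\alpha_1$ is legal by the generalized commuting lemma; we then strip $\alpha_1,\beta_j$ and recurse. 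This is the analogue of invoking Lemma~\ref{Pathing} in the soundness half of Theorem~\ref{thm:dag-soundness-completeness}.

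For completeness I argue the contrapositive, constructing a counterexample instance whenever the algorithm returns False, at one of its two failure sites. If a recursive call on some $(\alpha_{C_i'},\beta_{C_i'})$ returns False, the inductive hypothesis supplies a counterexample on that subquery, which I lift to $\mH$ by filling all attributes outside $C_i'$ with the constant value $1$ and annotation $1$, so every extra join and every extra aggregation (product aggregations included, since they act on all-$1$ columns) leaves the disputed annotation untouched. If instead the test finds a blocking $\beta_i$ with $\oplus'_i\neq\oplus'_j$ and a path $P$ from $b_i$ to $b_j$ inside $(\{b_i,\dots,b_{|\alpha|}\}\setminus\pa(\alpha))\cup\{b_i,b_j\}$, I reuse the two-tuple gadget from the completeness proof of Theorem~\ref{thm:dag-soundness-completeness}: choose $x,y$ with $x\oplus'_i y \neq x\oplus'_j y$, take one tuple that is $1$ everywhere with annotation $x$ and one tuple that is $2$ on $P$ and $1$ elsewhere with annotation $y$, and distribute these across the relations as in that proof. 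Because $P$ avoids product attributes except at its endpoints, every product aggregation acts either on a column constant across the two tuples or on relations that have already been separated by the time it is applied, hence collapses to the identity; the other semiring aggregations outside $P$ likewise act on constant columns. Thus $\alpha$ and $\beta$ compute the annotations $x\oplus'_i y$ and $x\oplus'_j y$, which differ, so $\alpha\not\equiv_\mH\beta$.

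The main obstacle I anticipate is the interaction between the first failure site and the fact that the $C_i'$ are \emph{not} disjoint (a product attribute may lie in several of them): the ``decomposition into independent subqueries'' is only literally true after the product-partition renaming, so I must verify that (i) the renaming commutes with the component split, (ii) the algorithm's recursion on $(\alpha_{C_i'},\beta_{C_i'})$ really matches the non-product algorithm on the renamed piece, and (iii) a counterexample for a renamed piece pulls back to a counterexample for the original un-renamed query, the last step again invoking idempotence via Lemma~\ref{lemma:product-renaming-one}. Carrying out this bookkeeping cleanly, rather than any single hard estimate, is where the effort lies; the rest is a faithful re-run of the Theorem~\ref{thm:dag-soundness-completeness} argument with product push-down inserted wherever the separability clause of Theorem~\ref{commute} was used, together with the obvious observation that the base case $|\alpha|=|\beta|=0$ is immediate.
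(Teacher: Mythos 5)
Your soundness argument is essentially the paper's: the same induction, the same split into a multi-component case and an adjacent-swap case, with your product-partition-renaming detour being a legitimate way to make the multi-component factorization literal (the paper instead writes the identity directly and notes that each product aggregation is applied once per relation, which is where idempotence enters). The gap is in your completeness argument, specifically in the case where one of the two blocking operators $\oplus'_i,\oplus'_j$ is the product operator $\otimes$ — which is exactly the case that makes this lemma different from Lemma~\ref{lemma:equivalence-test-sound-complete}. Your claim that ``every product aggregation \dots collapses to the identity'' fails for the endpoint of the path that carries the product aggregation: its column takes two distinct values across the two gadget tuples, and by the definition of $\Sigma_{(A,\otimes)}$ a residual tuple is \emph{dropped} unless every value of $\mD^A$ is present for it. Concretely, say $\oplus'_j=\otimes$. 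In $\beta$ the aggregation $\Sigma_{(b_j,\otimes)}$ is applied while the two tuples still differ on $b_i$, so each residual tuple sees only one of the two $b_j$-values and the result is the empty relation (annotation $0$); in $\alpha$ it is applied outermost, after the tuples have collapsed to differ only on $b_j$, and yields $x\otimes y$. Your prescription ``choose $x,y$ with $x\oplus'_i y\neq x\oplus'_j y$'' does not guarantee $x\otimes y\neq 0$: in $(\{0,1\},\max,\cdot)$ the witnessing pair is $x=1,y=0$, for which both orderings evaluate to $0$ and your instance distinguishes nothing. The paper handles this case by a different mechanism — it sets $x=y=1$ so that $x\otimes y=1\neq 0$, and the separation comes from presence/absence under the universal-quantifier semantics of $\otimes$, not from the two operators disagreeing on a pair of values.

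A second, related omission: you never fix the attribute domains, but the ``for all $t_A\in\mD^A$'' clause makes product aggregations domain-sensitive. The paper explicitly sets $\mD^{b}=\{0,1\}$ for path attributes and $\mD^{b'}=\{0\}$ for all other attributes; without forcing singleton domains off the path, a product aggregation on a column that is constant in the \emph{instance} but whose \emph{domain} is larger would drop every tuple rather than act as the identity. The same caveat applies to your lifting of a recursive counterexample by padding with constant $1$'s. Both points are repairable, but as written the completeness half does not go through when a product aggregation is one of the two non-commuting operators.
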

\begin{proof}
{\bf Soundness:}
Suppose Algorithm~\ref{algo:equivalance-test-prod} returns true; we will show $\alpha \equiv_\mH \beta$. We induct on the length $|\alpha|$. For our base case, when $|\alpha|=0$, we return true when $|\beta|=0$. In this case, the two (empty) orderings are trivially equivalent.

Suppose $|\alpha| > 0$. We have two cases: when $\mH \backslash (V(-\alpha) \cup \pa(\alpha))$ has one component and when it has multiple components. We first consider the multiple components case. Let the components be $C_1, \dots, C_m$. Then we define $C'_1, \dots, C'_m$ as in the algorithm i.e. For $1 < i < m$, let $\mE_i$ be $\{E \in \mE | E \cap C_i \neq \emptyset\}$, the elements of $\mE$ that intersect with $C_i$. Then $C'_i = C_i \cup \bigcup_{E \in \mE_i} E \cap \pa(\alpha)$. We define $\mE_0$ to be $\mE \backslash (\bigcup_{1 \le i m} \mE_i)$ (these are relations with only output attributes or product aggregations). Accordingly let $C'_0$ be the product aggregations that appear in $\mE_0$. We can then express the following identities:
\begin{align*}
\Join_{F \in \mE} R_F &= \Join_{0 \le i \le m} \Join_{F \in \mE_i} R_F\\
\sum_{\alpha} \Join_{F \in \mE} R_F &= \Join_{0 \le i \le m} \sum_{\alpha_{C'_i}} \Join_{F \in \mE_i} R_F
\end{align*}
The RHS may have a product aggregation $(a,\otimes)$ happening in multiple components, but it happens exactly once per relation containing $a$.
We note this identity holds for $\beta$ as well. This identity implies that $\alpha \equiv_\mH \beta$ if $\alpha_{C'_i} \equiv_\mH \beta_{C'_i}$ for all $i$. We note that for $i=0$, all of the aggregations contain the same operator, so any ordering is equivalent. For $i>0$, we note that we return true only if all of the recursive calls return true, implying $\alpha_{C'_i} \equiv_\mH \beta_{C'_i}$ by the inductive hypothesis.

When $\mH \backslash (V(-\alpha) \cup \pa(\alpha))$ has one component, we choose $j$ such that $\beta_j = \alpha_1$ and define $\beta'$ to be $\beta$ with $\beta_j$ removed. Note $\alpha'$ is defined to be $\alpha$ with $\alpha_1 = \beta_j$ removed. To show $\beta \equiv_\mH \alpha$, we need to show $\alpha' \equiv_\mH \beta'$ and $\beta \equiv_\mH \beta_j \beta'$. Since we return true only when our recursive call on $\alpha'$ and $\beta'$ returns true, the former equivalence holds by the inductive hypothesis.

To show $\beta \equiv_\mH \beta_j \beta'$, we ensure $\beta_j$ and $\beta_i$ can commute for all $i<j$. More specifically, we ensure that if $\beta_j$ can be moved to index $i+1$, it can be moved to index $i$. For any $\beta_i$ with the same operator, $\beta_i$ and $\beta_j$ trivially commute. If $\beta_i$ has a different operator, we know there is no path between their attributes $b_i$ and $b_j$ among the nodes
$$(\{b_i,b_{i+1},\ldots,b_{|\alpha|}\} \setminus \pa(\alpha)) \cup \{b_i,b_j\}.$$ 
Let $V$ be this set of attributes. Define $V_1 \subset V$ to be the set of nodes connected to $b_i$ in the hypergraph restricted to $V$ (we know $b_j \notin V_1$). Let $\mE_1$ be the set of edges that contain some attribute in $V_1$, i.e. $\{ E \in \mE | E \cap V_1 \neq \emptyset\}$. We note that the attributes of $V \backslash V_1$ do not appear in the edges of $\mE_1$. Let $\mE_2 = \mE \backslash \mE_1$; the attributes of $V \backslash V_1$ all appear in $\mE_2$. We can then express the following identities:
\begin{align*}
\Join_{F \in \mE} R_F = &(\Join_{F \in \mE_1} R_F) \Join (\Join_{F \in \mE_2} R_F)\\
\sum_{\beta_{V \cup \pa(\alpha)}} \Join_{F \in \mE} R_F = &\left(\sum_{\beta_{V_1 \cup \pa(\alpha)}} \Join_{F \in \mE_1} R_F \right) \Join \\
&\left(\sum_{\beta_{V_2 \cup \pa(\alpha)}} \Join_{F \in \mE_2} R_F\right)
\end{align*}
We note, by definition, that $\beta_i$ and $\beta_j$ must be pushed down into different aggregations in the previous expression. This implies that we can commute $\beta_i$ and $\beta_j$ when they are adjacent, completing the soundness proof.

{\bf Completeness:} We prove that if Algorithm~\ref{algo:equivalance-test-prod} returns false, then there must exist a database instance $I$ such that $Q_{\mH, \alpha}(I) \neq Q_{\mH, \beta}(I)$. 

If Algorithm~\ref{algo:equivalance-test-prod} returns false, there must be a component $C'$, $\alpha' = \alpha_{C'}$, $\beta' = \beta_{C'}$, such that $\beta_j = \alpha_1$, and there exists a $i < j$ such that $\beta_i = (b_i, \odot'_i)$, $\beta_j = (b_j, \odot'_j)$, $\odot'_i \neq \odot'_j$ and there is a path from $b_i$ to $b_j$ that consists of only $b_i$, $b_j$, and semiring attributes in $\{b_i, b_{i+1},\ldots,b_{|\alpha'|}\}$. We now define our instance $I$ that gives different outputs on these orderings. 

If neither $\odot'_i$ nor $\odot'_j$ are product operators, then choose $x$, $y$ such that $x \odot'_i y \neq x \odot'_j y$. If one of them is a product operator while the other is not, choose $x = y = 1$. Now we define the attribute domains. Let $B$ be the set of attributes in the path from $b_i$ to $b_j$ consisting of $b_i$, $b_j$ and semiring attributes in $\{b_i, b_{i+1},\ldots,b_{|\alpha'|}\}$. For every $b \in B$, we set $\mD^b = \{0,1\}$. For every $b' \notin B$, we set its $\mD^{b'}$ to $\{0\}$. In every relation that has at least one attribute from $B$, it has two tuples. One tuple has value $0$ for all attributes in $B$, the other has value $1$ for all attributes in $B$. The values of the other attributes are of course always $0$. One of the relations containing a attribute from $B$ has annotation $x$ for the tuple with $0$s and annotation $y$ for the tuple with $1$s. All other annotations are $1$. 

Clearly, each aggregation for an attribute $b' \notin B$ is a no-op, since the domain size $|\mD^{b'}| = 1$. Moreover, all aggregations other than $\beta_i$, $\beta_j$ in $\beta$ and $\alpha$ are also no-ops, because they are non-product aggregations (from the way we chose $B$) and there is a unique value of the attribute for each tuple it maps to after aggregation. 

Thus if both $\beta_i$ and $\beta_j$ are non-product aggregations themselves, then we have
$Q_{\mH, \alpha}(I) = x \odot'_j y$, $Q_{\mH, \beta}(I) = x \odot'_i y$ which are unequal due to how we chose $x$ and $y$. If one of them, say $\beta_j$ is a product aggregation, then $Q_{\mH, \alpha}(I) = 1$ while $Q_{\mH, \beta}(I) = 0$ (and vice versa if $\beta_i$ is a product aggregation). This is because in $\beta$, when we do the product aggregation $\beta_j$, there is only one value of $b_j$ per corresponding output value, so the product annotation is $0$ (and finally the $\beta_i$ aggregation adds two $0$'s to get $0$). On the other hand, for $\alpha$, $\beta_j = \alpha_1$ happens when $b_j$ has two values $0$, $1$ corresponding to a single output tuple, so their annotations are multiplied to get $x \otimes y = 1$. This shows that Algorithm~\ref{algo:equivalance-test-prod} is complete.
\end{proof}

\paragraph*{Compatible GHDs} Product aggregations not only change the set of equivalent orderings, but also the set of GHDs compatible with a given ordering. In fact, product aggregations allow us to break the rules of GHDs without causing incorrect behavior. In particular, we can have a product attribute $P$ appear in completely disparate parts of the GHD. Thus before defining compatibility for GHDs, we define the notion of {\em product partitions}. 

\begin{definition}
Given a hypergraph $\mH = (\mV, \mE)$ and aggregation ordering $\alpha$, let $S = \{a \in \mV | (a, \otimes) \in \alpha\}$ be the set of attributes with product aggregations. A \emph{product partition} is a set $\{P_a | a \in S\}$ where $P_a$ is a partition of $\{F \in \mE | a \in F\}$ (the relations that contain $a$).
\end{definition}

We will duplicate each attribute $a$ for each partition of $P_a$ and have the partition specify which edges contain each instance of $a$.

\begin{definition}
Suppose we are given a hypergraph $\mH = (\mV, \mE)$, aggregation ordering $\alpha$, and product partition $P$. The \emph{product partition hypergraph} $\mH_P$ is the pair $(\mV_P, \mE_P)$ such that
\begin{itemize}
\item $S = \{a \in \mV | P_a \in P\}$
\item $\mV_P = \left(\bigcup_{a \in S} \{a_1, a_2, \dots, a_{|P_a|}\} \right) \cup \mV \backslash S$
\item $p: \mV \times \mE \to \mV_P \text{ where } p(a, F) = a \text{ if } a \notin S \text{ otherwise }$\\$a_i \text{ where } F \text{ is in } i^{th} \text{ partition of } P_a$
\item $\mE_P = \bigcup_{F \in \mE} \{ p(a, F) | a \in F \}$
\end{itemize}
\end{definition}
\begin{definition}
Given a hypergraph $\mH$ and aggregation ordering $\alpha$, an \emph{aggregating generalized hypertree decomposition} (AGHD) is a triple $(\mT, \chi, P)$ such that $(\mT, \chi)$ is a GHD of the product partition hypergraph $\mH_P$.
\end{definition}

For any attribute $a$ in the \ajar query, $TOP_\mT(a)$ for an AGHD $(\mT, \chi, P)$ can be defined as the set $\{TOP_\mT(a_1)$, $TOP_\mT(a_2)$,$\ldots$,$TOP_\mT(a_{|P_a|})\}$. Now we can define the notion of compatibility of an AGHD, with an ordering. 
\begin{definition}
A AGHD $(\mT, \chi, P)$ for an $\ajar$ query $Q_{\mH, \alpha}$ is compatible with an ordering $\beta \equiv_\mH \alpha$ if for each attribute pair $a$, $b$ for which there exists $v_1 \in TOP_\mT(a)$, $v_2 \in TOP_\mT(b)$ such that $v_1$ is an ancestor of $v_2$, $a$ must occur before $b$ in the ordering $\beta$.
\end{definition}

\paragraph*{Solving \ajar queries with product aggregates}
In our proofs and discussions for the remainder of this section, we will treat the set of $TOP_\mT$ as a single element for convenience, implicitly placing an existential quantifier before the statement. For example, when we say $TOP_\mT(A)$ is an ancestor of $TOP_\mT(B)$, we mean $\exists t_A \in TOP_\mT(A), t_B \in TOP_\mT(B)$ such that $t_A$ is an ancestor of $t_B$. We also often omit the partition $P$ when referring to an AGHD $G = (\mT, \chi, P)$; the partition $P$ can be uniquely defined by $(\mT,\chi)$, so we will always assume it is defined appropriately. 

We can now modify our algorithm from Section~\ref{subsec:simple-solution} to detect equivalent orderings using Algorithm~\ref{algo:equivalance-test-prod}, then search for compatible AGHDs, and run AggroGHDJoin over the compatible AGHD with the smallest fhw. Our runtime is given by the next theorem. Note that any AGHD of the original hypergraph is also a GHD of some product partition hypergraph. 

\begin{theorem}[Copy of Theorem~\ref{thm:ajar-product-runtime}]
Given a \ajar query $Q_{\mH,\alpha}$ possibly involving idempotent product aggregates, let $w^*$ be the smallest fhw for an AGHD compatible with an ordering equivalent to $\alpha$. Then the runtime for our algorithm is $\Ot(\IN^{w^*} + \OUT)$.
\end{theorem}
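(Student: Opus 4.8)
The plan is to reduce this statement to the already-established correctness and complexity guarantees for \textsc{AggroGHDJoin} on ordinary GHDs (Theorems~\ref{thm:aggro-complexity} and~\ref{thm:aggro-compatible}) by passing through the product partition hypergraph. Recall that our algorithm enumerates orderings $\beta$, uses Algorithm~\ref{algo:equivalance-test-prod} --- which is sound and complete by Lemma~\ref{lemma:AlgoP-sound-complete} --- to retain exactly those $\beta \equiv_\mH \alpha$, then enumerates AGHDs $(\mT,\chi,P)$ compatible with some such $\beta$, and finally runs \textsc{AggroGHDJoin} on the compatible AGHD of smallest $fhw$, which by definition of $w^*$ has $fhw(\mT,\mH_P) = w^*$. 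So it suffices to show (i) that running \textsc{AggroGHDJoin} on such an AGHD correctly computes $Q_{\mH,\alpha}$, and (ii) that it does so in time $\Ot(\IN^{fhw(\mT,\mH_P)} + \OUT)$; part (ii) together with the minimization over compatible AGHDs then yields the claimed $\Ot(\IN^{w^*}+\OUT)$.

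\emph{Correctness.} I would fix a $\beta \equiv_\mH \alpha$ and an AGHD $(\mT,\chi,P)$ compatible with $\beta$. By definition $(\mT,\chi)$ is an ordinary GHD of the product partition hypergraph $\mH_P$. The first step is to check that AGHD-compatibility of $(\mT,\chi,P)$ with $\beta$ translates into ordinary GHD-compatibility of $(\mT,\chi)$ with $\beta^P$, the ordering over $\mV_P$ obtained from $\beta$ by replacing each $(a,\otimes)$ with the block $(a_1,\otimes),\dots,(a_{|P_a|},\otimes)$: for copies of distinct attributes the ancestor condition is inherited directly, while copies of a single product attribute all carry the operator $\otimes$, so we may order them within their block to respect the ancestor relation among their $TOP_\mT$ nodes --- this reordering alters $\beta^P$ only by swapping adjacent equal-operator pairs and is therefore harmless. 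Theorem~\ref{thm:aggro-compatible} then gives that \textsc{AggroGHDJoin} on $(\mT,\chi)$ with ordering $\beta^P$ computes $Q^P_{\mH_P,\beta^P}(I)$ for every instance $I$. Applying Lemma~\ref{lemma:product-renaming-ajar} with $\beta$ in place of $\alpha$ (its proof, via Lemma~\ref{lemma:product-renaming-one}, does not depend on which ordering is used) yields $Q^P_{\mH_P,\beta^P}(I) = Q_{\mH,\beta}(I)$, and $\beta \equiv_\mH \alpha$ completes the chain with $Q_{\mH,\beta}(I) = Q_{\mH,\alpha}(I)$.

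\emph{Runtime.} The output attributes of $Q^P_{\mH_P,\beta^P}$ coincide with those of $Q_{\mH,\alpha}$ (all copies of product attributes are aggregated away), so the two queries have the same $\OUT$. Next I would argue that AGHD-compatibility with $\beta$ forces, in $\mH_P$, the $TOP_\mT$ node of every output attribute not to be a descendant of any non-output (copy) attribute's $TOP_\mT$ node: otherwise compatibility would demand that a non-output attribute precede an output attribute in $\beta$, which is impossible since output attributes do not occur in $\beta$. Hence $(\mT,\chi)$ over $\mH_P$ satisfies the hypothesis of Theorem~\ref{thm:aggro-complexity}, giving \textsc{AggroGHDJoin} runtime $\Ot(\IN^{fhw(\mT,\mH_P)} + \OUT)$; the cost of \textsc{GenericJoin} within each bag is already subsumed in $\Ot(\IN^{fhw(\mT,\mH_P)})$, and the outer brute-force search over orderings and AGHDs contributes only a factor that is exponential in $|\mH|$ but independent of $\IN$. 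Taking the compatible AGHD minimizing $fhw$ replaces $fhw(\mT,\mH_P)$ with $w^*$, proving the theorem.

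\emph{Main obstacle.} The technical heart is Lemma~\ref{lemma:product-renaming-ajar} --- the fact that splitting each product attribute into independently-renamed copies, one per incident relation, does not change the query answer. This is precisely where the idempotence assumption on $\otimes$ is indispensable: as Lemma~\ref{lemma:product-renaming-one} shows, without idempotence each renamed copy would re-multiply the other relations' annotations, inflating the result by repeated factors. The secondary, more bookkeeping-heavy point is making the two-way correspondence between AGHD-compatibility over $\mH$ and standard GHD-compatibility together with the ``outputs on top'' condition over $\mH_P$ fully precise, given that $TOP_\mT(\cdot)$ is set-valued for product attributes and that output attributes must be treated as conceptually preceding all aggregated attributes.
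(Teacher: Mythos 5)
Your proof is correct and follows essentially the same route as the paper's: pass to the product partition hypergraph $\mH_P$ with the expanded ordering, invoke Lemmas~\ref{lemma:product-renaming-one} and~\ref{lemma:product-renaming-ajar} (where idempotence of $\otimes$ is used) to identify $Q^P_{\mH_P,\beta^P}$ with $Q_{\mH,\alpha}$, and then apply Theorems~\ref{thm:aggro-compatible} and~\ref{thm:aggro-complexity}. If anything, you supply two details the paper's proof states without justification --- that AGHD-compatibility yields ordinary GHD-compatibility with a suitable within-block ordering of the product-attribute copies, and that compatibility forces output attributes' $TOP$ nodes above non-output ones so the hypothesis of Theorem~\ref{thm:aggro-complexity} holds.
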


The theorem is proved in Appendix~\ref{sec:background}.

\paragraph*{Decomposing AGHDs} We can apply the ideas from Section~\ref{sec:decomposing} to \ajar queries with product aggregates as well. In this section we will assume without loss of generality that for any relation $R_F$, the last aggregation in $\alpha_F$ is not a product aggregation. Suppose this assumption is violated, i.e. there exists some relation $R_F$ such that the last aggregation in $\alpha_F$ is the product aggregation $(A_P, \otimes)$. We can then immediately perform this aggregation, transforming the relation to $R_{F \backslash \{A_P\}}$ and removing the product aggregation. This assumption ensures that every relation appears in one of the subtrees in the decomposition defined below. We now define some terms.

Given an \ajar query $Q_{\mH, \alpha}$, suppose we have a subset of the nodes $V \in \mV$. Define $\mE_V$ to be $\{E \in \mE | E \cap V \neq \emptyset\}$, i.e. the set of edges that intersect with $V$. Additionally, define $\alpha_{-[i]}$ to be $\alpha$ with the first $i$ elements removed. We will be looking at the connected components of $\mH \backslash (V_{-\alpha} \cup \pa(\alpha))$. For any connected component $C$, let $C^+ = C \cup \{ v \in \pa(\alpha) | \exists E \in \mE_C : v \in E \}$. Additionally, given an ordering $\alpha$, we define $\alpha^O$ based on a conditional: if $\alpha_1$ is a product aggregation, let $\alpha^O$ be just $\alpha_1$; if $\alpha_1$ is not a product aggregations, let $\alpha^O$ be the set of attributes that can be commuted to the beginning of the ordering. To be more precise for this second case, given an attribute $A$ that appears in $\alpha_j$ with operator $\odot$, $A \in \alpha^O$ if for all $\alpha_i = (B, \odot')$ such that $i<j$ either $\odot' = \odot$ or $A$ and $B$ are not connected among the nodes $(\alpha_{-[i-1]} \backslash \pa(\alpha_{-[i-1]}) \cup \{A, B\}$.

\begin{definition}\label{def:decomp2}
Given an \ajar query $Q_{\mH, \alpha}$, we say an AGHD $(\mT, \chi, P)$ is \emph{decomposable} if:
\begin{itemize}
\item There exists a rooted subtree $\mT_0$ of $\mT$ such that $\chi(\mT_0) = \mV(-\alpha)$ (i.e. output attributes).
\item For each connected component $C$ of $\mH \backslash (V_{-\alpha} \cup \pa(\alpha))$, there is exactly one subtree $\mT_{C} \in \mT \backslash \mT_0$ such that $\mT_C$ is a decomposable AGHD of $Q_{(\cup_{E \in \mE_C} E, \mE_C), \alpha_{C^+ \backslash \alpha_{C^+}^O}}$.
\end{itemize}
\end{definition}

Then we have theorems analogous to theorems~\ref{thm:decompisvalid},~\ref{thm:decompwidth}, and~\ref{thm:decomp}.

\begin{theorem}[Copy of Theorem~\ref{thm:decompaghdisvalid}]
All decomposable AGHDs are compatible with an ordering $\beta$ such that $\beta \equiv_\mH \alpha$.
\end{theorem}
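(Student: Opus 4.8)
The plan is to induct on $|\alpha|$, in direct analogy with the proof that every decomposable GHD is valid (Theorem~\ref{thm:decompisvalid}), but now producing an explicit equivalent ordering rather than appealing to a partial order, since for AGHDs we do not have as clean a partial-order characterisation. The base case $|\alpha|=0$ is immediate: every AGHD is compatible with the empty ordering, and the empty ordering is $\equiv_\mH\alpha$.

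For the inductive step, take a decomposable AGHD $(\mT,\chi,P)$ of $Q_{\mH,\alpha}$, with its rooted subtree $\mT_0$ satisfying $\chi(\mT_0)=\mV(-\alpha)$ and, for each connected component $C$ of $\mH\setminus(\mV_{-\alpha}\cup\pa(\alpha))$, a subtree $\mT_C$ that is a decomposable AGHD of the smaller query $Q_{(\cup_{E\in\mE_C}E,\ \mE_C),\ \alpha_{C^+\setminus\alpha_{C^+}^O}}$ of Definition~\ref{def:decomp2}. By the inductive hypothesis each $\mT_C$ is compatible with some ordering $\gamma_C$ equivalent to $\alpha_{C^+\setminus\alpha_{C^+}^O}$ over the sub-query hypergraph. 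I would then build the witness ordering $\beta$ for the full query by first listing a ``front block'' made of the attributes that the decomposition treats as outputs of the sub-queries (i.e.\ $\bigcup_C\alpha_{C^+}^O$, identifying the copies of a product attribute that lands in several $C^+$'s), ordered consistently with $\alpha$, and then appending an arbitrary interleaving of the $\gamma_C$'s.

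Two claims then need verification. First, $\beta\equiv_\mH\alpha$: this is essentially the soundness direction already established for Algorithm~\ref{algo:equivalance-test-prod} in Lemma~\ref{lemma:AlgoP-sound-complete}. By definition of $\alpha_{C^+}^O$, and a sub-query/full-query correspondence in the spirit of Lemma~\ref{lemma:sub-orders} (now tracking product attributes), the front-block attributes commute to the front of $\alpha$; once they are removed the query splits, along the edges avoiding $\mV_{-\alpha}\cup\pa(\alpha)$, into independent pieces whose aggregations may be interleaved freely, using the fact that idempotent product aggregations can be pushed through joins (Theorem~\ref{commute} together with Lemma~\ref{lemma:product-renaming-one}) to handle a product attribute shared between pieces; and within each piece $\gamma_C\equiv\alpha_{C^+\setminus\alpha_{C^+}^O}$ by the inductive hypothesis. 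Second, $(\mT,\chi,P)$ is compatible with $\beta$: the structural observation is that every $TOP_\mT$ node of a non-output attribute lies inside some $\mT_C$ (because $\chi(\mT_0)$ contains only output attributes), that $\mT_0$ sits above all the $\mT_C$'s, and that distinct $\mT_C$'s are pairwise incomparable in $\mT$. Hence if a $TOP_\mT$ node of $a$ is an ancestor of a $TOP_\mT$ node of $b$, then either $a$ is an output attribute — permitted exactly as in the GHD case, and this also rules out $b$ being an output attribute — or $a,b$ are both non-output with the relevant $TOP_\mT$ nodes in a common $\mT_C$; then compatibility of $\mT_C$ with $\gamma_C$ gives either $a$ before $b$ in $\gamma_C$, hence in $\beta$ since $\gamma_C$ is a subsequence of the tail of $\beta$, or $a\in\alpha_{C^+}^O$, hence $a$ is in the front block and precedes $b$.

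The main obstacle is the bookkeeping for product attributes, which have no analogue in Theorem~\ref{thm:decompisvalid}: a single product attribute may contribute several $TOP_\mT$ nodes scattered across different $\mT_C$'s and may belong to several $C^+$'s. One must (i) deduplicate it correctly when forming the front block of $\beta$, (ii) use the product-partition structure of the AGHD together with idempotence to see that sharing a product attribute between two pieces does not obstruct interleaving in the equivalence argument, and (iii) check that no cross-$\mT_C$ ancestor relation among $TOP_\mT$ nodes is introduced by such a shared attribute. Establishing the product-aware version of the sub-query/full-query correspondence (the analogue of Lemma~\ref{lemma:sub-orders}) is the technical heart of the proof; once it is in place the rest follows the pattern above.
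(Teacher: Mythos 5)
Your overall strategy is the same as the paper's: induct on $|\alpha|$, use the decomposition into $\mT_0$ and the subtrees $\mT_C$, get compatible orderings for the sub-queries from the inductive hypothesis, prepend the sub-query output block $\alpha_{C^+}^O$, and stitch. But there are two concrete gaps. First, you order the front block ``consistently with $\alpha$.'' That can break compatibility: two attributes $A,B\in\alpha_{C^+}^O$ may have $TOP_{\mT_C}(A)$ an ancestor of $TOP_{\mT_C}(B)$ while $B$ precedes $A$ in $\alpha$, and then your $\beta$ is not compatible with the AGHD. The paper instead orders this block compatibly with the tree and recovers equivalence by first proving that all elements of $\alpha_{C^+}^O$ carry the \emph{same} operator (the connectivity of $C^+$ forces any differently-operated attribute out of the commutable set), so any permutation of the block is equivalent. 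You need that uniform-operator lemma and the tree-compatible ordering, not the $\alpha$-order.

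Second, the step you label as ``the main obstacle'' is exactly the step the proof cannot do without, and you leave it unresolved. A product attribute lying in several $C^+$'s appears in several of your $\gamma_C$'s, so ``an arbitrary interleaving of the $\gamma_C$'s'' is not even well defined: the shared attribute must occupy a single position in $\beta$ that simultaneously restricts to its position in every $\gamma_C$. This is only possible if the $\gamma_C$'s agree on the relative order of the shared product attributes, which the bare inductive hypothesis does not guarantee. The paper fixes this by proving a strictly stronger statement by induction --- the witness ordering additionally satisfies $\pa(\beta)=\pa(\alpha)$ --- and then merges the $\beta^{C+}$'s with a specific procedure (flush the non-product prefixes, then emit the next product aggregation of $\pa(\alpha)$ from all sub-orderings at once), which yields $\beta_{C^+}=\beta^{C+}$ for every $C$ and hence $\beta\equiv_\mH\alpha$ by the soundness of the equivalence test. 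Without strengthening the induction hypothesis in this way, your inductive step does not go through; identifying the obstacle is not the same as clearing it.
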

\begin{proof}
Suppose we are given an $\ajar$ query $Q_{\mH, \alpha}$ and a decomposable AGHD $G$ for this query. We show a stronger statement: all decomposable AGHDs are compatible with an ordering $\beta$ such that $\beta \equiv_\mH \alpha$ and $\pa(\beta) = \pa(\alpha)$ (i.e. the order of the product attributes does not change). Proof by induction on $|\alpha|$. When $|\alpha| = 0$, all GHDs are decomposable and all GHDs are compatible with $\alpha$.

Suppose $|\alpha| > 0$.  By definition, there is a subtree $\mT_0$ of $G$ such that $\chi(\mT_0) = V(-\alpha)$. And for each connected component $C$ of $\mH \backslash (V(-\alpha) \cup \pa(\alpha))$, we have a subtree $\mT_C$ that is a decomposable GHD for the query $Q_{(\cup_{E \in \mE_C} E, \mE_C), \alpha_{C^+ \backslash \alpha_{C^+}^O}}$. We will use $\mV_C$ to denote $\cup_{E \in \mE_C} E$ and $\mH_C$ to denote $(\mV_C, \mE_C)$. Similarly, we will use $\alpha^C$ to represent $\alpha_{C^+ \backslash \alpha_{C^+}^O}$. By the inductive hypothesis, each of these subtrees $\mT_C$ is compatible with some ordering $\beta^C$ such that $\beta^C \equiv_{\mH_C} \alpha^C$ and $\pa(\beta^C) = \pa(\alpha^C)$. Note that $\beta^C \equiv_{\mH_C} \alpha^C$ trivially implies $\beta^C \equiv_\mH \alpha^C$.

For each $C$ we will construct a $\beta^C+$ such $\mT_C$ is compatible with $\beta^C+$, $\beta^C+ \equiv_\mH \alpha_{C^+}$, and $\pa(\beta^C+) = \pa(\alpha_{C^+})$. Since $\alpha^C = \alpha_{C^+ \backslash alpha_{C^+}^O}$, this requires adding the elements of $\alpha_{C^+}^O$ to $\beta^C$. Define $\beta^O$ to be some ordering of the elements compatible with $G$ (i.e. for any $A,B \in V(\alpha_{C^+}^O)$ if $TOP_{\mT_C}(A)$ is an ancestor of $TOP_{\mT_C}(B)$, $A$ precedes $B$ in $\beta^O$). We claim the ordering $\beta^C+ = \beta^O \circ \beta^C$ satisfies our three conditions. 

The first condition is that $\mT_C$ is compatible with this $\beta^C+$. This is trivially true because we constructed the ordering by adding output attributes to the start of $\beta^C$, with which $\mT_C$ is already compatible, in an order that is guaranteed to be compatible.

The second condition is that $\beta^C+ \equiv_\mH \alpha_{C^+}$. By the definition of $\alpha_{C^+}^O$, $\alpha_{C^+} \equiv_\mH \alpha_{C^+}^O \circ \alpha^C$. We know $\beta^C \equiv_{\mH} \alpha^C$ by the inductive hypothesis. And we claim $\beta^O \equiv_\mH \alpha_{C^+}^O$, which implies $\beta^C+ \equiv_\mH \alpha_{C^+}$ by definition. We show this claim by showing that the operators of $\alpha_{C^+}^O$ are uniform, implying that its elements can be reordered freely. In particular, consider the first element $(A_1, \odot_1)$ of $\alpha_{C^+}$. Since $C^+$ is a connected component, there must exist a path between $A_1$ and every other node among the nodes $C^+$. Thus, for any $(B, \odot') \in \alpha_{C^+}$ such that $\odot' \neq \odot_1$, $A_1$ will violate the path condition for commuting and ensure $B \notin V(\alpha_{C^+}^O)$.

The third condition is that $\pa(\beta^C+) = \pa(\alpha_{C^+})$. By the inductive hypothesis, $\pa(\beta^C) = \pa(\alpha^C)$. We simply need to show $\pa(\beta_{C^O}) = \pa(\alpha_{C^+}^O)$. There are two cases to consider, from the definition of $\alpha_{C^+}^O$. In the first case, both $\beta_{C^O})$ and $\alpha_{C^+}^O$ contains only one (product) aggregation. In the second case, the two orderings have no product aggregations. In either case, $\pa(\beta_{C^O}) = \pa(\alpha_{C^+}^O)$ trivially.

We now need to combine the $\beta^C+$ for each $C$ to construct the desired ordering $\beta$ as desired. We construct $\beta$ by repeating the two following steps algorithm until every $\beta^C$ is empty: $(1)$ remove the non-product output prefixes of $\beta^C+$ and append them to $\beta$ (interleaved arbitrarily) and $(2)$ remove the earliest remaining product aggregation of $\pa(\alpha)$ from the start of the appropriate $\beta^C+$ and append it to $\beta$. Note that this procedure ensures $\beta_{C^+} = \beta^C+$ for each $C$, which implies $\beta_{C^+} \equiv_\mH \alpha_{C^+}$ and (by the soundness of Algorithm~\ref{algo:equivalance-test-prod}) $\beta \equiv_\mH \alpha$. Also note that the procedure preserves the ordering of the product aggregates, so $\pa(\beta) = \pa(\alpha)$. Finally, the given AGHD $G$ must be compatible with $\beta$. The construction of $G$ ensure the top nodes of output attributes are all above the top nodes of non-output attributes, and the top nodes of non-output attributes are in the subtrees $\mT_C$, which means the fact that $\beta_{C^+} = \beta^C+$ ensures these top nodes are ordered in a compatible manner.
\end{proof}

\begin{theorem}[Copy of Theorem~\ref{thm:decompaghdwidth}]
For every valid AGHD $(\mT, \chi)$, there exists a decomposable $(\mT', \chi')$ such that for all node-monotone functions $\gamma$, the $\gamma$-width of $(\mT', \chi')$ is no larger than the $\gamma$-width of $(\mT, \chi)$.
\end{theorem}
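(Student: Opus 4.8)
The plan is to mirror the proof of Theorem~\ref{thm:decompwidth} almost verbatim, working inside the product partition hypergraph $\mH_P$ so that each duplicated product-attribute instance $a_1,\dots,a_{|P_a|}$ behaves like an ordinary attribute with its own $TOP_\mT$ node. As in that proof, every transformation I apply is \emph{width-preserving}: each bag of the output $(\mT',\chi')$ will be a subset of some bag of $(\mT,\chi)$, so node-monotonicity of $\gamma$ yields the width inequality immediately, and the content is purely combinatorial rearrangement of the tree.

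First I would normalize the AGHD. Using the standing assumption that the last aggregation in $\alpha_F$ is never a product aggregation, every relation still appears in the decomposition; and I would push each instance $a_i$ of a product attribute down to the top of the connected block of bags that contain it. Then I would establish the two analogues of the structural properties from Theorem~\ref{thm:decompwidth}: (i) \emph{TOP-semiuniqueness}, that every non-root node is $TOP_\mT$ for exactly one attribute instance of $\mH_P$ and the root is $TOP_\mT$ either for one instance or only for output attributes; and (ii) \emph{component-connectedness}, that for every node $t$ with rooted subtree $\mT_t$ the $\mH$-attributes $\{P^{-1}(v) : TOP_\mT(v)\in\mT_t\}$, after removal of $V(-\alpha)\cup\pa(\alpha)$, induce a connected subgraph, while any product-attribute instance $v$ with $TOP_\mT(v)\in\mT_t$ lies in a relation meeting that component. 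Properties (i) and (ii) are obtained by the same node-splitting and subtree-detaching transformations as in Lemmas~\ref{lemma:decomp-prop1} and~\ref{lemma:decomp-prop2}; the one extra check is that detaching a subtree never severs the two instances glued to a single edge of $\mH_P$, which holds because those instances always share a bag.

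Next I would show, by induction on $|\alpha|$, that a valid AGHD satisfying (i) and (ii) is decomposable in the sense of Definition~\ref{def:decomp2}. The case $|\alpha|=0$ is trivial. For $|\alpha|>0$, TOP-semiuniqueness together with validity forces the $TOP_\mT$ nodes of the output attributes to form a rooted subtree $\mT_0$ with $\chi(\mT_0)=V(-\alpha)$; by component-connectedness the remaining subtrees are in bijection with the connected components $C$ of $\mH\setminus(V(-\alpha)\cup\pa(\alpha))$, each such $\mT_C$ being a GHD of $(\bigcup_{E\in\mE_C}E,\mE_C)$ that carries the product-attribute instances of $C^+\setminus C$ and again satisfies (i) and (ii). That $\mT_C$ is valid for the smaller query $Q_{(\cup_{E\in\mE_C}E,\mE_C),\,\alpha_{C^+\setminus\alpha_{C^+}^O}}$ follows from the product-aggregation analogues of Lemmas~\ref{lemma:sub-orders} and~\ref{Pathing2} — exactly the facts already used in the proof of Theorem~\ref{thm:decompaghdisvalid} — so the inductive hypothesis applies to each $\mT_C$ and decomposability follows.

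I expect property (ii) to be the main obstacle. Because product-attribute instances are deleted before the components are formed, yet their $TOP_\mT$ nodes (and whatever subtrees hang beneath them) must nonetheless land in the correct $\mT_C$, the connectedness bookkeeping is more delicate than in the pure-GHD case: one must verify that the detaching step can always route such a node to the unique component $C$ with which its relation shares an edge, without enlarging any bag and without re-creating an ancestor relation that would destroy validity. Once that routing is justified, the rest of the argument is a faithful transcription of the $\alpha$-only proof with $\mH$ replaced by $\mH\setminus\pa(\alpha)$ wherever the decomposition removes product attributes.
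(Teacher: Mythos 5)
Your overall strategy is the same as the paper's: redefine subtree-connectedness so that paths are only required to avoid $\pa(\alpha)$ (your ``component-connectedness''), reuse the width-preserving transformations of Lemmas~\ref{lemma:decomp-prop1} and~\ref{lemma:decomp-prop2}, and then induct on $|\alpha|$ to show that a valid, TOP-semiunique, subtree-connected AGHD is decomposable. The identification of $\mT_0$ and the bijection between the remaining subtrees and the components of $\mH\setminus(V(-\alpha)\cup\pa(\alpha))$ also matches the paper, as does your observation that the transformations may reshuffle the partition $P$ without harming compatibility.

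The genuine gap is in the step you dispatch with ``follows from the product-aggregation analogues of Lemmas~\ref{lemma:sub-orders} and~\ref{Pathing2} --- exactly the facts already used in the proof of Theorem~\ref{thm:decompaghdisvalid}.'' Those analogues do not exist: the partial order $<_{\mH,\alpha}$ and Lemmas~\ref{Pathing}, \ref{Pathing2}, and~\ref{lemma:sub-orders} are only developed for orderings without product aggregations, and in the product setting validity is defined directly as ``compatible with some equivalent ordering,'' not via a partial-order characterization. Moreover, Theorem~\ref{thm:decompaghdisvalid} proves the converse direction (decomposable $\Rightarrow$ valid), so its ingredients are not the ones needed here. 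The paper instead takes the ordering $\beta$ with which the normalized AGHD is compatible, restricts it to $\beta' = \beta_{C^+\setminus\beta_{C^+}^O}$, argues $\beta' \equiv_{(\mV_C,\mE_C)} \alpha_{C^+\setminus\alpha_{C^+}^O}$, and then proves compatibility of $\mT_C$ with $\beta'$ by a case analysis on where $B$ sits; the hard case $B \in \beta_{C^+}^O$ splits on whether $B$ is a product aggregation and, when it is not, requires a without-loss-of-generality reduction to attributes with differing operators (walking an ancestor chain along a path, using TOP-semiuniqueness) followed by a connectivity contradiction against the modified subtree-connected property. That argument is the technical core of the theorem and your plan does not supply it. By contrast, the issue you flag as the main obstacle --- routing the $TOP$ nodes of product-attribute instances into the correct $\mT_C$ --- is real but comparatively mild; the paper absorbs it into the modified subtree-connected definition and the remark that the detaching step may alter $P$ while preserving the compatible order.
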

\begin{proof}
We first modify the definition of subtree-connected from Appendix~\ref{sec:app-decomp}:
\begin{itemize}
\item \emph{subtree-connected}: for any node $t \in \mT$ and the subtree $\mT_t$ rooted at $t$, consider the set the attributes $V_t = \{v \in \mV | TOP_\mT(v) \in \mT_t\}$; we require for any two attributes $A,B \in V_t$, there exists a path from $A$ to $B$ in the set $(V_t \backslash \pa(\alpha)) \cup \{A,B\}$.
\end{itemize}

This same transformation described Lemma~\ref{lemma:decomp-prop2} can be used for this adjusted definition. Note that this transformation ensures that any node that is $TOP_\mT$ for a product aggregation has only one child. Also note that the described transformation might change the partition function $P$ of the AGHD, but it does not change the compatible order.

Suppose the given \ajar problem is $Q_{\mH, \alpha}$. Since $(\mT, \chi)$ is valid, there must exist an ordering $\beta$ such that $(\mT, \chi)$ is compatible with $\beta$ and $alpha \equiv_\mH \beta$. The width-preserving transformations of Appendix~\ref{sec:app-decomp} preserve the compatibility with an ordering. So we can apply them to get a TOP-unique and subtree-connected AGHD $(\mT', \chi')$ that is compatible with $\beta$ and has $\gamma$-width no larger than that of $(\mT, \chi)$. We claim that this AGHD is decomposable.

As in Appendix~\ref{sec:app-decomp}, we prove that any valid, TOP-semiunique, and subtree-connected GHD for an is decomposable. Proof by induction on $|\alpha|$. If $|\alpha = 0|$, then every GHD is decomposable.

Suppose $|\alpha| > 0$. Consider the set of nodes that are $TOP_\mT$ nodes for output attributes, i.e. $\{t \in \mT | \exists A \in V(-\alpha): TOP_\mT(A) = t\}$. Since $(\mT', \chi')$ is compatible with $\beta$, no non-output attributes can have a top node above an output attributes top node. Thus, the TOP-semiunique property guarantees that this set of nodes forms a rooted subtree $\mT_0$ of $\mT$ such that $\chi(\mT_0) = V(-\alpha)$.

Consider the subtrees in $\mT \backslash \mT_0$. Call them $\mT_1, \mT_2, \dots, \mT_k$. For any $\mT_i$, let $\mV_i$ be the attributes that have $TOP_\mT$ nodes in $\mT_i$, i.e. $\mV_i = \{A \in \mV | TOP_\mT(A) \in \mT_i\}$. None of these $\mV_i$ can contain any output attributes, and connected-subtree guarantees that each of the $\mV_i$ are connected.  Thus, the $\mV_i$ must be the $C_i^+$ as defined earlier. So for each connected component $C$ of $\mH \backslash (V(-\alpha) \cup \pa(\alpha))$, the corresponding subtree $\mT_C$ is the subtree $\mT_i$ such that $\mV_i = C^+$. Since for any $A \in C$, $TOP_\mT(A) \in \mT_C$, the attributes in $C$ only appear in $\mT_C$. Note that for every edge $E \in \mE$, there exists a node $t \in \mT$ such that $E \subseteq \chi(t)$. This implies that for every edge $E \in \mE_C$, there exists a node $t \in \mT_C$ such that $E \subseteq \chi(t)$. As such, we can conclude that each $\mT_C$ is a GHD for the hypergraph $(\bigcup_{E \in \mE_C} E, \mE_C)$.

Define $\mV_C = \bigcup_{E \in \mE_C} E$. To complete this proof, we now need to show that each $\mT_C$ is a decomposable GHD for the $\ajar$ query $Q_{(\mV_C, \mE_C), \alpha_{C^+ \backslash \alpha_{C^+}^O}}$. By the inductive hypothesis, if $\mT_C$ is valid, TOP-semiunique and subtree-connected, it must be decomposable. Note that since $\mT$ is TOP-semiunique and subtree-connected, $\mT_C$ must also be TOP-semiunique and subtree-connected. We have also established that $\mT_C$ is a GHD for $(\mV_C, \mE_C)$. Thus to finish this proof, we only need to show that there exista an ordering $\beta'$ such that $\beta' \equiv_{(\mV_C, \mE_C)} \alpha_{C^+ \backslash \alpha_{C^+}^O}$ and $\mT_C$ is compatible with $\beta'$.

We know $\mT$ is compatible with $\beta$ and $\beta \equiv \alpha$. We set $\beta' = \beta_{C^+ \backslash \beta_{C^+}^O}$; this implies that $\beta' \equiv_\mH \alpha_{C^+ \backslash \alpha_{C^+}^O}$ since left hand and right hand sides are simply sub-orderings of $beta$ and $\alpha$, respectively. Furthermore, this implies $\beta' \equiv_{(\mV_C, \mE_C)} \alpha_{C^+ \backslash \alpha_{C^+}^O}$, as $(\mV_C, \mE_C)$ is simply $\mH$ with some output attributes (of $\beta'$) removed.

We now need to show that $\mT_C$ is compatible with $\beta'$. In other words, we need to show for any two attributes $A,B \in \mV_C$, if $TOP_{\mT_C}(A)$ is an ancestor of $TOP_{\mT_C}(B)$, either $A$ is an output attribute or $A$ precedes $B$ in $\beta'$. We show the contrapositive: if $A$ is not an output attribute and $A$ does not precede $B$ in $\beta'$, then $TOP_{\mT_C}(A)$ is not an ancestor of $TOP_{\mT_C}(B)$. There are a couple of cases to consider. If $B \in \mV_C \backslash C^+$, $B$ must be in $V(-\alpha)$, implying $TOP_{\mT_C}(B)$ is the root of $\mT_C$.  We note that for attributes in $C^+$, $TOP_{\mT_C}$ and $TOP_\mT$ are equivalent, so we use them interchangeably. If $B \in C^+ \backslash \beta_{C^+}^O$, then we know $B$ must precede $A$ in $\beta'$, which implies $B$ precedes $A$ in $\beta$. The fact that $\mT$ is compatible with $B$ implies $TOP_\mT(A)$ is not an ancestor of $TOP_\mT(B)$. The final case to consider is $B \in \beta_{C^+}^O$.

Even in this case, we have two cases to consider, based on the two definitions of $\beta_{C^+}^O$. If $B$ has a product aggregation, then $B$ must be the first element of $\beta_{C^+}$. This implies $B$ precedes $A$ in $\beta$, guaranteeing that $TOP_\mT(A)$ is not an ancestor of $TOP_\mT(B)$. The other case is a bit more involved.

Assume for contradiction that there exist $A,B$ such that $TOP_\mT(A)$ is an ancestor of $TOP_\mT(B)$, $B \in \beta_{C^+}^O$, and $A \in \beta'$. We first claim that, without loss of generality, we can suppose that $A$ and $B$ have different operators. To do so, we show that if $A$ and $B$ have the same operator, there must exist a $A' \in \beta'$ with a different operator such that $TOP_\mT(A')$ is an ancestor $TOP_\mT(B)$. The fact that $A \notin \beta_{C^+}^O$ implies there is an attribute $A'$ with a different operator such that there exists a path between $A'$ and $A$ composed of attributes that appear after $A'$ in $\beta_{C^+}$. We claim $TOP_\mT(A')$ is an ancestor of $TOP_\mT(A)$, which implies $TOP_\mT(A')$ is an ancestor of $TOP_\mT(B)$. Suppose the path between $A'$ and $A$ is $X_0, X_1, X_2, \dots, X_k$ where $A' = X_0$ and $A = X_k$; we will show $TOP_\mT(A')$ is an ancestor of $TOP_\mT(A)$ by showing $TOP_\mT(A')$ is an ancestor of all $X_i$ for $i \ge 1$. Proof by induction on $i$. For $i=1$, $A'$ and $X_1$ share an edge, implying they appear in $\chi(t)$ together for some tree node $t$. By definition, $TOP_\mT(A')$ and $TOP_\mT(X_1)$ are both ancestors of $t$. Since $\mT$ is TOP-semiunique (so $TOP_\mT(A') \neq TOP_\mT(X_1)$) and $\mT$ is compatible with $\beta$ (so $TOP_\mT(X_1)$ cannot be an ancestor of $TOP_\mT(A')$, this means that $TOP_\mT(A')$ is an ancestor of $TOP_\mT(X_1)$. For $i > 1$, we know that $X_{i-1}$ and $X_i$ share an edges, implying they appear together in $\chi(t)$ for some tree node $t$. $TOP_\mT(X_i)$ and $TOP_\mT(X_{i-1})$ must both ancestors $t$. Note that the inductive hypothesis gives us that $TOP_\mT(A')$ is an ancestor of $TOP_\mT(X_{i-1})$, implying it is an ancestor of $t$. By the same logic as before, this implies that $TOP_\mT(A)$ is an ancestor of $TOP_\mT(X_i)$. We thus have that $TOP_\mT(A')$ is an ancestor of $TOP_\mT(A)$.

We now suppose, without loss of generality, that $A$ and $B$ have different operators. Since $TOP_\mT(A)$ is an ancestor of $TOP_\mT(B)$, we know $A$ comes before $B$ in the compatible ordering $\beta$. However, the fact that $B \in \beta_{C^+}^O$ implies that every path between $B$ and $A$ includes an attribute $X$ that is either an output attribute or comes before $B$ in $\beta$. Either way, none of these $X$ is in the subtree rooted at $TOP_\mT(A)$, implying that $A$ and $B$ are disconnected in the subtree rooted at $TOP_\mT(A)$. This contradicts the subtree-connected property.
\end{proof}

\begin{definition}
Given an $\ajar$ problem $Q_{\mH, \alpha}$, suppose $C_1, \dots, C_k$ are the connected components of $\mH \setminus (\mV_{-\alpha} \cup \pa(\alpha))$. Define a function $H(\mH, \alpha)$ that maps $\ajar$ queries to a set of hypergraphs as follows: 
\begin{itemize}
\item $C_i^{++} = \bigcup_{E \in \mE_C} E$ for all $1 \le i \le k$
\item $\mH_0 = (\mV_{-\alpha}, \{F \in \mE | F \subseteq \mV_{-\alpha}\} \cup \{\mV_{-\alpha} \cap C_i^{++} | 1 \le i \le k\})$
\item $\mH_i^+ = (C_i^{++}, \mE_C \cup \{\mV_{-\alpha} \cap C_i^+\})$
\item $H(\mH, \alpha) = \{\mH_0\} \cup \bigcup_{1 \le i \le k} H(\mH_i^+, \alpha_{C_i^+ \backslash \alpha_{C_i^+}^O})$
\end{itemize}
The hypergraphs in the set $H(\mH, \alpha)$ are defined to be the \emph{characteristic hypergraphs}.
\end{definition}
\begin{theorem}[Copy of Theorem~\ref{thm:decompaghd}]
For an $\ajar$ query $Q_{\mH, \alpha}$ involving product aggregates, suppose $\mH_0, \dots, \mH_k$ are the characteristic hypergraphs $H(\mH, \alpha)$. Then AGHDs $G_0, G_1, \dots, G_k$ of $\mH_0, \dots, \mH_k$ can be connected to form a decomposable AGHD $G$ for $Q_{\mH, \alpha}$. Conversely, any decomposable AGHD $G$ of $Q_{\mH, \alpha}$ can be partitioned into AGHDs $G_0, G_1, \dots, G_k$ of the characteristic hypergraphs $\mH_0, \dots, \mH_k$. Moreover, in both of these cases, $\gamma\text{-width}(G) = \max_{i} \gamma\text{-width}(G_i)$.
\end{theorem}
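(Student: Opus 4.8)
The plan is to mirror the proof of Theorem~\ref{thm:decomp}, inducting on $|\alpha|$ and unwinding Definition~\ref{def:decomp2} together with the recursive definition of the characteristic hypergraphs $H(\mH,\alpha)$ for the product case. The base case $|\alpha|=0$ is immediate: then $\pa(\alpha)=\emptyset$, nothing is removed from $\mH$, so $H(\mH,\alpha)=\{\mH\}$, every AGHD is a GHD, and the claim holds trivially with $G=G_0$.

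For the stitching direction of the inductive step, given AGHDs $G_0,\dots,G_k$ of the characteristic hypergraphs, group them by the identity $H(\mH,\alpha)=\{\mH_0\}\cup\bigcup_{i}H(\mH_i^+,\alpha_{C_i^+\backslash\alpha_{C_i^+}^O})$. Each $C_i$ is a nonempty connected component of $\mH\setminus(\mV_{-\alpha}\cup\pa(\alpha))$ and hence consists only of semiring attributes, so $\alpha_{C_i^+}$ is nonempty and its first element always belongs to $\alpha_{C_i^+}^O$; thus $|\alpha_{C_i^+\backslash\alpha_{C_i^+}^O}|<|\alpha|$ and the inductive hypothesis applies to each $Q_{\mH_i^+,\alpha_{C_i^+\backslash\alpha_{C_i^+}^O}}$, producing a decomposable AGHD $\mT_i$ of that subquery by stitching the $G_j$'s belonging to $\mH_i^+$. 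By construction $\mH_i^+$ contains the intersection edge $\mV_{-\alpha}\cap C_i^{++}$, so $\mT_i$ has a node covering it, and $\mH_0$ contains the same edge, so $G_0$ does too. I re-root $\mT_i$ at its covering node — this only reverses some ancestor relations among the $TOP$-nodes of output attributes of $\mT_i$, which is harmless since decomposability constrains the top subtree only through $\chi(\mT_0)=\mV_{-\alpha}$ — and attach it below the matching node of $G_0$. Setting $\mT_0:=G_0$, the resulting tree $G$ meets Definition~\ref{def:decomp2}, and its product partition is read off directly: the renamed copies of a product attribute $a$ distributed among $G_0,\mT_1,\dots,\mT_k$ induce exactly a partition of the relations containing $a$. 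The standing normalization that no $\alpha_F$ ends in a product aggregation ensures every relation lands in some $\mT_C$, so no edge is left uncovered.

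For the splitting direction, take a decomposable AGHD $G$ of $Q_{\mH,\alpha}$. By Definition~\ref{def:decomp2} it breaks into $\mT_0$ with $\chi(\mT_0)=\mV_{-\alpha}$ and, for each component $C$, a decomposable AGHD $\mT_C$ of $Q_{(\cup_{E\in\mE_C}E,\mE_C),\alpha_{C^+\backslash\alpha_{C^+}^O}}$. Since $G$ is a GHD over a product partition hypergraph, the running intersection property forces the root of $\mT_C$ and its parent in $\mT_0$ to share $\mV_C\cap\mV_{-\alpha}=\mV_{-\alpha}\cap C^{++}$, so $\mT_C$ is actually a decomposable AGHD of $Q_{\mH_C^+,\alpha_{C^+\backslash\alpha_{C^+}^O}}$ with $\mH_C^+$ including that intersection edge; the inductive hypothesis then partitions $\mT_C$ into AGHDs of the characteristic hypergraphs of that smaller query. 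Finally $\mT_0$ covers every $E\in\mE$ with $E\subseteq\mV_{-\alpha}$ (these edges live entirely inside $\mV_{-\alpha}$) together with the intersection edges, so $\mT_0$ is an AGHD of $\mH_0$, namely $G_0$; collecting the pieces yields the desired $G_0,\dots,G_k$.

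The $\gamma$-width equality falls out of both constructions: stitching and re-rooting only add or reorient tree edges and never create or modify a bag, so the tree nodes of $G$ are the disjoint union of those of the $G_i$ with identical $\chi$-images, whence $\gamma\text{-width}(G)=\max_v\gamma(\chi(v))=\max_i\gamma\text{-width}(G_i)$. I expect the one genuinely fiddly point to be the bookkeeping around the product partition and the re-rooting — verifying that the glued partition is coherent (each copy of a product attribute is covered by a connected subtree of the glued product partition hypergraph) and that re-rooting each $\mT_i$ preserves decomposability; both reduce to the observation that copies of a product attribute never need to coincide across subtrees and that the top-subtree condition is insensitive to the internal shape of $\mT_0$. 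Everything else, exactly as for Theorem~\ref{thm:decomp}, is a direct unwinding of the definitions.
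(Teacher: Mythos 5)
Your proposal is correct and follows essentially the same route as the paper, whose entire proof of this theorem is the single remark that it is ``the exact same as the proof of Theorem~\ref{thm:decomp}'': you carry out that induction on $|\alpha|$, with the same use of the intersection edges, re-rooting, and the width bookkeeping. In fact you supply more detail than the paper does on the product-specific points it elides --- the coherence of the glued product partition and the role of the normalization that no $\alpha_F$ ends in a product aggregation --- and those additions are sound.
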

\begin{proof}
The proof is the exact same as the proof of Theorem~\ref{thm:decomp} provided in Appendix~\ref{sec:app-decomp}.
\end{proof}

This lets us apply all the optimizations from Section~\ref{subsec:optimal-valid},~\ref{subsec:dbp-width}, and~\ref{subsec:GYM} to \ajar queries with product aggregates.

\paragraph*{Comparison to FAQ} The runtime of InsideOut on a query involving idempotent product aggregations is given by $\Ot(\IN^{faqw})$, where the faqw depends on the ordering, and the presence of product aggregations. Our algorithm for handling product aggregations recovers the runtime of FAQ. Formally,

\begin{theorem}
For any \ajar query involving idempotent product aggregations, $\IN^{w^*} + \OUT \leq 2 \cdot \IN^{faqw}$.
\end{theorem}

The proof is in Appendix~\ref{subsec:faq-comparison-proof}. By applying ideas from the FAQ paper to our setting, we can also recover the FAQ runtime on $\#QCQ$ (Appendix~\ref{subsec:recovering-hash-qcq}). Our algorithm for detecting when two orderings involving product aggregates are equivalent (Algorithm~\ref{algo:equivalance-test-prod}) is both sound and complete; in contrast, FAQ's equivalence testing algorithm is sound but not complete. Moreover, we have a width-preserving decomposition for queries with product aggregates. This allows us to apply all the optimizations from Section~\ref{sec:decomposing}, giving us tighter runtimes in terms of submodular and DBP-widths (Theorems~\ref{thm:submodular-width},~\ref{thm:dbp-width}) and efficient MapReduce Algorithms (Theorems~\ref{thm:ajar-mapreduce-n-rounds},~\ref{thm:ajar-mapreduce-log-rounds}). As shown before, FAQ gives a worse runtime exponent in each of these cases.

\subsection{Recovering \#QCQ}\label{subsec:recovering-hash-qcq}
We discussed idempotent product aggregations and how they can help \ajar generalize $QCQ$ in Section~\ref{sec:univ-aggregation}. There is a variant called $\#QCQ$ in which solutions are expected to output the number of solutions to a given $QCQ$ (instead of the solutions themselves). At first this seems like a fairly straightforward extension to $QCQ$. If we use $\ajar$ to solve a given $QCQ$, the output is a relation that lists the satisfying assignments, where each tuple's annotation is $1$; to count the number of tuples, we simply need to prefix the $QCQ$ query with aggregations using the operator $+$. 

An issue arises because these new aggregations need to occur in in the domain $\textbf{Z}_+$ (the non-negative integers) instead of $\{0,1\}$. Though $(\textbf{Z}_+, \max, \cdot)$ is still a semi-ring, the product aggregations are no longer idempotent in the given domain; we discuss how to handle non-idempotent aggregation in Appendix~\ref{subsec:non-idempotent-product}, but the added complexity (and runtime) required to deal with non-idempotent aggregations seems unnecessary in our case. Even though multiplication is not idempotent over the larger domain, we can guarantee that it is idempotent whenever a product aggregation occurs; the annotations do not leave the $\{0,1\}$ domain until the $+$ aggregations, which must occur after the product aggregations.

To handle this extra structure, we introduce the concept of specifying restricted domains in \ajar queries. To recover $\#QCQ$, we translate the approach of FAQ~\cite[Section 9.5]{FAQ}, which is the minimal application of the restricted domain concept to \ajar queries.

\begin{definition}
Given a domain $\mathbb{K}$ and operator set $O$, we define a \emph{restriction} to subsets of the domain $\mathbb{K}_r \subset \mathbb{K}$ and operator set $O_r \subseteq O$ such that $\{0, 1\} \subseteq \mathbb{K}_r, \otimes \in O_r$ and for any $a,b \in \mathbb{K}_r$ and $\odot \in O_r$, $a \odot b \in \mathbb{K}_r$.
\end{definition}

\begin{example}
In the context of $\#QCQ$, $\mathbb{K} = \textbf{Z}_+$ and $O = \{+, \max, \otimes\}$. The restriction is $\mathbb{K}_r = \{0,1\}$ and $O_r = \{\max, \otimes\}$.
\end{example}

Note that if we ensure that the specified operators are closed in the restricted domain, the semiring properties will all hold in the restricted domain. We then define an aggregation ordering that incorporates these restrictions - we will define an index $l$ divides the unrestricted and restricted portions of the ordering. 

\begin{definition}
Given an attribute set $\mV$, domain $\mathbb{K}$, operator set $O$, and restriction $\mathbb{K}_r$ and $O_r$, an \emph{restriction-compatible} aggregation ordering is an aggregation ordering $\alpha$ and index $l$ such that $1 \le l \le |\alpha|$ and for each $k \ge l$, $\alpha_k = (A, \odot)$ for $A \in \mV$ and $\odot \in O_r$.
\end{definition}

Any single operator $\odot$ that appears both before and after the division index $l$ will be treated as different operators (this issue does not come up in the context of $\#QCQ$). We can then define an \ajar query to use a restriction-compatible ordering, and any instance of the query must have $\mathbb{K}_r$-relations. Under this definition, we can treat the product aggregations as idempotent, allowing us to use the work in Section~\ref{sec:univ-aggregation} to recover $\#QCQ$.

This set-up is essentially a translation of FAQ's results to our language/notation. Using the exact same construction described in the previous Appendix section, we can now recover FAQ's runtime on $\#QCQ$ as well. We note that we could extend this idea of restricting domains even further by relying on our GHDs. In particular, we can have every single element of the aggregation ordering specify its own domain, and a valid GHD would have to ensure that for any $A,B$ such that $TOP_\mT(A)$ is an ancestor of $TOP_\mT(B)$, the semiring domain corresponding to $A$ is a superset of the semiring domain corresponding to $B$.

\subsection{Non-Idempotent Product Aggregations}\label{subsec:non-idempotent-product}
Our AggroYannakakis algorithm actually implicitly assumes that any product aggregation that arises consists of an \emph{idempotent} operator.

\begin{definition}
Given a set $S$, an operator $\oplus$ is \emph{idempotent} if and only if for any element $a \in S$, $a \oplus a = a$. 
\end{definition}

This is a reasonable assumption, as the problems that we've discovered using product aggregation all tend to have idempotent products. The key difference between an idempotent and non-idempotent operator is the distributive property; $(a \otimes b) \otimes (a \otimes c) = a \otimes (b \otimes c)$ only if $\otimes$ is idempotent. Note that the non-idempotent case would require an $a^2$. So, to be complete, we can support non-idempotent operators by raising the annotations of every other relation to a power. In particular, if we have a non-idempotent aggregator for an attribute $A$, we should raise the annotations for the relations in every other node in our tree to the $|\mD^A|$ power when we aggregate the attribute $A$ away.

\section{Extension: Computing Transitive Closure}
A standard extension to the basic relational algebra is the transitive closure or Kleene star operator. In this section, we explore how our framework for solving $\ajar$ queries can be applied to computing transitive closures. First we define the operator using the language of $\ajar$. Given a relation $R$ with two attributes, consider the query
$$Q_{k} = \sum_{A_2} \cdots \sum_{A_k} \Join_{1 \le i \le k} R(A_i, A_{i+1})$$
where each of $R(A_i, A_{i+1})$ are identical copies of $R$ with the attributes named as specified. Note that our output $Q_k$ is going to be a two-attribute relation. Suppose there exists some $k^*$ such that $Q_k$ is identical for all $k \ge k^*$. We can then define the transitive closure of a relation $R$, denoted $R^*$, to be $Q_{k^*}$.

This classic operator has natural applications in the context of graphs. If our relation $R$ is a list of (directed) edges (without meaningful annotations), computing $R^*$ is equivalent to computing the connected components of our graph. If we add annotations over the semiring $(\mathbb{Z} \cup \{\infty\}, \min, +)$ where each edge is annotated with a weight, then computing $R^*$ is equivalent to computing all pairs shortest paths~\cite{FunSemirings}. Note that we can guarantee $R^*$ exists as long $(i)$ our graph contains no negative weight cycles and $(ii)$ our relation contains self-edges with weight $0$. We will discuss computing $R^*$ in the context of graphs, applying it to the all pairs shortest path problem. Let $E$ be the number of edges and $V$ the number of nodes in the graph; we will derive the complexity of computing all pairs shortest paths in terms of $E$ and $V$.

A naive algorithm for finding $R^*$ is to compute $Q_{1}, Q_{2}, Q_{4}, \dots$ until we find two consecutive results that are identical. This approach requires answering $O(\log k^*)$ $\ajar$ queries. In the context of all pairs shortest path, we know $k^* \le V$, which means that the number of queries to answer is $O(\log V)$. We start by analyzing the computation required to answer a query of the form $Q_{2^n}$.

We define the GHD to use for $Q_{2^n}$ recursively. Our base case, when $n=1$, is to have a single bag containing all three attributes $A_1, A_2, A_3$. For $n>1$, the root of our GHD will contain the attributes $A_1, A_{2^{n-1}+1}, A_{2^n+1}$. It will have two children: on the left, it will have the GHD corresponding to $Q_{R^{2^{n-1}}}$, and on the right it will have an identical GHD over the attributes $A_{2^{n-1}+1}, A_{2^{n-1}+2}, \dots, A_{2^n+1}$ instead of the attributes $A_1, A_2, \dots, A_{2^{n-1}+1}$. Note that each bag of our constructed GHD has $3$ attributes, but they may not appear in any relation together. Additionally, note that the depth of our GHD is simply $n$.

If we naively apply the AGM bound to derive the fractional hypertree width, we get a width of $E^3$. However, if, for each attribute $A_i$, we (virtually) create a relation $S(A_i)$  of size $V$, our fractional hypertree width becomes $V^3$. Alternatively, we can also use DBP-width to derive the $V^3$ bound without introducing these relations.

Applying the results of GYM~\cite{GYM} gives us that we can answer $Q_{2^n}$ in $O(n)$ MapReduce rounds with $O(V^3)$ communication cost. Given that we need to answer $O(\log V)$ of these queries and that $n \le O(\log V)$ for each of these queries, we have a $O(\log^2 V)$ round MapReduce algorithm with $\Ot(V^3)$ total communication cost for all pairs shortest paths, which is within poly-log factors of standard algorithms for this problem.

In addition, if we allow a $O(k^* \log k^*)$ round MapReduce algorithm, we can reduce the total communication cost to $\Ot(EV)$ by using a chain GHD. In particular, for a query $Q_k$, the GHD will be a chain of $k$ bags such that the $i^{th}$ bag in our chain consists of $A_i$, $A_{i+1}$ and $A_{k+1}$. This construction ensures that two of the three attributes in each bag appear in a relation together, reducing the width to $EV$.

We note that we derived this MapReduce bound with our generic algorithms, without any specialization for this particular problem. We can also derive a serial algorithm for the problem with the same bound, but it requires a small optimization. By construction, our (original, non-chain) GHD has the property that every subtree whose root is at a particular level is completely identical. This means that AggroGHDJoin does not need to visit each bag; it simply needs to visit one bag per level, and then assign the result to the other bags on the level. With this optimization, our algorithm computes all pairs shortest paths in $\Ot(V^3)$, again within poly-log factors of specialized graph algorithms.
\end{document}